\def\plog#1{$\mathtt{#1}$}
\newcommand{\ourtool}{\textsc{Tranchecker}}
\newcommand{\nidhugg}{\textsc{Nidhugg}}
\newcommand{\wt}{\mathsf{wt}}
\newcommand{\rd}{\mathsf{rd}}
\newcommand{\vars}{\mathcal{V}}
\definecolor{mygreen}{rgb}{0.05, 0.5, 0.06}
\newcommand{\tpo}{\textcolor{red}{po}}
\newcommand{\trf}{\textcolor{blue}{\it{rf}}}
\newcommand{\ow}{\textcolor{mygreen}{\mathsf{OW}}}
\newcommand{\tco}{\textcolor{orange}{co}}
\newcommand{\tto}{\textcolor{mygreen}{to}}
\newcommand{\co}{\mathsf{CO}}
\newcommand{\init}{\mathsf{init}}
\newcommand{\cfr}{\textcolor{violet}{\mathsf{CF}}}
\newcommand{\cp}{\mathsf{Causal Past}}
\newcommand{\cccyc}{\mathsf{CCcycle}}
\newcommand{\ccvcyc}{\mathsf{CCvcycle}}
\newcommand{\readccyc}{\mathsf{RComcycle}}
\newcommand{\readatcyc}{\mathsf{RAtcycle}}
\newcommand{\cmcyc}{\mathsf{CMcycle}}
\newcommand{\xcyc}{\mathsf{Xcycle}}
\newcommand{\extend}{\mathsf{extend}}
\definecolor{myorange}{rgb}{0.93, 0.49, 0.1}
\definecolor{myred}{rgb}{0.82, 0.1, 0.26}
\definecolor{myblue}{rgb}{0.01, 0.28, 1.0}
\definecolor{myviolet}{rgb}{0.6, 0.4, 0.8}
\definecolor{mygray}{rgb}{0.9, 0.89, 0.89}
\definecolor{mypurple}{rgb}{0.41, 0.16, 0.38}
\newcommand\bjcomcom[1]{}
\newcommand\nat{{\mathbb N}}
\newcommand\alphabet\Sigma
\newcommand{\tuple}[1]{\left\langle#1\right\rangle}
\newcommand{\setcomp}[2]{\left\{{#1}\mid {#2}\right\}}
\newcommand\app\bullet
\newcommand\emptyword\epsilon
\newcommand\ii{i}
\newcommand\jj{j}
\newcommand\nn{n}
\newcommand\xvar{x}
\newcommand\yvar{y}
\newcommand\initof[1]{\init_{#1}}
\newcommand\initx{\initof\xvar}
\newcommand\nmodels\nvDash
\newcommand\restrict[2]{#1\raise-.5ex\hbox{\ensuremath|}_{#2}}
\newcommand\rf{{\color{mygreen}{\tt rf}}}
\newcommand\prog{{\mathcal P}}
\newcommand\conf\gamma
\newcommand\runs{{\it Runs}}
\newcommand\threadset{{\mathsf{Procs}}}
\newcommand\thread{{\it proc}}
\newcommand\areg{{\tt a}}
\newcommand\breg{{\tt b}}
\newcommand\creg{{\tt c}}
\newcommand\run\rho
\newcommand\thrun\pi
\newcommand\pth\pi
\newcommand\varset{{\mathbb X}}
\newcommand\rtype{{\mathsf{rd}}}
\newcommand\trace\tau
\newcommand\otrace\sigma
\newcommand\emptytrace{\trace_\emptyset}
\newcommand\tlub\sqcup
\newcommand\tglb\sqcap
\newcommand\tequiv\sim
\newcommand\ttequiv\equiv
\newcommand\ctordering\sqSubset
\newcommand\tordering\sqsubseteq
\newcommand\stordering\sqsubset
\newcommand\eventset{{\tt E}}
\newcommand\inittranset{\tran_{\it init}}
\newcommand\swap{\texttt{Swappable}}
\newcommand\swapof[1]{\swap\!\left(#1\right)}
\newcommand\typeof[1]{{#1}.{\it type}}
\newcommand\tranof[1]{{#1}.{\it tran}}
\newcommand\varof[1]{{#1}.{\it var}}
\newcommand\tracetuple{\tuple{\tran,\tpo,\trf}}
\newcommand\tracetuplep{\tuple{\tran',\tpo',\trf'}}
\newcommand\zeroval{0}
\newcommand\valset{{\mathbb V}}
\newcommand\add\odot
\newcommand\lbl\ell
\newcommand\action\lbl
\newcommand\silentlbl\varepsilon
\newcommand\expr{e}
\newcommand\movesto[1]{\xrightarrow{#1}{}}
\newcommand\succof[1]{{\tt succ}\!\left(#1\right)}
\newcommand\lmovesto[1]{\stackrel{#1}\leadsto}
\newcommand\assigned\leftarrow
\newcommand\rschedule{\textsc{RunSchedule}}
\newcommand\explore{\textsc{ExploreTraces}}
\newcommand\declarepostponed{\textsc{CreateSchedule}}
\newcommand\scheduledof[1]{\texttt{Schedules}\!\left(#1\right)}
\newcommand\schedule\beta
\newcommand\setname[1]{{\mathcal A}}
\newcommand\true{{\tt true}}
\newcommand\false{{\tt false}}
\newcommand\obs\alpha
\newcommand\obsseq\pi
\newcommand\obsseqsub\preceq
\newcommand\obsseqminus\ominus
\definecolor{mGreen}{rgb}{0,0.6,0}
\definecolor{mGray}{rgb}{0.5,0.5,0.5}
\definecolor{mPurple}{rgb}{0.58,0,0.82}
\definecolor{backgroundColour}{rgb}{0.95,0.95,0.92}
\lstdefinestyle{CStyle}{
    backgroundcolor=\color{white},   
    commentstyle=\color{mGreen},
    keywordstyle=\color{magenta},
    numberstyle=\tiny\color{mGray},
    stringstyle=\color{mPurple},
    basicstyle=\linespread{0.6}\footnotesize,
    breakatwhitespace=false,         
    breaklines=true,                 
    captionpos=b,                    
    keepspaces=true,                 
    numbers=left,                    
    numbersep=5pt,                  
    showspaces=false,                
    showstringspaces=false,
    showtabs=false,                  
    tabsize=2,
    language=C
}
\begin{document}


\title{Optimal Stateless Model Checking of  Transactional Programs under Causal Consistency}         

\author{Parosh Aziz Abdulla}
\affiliation{
  \institution{Uppsala University}    
  \country{Sweden}        
}
\author{Mohamed Faouzi Atig}
\affiliation{
  \institution{Uppsala University}            
\country{Sweden}
}
\author{Ashutosh Gupta}
\affiliation{
  \institution{IIT Bombay}            
\country{India}
}
\author{Shankaranarayanan Krishna}
\affiliation{
  \institution{IIT Bombay}            
\country{India}
}
\author{Omkar Tuppe}
\affiliation{
  \institution{IIT Bombay}            
\country{India}
}

\begin{abstract}
 We present a framework for 
efficient stateless model checking (SMC) of concurrent programs   under five prominent models of causal consistency, $\ccvt, \cm, \cc, \readat, \readc$.  Our approach is based on exploring traces under the program order $\tpo$ and the reads from $\trf$ relations. 
Our SMC algorithm is provably optimal in the sense that it explores each $\tpo$ and $\trf$ relation exactly once. We have implemented our framework in a tool called \ourtool{}. Experiments show that \ourtool{} performs 
well in detecting anamolies in  classical distributed databases benchmarks. 
\end{abstract}

\maketitle

\section{Introduction}
Transactions are a proven abstraction mechanism in database systems 
for constructing reusable parallel computations. 
 A transactional program is a concurrent program, which has  \emph{transactions} in the code of its processes. Transactional programming  \cite{DBLP:series/synthesis/2010Harris} attempts to adapt the powerful transaction processing technology originally developed for database concurrency, to 
the setting of general purpose concurrent programs. Using transactions as the main enabling construct 
for shared memory concurrency obliviates the programmer 
from  using low level mechanisms like locks, mutexes and semaphores to prevent two concurrent threads from interfering.

 A \emph{transaction} is a 
code segment consisting of several instructions which must be executed \emph{atomically}. Each process in a transactional program is a sequence of transactions, 
 written as $\mathtt{t1}[\dots]; \mathtt{t2}[\dots]; \dots ;\mathtt{tn}[\dots]$, where $\mathtt{t}[\dots]$ represents a 
transaction, with the brackets [ and ] demarcating the beginning and end.  The general idea is that each process $p$ executes its transactions following the program order, and, when a process $p$ finishes executing a transaction $\mathtt{t}[\dots]$, it ``delivers'' the writes/data updates made in 
$\mathtt{t}[\dots]$ to all processes. The manner in which the  receiving processes make use of this delivery is dependent  on the underlying consistency model of transactional programs.

The strongest consistency model is the one ensuring serializability \cite{DBLP:journals/jacm/Papadimitriou79b}, where every execution of a program is equivalent  
 to another one where the transactions are executed serially one after the other, with no interference. In the non-transactional setting, 
 this model corresponds to sequential consistency (SC) \cite{DBLP:journals/tc/Lamport79}. While serializability and 
 SC are easy to understand for programmers,  they are too strong, since they 
  need global synchronization between all processes, which makes  it difficult to achieve good performance guarantees 
  \cite{DBLP:journals/jacm/FischerLP85,DBLP:journals/sigact/GilbertL02}.  Thus, modern databases 
 ensure weaker consistency guarantees. 
 Causal consistency \cite{DBLP:journals/cacm/Lamport78}, is one of the  fundamental models implemented in many leading production databases such as 
 AntidoteDB, CockroachDB and MongoDB. Unlike serializability, causal consistency is a weaker concurrency notion 
 and  allows conflicting transactions having reads/writes to a common variable  to be executed in different orders by different processes as long as they are not causally related. The sets of updates visible to different processes may differ and read actions may return values that cannot be obtained in SC executions.    There are many variations of causal consistency introduced in the literature, some of the prominent ones being weak causal consistency (\wcct{})~\cite{10.1145/3016078.2851170,10.1145/3093333.3009888}, causal memory (\scct{})~\cite{10.1007/BF01784241,10.1145/3016078.2851170}, causal convergence (\ccvt{})~\cite{DBLP:journals/ftpl/Burckhardt14}, read atomicity ($\readat$)\cite{DBLP:conf/concur/Cerone0G15}, \cite{10.1145/3360591} as well as read committed ($\readc$) \cite{10.1145/3360591}, \cite{10.1145/223784.223785}.

 In a recent survey 
 \cite{DBLP:conf/sigmod/Pavlo17} of database administrators, more than 85\%  of the participants responded that almost all transactions in their database execute at these weak levels of concurrency.   A weaker concurrency notion allows more behaviours than stronger notions. The onus of ensuring that a database application 
  can tolerate this larger set of behaviours is on the developers. These weaker models, while being deployed in most distributed databases, are hard to reason about \cite{DBLP:conf/popl/BrutschyD0V17}. The  resulting application bugs can cause huge losses \cite{DBLP:conf/sigmod/WarszawskiB17}. Thus, there is a tradeoff between good performance as in the weak models versus correctness as in serializability. One can easily reason about database applications under serializability; however doing this for the weak models is challenging since they have exponentially more behaviours. Given the 
  widespread use of weak models in modern distributed databases, we need    verification techniques which ensure  that applications are not only 
  efficient but also correct.

\noindent{\bf{Verification of Database Applications}}. 	This paper focuses on verifying the correctness of concurrent programs under causal consistency against assertion violations. Model checking  \cite{DBLP:conf/popl/ClarkeES83} is the most prominent technique for algorithmic verification of programs which explores systematically, all executions of a program. The limiting factor in the applicability of model checking is state space explosion; the number of executions grows exponentially in the number of processes. 

Stateless model checking (SMC), one of the most  
successful techniques for finding concurrency bugs \cite{10.1145/263699.263717}  is useful to counter this state space explosion. The name ``stateless" model checking comes from the fact that only a small fraction of the visited states are stored when  exploring executions. While this  avoids excessive memory consumption to some extent, we still need to cope up with the large amount of non-determinism which gives rise to exponentially many interleavings.  
 To address this, SMC is often combined with Partial Order reduction (POR).  POR \cite{DBLP:journals/sttt/ClarkeGMP99,DBLP:books/sp/Godefroid96,DBLP:conf/cav/Peled93}
is a technique that limits the number of executions explored without compromising on the coverage of program behaviour.  We achieve this 
by avoiding the analysis of equivalent executions. 
In POR, two executions are considered equivalent if one can be obtained from the other by swapping consecutive independent execution steps. In POR, all the executions are partitioned among equivalence classes, and at least one execution from each representative equivalence class is explored.  Dynamic partial order reduction (DPOR), which combines 
SMC and POR, 
 explores 
the executions by computing the equivalence between executions on-the-fly.

DPOR was first developed for concurrent programs under SC \cite{10.1145/2578855.2535845,10.5555/1763218.1763234}. 
Recent years have seen  DPOR adapted to language induced weak memory models  \cite{DBLP:journals/pacmpl/Kokologiannakis18,10.1145/2806886,10.1145/3276505}, as well as hardware-induced relaxed memory models \cite{10.1007/978-3-662-46681-0_28,10.1145/2737924.2737956}. 
 Under sequential consistency,  the equivalence classes are called \emph{Mazurkiewicz} traces  \cite{10.5555/25542.25553}, while for relaxed memory models, the generalization of these are called \emph{Shasha-Snir} traces \cite{DBLP:journals/toplas/ShashaS88}.   A Shasha-Snir trace characterizes an execution of a concurrent program by the relations (1) $\tpo$ program order, which totally orders events of each process, (2),  $\trf$ reads from, which connects each read with the write it reads from, (3) $\tco$ coherence order, which totally orders writes to the same shared variable. DPOR can be optimized further by observing that the assertions to be verified at the end of an execution does not depend on the coherence order of shared variables, and hence it suffices to consider traces over $\tpo$-$\trf$. Based on this observation,  the DPOR algorithms for programs under the release-acquire semantics (RA) and SC  
   \cite{10.1145/3276505}, \cite{10.1145/3360576} explores traces with $\tpo, \trf$ and $\tco$ where the $\tco$ edges are added on the fly. 
  The equivalence classes are considered  wrt $\tpo$-$\trf$, reducing 
  the number of distinct traces to be analyzed.

  Since a program may behave entirely differently when run on different consistency models, verification frameworks such as DPOR need to be resigned from one model to the next.
In particular, it is not possible to migrate existing DPOR algorithms for models such as SC and RA 
to causal consistency.
The crucial challenge in the design of such an algorithm is (i) to compute the set the write events from which 
read events can fetch their values, and (ii) calculate the causalities that such a read operation implies.
These operations are intricately dependent on the consistency model.
For instance, they have an exponential cost in SC, but only a polynomial cost in RA \cite{10.1145/3360576,10.1145/3276505}.
This paper aims to develop DPOR algorithms for transactional programs under causal consistency. 
To that end, we face two obstacles, namely that (i) we are dealing with  new consistency models (compared to existing DPOR algorithms), 
and (ii)  we consider transactions (rather than simple transitions).
Concretely, we 
determine efficiently, on-the-fly, which transaction to read from whenever we encounter a read
instruction. Whether a transaction is ``readable'' depends on the individual model under consideration, which is different for all of them. Second, each time we read from one of the readable transactions, specific causal
dependencies are created amongst the others. These must be resolved so that the resultant execution remains
consistent concerning the model. 
We instantiate our framework for the five variants 
$\cc, \ccvt, \cm, \readc, \readat$. Despite this heterogeneity, we propose a uniform DPOR algorithm across all five models, which computes the readable set of transactions and resolves causal dependencies in polynomial time. 

\smallskip 

 \noindent{\bf{Contributions}}. We propose a DPOR based SMC algorithm for the  models $X {\in} \{\cc, \ccvt, \cm, \readc, \readat\}$ which explores systematically, all the distinct $\tpo$-$\trf$ traces covering all possible executions of the program. 
    We develop a uniform  algorithm for all models which is sound and complete : that is, all traces explored are consistent wrt the model $X$ under consideration, and all such consistent  traces are explored. 
        Moreover, our algorithm is \emph{optimal} in the sense that, each consistent $\tpo$-$\trf$ trace is explored  
   exactly once. One of the key challenges during the trace exploration is to maintain the consistency of the traces wrt the model under consideration. We tackle this by defining a \emph{trace semantics}
   which ensures that the traces generated in each step only contain edges which will be present in any consistent trace.  We implement our algorithms    in a tool \ourtool{} which is, to the best of our knowledge, the first of its kind to perform SMC on transactional programs wrt these five prominent causal consistency models. \ourtool{} checks for assertion violation of programs under $\cc, \ccvt, \cm, \readc, \readat$.      

  We generated 7507 litmus tests using  Herd \cite{10.1145/2627752}
  and evaluate \ourtool{} on them. \ourtool{} takes around 
  570s to finish running all 7507 programs. 
 Then we proceed with experimental  evaluation on a wide range of benchmarks  from distributed databases.  
We showed that  (i) $\ourtool$ correctly detects 
known consistency bugs  \cite{bernardi2016robustness}
, \cite{10.1145/3360591},  \cite{lmcs:7149} and 
\cite{beillahi2021checking} 
under $\ccvt, \cm, \cc, \readat, \readc$, 
(ii) $\ourtool$ correctly detects known assertion 
violations in applications  \cite{10.14778/2732240.2732246}, \cite{10.1145/2987550.2987559}, \cite{lmcs:7149}, \cite{10.1007/978-3-030-44914-8_20}. 
 We also did a stress test of \ourtool{} on some  
 parameterized benchmarks which   
 resulted in a large number (7.7 million) of traces. 

\smallskip

\noindent{\bf{Related Work}}.
We compare our contribution with two lines of work. On one hand, 
we review work  on the  verification of causal 
consistency models, and on the other hand, on 
SMC algorithms.

There has been a spate of recent work on causal consistency models 
\cite{10.1007/BF01784241}. The question of verifying automatically, whether a given transactional program is robust against causal consistency has been explored in many papers, 
\cite{bernardi2016robustness}, \cite{DBLP:conf/popl/BrutschyD0V17}, \cite{DBLP:conf/pldi/BrutschyD0V18}, 
 \cite{DBLP:journals/jacm/CeroneG18}, 
\cite{DBLP:conf/concur/NagarJ18} via under and over approximation 
techniques. Checking robustness amounts to asking whether the behaviours  obtained while executing over a causally consistent database are serializable. The latest in this 
direction providing a precise robustness algorithm is  \cite{lmcs:7149}, which showed that 
  verifying robustness against causal consistency reduces to the reachability problem under SC. As a means to analyse the robustness problem for $\cc$,  \cite{cerone_et_al:LIPIcs:2017:7794}  provide an algebraic connection between the operational and axiomatic semantics of $\cc$. In the non-transactional setting, \cite{DBLP:journals/siglog/Lahav19} 
  surveys WRA (analogue of $\cc$), SRA (analogue of $\ccvt$) and discusses the operational and axiomatic semantics of both models. \cite{DBLP:conf/vmcai/RaadLV19} proposes a lock-based 
  reference implementation for \emph{snapshot isolation} (SI), a weak consistency model used in databases such as Oracle and MS SQL server, while prior work \cite{DBLP:conf/esop/RaadLV18} did the same for \emph{parallel snapshot isolation} (PSI), a model closely related to   snapshot isolation. \cite{DBLP:conf/ecoop/XiongCRG19} identify a   consistency model \emph{weak snapshot isolation} (WSI) that sits in between SI and PSI and propose an operational semantics for client observable behaviours of atomic transactions. 
        \cite{DBLP:conf/pldi/ChongSW18} considers the 
    interplay between weak memory and transactional memory \cite{DBLP:conf/isca/HerlihyM93} by extending existing axiomatic weak memory models (x86, Power, ARMv8, and C++) with new rules for transactional memory. They also synthesize tests for validating these models on existing implementations. 

      In contrast to our paper, 
   none of the earlier works provide algorithms that can 
   systematically explore all executions of a concurrent program 
   under causal consistency and certify them safe against concurrency bugs. 
 
 \begin{figure}[t]
 	\includegraphics[width=15cm]{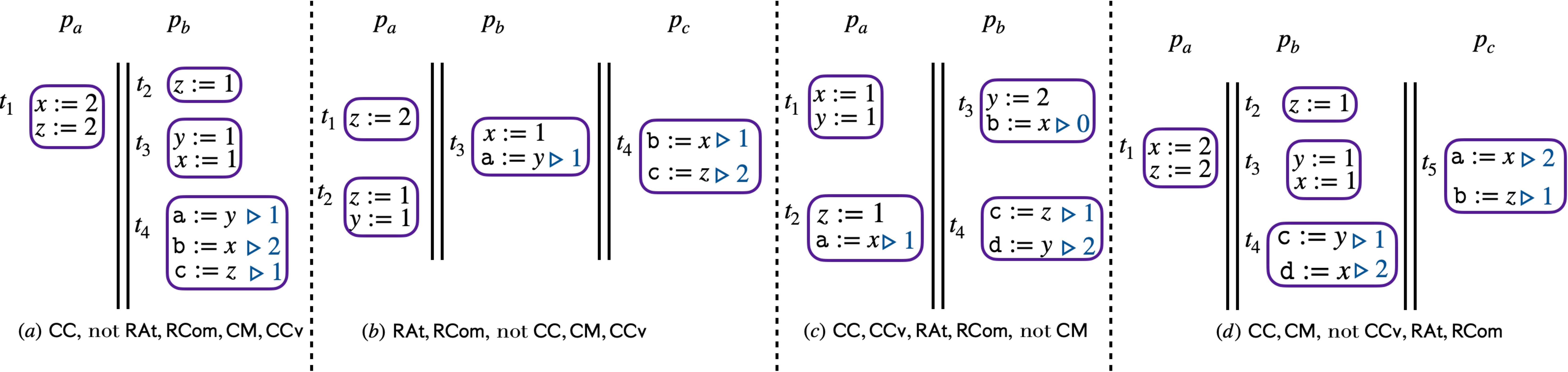}
 \caption{Differences between consistency models. 
 The \textcolor{blue!80!green}{$\triangleright v$} denotes the expected 
 return value of the read event.  }
 \label{fig:bad}
 \end{figure}
 
 \section{Overview}
In this section, we give an overview of some 
 well-known variants of causal consistency  
  as well as the working of our DPOR algorithm. Later, we give the denotational semantics  needed for our DPOR algorithm (section \ref{sec:tr}),
  two efficient operations to explore traces (section \ref{sec:saturated}), 
    the DPOR algorithm (section \ref{sec:dpor}) and our experimental results 
    (section \ref{sec:experiments}).

  Two key notions needed across the  models are 
  that of \emph{causal delivery} and 
  \emph{transaction isolation}. 
  We discuss these before going into 
  specific models. 
   A process is a sequence of transactions, and a transaction is a sequence 
  of read/write instructions. A transaction is executed atomically,
   and no interleaving happens within a transaction.  
    A \emph{transaction log} is a sequence $(x_1,v_1) \dots, (x_n, v_n)$ 
    obtained from the writes $x_i:=v_i$  made in order in the transaction. 
    Here, $x_i$ is a shared variable and $v_i$ is a value. 
     The log is  reset (to $\epsilon$) before a transaction begins, and every write $x:=v$ in the transaction appends $(x,v)$ to the log.
Thus, the log of transaction $t_1$ in the program Figure \ref{fig:bad}(a) is $(x,2)(z,2)$.  When a transaction finishes, its log is available to be delivered to all processes. This is done respecting causal delivery :  if the log of a transaction $t$ is received by a process before it starts executing a transaction $t'$, then the log of $t$ is received by all processes before receiving the log of $t'$. Also, if $t, t'$ are two transactions in program order in a process $p$, with $t'$ coming later than  $t$, then the log of $t$  is always (self)delivered to $p$ before the execution of $t'$. Hence, the log of $t$ will be delivered to any process before delivering the log of $t'$. Transaction isolation says that while a process is executing a transaction, it will not receive any delivery. Thus, deliveries are received only when a process has finished executing a transaction and not started executing the next transaction. An execution under any of the models respect causal delivery and transaction isolation. Each process has a \emph{local store}, keeping track of the latest values delivered to it, per variable. In a program having shared variables $x_1, \dots, x_n$, the local store of all processes is $\{(x_1,0), \dots, (x_n,0)\}$ initially. 0 is the initial value of all variables. 

When a process $p$ executes a read $r=x$ in its transaction $t$, it first checks  $t$ made a write to $x$. If yes, it reads the latest value written to $x$ in $t$. Otherwise, it takes the value of $x$ from its local store.
\smallskip 

\noindent{\bf Overview of $\cm$}.   Under $\cm$, whenever a process receives a delivery, it updates its local store using the latest writes per variable from the log received. For instance, if the 
log received is $(x,1)(z,2)(x,2)$, and the local store is $\{(x,3),(y,0),(z,1)\}$, then the store is updated to $\{(x,2),(y,0),(z,2)\}$.

 The program in Figure \ref{fig:bad}(d) has a $\cm$ consistent execution : 
 First $p_b$ executes $t_2, t_3$ delivering $t_2, t_3$ to itself in order. Then $p_a$ executes $t_1$, delivers the log to $p_b$. The local store of $p_b$ then is $\{(x,2),(y,1),(z,2)\}$. Then $t_4$ is executed, and $p_b$ reads off ${\tt{c}}$ as 1 and ${\tt{d}}$ as 2 from its local store.  This is followed by delivering to $p_c$, $t_1, t_2$ in order before executing $t_5$ so that the local store of $p_c$ is $\{(x,2),(y,0),(z,1)\}$. Then $p_c$ can read 2 into ${\tt{a}}$ and 1 into ${\tt{b}}$. $t_4$ can be delivered to $p_c$ at the end.

\smallskip

\noindent{\bf Overview of $\ccvt$}.  In $\ccvt$, transactions are totally ordered by unique ids ($\in \mathbb{N}$)  associated to 
transactions. Transactions in program order of a process 
have monotonically increasing ids.  Each process $p$  maintains per variable $x$,   the timestamp of $x$, $\timestamp(x)$. 
 $\timestamp(x)$ is the largest transaction id among all  transaction logs  delivered to $p$, which has a write to $x$. On receiving the delivery of a log from a transaction $t$,  the value of $x$ in the local store of process $p$ is updated 
 with the latest value of $x$ from the log 
 only if $\timestamp(x)$ in $p$ is smaller than $t.id$. In this case, $\timestamp(x)$ is also updated to $t.id$.

 Consider the program in Figure \ref{fig:bad}(a). This is not consistent wrt $\ccvt$. We know that $t_2.id < t_3.id$. To read 2 into ${\tt{b}}$, we need $t_1.id > t_3.id > t_2.id$. However, to read 1 into ${\tt{c}}$, we need $t_2.id > t_1.id$.  
\smallskip 

\noindent{\bf Overview of $\cc$}.  In $\cc$, transaction ids are vector clocks whose dimension equals the number of processes. Transactions in program order of a process have increasing ids. 
 The local store 
of each process is a set of triples of the form $(x, v, t.id)$, where 
$t.id$ is the transaction id of the transaction which wrote $v$ to $x$.  
For each process $p$, on receiving delivery of the transaction log of transaction $t$,  (1) $(x,v,t.id)$ is added to the local store of $p$ if $t.id$ is incomparable with
 all the transaction ids of existing triples 
 in the store of $p$, (2) $(x,v,t.id)$ replaces 
 all $(x, v', t'.id)$ triples in the store of $p$ if $t.id > t'.id$, 
  and (3) $(x,v)$ is the latest write to $x$ in the log of $t$.  
   Each time a new transaction begins execution in a process $p$, it maintains a \emph{snapshot} of its local store per variable.  It assigns to each variable $x$,  the latest value $snapshot(x)$ obtained from a linearization of the store wrt $x$. 
   A read instruction $r=x$ in a transaction $t$ first checks 
   if $t$ has written to $x$, and reads the latest such write if any.
   If $t$ has not written to $x$,  it reads $snapshot(x)$.  

The program in Figure \ref{fig:bad}(a) is $\cc$ consistent. 
Choose $t_1.id$ incomparable with $t_2.id, t_3.id$. Let $t_1.id=(0,5)$, 
$t_2.id=(1,2), t_3.id=(1,3)$. $p_b$'s local store after $t_2, t_3$  is $\{(z,1,t_2.id),(y,1,t_3,id), 
(x,1,t_3.id)\}$. On delivering the log of $t_1$ to $p_b$,  $(x,2,t_1.id), (z,2,t_1.id)$ are added to the local store of $p_b$. 
 When $p_b$ starts $t_4$,  it maintains 
 $snapshot(x)=2$ (linearizing $\{(x,2,t_1.id), (x,1,t_3.id)\}$ 
 as $t_3t_1$),    $snapshot(z)=1$ (linearizing $\{(z,2,t_1.id), (z,1,t_2.id)\}$  as $t_1t_2$),   and   $snapshot(y)=1$. 
 This enables the reads in $t_4$.  
\smallskip

\noindent{\bf Overview of $\readc, \readat$}.  Like $\ccvt$, in $\readc, \readat$ also, transactions are totally ordered by unique ids ($\in \mathbb{N}$).  
In $\readc$, for transactions $t_1, t_2$ writing to some variable $x$, if the latest writes on $x$ by $t_1, t_2$  
are read by events $r_1, r_2$ appearing in a transaction $t$ such that $r_2$ appears later than $r_1$ in $t$, 
then $t_1.id < t_2.id$.  Any $\readc$ consistent execution orders transactions monotonically wrt respective 
monotonicity of reads.  
 The program Fig.\ref{fig:bad}(a) is not $\readc$ consistent.   Consider the read events $e_1,e_2$  as ${\tt{a}}:=y$ and  ${\tt{b}}:=x$ in $t_4$ reading respectively from $t_3, t_1$.  Then we have $t_3.id < t_1.id$. Likewise,  
  $t_4$ has the read event $e_3$ given by ${\tt{c}}:=z$ reading from $t_2$. 
   Then $t_1.id < t_2.id$ which is not possible. $\readat$ is a strengthening of $\readc$ wrt the ordering of transactions. 
     For all variables $x$ and all $t_1 \neq t_2 \in \tran^{\wt,x}$ 
   if a transaction $t_3$ in some process $p$ reads from the latest write  of $t_1$ on $x$, then any transaction  $t_2$ which is $\tpo$-before $t_3$ in $p$, or which reads from $t_3$    
   must be such that 
  $t_2.id < t_1.id$.  For the program Fig.\ref{fig:bad}(a), we have
     $t_1, t_3 \in \tran^{\wt,x}$ and $t_4$ reads from $t_1$. Since $t_3$ is $\tpo$-before $t_4$, we have $t_3.id < t_1.id$. $t_4$ reads from $t_2$ with $t_1, t_2 \in \tran^{\wt,z}$ and since 
     $t_4$ reads from $t_1$ (for reading into ${\tt{b}}$), we have  $t_1.id < t_2.id$. This gives $t_3.id < < t_1.id < t_2.id$ which is not possible.

\begin{figure}[t]
 	\includegraphics[width=10cm]{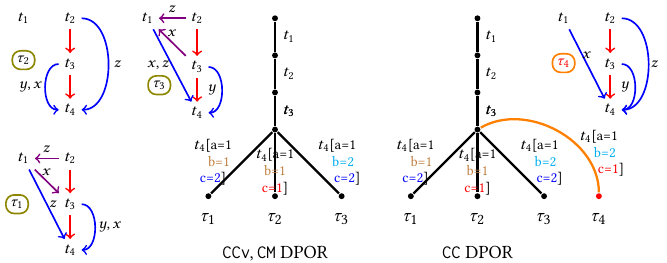}
 	\caption{The DPOR run tree on program Fig \ref{fig:bad}(a) wrt $\cc, \cm, \ccvt$}
 	 	\label{dpor-over}
 \end{figure}
 \smallskip 
 
\noindent{\bf{Overview of DPOR}}. 
We give a quick overview of the key elements in 
our DPOR algorithm.  
Consider the program in Figure \ref{fig:bad}(a) 
and the assertion $\varphi : {\tt{a}}=1 \wedge {\tt{b}}=2 \wedge {\tt{c}}=1$. 
The DPOR algorithm non-deterministically chooses to execute $t_1$ or $t_2$ since 
the program can start with $t_1$ or $t_2$.  Assume that the DPOR algorithm non-deterministically
executes $t_1, t_2, t_3$ 
in order, and finally comes to $t_4$. This choice does not affect the final outcome.  The algorithm will not have a redundant  exploration where it begins with $t_2$. 
      The reason is that once we fix a $\trf$ for the reads in $t_4$, all the three executions $t_1 \rightarrow t_2 \rightarrow  t_3 \rightarrow  t_4$, 
   $t_2 \rightarrow t_1 \rightarrow  t_3 \rightarrow  t_4$, 
$t_2 \rightarrow t_3 \rightarrow  t_1 \rightarrow  t_4$ 
result in the same $\tpo$-$\trf$ trace and  
are  equivalent. The optimality of our algorithm 
 ensures that we do not explore any trace more than once. Let us first consider the $\ccvt$ DPOR.

\noindent $[\bf{\textcolor{red}{(1)}}]$ The first read event of $t_4$ 
has only one source
 to read from, namely the write $y:=1$ in $t_3$. 
  
 \noindent $[\bf{\textcolor{red}{(2)}}]$ The next read event of $t_4$ is on ${\tt{b}}$; 
 it has two possible sources to read from : $t_1$ and 
 $t_3$. The DPOR detects the 
  \emph{readable set} of transactions at this point as  
 $\{t_1, t_3\}$, and proceeds with 
two branches corresponding to these choices. 

\noindent $[\bf{\textcolor{red}{(3)}}]$ Consider 
the branch   where ${\tt{b}}$  reads from the 
write $x:=1$ of $t_3$. 
 It then proceeds with the next read in $t_4$ on ${\tt{c}}$. It creates two branches corresponding to the two sources to read from, namely, $t_1, t_2$. 
  If it reads from $t_2$, we obtain trace $\tau_2$.

 \noindent $[\bf{\textcolor{red}{(4)}}]$
 Since it did not witness $\varphi$, the algorithm backtracks to the first point where it had the alternate 
source to read for ${\tt{c}}$, namely, $t_1$. Here, it detects  
two causal dependencies created by reading ${\tt{c}}$ from $t_1$. 
First it detects a causality between $t_2, t_1$ since $t_2$ also writes on $z$.  To resolve this, it orders $t_2$ before $t_1$. Second, 
it also detects a causality between $t_1$ and $t_3$ 
since both write to $x$, and both can reach $t_4$.  
 Since it reads $x$ from $t_3$, it orders $t_1$ 
before $t_3$. 
The obtained trace is $\tau_1$. The purple edges (see Figure \ref{dpor-over}) between $t_3, t_2$ and $t_1$ represent the orderings 
  made by the DPOR to   resolve the causal dependencies 
 after the read of    ${\tt{c}}$. The blue edges labeled with variables $x,y,z$ represents  reading from the writes of $x,y,z$.  The red edges represent the program order.

\noindent$[\bf{\textcolor{red}{(5)}}]$
Once again, since it did not witness $\varphi$, it backtracks to the first point where it had the alternate 
source to read for ${\tt{b}}$, namely from the write 
$x:=2$ in $t_1$.  Here it detects a causality between $t_3$ and $t_1$  
since both write on $x$, and both can reach $t_4$. To resolve this, it orders $t_3$ before $t_1$.  It then proceeds with the next read in $t_4$ on ${\tt{c}}$. Here, it detects that ${\tt{c}}$ cannot read from $t_2$ since  $t_3$ is ordered before $t_1$.  
 It therefore reads from the write $z:=2$ of $t_1$. After the read, it orders $t_2$ before $t_1$.   
  The obtained trace  
  is $\tau_3$.  
 At the end of this branch, it returns having finished all possible exlporations  and concludes that 
$\varphi$ cannot be witnessed 
in any $\ccvt$ consistent execution. 

However, if we run the algorithm under $\cc$, 
then we get an extra trace $\tau_4$. This is obtained since the algorithm 
allows in step $(\bf{5})$ above, to read ${\tt{c}}$ 
from $t_2$. 

Under $\cc$, the algorithm detects that even though 
it ordered $t_3$ before $t_1$ wrt $x$ (as in step $(\bf{5})$ above), it can now 
order $t_1$ before $t_2$ since it is dealing with a read on a variable $z \neq x$.  This allows it to witness the assertion 
$\varphi$. Finally, if we consider $\cm$, once again, we do not witness 
$\varphi$. The algorithm in this case produces three traces 
which are similar to those in $\ccvt$.

\section{Preliminaries}
\noindent\textcolor{red}{\emph{Programs}}.  
We consider transactional programs $\prog$ consisting  of a finite set of \emph{threads} or  \emph{processes}  $p \in \threadset$ that share a finite set $\varset$  of shared variables 
ranging over 
a domain $\valset$ of {\em values} that includes a special
value $\zeroval$.
  A process  has a finite set
of local registers that store values from $\valset$.
Each process runs a deterministic code, built in a standard way
from expressions and atomic commands, using standard control
flow constructs (sequential composition, selection, and bounded loop constructs). %
Throughout the paper, we use $\xvar, \yvar$ for shared variables,
$\areg, \breg, \creg$ for registers, and $\expr$ for expressions.
Expressions do not contain any shared variables. 

Each process is an ordered sequence of \emph{transactions}.  
 A transaction is an ordered  sequence of \emph{labeled instructions}. 
 Let $\mathcal{T}$ denote the set of all transactions in $\prog$. Transactions are denoted by $t_1, t_2, t_3, \dots$. 
A transaction $t \in \mathcal{T}$ in process $p$ starts with a \plog{begin}$(p,t)$ instruction and ends with an \plog{end}$(p,t)$ instruction (in the figures, we simply enclose the code of a  transaction in a box without explicit \plog{begin} and \plog{end} statements). 
 The instructions in a transaction are write/read/assign instructions apart from standard conditional flow constructs. 
  Write instructions $x:=e$ write to the shared variable $x$ while read instructions $r:=x$ read from the shared variables $x$. Assignments have the form $r:=e$ involving local registers and expressions. 
 We assume that the control cannot pass from one transaction to another in a process without going through the \plog{begin} and \plog{end} instructions.  

 Each transaction has a transaction identifier $t.id$ (which is 
  in $\Nats$ or $\Nats^{|\threadset|}$ depending on the model $X \in \{\cc, \ccvt, \cm, \readc, \readat\}$ we consider).    
 If process $p$ is the ordered sequence $t_1, t_2, \dots, t_n$ of transactions, then, $t_i.id < t_j.id$ for $i < j$. 
 
 The local state of a process $\thread\in\threadset$ 
is defined  by its program counter (which also makes it clear which transaction $t$ the control resides in each process), the contents
of its registers and shared variables. 
A {\it configuration} of $\prog$
is made up of the local states of all the processes.
A program execution is a sequence of transitions between configurations, starting with the initial configuration $\conf^{\init}$.  
Each transition corresponds to one process performing an instruction in one of its transactions. 
A transition between two configurations $\conf$ and $\conf'$ is of form
$\conf\lmovesto{\lbl}\conf'$,
where  $\lbl \in \{\beginact(p,t), \commitact(p,t), x:=v, {\tt{a}}:=x\}$.  
Since assignments involve only registers, they are not visible to other
processes, we will not represent them explicitly 
in the transition relation. Instead,
we let each transition represent the combined effect of some 
finite sequence of assignments  in a transaction followed
by a read/write/\plog{end} statement in  the same transaction.

The labels represent starting and ending of transaction $t$ in process $p$,  executing a read/write instruction in a transaction. Between labels  
$\beginact(p,t), \commitact(p,t)$, we only execute instructions of transaction $t$. We follow a more succinct representation 
for a run by replacing the contiguous sequence of transitions $\sigma_i \xrightarrow[]{{\tt{begin}}(p,t)} \sigma_{i+1} \dots \xrightarrow[]{{\tt{end}}(p,t)} \sigma_j$ corresponding to the execution of a transaction with $\sigma_i \xrightarrow[]{{\tt{issue}}(p,t)} \sigma_j$. 
 \plog{issue}$(p,t)$ represents  \emph{issuing} the transaction $t$ in  process $p$, and signifies executing all instructions of $t$ in order, 
starting with \plog{begin}$(p,t)$ and ending with \plog{end}$(p,t)$.

The values which can be read on a read instruction appearing in a transaction depend on the consistency model under consideration. This is taken care of by associating runs with so-called {\em traces} which we define in section \ref{sec:tr}. Traces
 tell us the values  reads can obtain  from available  
writes. A  causal consistency model $X \in \{\readat, \readc,
\cc, \ccvt, \cm\}$ 
is formulated by imposing 
restrictions on traces, thereby also restricting the possible
runs that are associated with them.

 A configuration $\conf$ is said to be {\it terminal} if 
$\succof\conf=\emptyset$, i.e., no process can issue any transaction from
$\conf$.
A {\it run} $\run$ from $\conf$ is a sequence
$\conf_0\movesto{\lbl_1}\conf_1
\movesto{\lbl_2}\cdots\movesto{\lbl_\nn}
\conf_\nn$ 
such that $\conf_0=\conf$.
We say that $\run$ is {\it terminated} if $\conf_\nn$ is terminal.
We let $\runs(\conf)$ denote the set of runs
from $\conf$.

\smallskip 

\noindent{\emph{\textcolor{red}{Events}}}. 
An event corresponds to a particular execution
of a statement in a run of $\prog$.
A {\it write event} $\event$ in a transaction $t$ is given by $(id,\thread,t,\wt(x,\val))$ where 
$id \in \nat$ is the identifier of the event, $\thread$ is the process containing the event in transaction $t$,
$\xvar\in\varset$ is a variable, and
$\val\in\valset$ is a value.
This 
corresponds to a write event 
happening in  transaction $t$ of process $\thread$ 
writing the value $\val$ to variable $\xvar$.
Likewise, a {\it read event} $\event$ is given by 
$(id, \thread, t, \rd(x))$ where  $\xvar \in \varset$. 
The read event $\event$ does not specify the particular 
value it reads; 
this value will be defined in a trace by specifying a write event 
from which $\event$ fetches its value.

For each variable $\xvar\in\varset$, we assume
a special transaction containing only (apart from \plog{begin,end}) the write event $\initx=\wt(x,0)$
called the {\it initializer} transaction $t_{\initx}$ for $\xvar$.
This transaction is not part of any of the processes in $\threadset$, and
writes the value $\zeroval$ to $\xvar$.
We define $\inittranset:=\setcomp{t_{\initx}}{\xvar\in\varset}$ as
the set of initializer transactions.   
Let $\tran^{\wt,x}, \tran^{\rd,x}$ respectively denote the set of transactions having a write and read instruction on $x$.  
 If $\eventset$ is a set of events, we define
subsets of $\eventset$ characterized
by particular attributes of its events.
For instance, for a  variable $\xvar$, we let $\eventset^{\rtype,\xvar,t}$
denote $\setcomp{\event \in \eventset}
{ \typeof\event=\rtype \land \varof{\event}=\xvar \land \tranof{\event}=t}$, for all read events in transaction $t$ on variable $x$. 

\smallskip 

\noindent{\emph{\textcolor{red}{Traces}}}.
A {\it trace}     
$\trace$ is a tuple $\tracetuple$,
where $\tran$ is the set of all {\it transactions}
including the set $\inittranset$ of initializer transactions,  $\tpo$ (program order),
$\trf$ (read-from) are binary relations on $\tran$ that satisfy:

$\bullet$ $t ~\tpo~ t'$ if $\threadof{t} = \threadof{t'}$ and   $t.id < t'.id$.  The relation $\tpo$
  totally orders the transactions of each
  individual process. We assume that 
$t_{\initx}~\tpo~t$ for all transactions $t \in \tran$ for all variables $x$.

$\bullet$ $ t~\trf~t'$  if  $t \in \tran^{\wt,x}$, $t' \in \tran^{\rd,x}$, and the value read is the value written by the last write $w$ in $t$.

We can view $\trace = \tracetuple$ as a graph whose nodes
are the transactions $\tran$ and whose edges are defined by the 
relations $\tpo$, $\trf$. $\tpo$ is depicted by red solid edges and captures the order of transactions in each process while $\trf$ edges are depicted as solid blue edges.

We define the {\it empty trace}
$\emptytrace:=\tuple{\inittranset,\emptyset,\emptyset}$,
containing only the initializer transactions,
and all relations empty.

We define when a trace can be associated with a run.
Consider a run $\run$ 
$\conf_0\movesto{\lbl_1}\cdots\movesto{\lbl_\nn}\conf_\nn$,

and let $\trace=\tracetuple$ be a trace.
We write $\run\models\trace$ to denote that
the following conditions are satisfied:

(i) $\tran=\set{t_1,\ldots,t_\nn}$,
i.e.,
each transaction $t_i \in \tran$ corresponds exactly to one label $\issueact(p,t_i)$ in $\run$.

(ii) For $\ell_i=\issueact(p, t_i), \ell_j=\issueact(p,t_j)$, 
$i < j$ iff $t_i.id < t_j.id$.  

(iii)
   $t_\ii\;\trf\;t_\jj$ iff $t_i \in \tran^{\wt,x}, t_j \in \tran^{\rd,x}$ 
   for some variable $x$, and $\ell_k=\issueact(p_i, t_i)$,  
   $\ell_h=\issueact(p_j, t_j)$ for $k < h$, and the values written and read in $t_i, t_j$ are the same.  

(iv)
  if $t_{\initx}~\trf~t_i$, then $t_i \in \tran^{\rd,x}$
  and 
  the value read is 0.

\section{Causally Consistent Models}
\label{sec:tr}
Our DPOR algorithm relies on the \emph{declarative} definition  \cite{10.1145/3009837.3009888}, \cite{10.1145/3360591} of causal consistency models. Under the declarative semantics, the program is associated with a set of \emph{traces}. 
Each trace summarizes a particular program execution and describes all accesses to the shared variables and the relations between them in that run.  To define the five models \cite{10.1145/3360591}, \cite{10.1145/3009837.3009888} formally, we introduce a function that, for each model, extends a given trace uniquely by a set of new edges. Then we define the model by requiring that the extended trace does not contain any cycles. A run of the program satisfies a consistency model 
$X \in \{\readat, \readc, \cc, \ccvt, \cm\}$ 
 if its associated extended trace has no cycles.

 Let $\co$, called \emph{causality order} represent $(\tpo \cup \trf)^+$.  Two transactions $t_1, t_2$ are \emph{causally related} if either $t_1~\co~t_2$  or $t_2~\co~t_1$.

\begin{figure}[t]
 	\includegraphics[width=9cm]{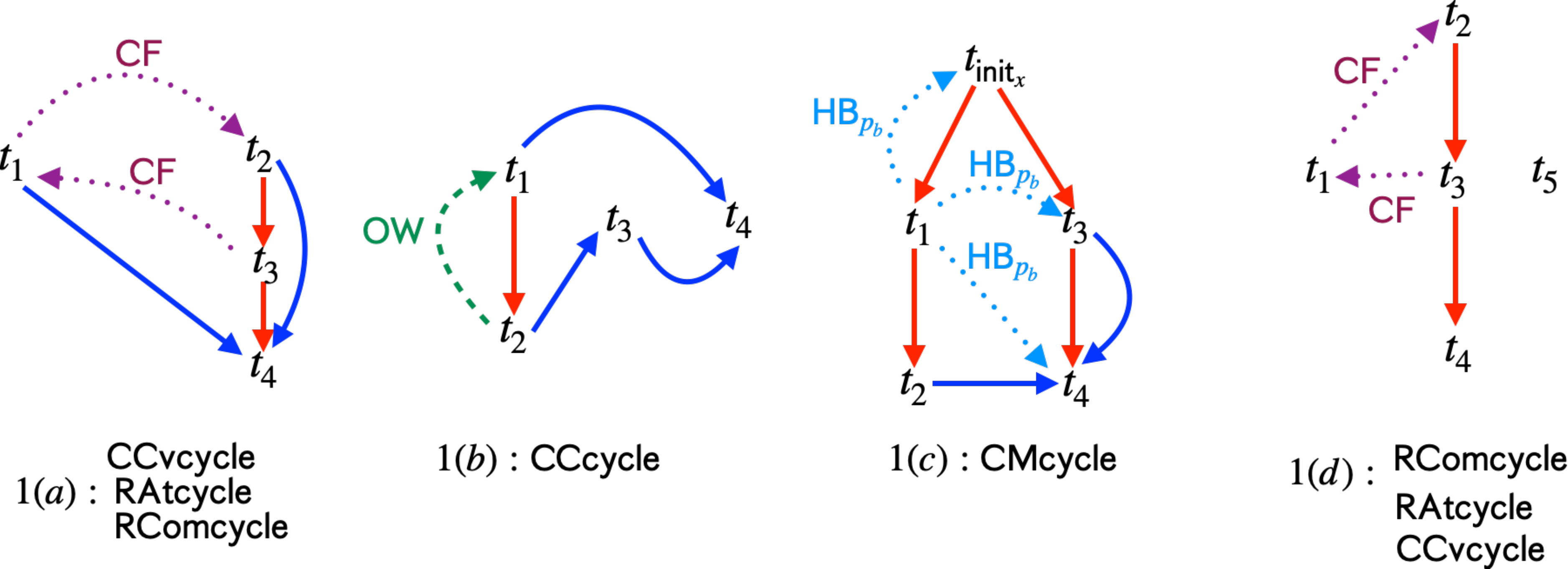}
 	\caption{solid red, blue edges are $\tpo$,$\trf$.}
 	\label{traces-badd}
 \end{figure}
 
 \smallskip 
\noindent{\bf{Weak Causal Consistency $\cc$}}.
We start presenting one of the weakest notions of causal consistency,  $\cc$ \cite{jad-thesis}, \cite{10.1007/BF01784241}. 
In $\cc$, transactions writing to a variable which are not causally related can be ordered differently in different transactions in a process. If $t_1, t_2 \in \tran^{\wt,x}$ are causally unrelated, for $t, t' \in \tran^{\rd,x}$, $t$ can read from the latest write of $t_1$ or $t_2$ (choose one); likewise $t'$ can independently  read from one of $t_1, t_2$.  

To illustrate, consider the program Fig.\ref{fig:bad}(a). 
 The transaction $t_1$ is not causally related to transactions $t_2, t_3$. 
 Hence, the read from $x$ in $t_4$ can read from either of $t_1, t_3 \in \tran^{\wt,x}$; likewise, the read from $z$ in $t_4$ can also choose to read from either of $t_1, t_2 \in \tran^{\wt,z}$.

      Finally, if in $p_b$ has transaction $t_5$ (after $t_4$) having  read ${\tt{d}}:=x$,  ${\tt{d}}$ can read 1 by choosing to read  from $t_3$ (and not $t_1$).

A trace $\tau$ does not violate $\cc$ as long as there is a causality order which explains the return value of each read event. 
To capture traces violating $\cc$, we define a relation $\ow$ between transactions ($\ow$ represents  overwrite) which write to the same variable.   For  transactions $t_1, t_2 \in \tran^{\wt,x}$
and $t_3 \in \tran^{\rd,x}$,  if 	$t_1~\co~t_2~\co~t_3$, and $t_1~\trf~t_3$, then   
$t_2~\ow~t_1$. This says that $t_3$ reads an earlier  write by $t_1$, resulting in a $\co~\cup~\ow$ cycle. We refer to $\co\cup\ow$ cycles as $\cccyc$. 

We define a function
$\extend_{\cc}(\tau)$ which extends a trace 
 $\tau=\tracetuple$ by adding all possible $\ow$ edges   
 between transactions which write on the same variable.  

For a trace $\tau=\tracetuple$,
 $\tau\models \cc$ iff $\extend_{\cc}(\tau)$ 
does not have  
 $\cccyc$.

\begin{figure}[t]
\includegraphics[width=13cm]{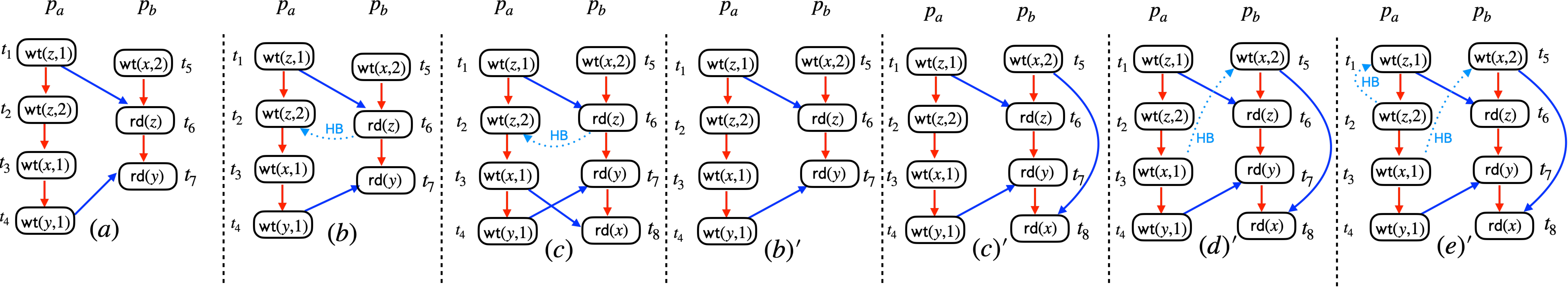}	
\caption{Start with (a). In (b) we add the $\hb{}$ edge from $t_6$ to $t_2$ following condition (ii). 
 (c) is obtained on adding  $t_8$ and  $t_3~\trf~t_8$.  In contrast, (b)' does not follow condition (ii). Hence, when $t_8$ is added in (c)', 
 $t_5$ is available to be read (since we do not have the path from $t_5$ to $t_3$, making $t_5$ an overwritten write on $x$).  
   Choosing $t_5~\trf~t_8$ necessitates adding $t_3~\hb{}~t_5$ in (d)' by condition (i).
  This necessitates adding 
 $t_2~\hb{}~t_1$ in (e)' creating $\cmcyc$. 
 }
\label{fig:hb}
\end{figure}

\smallskip 

 \noindent{\emph{\textcolor{red}{Examples}}}. 
 Program Fig. \ref{fig:bad}(b) is not $\cc$ since there is no causality 
 order which explains the return values of the read events. If we consider any trace (Fig. \ref{traces-badd}) 
 of the program Fig.\ref{fig:bad}(b), we find that $t_1~\trf~t_4$ (for reading 2 into ${\tt{c}}$), $t_2~\trf~t_3$ (for reading 1 into 
 ${\tt{a}}$), and $t_3~\trf~t_4$(for reading 1 into ${\tt{b}}$). 
 This induces $t_1~\co~t_2~\co~t_4$ ($t_1~\tpo~t_2~\trf~t_3~\trf~t_4$). 
 Since $t_1, t_2 \in \tran^{\wt,z}$, $t_1~\co~t_2~\co~t_4$  with 
   $t_1~\trf~t_4$ creates $t_2~\ow~t_1$ witnessing $\cccyc$. 
  
\smallskip 
 
\noindent{\bf{Causal Convergence $\ccvt$}}.
Under $\ccvt$, we need a total order on all transactions which write to the same variable.  This order, called \emph{arbitration order}, is an abstraction of how conflicts are resolved by all processes to agree upon one  ordering among transactions which are not causally related. 
Thus, unlike $\cc$,
different transactions in a process cannot independently choose to read from different causally unrelated transactions.  
 To enforce a total order between transactions writing to the same variable, we use a new relation $\cfr$ called conflict relation on transactions writing to the same variable.  
  For all variables $x \in \vars$,  transactions $t_1,t_2 \in \tran^{\wt,x}$ and $t_3 \in \tran^{\rd,x}$, 
   if  $t_1 ~\co~t_3$,  and $t_2~\trf~t_3$   then $t_1 ~\cfr~t_2$. 
  
  We define a function
$\extend_{\ccvt}(\tau)$ which extends a trace 
 $\tau=\tracetuple$ by adding all possible  $\ow, \cfr$ edges   
 between transactions writing on the same variable. 
     Traces violating $\ccvt$ exhibit a 
 $\co~\cup~\cfr~\cup~\ow$ cycle in $\extend_{\ccvt}(\tau)$, which we refer to as $\ccvcyc$.
   We say that $\tau \models \ccvt$ iff $
 \extend_{\ccvt}(\tau)$ does not contain a $\ccvcyc$. 

\smallskip

  \noindent{\emph{\textcolor{red}{Examples}}}.  For the program Fig.\ref{fig:bad}(a) and any trace $\tau$,   
  $\extend_{\ccvt}(\tau)$  
  has a $\ccvcyc$ (see Fig.\ref{traces-badd})  
    since in any trace, we have $t_3~\co~t_4$, $t_1~\trf~t_4$ (this $\trf$ is unavoidable to read 2 into ${\tt{b}}$) with 
    $t_1, t_3 \in \tran^{\wt,x}$ giving $t_3~\cfr~t_1$. 
    Now we have $t_1~\co~t_4$ with $t_1, t_2 \in \tran^{\wt,z}$ and $t_2~\trf~t_4$ (this $\trf$ is needed to read 1 into ${\tt{c}}$). 
    This gives  $t_1~\cfr~t_2$  resulting in $t_1~\cfr~t_2~\tpo~t_3~\cfr~t_1$. 
    Intuitively,   we cannot find a total order amongst the transactions 
  $t_1, t_3 \in \tran^{\wt,x}$. Likewise,  Fig.\ref{fig:bad}(d) exhibits $\ccvcyc$.

  \smallskip 
  
 \noindent{\bf{Causal Memory $\cm$}}. 
 The $\cm$ model is stronger than $\cc$ and incomparable to $\ccvt$. 
 In $\cm$, a process can diverge from another one in its ordering of transactions which are not causally related. However, once a process chooses an ordering of such transactions,  all reading transactions in it adhere to it; this makes it stronger than $\cc$ and incomparable to $\ccvt$. 
 
 A \emph{happened before} relation per process fixes the per process ordering of transactions.    
  For a transaction $t$ in a trace, the \emph{Causal Past} of $t$, $\cp(t)=\{t' \in \tran \mid t' ~\co~t\}$ is the set of transactions  
which are in the causal past of $t$.

 For a transaction $t$, the happened before relation $\hb{t}$  
 is the smallest relation on transactions which is transitive, and is such that for all transactions $t_1, t_2 \in$  $\cp(t)$,  $t_1~\co~t_2 \Rightarrow t_1~\hb{t}~ t_2$. In other words, $\co_{\mid\cp(t)} \subseteq \hb{t}$ : $\hb{t}$ contains all pairs of transactions obtained by restricting $\co$ to the transactions in the causal past of $t$.   For any variable $x$,  if we have transactions 
 $t_3, t_4 \in \tran^{\wt,x}$, $t_5 \in \tran^{\rd,x}$ 
 such that 
   (i) $t_5{=}t$  or $t_5~\tpo~t$,  $t_4~\trf~t_5$, and   $t_3~\hb{t}~t_5$, then $t_3~\hb{t}~ t_4$, and 
     (ii) if $t_3~ \hb{t}~ t_4	$, $t_3 ~\trf~t_5$, then 
     $t_5~\hb{t}~t_4$. 
  
 Let $t_p$ be the $\tpo$-last transaction of process $p$: that is,  for all transactions $t$ in process $p$, $t=t_p$ or $t~\tpo~t_p$. Since  $\hb{t} \subseteq \hb{t_p}$ for all transactions $t$ in process $p$, $\hb{t_p}$  fixes the ordering among all causally unrelated transactions for process $p$. We write $\hb{p}$ instead of $\hb{t_p}$.

 We define a function $\extend_{\cm}$ which extends a trace $\tau=\tracetuple$ by adding all possible $\ow, \hb{p}$ edges for all processes $p$. Traces violating $\cm$ exhibit 
 a $\ow \cup \hb{p}$ cycle, called a $\cmcyc$
   in $\extend_{\cm}(\tau)$ for some process $p$. 
 We say that $\tau \models \cm$ iff $\extend_{\cm}(\tau)$ does not contain a $\cmcyc$. Figure \ref{fig:hb}  motivates conditions (i), (ii) to add $\hb{}$ edges so that 
  $\extend_{\cm}(\tau)$ does not contain $\cmcyc$.  
 
 \smallskip 
 
 \noindent{\emph{\textcolor{red}{Examples}}}. 
 For the program Fig.\ref{fig:bad}(a) and any trace $\tau$, $\extend_{\cm}(\tau)$ contains $\cmcyc$. Consider 
 transaction $t_4$ in $p_b$. We have $t_1~\trf~t_4$, and 
 $t_3~\hb{p_b}~t_4$ with $t_1, t_3 \in \tran^{\wt,x}$, obtaining 
 $t_3~\hb{p_b}~t_1$.  Now, we have $t_1~\co~t_4$, that is, 
 $t_1~\hb{p_b}~t_4$ and $t_2~\trf~t_4$, with $t_1, t_2 \in \tran^{\wt,z}$
 obtaining $t_1~\hb{p_b}~t_2$. This gives $t_3~\hb{p_b}~t_1	~\hb{p_b}~t_2~\hb{p_b}~t_3$ resulting in $\cmcyc$. 	Likewise, in the program Fig.\ref{fig:bad}(b), we have $t_2~\hb{p_c}~t_4$ 
 and  $t_1~\trf~t_4$ with $t_1, t_2 \in \tran^{\wt,z}$ resulting in 
 $t_2~\hb{p_c}~t_1$ and therefore the $\cmcyc$ $t_1~\tpo~t_2~\hb{p_c}~t_1$. 
  For the program in Fig.\ref{fig:bad}(c), we have  
 $t_2~\trf~t_4$ inducing $t_1~\hb{p_b}~t_4$. Now, we have $t_3~\trf~t_4$ 
 and $t_1, t_3 \in \tran^{\wt,y}$, obtaining $t_1~\hb{p_b}~t_3$. 
 We also have $t_{\initx}~\trf~t_3$. Along with  
 $t_1~\hb{p_b}~t_3$ and $t_1, t_{\initx} \in \tran^{\wt,x}$ gives 
 $t_1~\hb{p_b}~t_{\initx}$. Then we have the $t_{\initx}~\tpo~t_1~\hb{p_b}~t_{\initx}$ obtaining $\cmcyc$. 	
 
 \smallskip

  \noindent{\bf{Read Committed $\readc$}}. 
  $\readc$ is one of the weakest models for causal consistency \cite{DBLP:conf/sigmod/BerensonBGMOO95}.   Like $\ccvt$, the consistency condition is defined by extending traces with the $\cfr$ relation between transactions writing to the same variable. 
 For transactions $t_1, t_2 \in \tran^{\wt,x}$ for some variable $x$, 
 $t_1~\cfr~t_2$  is defined by ensuring that the  
 corresponding read events  appearing in a transaction $t\in \tran^{\rd,x}$ are \emph{monotonic}. To capture  
   monotonicity between events in a transaction, we use the \emph{transaction order} $\tto$ which orders events within a transaction. If $e_1, e_2, \dots, e_k$ is the ordered sequence  of events in a transaction $t$, define $e_i~\tto~e_j$ for $i < j$. 
   The intuition is that for $e_1, e_2 \in \eventset^{\rtype,t}$ with  $e_1~\tto~e_2$, if $e_1 \in \eventset^{\rtype,\xvar,t}$ reads from a transaction $t_1$, and $e_2 \in \eventset^{\rtype,\yvar, t}$ reads from a transaction $t_2$, and  $t_1, t_2 \in \tran^{\wt,x}$, then 
   $t_2$ must be ''$\cfr$-later'' than $t_1$, that is, $t_1~\cfr~t_2$. 
 That is, for all variables $x$, and all transactions $t_1 \neq t_2 \in \tran^{\wt,x}$, if there exist read events $e_1~\tto~e_2$ in $t \in \tran$ such that  $t_1~\trf~e_1, t_2~\trf~e_2$, then $t_1~\cfr~t_2$.

Define a function $\extend_{\readc}(\tau)$ which extends a trace $\tau=(\tran, \tpo, \trf)$ by 
  adding all possible $\cfr$ edges between transactions  which write on the same variable. 
  In $\extend_{\readc}(\tau)$, we do not draw the $\trf$ edges 
  from transactions $t \in \tran^{\wt,x}$ to read events 
 $e \in \eventset^{\rtype,\xvar, t'}$; these are used to draw the $\cfr$ edges between transactions.     
 Traces violating $\readc$ exhibit a $\tpo \cup \cfr$ cycle in $\extend_{\readc}(\tau)$, which we refer to as a $\readccyc$. We say that $\tau \models \readc$ iff $\extend_{\readc}(\tau)$ does not contain a $\readccyc$. 
\smallskip 

  \noindent{\emph{Examples}}. For the program Fig.\ref{fig:bad}(a), and any trace $\tau$, $\extend_{\readc}(\tau)$ contains $\readccyc$. 
  Consider the read events $e_1={\tt{a}}:=y$, $e_2={\tt{b}}:=x$ in $t_4$, 
   such that $e_1~\tto~e_2$ and $t_3~\trf~e_1$, $t_1~\trf~e_2$, with $t_1, t_3 \in \tran^{\wt,x}$. Then we have $t_3~\cfr~t_1$. Likewise, 
   $t_4$ has the read event $e_3={\tt{c}}:=z$ with $e_2~\tto~e_3$. 
   Since $t_2~\trf~e_3$, we obtain $t_1~\cfr~t_2$ resulting in a 
   cycle. Likewise, for the  program Fig.\ref{fig:bad}(d), we obtain $t_1~\cfr~t_2$ (reads in $t_5$) as well as $t_3~\cfr~t_1$ (reads in $t_4$).  

 \smallskip 

  \noindent{\bf{Read Atomic $\readat$}}. This model \cite{DBLP:conf/concur/Cerone0G15} is a strengthening 
  of $\readc$ in how we add $\cfr$ edges between transactions writing  
  to the same variable.  While $\readc$ enforced a $\cfr$-ordering 
  between transactions $t_1, t_2$ writing to the same variable 
  based on the $\tto$ ordering of the corresponding read events, this 
  is strengthened in $\readat$ as follows.   
   For a transaction $t_1 \in \tran^{\wt,x}$ which is read by a read event in a transaction $t_2 \in \tran^{\rd,x}$, any transaction $t' \in \tran^{\wt,x}$  for which 
      $t' ~\trf~t_2$ or $t'~\tpo~t_2$ must be such that  $t'~\cfr~t_1$. 
   
   For all variables $x$ and all $t_1 \neq t_2 \in \tran^{\wt,x}$ such that $t_1~\trf~t_3$, $t_3 \in \tran^{\rd,x}$, if $t_2~\trf~t_3$     or $t_2~\tpo~t_3$, then we have $t_2~\cfr~t_1$. 
      Define a function $\extend_{\readat}(\tau)$ which extends a trace $\tau=(\tran, \tpo, \trf)$ by 
  adding all possible $\cfr$ edges between transactions  which write on the same variable. 
  Traces violating $\readat$ exhibit a $\co \cup \cfr$ cycle in 
  $\extend_{\readat}(\tau)$, which we refer to as a $\readatcyc$. We say that $\tau \models \readat$ iff $\extend_{\readat}(\tau)$ does not contain a $\readatcyc$.  
\smallskip 
   
    \noindent{\emph{Examples}}. For the program Fig.\ref{fig:bad}(a), we have $t_1~\trf~t_4$ with $t_1, t_3 \in \tran^{\wt,x}$ and $t_3~\tpo~t_4$ giving $t_3~\cfr~t_1$. Likewise, we also have $t_2~\trf~t_4$ with $t_1, t_2 \in \tran^{\wt,z}$ and 
    $t_1~\rf~t_4$ (for reading into ${\tt{b}}$) giving $t_1~\cfr~t_2$ 
    resulting in $t_1~\cfr~t_2~\tpo~t_3~\cfr~t_1$ exhibiting a $\readatcyc$. Likewise, for the program in Fig.\ref{fig:bad}(d), 
    we have $t_3~\cfr~t_1$ (since $t_1~\trf~t_4$, $t_3~\tpo~t_4$ and $t_1, t_3 \in \tran^{\wt,x}$)
    and also $t_1~\cfr~t_2$ (since $t_2~\trf~t_5$, $t_1~\trf~t_5$ and  
        $t_1, t_2 \in \tran^{\wt,z}$) giving a $\readatcyc$.

  A run $\rho$ satisfies a model $X\in \{\cc, \ccvt, \cm, \readc, \readat\}$ if there exists a trace $\tau$ such that $\rho \models \tau$ and $\tau \models X$. 
       Define 
$
\conf_{X}:=
\{\trace_{X}\mid \exists\run\in\runs(\conf).
  \run\models\trace_{X}\land\trace_{X}\models X\}$, 
 the set of traces generated
under  $X$ from a given configuration $\conf$.

\section{Trace Semantics}
\label{sec:saturated}
To analyse a program $\prog$ under a model $X \in \{\cc, \ccvt, \cm, \readat,\readc\}$, 
all runs of $\prog$ must be explored. We do this by exploring the associated traces. In fact, two runs having the same associated traces are equivalent 
 since the assertions to be checked at the end of a run depend only on $\tpo, \trf$. We begin with the empty trace, and continue exploration 
 by adding enabled read/write events to the traces generated so far. While doing this, we must ensure that the generated traces $\tau$ are s.t. $\tau \models X$. 
  We present two efficient operations to add a new transaction to 
 a trace $\tau$ obtaining a trace $\tau'$ so that $\extend_X(\tau')$  does not contain a $\xcyc$. 

\smallskip 

\noindent{\emph{Readability and Visibility}}. For all 5 models, readability identifies the transactions $t'$
from which read events of a newly added transaction $t$ can read/fetch their value. Visibility is used to add, in the case of $\ccvt$, $\readat$ and  $\readc$  
new $\cfr$ edges (and in the case of $\cm$, new $\hb{}$ edges) that are implied by the fact that the read event in the new transaction $t$ reads from $t'$. Let $\tau=\tracetuple$ be a trace, and $\tau_X=\extend_X(\tau)$. 
	Let $\tau^{t}_X$ denote adding $t$ to $\tau_X$.

We define the readable set $\rbl(\tau^t_X, t, x)$ for a transaction $t$ containing a read event on variable $x$ as the set of transactions 
$t' \in \tran^{\wt,x}$ from which $t$ can read from. 	Intuitively, $\rbl(\tau^t_X,t,x)$ contains all $t' \in \tran^{\wt,x}$ which are not hidden in $\tau^t_X$ by other $t'' \in \tran^{\wt,x}$.  The newly added transaction $t \in \tran^{\rd,x}$ can fetch its value from a transaction in $\rbl(\tau^t_X,t,x)$. 

\begin{enumerate}
\item For $X=\readc$, we define $\rbl(\tau^t_X, t, x)$ 
as the set of all transactions  $t' \in \tran^{\wt,x}$ provided we do not have a transaction $t'' \in \tran^{\wt,x}$ such that the following is true. 
Assume  $\beta~\tto~\alpha$ are two read events in $t$ such that $\beta$ reads from $t''$, $\alpha$ (current read event) reads from $t'$ and 
$t' ~(\co~\cup \cfr)^+~t''$. Note that having such a $t''$ induces  
$t''~\cfr~t'$ and a $\readccyc$.

	\item 
For $X=\cc$, 	$\rbl(\tau^t_X, t, x)$ is defined as the set of all transactions  $t' \in \tran^{\wt,x}$ s.t. 
\begin{itemize}

	\item  there is no transaction $t'' \in \tran^{\wt,x}$  s.t. $t'~\co~t''~\co~t$ in $\tau^t_X$. 
Allowing $t'~\trf~t$ (wrt $x$) in the presence of such a $t''$ gives $t''~\ow~t'$ and $\cccyc$.
\item 
	 In case $t' \in \tran^{\wt,y}$ ($t'$ also writes on some $y \neq x$) and $t \in \tran^{\rd,y}$,  
there is no transaction $t'' \in \tran^{\wt,y}$ such that $t'' ~\trf~t$ (wrt $y$)  and $t''~ \co ~t'$.  Note that having such a $t''$, and 
allowing $t'~\trf~t$ (wrt $x$) 
results in $t'~\ow~t''$ (wrt $y$)  and $\cccyc$.
\item 

if $t \in \tran^{\rd,y}$ (for some $y \neq x$) there are no transactions $t_1, t_2 \in \tran^{\wt,y}$ such that  
$t_1 ~\trf~ t$ (wrt $y$),  $t_1~\co~t_2~\co~t'$. Assuming we have  
this, then allowing $t'~\trf~t$ (wrt $x$) gives $t_1~\co~t_2~\co~t$ 
and hence $t_2~\ow~t_1$ (wrt $y$) and $\cccyc$. 
\end{itemize}	

\item For $X=\ccvt$, $\rbl(\tau^t_X, r, x)$ is defined as the set of all transactions $t' \in 
	\tran^{\wt,x}$ s.t. 
\begin{itemize}
	\item there is no transaction $t'' \in \tran^{\wt,x}$ such that 
	$t' ~(\co\cup \cfr)^+~t''~\co~t$. If we have this $t''$, allowing $t'~\trf~t$ gives $t''~\cfr~t'$ and $\ccvcyc$. 
	\item If $t' \in \tran^{\wt,y}$ ($t'$ also writes on some $y \neq x$), there is no transaction $t'' \in \tran^{\wt,x} \cap \tran^{\wt,y}$ such that $t''~\trf~t$ (wrt $y$).  
		Having such a $t''$ with $t'~\trf~t$ (wrt $x$) results in $t'~\cfr~t''$ (wrt $x$) and $t''~\cfr~t'$ (wrt $y$), and $\ccvcyc$. 
	\item If $t' \in \tran^{\wt,y}$ ($t'$ also writes on some $y \neq x$), there is no transaction 
	$t'' \in \tran^{\wt,y}$ 	such that $t''~\trf~t$ (wrt $y$) and 
	$t''~(\co~\cup~\cfr)^+t'$. Having such a $t''$ along with 
	$t'~\trf~t$ (wrt $x$) results in $t'~\cfr~t''$ (wrt $y$) and $\ccvcyc$. 
	\item There are no transactions $t_1, t_2 \in \tran^{\wt,y}$ (for $y \neq x$) such that  
$t_1~\trf~t$ (wrt $y$) and  $t_1~(\co \cup \cfr)^+~t_2~\co~t$. Having  such $t_1, t_2$, along with $t'~\trf~t$ gives $t_1~(\cfr \cup \co)^+~t_2~\co~t$. With  $t_1~\trf~t$, we get  $t_2~\cfr~t_1$ and $\ccvcyc$.
\item If $t \in \tran^{\rd,y}, t' \in \tran^{\wt,y}$ (for some $y \neq x$),  there are no transactions $t_1 \in \tran^{\wt,y}, t_2 \in \tran^{\wt,x}$ such that  
$t_1~(\co \cup \cfr)^+~t_2~\co~t'$. If we have this, allowing 
$t'~\trf~t$ gives $t_2~\cfr~t'$ and  $t'~\cfr~t_1$, resulting 
in $\ccvcyc$. 
\end{itemize}	
\item Let $t$ be a transaction in process $p$. For $X=\cm$, $\rbl(\tau^t_X, t, x)$ is defined as the set of all transactions  $t' \in \tran^{\wt,x}$ s.t. 
\begin{itemize}
\item 	there is no transaction $t'' \in \tran^{\wt,x}$ such that 
	$t' ~\hb{p}~t''~\hb{p}~t$. Having such a $t''$ and allowing $t'~\trf~t$ gives $t''~\hb{p}~t'$ and $\cmcyc$.
	\item  If $t' \in \tran^{\wt,y}$ (for $y \neq x$), there is no transaction $t'' \in \tran^{\wt,x} \cap \tran^{\wt,y}$ such that $t''~\trf~t$ (wrt $y$).  
		Having such a $t''$ with $t'~\trf~t$ (wrt $x$) results in $t''~\hb{p}~t'$, $t~\hb{p}~t''$  and $\cmcyc$.
		\item If $t' \in \tran^{\wt,y}$ (for $y \neq x$), there is no  transaction $t'' \in \tran^{\wt,y}$ such that $t''~\hb{p}~t'$ and $t''~\trf~t$ (wrt $y$).  Having such a $t''$ with  $t'~\trf~t$ gives $t'~\hb{p}~t''$ giving $\cmcyc$.
		\item  There are no transactions $t_1, t_2 \in \tran^{\wt,y}$ such that  
$t_1~\trf~t$ (wrt $y$), and $t_1~\hb{p}~t_2~\hb{p}~t'$.
 Having such $t_1, t_2$ with $t'~\trf~t$ gives  $t_2~\hb{p}~t$. With 
 $t_1~\trf~t$ we get $t_2~\hb{p}~t_1$ and $\cmcyc$.	
\item If $t' \in \tran^{\wt,y}$ (for $y \neq x$), there are no transactions $t_1 \in \tran^{\wt,y}, t_2 \in \tran^{\wt,x}$ such that $t_1~\trf~t$ (wrt $y$), $t_1~\hb{p}~t_2~\hb{p}~t$. If so, allowing $t'~\trf~t$ gives 
$t_2~\hb{p}~t'$. Also, $t_1 ~\trf~ t, t_1~\hb{p}~t_2$ gives $t~\hb{p}~t_2$. 
Now, we have $t_2~\hb{p}~t'~\trf~t~\hb{p}~t_2$ obtaining $\cmcyc$.
\end{itemize}
 \item  For $X=\readat$, $\rbl(\tau^t_X, t, x)$ is defined as the set of all transactions  $t' \in \tran^{\wt,x}$ s.t.
\begin{itemize}
	\item there is no transaction $t'' \in \tran^{\wt,x}$ such that 
	$t' ~(\co~\cup \cfr)^+~t''~(\tpo \cup \trf)~t$. Having such a $t''$ with $t'~\trf~t$ gives $t''~\cfr~t'$ and $\readatcyc$. 
		\item  If $t' \in \tran^{\wt,y}$ (for $y \neq x$) there is no transaction $t'' \in \tran^{\wt,x} \cap \tran^{\wt,y}$ such that $t''~\trf~t$ (wrt $y$). Allowing $t'~\trf~t$ in this case creates $t'~\cfr~t''$ and 
	$t''~\cfr~t'$ and $\readatcyc$.
	\item If $t' \in \tran^{\wt,y}$ (for $y \neq x$), there is no transaction $t'' \in \tran^{\wt,y}$ such that $t''~\trf~t$ (wrt $y$) and $t''~(\co \cup \cfr)^+t'$. Having such a $t''$ with $t'~\trf~t$ gives 
	$t'~\cfr~t''$ and $\readatcyc$.
	\item If $t' \in \tran^{\wt,y}$ (for $y \neq x$), there are no transactions $t_1 \in \tran^{\wt,y}, t_2 \in \tran^{\wt,x}$ such that $t_1~\trf~t$ (wrt $y$), $t_1~(\co \cup \cfr)^+~t_2~(\tpo\cup \trf)~t$. 
	 If so, allowing $t'~\trf~t$ gives $t_2~\cfr~t'$ ($t_2, t' \in \tran^{\wt,x}$) as well as 
	 $t'~\cfr~t_1$  ($t_1, t' \in \tran^{\wt,y}$). This gives 
	 $t_1~(\co \cup \cfr)^+~t_2~\cfr~t'\cfr~t_1$ and $\readatcyc$. 
	 
\end{itemize}

	\end{enumerate}
After adding $t'~\trf~t$, we must check that 
there are no consistency violations.  
The check set $\vbl(\tau^t_X, t,x)$ is defined as the set of transactions which turn 	``sensitive''  
on adding the new  edge  $t'~\trf~t$. 
Unless  appropriate edges are added involving 
these sensitive transactions, we may get consistency violating cycles
in the resultant trace.
Let $\tau^{tt'}$ denote the trace obtained by adding the new transaction $t$ and the edge $t'~\trf~t$ to trace $\tau$. Now, we 
identify the \emph{visible set}, that is, 
the ``sensitive transaction set''  $\vbl(\tau^t_X, t,x)$ and the new edges which must be added to $\tau^{tt'}$ to obtain a consistent extended trace.

 \begin{enumerate}
\item For $X=\readc$, $\vbl(\tau^t_X,t,x)=\{t''  \mid$ there is a read event $\beta$ in $t$ reading from $t''$, 
$\beta~\tto~\alpha$, and $\alpha$ reads from $t'\}$. The newly added $t'~\rf~t$ is due to this $\alpha$. 
 This necessitates 
adding $t''~\cfr~t'$ to preserve consistency. 	Then $\extend_X(\tau^{tt'})$ contains $\{(t'',t') \mid t'' \in \vbl(\tau^t_X, t,x)\}$.

	\item For $X=\readat$, $\vbl(\tau^t_X,t,x)$ is classified into two categories. 
	\begin{itemize}
	\item The first kind of transactions in $\vbl(\tau^t_X,t,x)$  are  
$\{t''\in \rbl(\tau^t_X, t,x) \mid t''~\tpo \cup \trf~t\}$.  
Then we add  from each $t'' \in \vbl(\tau^t_X,t,x)$ a $\cfr$ edge to $t'$ in $\extend_X(\tau^{tt'})$. 
	\item The second kind of transactions in $\vbl(\tau^t_X,t,x)$  are
	transactions $t'' \in \tran^{\wt,y}$ such that $t''~\trf~t$ when 
	$t' \in  \tran^{\wt,y}$. Then we add $t'~\cfr~t''$ to $\extend_X(\tau^{tt'})$.
		\end{itemize}

\item For $X=\ccvt, \vbl(\tau^t_X,t,x)$ is classified into three categories. 
		\begin{itemize}
		\item[(a)] The first kind of transactions in $\vbl(\tau^t_X,t,x)$  are  
	$\{t''\in \rbl(\tau^t_X, t,x) \mid t''~\co~t\}$.  
		Then we add  from each $t'' \in \vbl(\tau^t_X,t,x)$ a $\cfr$ to $t'$ in $\extend_X(\tau^{tt'})$. 
		\item[(b)] The second kind of  transactions in $\vbl(\tau^t_X,t,x)$  are   $t_1 \in \tran^{\wt,y}$ such that $t_1~\trf~t$ (wrt $y$) when $t' \in \tran^{\wt,y}$. Then we  add  $t'~\cfr~t_1$ to $\extend_X(\tau^{tt'})$. 
		\item[(c)] The third kind of  transactions in $\vbl(\tau^t_X,t,x)$  are 
		 $t_2 \in \tran^{\wt,y}$ such that $t_2~\co~t'$ 
		and we have $t_1~\trf~t$ (wrt $y$) for some $t_1 \in \tran^{\wt,y}$. In this case, we add $t_2~\cfr~t_1$  to $\extend_X(\tau^{tt'})$. 
			\end{itemize}

\item For $X=\cm$, $\vbl(\tau^t_X,t,x)$ is classified into three categories. Let transaction $t$ be in process $p$.
\begin{itemize}
\item The first kind of transactions in $\vbl(\tau^t_X,t,x)$  are  
$\{t''\in \rbl(\tau^t_X, t,x) \mid t''~\hb{p}~t\}$.  
Then we add  from each $t'' \in \vbl(\tau^t_X,t,x)$ a $\hb{p}$ edge to $t'$ in $\extend_X(\tau^{tt'})$.
\item The second kind of transactions in $\vbl(\tau^t_X,t,x)$  are 
$t_1 \in \tran^{\wt,y}$ such that $t_1~\trf~t$ when $t' \in \tran^{\wt,y}$. Then we add $t'~\hb{p}~t_1$ to $\extend_X(\tau^{tt'})$.
\item The third kind of transactions in $\vbl(\tau^t_X,t,x)$  are 
 $t_2 \in \tran^{\wt,y}$ such that $t_2~\hb{p}~t'$ 
when we have $t_1~\trf~t$  for some $t_1 \in \tran^{\wt,y}$. 
Then we add $t_2~\hb{p}~t_1$ to $\extend_X(\tau^{tt'})$.
Adding these $\hb{p}$ edges, can result in $t_3~\hb{p}~t_4$ for some $t_3,t_4\in \tran^{\wt,y}$.
If $t_3 ~\trf~ t_1$ and $t_1~\tpo~t$ then we add $t_1~\hb{p}~t_4$
to $\extend_X(\tau^{tt'})$. 
 
\end{itemize}
\end{enumerate}

 The \emph{trace semantics} for a model $X \in \{\readc, \readat, \cc, \ccvt, \cm\}$ is given as the transition relation $\xrightarrow[]{}_{X{-}\sat}$, defined as  
$\tau_X \xrightarrow[]{\alpha}_{X{-}\sat}\tau'_X$ where $\extend_X(\tau)=\tau_X, \extend_X(\tau')=\tau'_X$. 
 The label $\alpha$ is one of $\readact(t,t'), \beginact(t), \commitact(t), \writeact(t)$
 representing  respectively, a transaction $t$ reading from a transaction $t'$, the beginning, end  
 of a transaction $t$, as well as a write event in $t$.    
An important property of $\tau_X \xrightarrow[]{\alpha}_{X{-}\sat} \tau'_X$ is that if $\tau_X$ does not have $\xcyc$, then $\tau'_X$ also does 
not have $\xcyc$; in other words, if $\tau \models X$, then $\tau' \models X$. 

   We now describe the transitions  $\tau_X\xrightarrow[]{\alpha}_{X{-}\sat} \tau'_X$ where $\extend_X(\tau) =\tau_X, \extend_X(\tau')=\tau'_X$, $\tau=\tracetuple, \tau'=\tracetuplep$. We start from the empty trace $\tau_0$, $\extend_X(\tau_0)=\tau_0$. For $\alpha \in \{\commitact(t), \writeact(t)\}$, $\tau'_X=\tau_X$.

\begin{itemize}

\item  Assume we observe the start of a transaction $t$ in $\tau_X$.  Then the label $\alpha$ is $\beginact(t)$, and
$\tran'= \tran \cup \{t\}$, $\trf'=\trf$ and $\textcolor{red}{po}' = \textcolor{red}{po} \cup t' [\textcolor{red}{po}] t$ where $t',t$ are transactions in the same process $p$, $t'$ is the last transaction executed in $p$, and 
$t$ is the next transaction in $\tpo$ order of $p$. This gives $\tau'_X$.

 \item   Assume that $t$ reads from $t'$ in $\tau_X$. Then  the  label $\alpha$ is $\readact(t,t')$. 
           Add the transaction $t$, a $\tpo$ edge from the $\tpo$-latest transaction in process $p$ containing $t$ in $\tau_X$ to $t$.       Add $\trf$  from a $t' \in \rbl(\tau^t_X,t,x)$  to $t$ obtaining   $\tau^{tt'}$. 
            Then obtain 
      $\tau'_X=\extend_X(\tau^{tt'})$ as discussed above.  
\end{itemize}

\begin{lemma}
    If $\tau_X=\extend_X(\tau)$  with $\tau \models X$, and $\tau_X \xrightarrow[]{\alpha}_{\sat} \tau'_X=\extend_X(\tau')$,  then $\tau' \models X$ for $X \in \{\cc, \ccvt, \cm, \readat, \readc\}$. 
\label{lem:satcons}
\end{lemma}

\noindent{\emph{Efficiency and  Correctness}}. Each step of  $\xrightarrow[]{\alpha}_{\sat}$ is computable in polynomial time(see the supplementary material). This is based on the fact that  readable and visible sets are computable in polynomial time. 

The  correctness of the trace semantics for a model $X$ 
stems from the fact that it generates only those $X$-extensions which do not have cycles (Lemma \ref{lem:satcons}). The design 
of the transitions ensures that the resultant extended trace 
does not have a cycle.

\section{DPOR Algorithm for $\cc, \ccvt, \cm , \readat , \readc$}
\label{sec:dpor}

We present our DPOR algorithm, which systematically explores,	 for any terminating program under the consistency models $X \in \{\cc, \ccvt, \cm, \readat, \readc\}$, all traces $\tau_X$ wrt $X$ which can be generated by the trace semantics.  Enabled transactions from any of the processes are added to the trace generated so far, and we proceed with the next transaction. For a transaction $t$ with a read event,  we explore in separate branches, all possible transactions $t'$ with write events from which $t$ can read. Each such branch is  a sequence of transactions also called a \emph{schedule}. 
There may be transactions $t''$ which will be added to the trace later in the exploration, from which $t$ can also read.   Such transactions $t''$ are called \emph{postponed} wrt $t$; when $t''$ is added to the trace later, the algorithm will have a branch where $t$ can read from $t''$. In that branch,
the algorithm reorders transactions in the sequence s.t. $t''$ and $t$ exchange places, and all transactions which are needed for $t''$ to occur are also
placed before $t''$ ($\declarepostponed$). All generated schedules will be executed by $\rschedule$. 
The algorithm is uniform across the models, 
with the main technical differences being taken care of
by the respective trace semantics which guides the exploration 
of traces.
\SetKwInput{KwInput}{Input}  
\SetAlCapFnt{}
\begin{algorithm}[ht]
  \footnotesize{
  \DontPrintSemicolon
  {\KwInput{{$X \in \{\readc , \readat, \cc, \ccvt , \cm \}$ , transactional program $\mathcal{P}$}}}
  
  $\textsc{inpProgram} = \mathcal{P}$ \tcc*[f]{global variable storing the program}
  
  $\explore(X , \tau_\emptyset , \epsilon)$
  }
  \caption{ $\textsc{DPOR}$($X$ , $\mathcal{P}$) \label{alg:algm}}
\end{algorithm}

\smallskip 

\noindent The main procedure is $\textsc{DPOR}(X,\mathcal{P})$ which invokes the procedure 
$\explore(X,\tau_\emptyset,\epsilon)$, to explore all traces of the input program $\mathcal{P}$ under  model $X$.

 \begin{figure}[t]
       \includegraphics[width=6in]{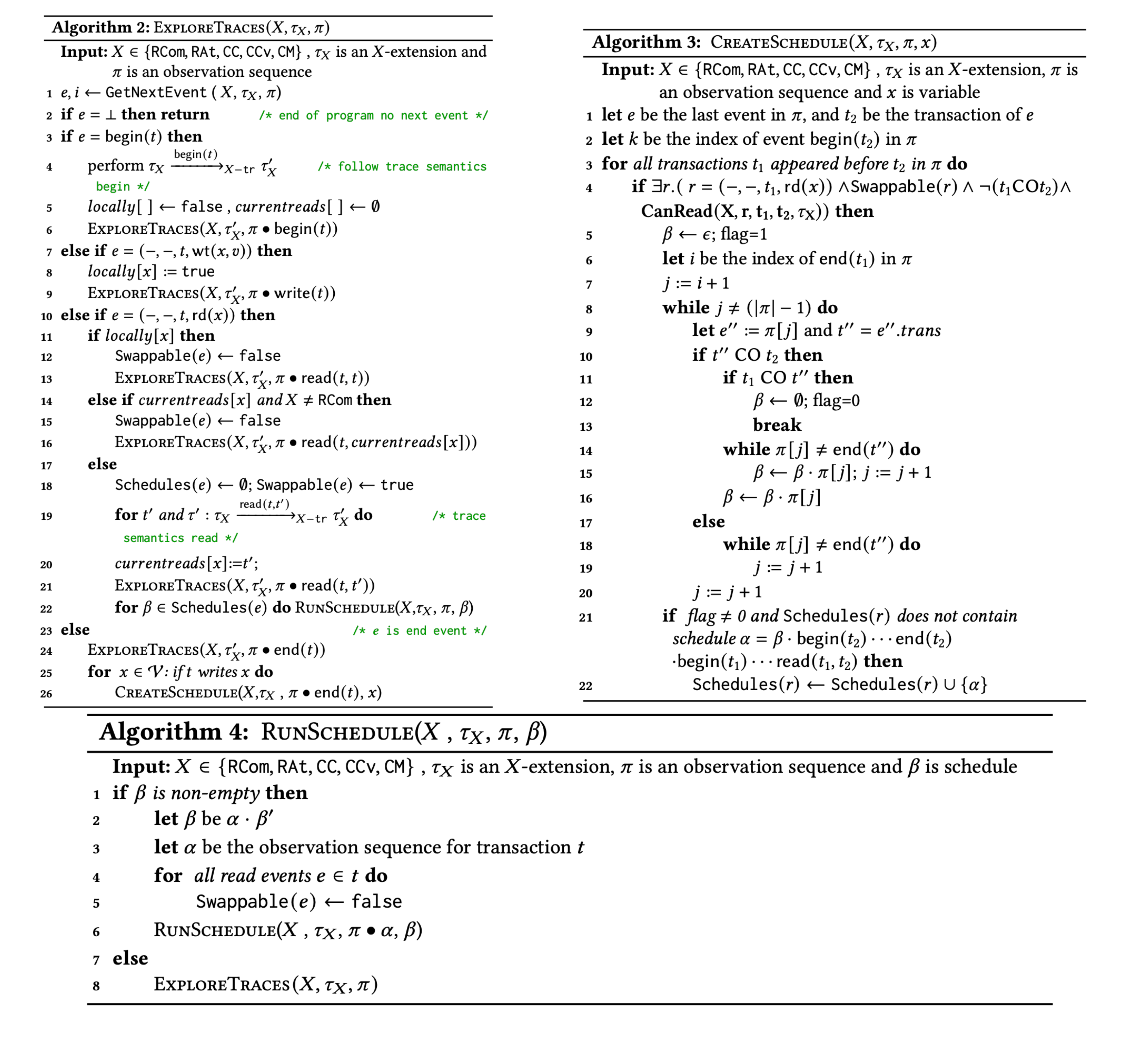}
      \label{alg:alg1}
            \end{figure}

\smallskip 
\noindent{\bf{$\explore$}}.  This algorithm takes as input,  a consistency model $X$, an  
$X$-extension $\tau_X$  and an \emph{observation sequence} $\pi$. 
 $\pi$ is a sequence of events 
 $ev \in \{\beginact(t)$,$\commitact(t)$,$\writeact(t),$ 
 $\readact(t,t')\}$. 
 Events from the same transaction appear contiguously in $\pi$. 
  The initial invocation is with the empty trace $\tau_0$ 
   and observation sequence $\pi=\epsilon$. 
 The observation sequence is 
used to swap transactions $t$ having read events with transactions $t''$  that are \emph{postponed}  
wrt them. 
 A Boolean array $locally[~]$ of size $|\vars|$ is reset to 
   $\false$ for all variables at the start of any transaction; 
 when a write event on $x$ takes place in the current transaction, 
 then  $locally[x]$ is set to $\true$. \textit{This signifies the local availability 
 of an $(x,v)$ in the log}.   We also maintain a map $currentreads$ from variables to transactions : $currentreads[x]=t'$ 
 means that $x$ read from transaction $t'$; hence, 
 all later read events of $x$ in the current transaction $t$ must read from $t'$, 
 unless $t$ writes to $x$ between an earlier read (from $t'$) and 
 the next read\footnote{This case does apply for $\readc$, later reads of x can read 
 from transaction other than $t'$.}.  
\texttt{GetNextEvent}() returns the next event $e$ and the last index $i$ of $\pi$. 

\noindent $\bullet$
If we are at  the  last event of $\pi$ and 
   it is not $\commitact(t)$ 
     then \texttt{GetNextEvent}() returns the next event $e$ (read, write or $\commitact$) and its index from the current transaction $t$.

            \noindent $\bullet$ If $e$  writes  on $x$,  set the flag $locally[x]$ TRUE to indicate that a local write on $x$ is available.   Update $\pi {\leftarrow} \pi.e$ and recursively call on $\explore$. Subsequent reads on $x$ from  $t$ will read from $t$.  
        
                 \noindent $\bullet$ Let $e$ be a read event on $x$ in transaction $t$. $e$ can read from the latest write in $t$ if $locally[x]$ is true. Otherwise, it reads from the transaction  stored in $currentreads[x]$ (this is a map from variables to transactions which maintains for each variable, the transaction it can be read  from).  In this case, 
          $\swapof{e}$ is $\false$, indicating there is no need 
          to look for a source transaction to read from. 
           If $currentreads[x]$ is empty, then  the set $\tt{Sche}$$\tt{dules(e)}$ is initialized and the Boolean flag $\swapof{e}$ is set to $\true$. 
          $\swapof{e}$ indicates that we must find, according to the trace semantics,  readable transactions $t' \in \tran^{w,x}$ where $e$ can read from, leading to $\tau'_X$. 
           For each such $t'$, let $\pi=\zeta \cdot \beginact(t) \dots \readact(t,t')$ be the observation sequence obtained 
            on choosing to read from $t'$. $currentreads[x]$ is set to $t'$ and   $\explore$() is called recursively.   
            If transaction $t'$ has not yet appeared in $\pi$, the write in $t'$  is called a \emph{postponed write}.

 \noindent $\bullet$ If $e$ is $\commitact(t)$, then we recursively call on $\explore$ after appending $\commitact(t)$ to $\pi$, exploring 
 other transactions.  When the recursive calls return to this point when the last  event of $\pi$ is $\commitact(t)$,  we do the following. 
 Corresponding to the last write for each variable in transaction $t$, we call $\declarepostponed$.  
 This call  enables us to pair these writes as potential sources to be read from, for the reads in transactions which are dependent on postponed writes. 
 This call also creates schedules for such reads which are identified in $\declarepostponed$. 
 These schedules are then explored when the recursive call to $\explore$ returns. 
 For a read event $e$ in a transaction $t$, each schedule $\beta \in \scheduledof{e}$ explores a new source transaction 
 $t'$ to read from, and extends the trace.

\smallskip       
 
\noindent{\bf{$\declarepostponed$}}.  
 The inputs to this algorithm  are a consistency model $X$, trace $\tau_X$ wrt $X$, a variable $x$, and  
 an observation sequence $\pi$ whose last element is a $e=\commitact(t_2)$ event of a transaction 
 $t_2 \in \tran^{\wt,x}$. 
   $\declarepostponed$ looks for read events $r$ occurring in a transaction $t_1$ in $\pi$ 
   for which $t_2$ is postponed. Indeed, 
   (i) $r$ must be swappable, 
   (ii)  $t_1$ must not be  $\co$ before $t_2$, and (iii) 
$\tt{\bf{CanRead}(X,r,t_1,t_2,\tau_X)}$ must be  true (that is, $t_2$ is readable for $r$  according to the trace semantics).  We begin with the closest (from  $e=\commitact(t_2)$)  transaction $t_1$ containing such a read $r$. A schedule $\beta$ is created for $r$. The schedule contains 
all elements following $\commitact(t_1)$ in $\pi$ from transactions 
following $t_1$ in $\pi$ and preceding $t_2$ wrt $\co$. The schedule ends with $\beginact(t_2)\dots,\commitact(t_2)\beginact(t_1)\dots \readact(t_1,t_2)$ enabling the read $r$ in $t_1$ to read from $t_2$. This schedule 
is added to ${\tt Schedules}(r)$ if it does not contain a  $\beta'$ 
which has the same set of observations.

 \smallskip 
 
 \noindent{\bf{$\rschedule$}}.  The inputs are a model $X$,  trace $\tau_X$, an observation sequence $\pi$ and a schedule $\beta$. 
  The schedule of observations in $\beta$  is explored one by one, 
by recursively calling itself, and updating the trace. The reads 
in the schedule are not swappable, preventing a redundant exploration 
for them (schedules where these are swapped with respective writes will 
be created by $\declarepostponed$).

 \section{An Illustrating Example for $\ccvt$ DPOR}
\label{app:dpor}

 We illustrate the DPOR algorithm on Figure \ref{eg:run}.
We have presented all the traces as well as the recursion tree in Figure \ref{eg:dpor-app}. 
The nodes $n_0 \dots n_{35}$ represent recursive calls to $\explore$ and $\rschedule$. 
$n_0$ is the first call to $\explore$ with an empty trace and  observation sequence. The detailed description can be seen in the supplementary material.

From $n_0$, consider the case when $\explore$ selects $t1$, making recursive calls to itself after each instruction in $t1$. 
The two reads of $t1$  each have two sources to read from, $init_x$ or $t4$, and $init_y$ or $t2$. 
The reads from $t4,t2$ will be scheduled later, by a call to $\declarepostponed$, as indicated by the orange arrows. 
The last call to $\explore$  in $t1$ is made at $\commitact(t1)$
reaching node $n_1$. 
Next, $\explore$ selects $t2$, and the last recursive call 
in $t2$ leads to node $n_5$, where $t3$ is selected. 
The read in $t3$ has two choices to read from : $t1$ and $t2$, 
 and makes two recursive calls after each choice.
  The respective last recursive calls to $\explore$ leading to $n_9,n_{10}$. 
$t5$ is selected next, leading to nodes 
$n_{16}$ and $n_{17}$. Finally, $t4$ is chosen.  $t4$ has two  
sources to read from, namely, $t5$ and $t3$, leading to nodes $n_{23}$ to $n_{26}$. 
This gives traces $\tau_1, \dots, \tau_4$, and respective observation sequences. 
For eg., the observation sequence for $\tau_3$ is 
 $\beginact(t1),\readact(t1,init_x),\readact(t1,init_y)$,
 $\writeact(t1),\commitact(,t1),\beginact(t2),\dots,\commitact(t2)$,
 $\beginact(t3),\readact(t3,t1),\writeact(t3)$,
  $\commitact(t3)$,
 $\beginact(t5)$,\\  $\dots,\commitact(t5),\beginact(t4),\readact(t4,t5)$,  $\writeact(t4),\commitact(t4)$. 
The recursive call to $\explore$ from  $n_{23}, n_{24}, n_{25}, n_{26}$  returns as there are no more transactions, and $\declarepostponed$ is called from these since $t4$  
has write instructions.  

\begin{figure}[H]
  \includegraphics[width=1.8in]{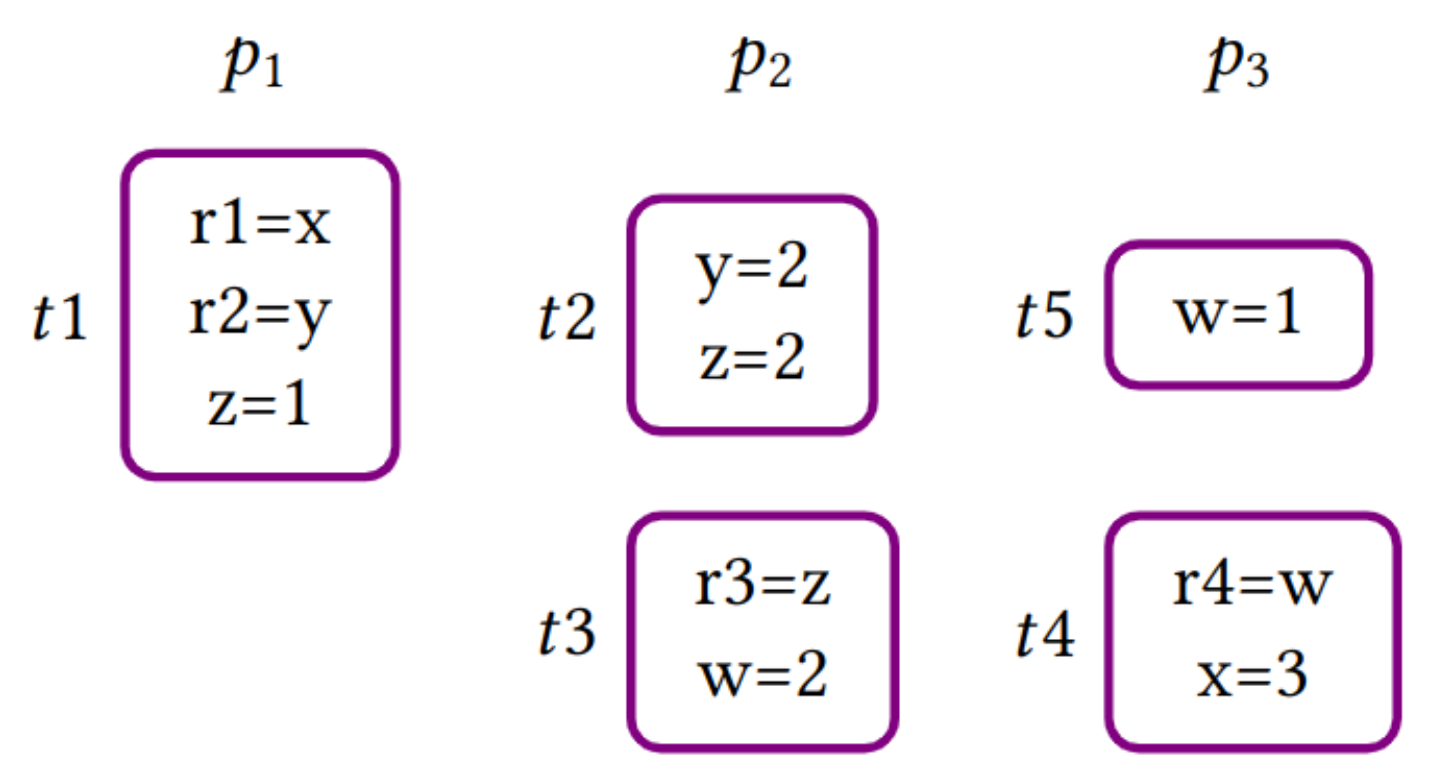}
  \caption{Program to illustrate DPOR}
\label{eg:run}
\end{figure}

\begin{figure}[H]  
 \begin{center}
 \includegraphics[width=5.2in]{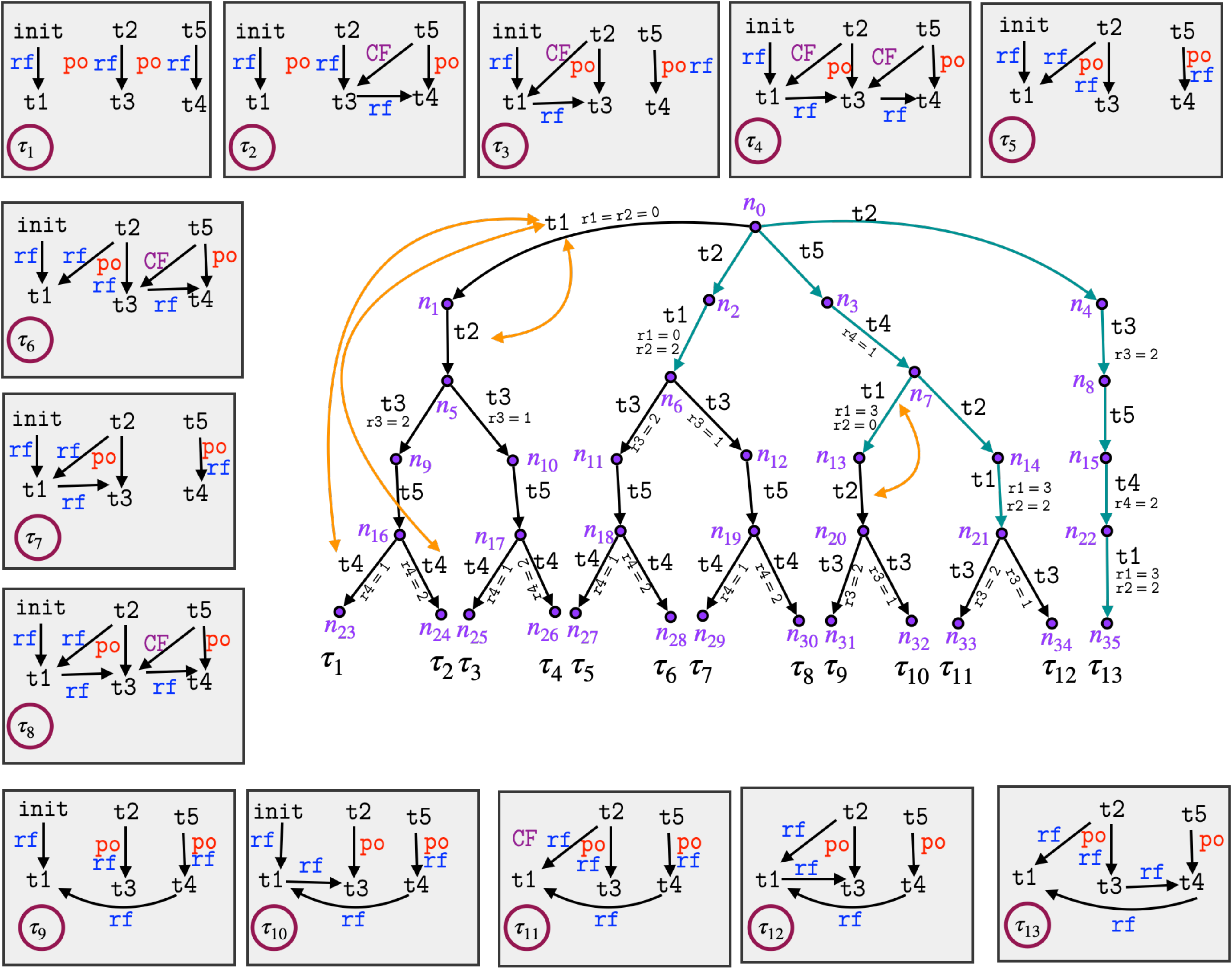}
 \end{center}
 \caption{Traces and recursion tree for the program in Figure \ref{eg:run} under $\ccvt$. 
  The nodes $n_0$ to $n_{35}$ represent recursive calls to Algorithm 2. The black edges come from Algorithm 2, green edges represent 
   Algorithm 4 running a schedule created by a call to Algorithm 3.
   The orange lines denote the pairs of  writes and reads, resulting in schedules.           The edges are decorated with the transactions executed, and register values read.
  }
 \label{eg:dpor-app}
 \end{figure}

 From $n_{23}$, $\declarepostponed$ is invoked when the recursive call of $\explore$ returns
  to the point after $\commitact(t4)$.
We traverse back in the observation sequence till  
  $\beginact(t4)$ (line 2 of $\declarepostponed$).  
We keep moving back till obtaining the read event $e'$ of $r1$ in $t1$, for which  
  $\swapof{e'}$ set to $\true$ in $\explore$. 
We check if $t1$ happens before $t4$, which it does not,  and also check that $t4$ is readable for $t1$($\tt{\bf{CanRead()}}$). 
Thus, $t4$ and $t1$ can be swapped.
The schedule $\beta_1$ is initialized. 
We then check whether the transactions $t2,t3,t5$ 
which happen between $\commitact(t1)$ and $\beginact(t4)$  
should be retained before $t4$ in the schedule for $e'$. 
    Since $\neg t2~\co~t4, \neg t3~\co~t4$, 
   the schedule need not contain $t2,t3$, and we skip over them. 
However,  $t5 ~\co~t4$  must be retained in the schedule.  
So we obtain $\beta_1=\beginact(t5)\dots\commitact(t5)\cdot$ $\beginact(t4)$ $\dots$ $\commitact(t4)\beginact(t1)$ $\dots$ $\readact(t1,t4)$ in $\scheduledof{e'}$. 
After this, the call from $\declarepostponed$ returns, and  we trace back to  $n_{16}$ and change the source (from the readable set) for $r4$, so $r4 = 2$ and we get the new trace $\tau_2$. 
The call  to $\explore$ at node $n_{24}$ returns to the point after $\commitact(t4)$, after which procedure $\declarepostponed$ is invoked. The corresponding read event is again $e'$ of $r1$ in $t1$.   
 As seen above, we 
check whether the transactions $t2,t3,t5$ 
which happen between $\commitact(t1)$ and $\beginact(t4)$  
should be retained before $t4$ in the schedule for $e'$. 
Indeed, $t3 ~\co~t4, t2 ~\co~t3, 
t5 [\tpo]  t4$. 
Hence we 
obtain another schedule  $\beta_2$=$\beginact(t2)$ $\dots\commitact(t2)$ $\beginact(t3)$ $\dots$ $\commitact(t3)\beginact(t5)$ $\dots$ $\commitact(t5)\beginact(t4)$ 
    $\dots \commitact(t4)\beginact(t1) \dots \readact(t1,t4)$ in $\scheduledof{e'}$. 
Then we backtrack to $n_{16}$ and call \\ $\declarepostponed$ at $\commitact(t5)$ which returns an empty schedule since there are no corresponding read events.  
Then we backtrack to node $n_9$ and then to $n_5$ till we can change the source for $r3$, which reads from $t1$. We continue exploring traces with $\explore$ and obtain  $\tau_3, \tau_4$. 
We invoke $\declarepostponed$ from nodes $n_{25}$ and $n_{26}$, however these do not add any more schedules to $e'$, since they are equivalent to what has  been added already in $\scheduledof{e'}$.  
From $n_{25}$, we backtrack and reach node $n_5$ and invoke 
$\declarepostponed$ at $\commitact(t2)$ which creates a schedule
$\gamma_1$ corresponding to the read event $e''$ of $r2$ in $t1$.  
 $\gamma_1$=$\beginact(t2)\dots \commitact(t2)\beginact(t1)
  \readact(t1,init_x)$ $\readact(t1,t2)$ is added to $\scheduledof{e''}$. 
We then backtrack till the point $r2=0$  
and find that $\scheduledof{e''}$  is non-empty. 
$\rschedule$  runs the schedule $\gamma_1$  (shown in green edges $n_0-n_2-n_6$), followed by recursive calls to $\explore$, resulting in traces $\tau_5, \tau_6, \tau_7, \tau_8$. 
 At node $n_{30}$ on return from $\explore$, we backtrack till the point $r1=0$ and find two schedules $\beta_1, \beta_2$ in $\scheduledof{e'}$. 
  We invoke $\rschedule$ first for $\beta_1$, 
      obtaining traces $\tau_9, \tau_{10}$ ending at nodes $n_{31}, n_{32}$. At nodes $n_{31}, n_{32}$, we invoke $\declarepostponed$ at $\commitact(t3)$ but these will not generate any schedules for the read $r4$ in $t4$  since $\tt{Swappable}$ is $\false$ for this read event. Note that 
$t3 [\trf] t4$ (along with $t4 [\trf]t1, t2 [\trf] t1$) is covered by trace $\tau_{13}$; thus,  we optimize on redundant traces by keeping  $\tt{Swappable}$  $\false$. 
We backtrack till $\commitact(t2)$ at $n_{20}$ and 
      invoke $\declarepostponed$ corresponding to the write in $t2$, which can be a source for the read in $t1$. 
      This adds the schedule $\gamma_2$=  $\beginact(t2)\dots \commitact(t2)$ $\beginact(t1)  \readact(t1,t4)\readact(t1,t2)$ to $\scheduledof{e''}$. 
On returning from this call, we backtrack till the read $e''$ of $r2=0$ in $t1$  and invoke $\rschedule$ at node $n_7$ (the green edges $n_7-n_{14}-n_{21}$) for $\gamma_2$.  
After $\rschedule$ finishes $n_{21}$, we invoke $\explore$ and obtain traces $\tau_{11}, \tau_{12}$.  
  
At node $n_{34}$ on return from $\explore$, we backtrack till the read event $e'$ of $r1$ in $t1$  and invoke $\rschedule$ on the remaining schedule $\beta_2$, obtaining trace $\tau_{13}$.

\subsection{Complexity, Soundness, and Completeness of the DPOR algorithm}

\begin{theorem}
The  DPOR algorithm for $\cc, \ccvt, \cm, \readc, \readat$ is sound, complete and optimal.
\label{thm:dporgood}
\end{theorem}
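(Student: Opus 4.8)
The plan is to prove the three properties separately, in each case reducing the claim about the concrete procedures (Algorithms 1--3) to the equivalences already established for the fulfilled semantics. For soundness I would argue by induction on the length of an exploration that every trace the algorithm records is obtained by a sequence of $\xrightarrow[]{}_{\sat}$ transitions from the empty trace: the empty $\tau_{\o}$ is vacuously fulfilled and partially good, and every edge the algorithm inserts is exactly the update prescribed by the fulfilled transition relation --- the $\tpo$ edge on a $\beginact$ event, and on a $(read,t,t')$ event the $\trf$ edge together with the $\tco$ edges to $\vbl(\tau,t,x)$ and to the conflicting later reads, with the source $t'$ drawn only from $\rbl(\tau,t,x)$. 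Hence each recorded trace lies in $\big[[\sigma_0]\big]^{\sat}_{\ccvt}$, and applying Lemma~\ref{lem:satcons} along the induction shows it is fulfilled and $\ccvt$-consistent. Soundness is thus a direct corollary once one checks that the algorithm faithfully implements $\xrightarrow[]{}_{\sat}$.

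For completeness the key leverage is Theorem~\ref{thm:ccv}, which identifies the weakenings of fulfilled-semantics traces with $\big[[\sigma_0]\big]_{\ccvt}^{weak}$. It therefore suffices to show that for every $\tau\in\big[[\sigma_0]\big]^{\sat}_{\ccvt}$ the algorithm explores some $\tau'$ with $weak(\tau')=weak(\tau)$; equivalently, that every admissible read source is realized along some branch. I would prove this by induction on the number of transactions, distinguishing two ways a read can acquire its source. A read whose source already precedes it in the current observation sequence is resolved directly in \textit{ExploreTraces} by iterating over $\rbl(\tau,t,x)$ (lines 20--24), whereas a read whose intended source is a later, postponed write is recovered by \textit{CreateSchedule}, which reorders the observation sequence to bring the writing transaction $t_2$ before the reading transaction $t_1$ whenever $t_1$ does not happen-before $t_2$ and $t_2$ does not overwrite a read of $t_1$. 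The substance of the argument is to check that these swap conditions (lines 4,6,8) coincide with the readability conditions $\neg\tt{cond1}\wedge\neg\tt{cond2}\wedge\neg\tt{cond3}\wedge\neg\tt{cond4}$, so that no legal $(\tpo,\trf)$ witness is skipped.

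For optimality I would exhibit a bijection between explored traces and $\big[[\sigma_0]\big]_{\ccvt}^{weak}$. Each branching point of the recursion corresponds either to a fresh read selecting a new source from its readable set, or to running a schedule in \textit{RunSchedule}; in both cases the resulting weak trace differs from its siblings in at least one $\trf$ edge, so distinct branches cannot collide. The converse --- that the same weak trace is never reconstructed along two branches --- is enforced by the bookkeeping: the $currentreads$ map pins every later read of a variable inside a transaction to the source already fixed, the $Swappable$ flag is cleared on each read occurring inside a schedule so that a source combination is never re-derived through a different permutation of transactions, and \textit{CreateSchedule} adds a schedule to $schedules(e')$ only after the schedule-tree comparison certifies that no equal schedule is already present. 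Combining this bijection with Lemma~\ref{lemma:poly} and the facts that each schedule is bounded in length by the program and comparable in polynomial time yields the polynomial-time-per-trace bound.

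The hard part will be the optimality direction, and specifically the proof that the two routes by which a read obtains a source --- direct exploration in \textit{ExploreTraces} versus scheduling via \textit{CreateSchedule}/\textit{RunSchedule} --- never generate the same weak trace twice. This requires a precise characterization of which read/write pairs are dispatched to scheduling and which are handled inline, together with a case analysis showing that the $Swappable$ and $currentreads$ invariants partition the explored weak traces with neither overlap nor omission. By contrast, once Theorem~\ref{thm:ccv} is in hand the completeness argument is comparatively routine, since it only has to match each readability condition against the corresponding swap test.
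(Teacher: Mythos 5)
Your soundness and optimality arguments track the paper's own reasoning closely: soundness is exactly the observation that every step of Algorithms 1 and 3 is an instance of $\xrightarrow[]{}_{\sat}$ (so Lemma~\ref{lem:satcons} and Theorem~\ref{thm:ccv} apply), and your injectivity argument for optimality --- distinct branches differ in at least one $\trf$ edge, with $currentreads$, $Swappable$, and schedule deduplication preventing re-derivation --- is the same three-point case analysis the paper gives.

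The genuine gap is in completeness, and you have the difficulty assessment inverted: you call completeness ``comparatively routine,'' but it is where essentially all of the paper's technical work goes (Appendix~\ref{app:complete-ccv}, some ten lemmas). Your plan --- induction on the number of transactions, checking that the swap tests in \textit{CreateSchedule} coincide with the readability conditions --- is a per-read, local argument, and it does not establish what completeness actually requires: that for an \emph{arbitrary} terminal observation sequence $\pi$ derivable in the fulfilled semantics, the algorithm's recursion tree (which commits to one exploration order and only revisits decisions through schedules) contains a branch producing some $\pi'$ with $\pi' \approx \pi$. Showing this forces you to reason about permuting whole observation sequences by swapping adjacent independent observables, to prove that such swaps preserve derivability under $\vdash_{\tt{G}}$ and yield equivalent traces (Lemmas~\ref{lemma g-1}--\ref{lemma g-4}), and --- the crux --- to prove Lemma~\ref{lemma g-9}: a transaction $t$ whose reads draw from a $t$-free prefix $\pi_{\tt{T}}$ can be moved in front of $\pi_{\tt{T}}$ provided its read sources are \emph{changed}, and the resulting trace still extends to the target. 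Without that reordering lemma, ``every admissible read source is realized along some branch'' does not imply that every admissible \emph{combination} of read sources (i.e., every consistent weak trace) is realized, since the weak trace is determined by the entire $\trf$ relation and the algorithm only creates schedules for reads satisfying the specific conditions of lines 4, 6, 8. Matching those conditions against $\tt{cond1}$--$\tt{cond4}$ is a useful sanity check but is not the engine of the proof; the engine is the induction on $|\pi_{\tt{T}}|$ in Lemma~\ref{lemma g-10} together with the $\tt{Pre}(\pi_{\tt{T}},\alpha_t)$ decomposition and the $\oslash$ bookkeeping, none of which appear in your outline.
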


The proof of Theorem \ref{thm:dporgood} is in Appendices \ref{app:satcons},
\ref{app:complete-ccv}, \ref{app:ccful}, \ref{app:complete-cc}, \ref{cm:prop}, \ref{app:complete-cm}, \ref{rato:prop}, \ref{rato:comp}, \ref{rcom:prop}, \ref{rcom:comp}. Our DPOR algorithm spends polynomial time per explored trace : (1) each schedule has a length  bounded by the program length, (2) each schedule we generate  corresponds to a part of a trace that has not been generated before and is added to the set of schedules only if it has not been added.  Checking if a new schedule is equal to an existing one is done in poly time:  arrange all created schedules as a tree and  compare from the root,  wrt new schedule, going over the $\co$ relations. 

\smallskip

\noindent{\bf{Soundness, Optimality and Completeness}}. The algorithm is sound : if we initiate Algorithm 2 from $(X, \tau_{\emptyset},\epsilon)$, 
  all explored traces
$\tau_X$ are s.t. $\tau_X \models X$. This follows from the fact that the exploration uses the 
$\xrightarrow[]{}_{X{-}\sat}$ relation.
The algorithm is optimal in the sense that, for any two different recursive calls to Algorithm 2 with arguments 
$(X, \tau^1_X, \pi_1)$ and $(X, \tau^2_X, \pi_2)$, if $\tau^1_X, \tau^2_X$ are  etendible, then $\tau^1_X \neq \tau^2_X$. 
This follows from  (i) each time we run the for loop in line 19 of Algorithm 2, the  read event  will read from a different source, (ii) in each schedule $\beta \in \scheduledof{e}$ in line 22  of Algorithm 2, the event $e$ reads from a transaction $t$ which is different from all transactions it reads from in line 19 of Algorithm 2, (iii) Any two schedules added to $\scheduledof{e}$ at line 22 of Algorithm 3 will have a read event reading from different transactions. 
The algorithm is complete in the sense that, it produces all  traces corresponding to terminating runs. 

\section{Experiments}
\label{sec:experiments}
We describe the implementation of our optimal DPOR algorithm for 
the causal consistency models $\cc, \ccvt, \cm, \readc, \readat$ as a tool $\ourtool$.  
To the best of our knowledge, \ourtool{} is the first stateless model checking tool for the causal consistency models $\cc, \ccvt, \cm, \readc, \readat$. 

\smallskip 

\noindent{\bf{\ourtool{}}}. 
\ourtool{}  extends  \nidhugg~\cite{10.1007/978-3-662-46681-0_28} and   
works at LLVM IR level accepting a C language program as input. 
At runtime, \ourtool{} controls the exploration of the input program 
until it has explored all the traces using the DPOR algorithm.   
It can detect user-provided assertion violations by analyzing the generated traces. We conduct all experiments on a Ubuntu 22.04.1 LTS with Intel Core i7-1165G7 and 16 GB RAM. 
We evaluate \ourtool~on the following categories of benchmarks, as seen below.

\begin{table}[t]
\footnotesize
\centering
\begin{minipage}{0.4\textwidth}
\caption{\ourtool~on  litmus tests.}
\label{tab:litmus}
\centering
\begin{tabular}{@{} l l l l@{}}
\hline\hline
{   \bfseries    Model}   &  {  \bfseries    Allow    }  & {	\bfseries   Forbid   } & {	\bfseries  Time} \\[0.5ex]
\hline
 \quad$\readc$  &   \quad 5867   &   \quad 1640  &  \quad 570s\\
\quad $\readat$ &  \quad  5867 &  \quad 1640 & \quad 550s \\
\quad $\cc$ &  \quad 5093   &  \quad 2414 & \quad 555s  \\
\quad $\ccvt$ &  \quad 5093   &  \quad 2414 & \quad 550s  \\
\quad $\cm$ &  \quad 5093   &  \quad 2414 & \quad 560s  \\
\hline
\end{tabular}
\end{minipage}
\hfill
\begin{minipage}{0.5\textwidth}
\caption{Evaluating \ourtool~on Database Applications. Tr is number of traces.}
\centering
\begin{tabular}{@{} l c c c c c c@{}}
\hline\hline
{   \bfseries   Applications} &   { $\bf{Tr}$}  &   { $\cc$} & {$\ccvt$  } & { $\cm$ } & { $\readat$  } & { $\readc$} \\[0.5ex]
\hline
\quad Vote \cite{10.14778/2732240.2732246}  & 10 & 0.1s & 0.06s & 0.06s & 0.13s & 0.13s\\  
\quad Twitter \cite{10.1145/2987550.2987559}& 4 & 0.1s & 0.07s & 0.08s & 0.12s & 0.13s  \\
\quad FusionTicket \cite{10.1145/2987550.2987559}&  8 & 0.51s & 0.80s & 1s & 1.13s & 13s\\
\quad Auction \cite{10.1007/978-3-030-44914-8_20} & 8 & 0.1s & 0.1s & 0.11s & 0.13s & 0.15s \\
\quad Auction-2 \cite{10.1007/978-3-030-44914-8_20} & 8 & 0.27s & 0.36s & 0.70s & 0.55s & 29s\\
\quad Group & 6 & 0.08s & 0.1s & 0.12s& 0.13s & 0.12s\\
\hline
\end{tabular}
\end{minipage}
\hspace*{\fill}

\vspace{7pt}
\caption{ Evaluating behaviour of \ourtool~on classical benchmarks}
\centering
\begin{tabular}{@{}l c c c c c@{}}
\hline\hline
{\bfseries $\;\;$ Program  } &  { $\cc$} & {$\ccvt$  } & { $\cm$ } & { $\readat$  } & { $\readc$}\\
\hline
 Causality Violation \cite{bernardi2016robustness} & \textcolor{cyan}{SAFE}& \textcolor{cyan}{SAFE}& \textcolor{cyan}{SAFE}& \textcolor{cyan}{SAFE}& \textcolor{brown}{UNSAFE}\\
Causal Violation \cite{10.1145/3360591} & \textcolor{cyan}{SAFE}& \textcolor{cyan}{SAFE}& \textcolor{cyan}{SAFE}& \textcolor{brown}{UNSAFE} & \textcolor{brown}{UNSAFE}  \\
 Delivery Order \cite{lmcs:7149} & \textcolor{brown}{UNSAFE} & \textcolor{brown}{UNSAFE} & \textcolor{cyan}{SAFE}& \textcolor{brown}{UNSAFE} & \textcolor{brown}{UNSAFE}\\
 Long Fork \cite{bernardi2016robustness} & \textcolor{brown}{UNSAFE}  & \textcolor{brown}{UNSAFE} & \textcolor{brown}{UNSAFE}  & \textcolor{brown}{UNSAFE} & \textcolor{brown}{UNSAFE} \\
 Lost Update \cite{bernardi2016robustness} & \textcolor{brown}{UNSAFE} & \textcolor{brown}{UNSAFE} & \textcolor{brown}{UNSAFE} & \textcolor{brown}{UNSAFE} & \textcolor{brown}{UNSAFE}\\
 Message Passing \cite{beillahi2021checking}& \textcolor{cyan}{SAFE}& \textcolor{cyan}{SAFE}& \textcolor{cyan}{SAFE}& \textcolor{cyan}{SAFE} & \textcolor{brown}{UNSAFE}\\
Modification Order & \textcolor{brown}{UNSAFE}& \textcolor{cyan}{SAFE}& \textcolor{cyan}{SAFE}& \textcolor{cyan}{SAFE} & \textcolor{brown}{UNSAFE}\\
 Conflict violation \cite{10.1145/3360591} & \textcolor{brown}{UNSAFE} & \textcolor{brown}{UNSAFE} & \textcolor{brown}{UNSAFE} & \textcolor{brown}{UNSAFE} & \textcolor{brown}{UNSAFE}\\
 Read Atomicity \cite{10.1145/3360591} & \textcolor{cyan}{SAFE}& \textcolor{cyan}{SAFE}& \textcolor{cyan}{SAFE}& \textcolor{cyan}{SAFE}& \textcolor{cyan}{SAFE}\\
 Read Committed \cite{10.1145/3360591} & \textcolor{brown}{UNSAFE} & \textcolor{brown}{UNSAFE} & \textcolor{brown}{UNSAFE} & \textcolor{brown}{UNSAFE} & \textcolor{brown}{UNSAFE}\\
 Repeated Read \cite{10.1145/3360591} & \textcolor{cyan}{SAFE}& \textcolor{cyan}{SAFE}& \textcolor{cyan}{SAFE}& \textcolor{cyan}{SAFE}& \textcolor{brown}{UNSAFE} \\
Load Buffer & \textcolor{cyan}{SAFE}& \textcolor{cyan}{SAFE}& \textcolor{cyan}{SAFE}& \textcolor{cyan}{SAFE}& \textcolor{cyan}{SAFE}\\
 Store Buffer & \textcolor{brown}{UNSAFE} & \textcolor{brown}{UNSAFE} & \textcolor{brown}{UNSAFE} & \textcolor{brown}{UNSAFE} & \textcolor{brown}{UNSAFE}\\
 Write Skew \cite{bernardi2016robustness} & \textcolor{brown}{UNSAFE} & \textcolor{brown}{UNSAFE} & \textcolor{brown}{UNSAFE} & \textcolor{brown}{UNSAFE} & \textcolor{brown}{UNSAFE}\\
\hline
\end{tabular}
\label{tab:database}
\end{table}

\smallskip 

\noindent{\bf{Evaluation on Litmus Tests}}. 
\label{sec:lit}
The first set of experiments (Table 1) includes litmus tests. 
These litmus tests are non-transactional programs with assertions at end of each program generated using the Herd tool \cite{10.1145/2627752}. 
 as a sanity check in tool development.  
We generated 7507 litmus tests   
and  instrumented   them into  transactional programs.  
Given the small number of instructions
 per process in the litmus tests 
of \cite{10.1145/2627752}, we group all instructions in a process as a single transaction. 
In these litmus tests, the processes execute concurrently, and we check if an assertion is violated or not. 
Table \ref{tab:litmus} 
describes the outcomes observed on the litmus tests under all five consistency models.  
An outcome $\bf{Allow}$ in Table \ref{tab:litmus} means that the program has an execution that violates the assertion under the respective consistency model; otherwise, the outcome is $\bf{Forbid}$. The time taken per model 
to evaluate all  7507 tests is atmost 570 seconds. 
 
 \smallskip 
 
\noindent{\bf{Evaluation on Classical Transactional Benchmarks}}. 
\label{sec:ana}
The second set (Table \ref{tab:database}) consists of classical benchmark examples from databases on well-known transaction related anomalies 
 \cite{bernardi2016robustness}, \cite{10.1145/3360591},  \cite{lmcs:7149} and \cite{beillahi2021checking}. 
 We consider 14 such examples. 
 While most of them are taken from  cited sources, the ones 
without a citation in Table \ref{tab:database} 
are created by us  (see Appendix \ref{app:tab3-p}).

For each of these examples, we have made three versions (see Appendix 
\ref{app:ext-tables})  
by parameterizing on the number of processes and transactions. 
In version 1,  we have two to four processes  per program and two transactions per process. Version 2  is obtained 
 allowing each process to have three to four transactions. Version 3
expands version 2 by allowing each program to have four to five processes.  
Versions 2,3 serve as a stress test for \ourtool{} as increasing the number of transactions and 
processes  increases the number of executions to be explored and the number 
of consistent traces.
 \ourtool~took less than four second to finish running all version 1 programs,  
 about 40\footnote{under $\readc$ consistency it took 160s, where the program \emph{Conflict Violation} took about 85s to generate 551781 traces.} seconds to finish running all 
 version 2 programs, and about 450 to 1200\footnote{under $\readc$ consistency it took 3900s, where the program \emph{Writeskew} took about 2360s to generate 7726230 traces.} seconds to finish running all version 3 programs 
 across the models. 
 All programs in Table \ref{tab:database} are classified as \textcolor{cyan}{SAFE} or \textcolor{brown}{UNSAFE}, 
 depending on whether or not they have executions violating the assertion.  
Since programs under $\cc,\readc$ semantics may exhibit more behaviours than $\ccvt, \cm, \readat$ 
we observe more \textcolor{brown}{UNSAFE} under $\cc$ and $\readc$ semantics 
while they are \textcolor{cyan}{SAFE} under the $\ccvt,\cm$ and $\readat$ semantics.
\smallskip

\noindent{\bf{Evaluation on some Database Applications}}. 
\label{sec:appl}
The third set (Table 2) of benchmarks consists of transactional programs inspired 
from six distributed systems and database applications \cite{10.14778/2732240.2732246}, \cite{10.1145/2987550.2987559}, 
\cite{10.1007/978-3-030-44914-8_20}. For each application,  $\ourtool{}$ 
verifies the assertions given in these papers on all possible executions.  For instance, Auction \cite{10.1007/978-3-030-44914-8_20} is a buggy application where 
the highest bid may not win, while  Auction-2 \cite{10.1007/978-3-030-44914-8_20} is the non buggy version where only the highest bid will win. $\ourtool{}$ detected the execution leading to the bug in Auction \cite{10.1007/978-3-030-44914-8_20}, and also verified that in all executions of Auction-2, the highest bid wins.  
 Group is a synthetic application created by us, inspired from whatsapp groups where 
 we verify that a new user can be added to a whatsapp group only once.

\section{Conclusion}
\vspace{-.1cm}
In this paper, we have provided a DPOR algorithm using the $\tpo$-$\trf$ equivalence for transactional programs under five prominent causal consistency models, and also implemented the same in a tool \ourtool{}. This is the first tool for stateless model checking of transactional causal consistency models. As future work, we plan to adapt 
our algorithm to  two of the strongest consistency notions known,  namely snapshot isolation(SI) and serializability (SER). The main challenge in  handling these two models is that the complexity of checking if a given $\tpo$-$\trf$ trace is SI/SER consistent is an NP-complete problem \cite{10.1145/3009837.3009888}. Hence, we need to come up with some 
heuristics to have an algorithm which will scale well in practice.

 \bibliographystyle{ACM-Reference-Format}
\bibliography{references}


\begin{thebibliography}{00}


\ifx \showCODEN    \undefined \def \showCODEN     #1{\unskip}     \fi
\ifx \showDOI      \undefined \def \showDOI       #1{#1}\fi
\ifx \showISBNx    \undefined \def \showISBNx     #1{\unskip}     \fi
\ifx \showISBNxiii \undefined \def \showISBNxiii  #1{\unskip}     \fi
\ifx \showISSN     \undefined \def \showISSN      #1{\unskip}     \fi
\ifx \showLCCN     \undefined \def \showLCCN      #1{\unskip}     \fi
\ifx \shownote     \undefined \def \shownote      #1{#1}          \fi
\ifx \showarticletitle \undefined \def \showarticletitle #1{#1}   \fi
\ifx \showURL      \undefined \def \showURL       {\relax}        \fi
\providecommand\bibfield[2]{#2}
\providecommand\bibinfo[2]{#2}
\providecommand\natexlab[1]{#1}
\providecommand\showeprint[2][]{arXiv:#2}

\bibitem[\protect\citeauthoryear{Abdulla, Aronis, Jonsson, and Sagonas}{Abdulla
  et~al\mbox{.}}{2014}]%
        {10.1145/2578855.2535845}
\bibfield{author}{\bibinfo{person}{Parosh Abdulla}, \bibinfo{person}{Stavros
  Aronis}, \bibinfo{person}{Bengt Jonsson}, {and} \bibinfo{person}{Konstantinos
  Sagonas}.} \bibinfo{year}{2014}\natexlab{}.
\newblock \showarticletitle{Optimal Dynamic Partial Order Reduction}.
\newblock \bibinfo{journal}{{\em SIGPLAN Not.\/}} \bibinfo{volume}{49},
  \bibinfo{number}{1} (\bibinfo{date}{jan} \bibinfo{year}{2014}),
  \bibinfo{pages}{373–384}.
\newblock
\showISSN{0362-1340}
\showDOI{%
\url{https://doi.org/10.1145/2578855.2535845}}


\bibitem[\protect\citeauthoryear{Abdulla, Aronis, Atig, Jonsson, Leonardsson,
  and Sagonas}{Abdulla et~al\mbox{.}}{2015}]%
        {10.1007/978-3-662-46681-0_28}
\bibfield{author}{\bibinfo{person}{Parosh~Aziz Abdulla},
  \bibinfo{person}{Stavros Aronis}, \bibinfo{person}{Mohamed~Faouzi Atig},
  \bibinfo{person}{Bengt Jonsson}, \bibinfo{person}{Carl Leonardsson}, {and}
  \bibinfo{person}{Konstantinos Sagonas}.} \bibinfo{year}{2015}\natexlab{}.
\newblock \showarticletitle{Stateless Model Checking for TSO and PSO}. In
  \bibinfo{booktitle}{{\em Tools and Algorithms for the Construction and
  Analysis of Systems}}, \bibfield{editor}{\bibinfo{person}{Christel Baier}
  {and} \bibinfo{person}{Cesare Tinelli}} (Eds.). \bibinfo{publisher}{Springer
  Berlin Heidelberg}, \bibinfo{address}{Berlin, Heidelberg},
  \bibinfo{pages}{353--367}.
\newblock
\showISBNx{978-3-662-46681-0}


\bibitem[\protect\citeauthoryear{Abdulla, Atig, Jonsson, L\r{a}ng, Ngo, and
  Sagonas}{Abdulla et~al\mbox{.}}{2019}]%
        {10.1145/3360576}
\bibfield{author}{\bibinfo{person}{Parosh~Aziz Abdulla},
  \bibinfo{person}{Mohamed~Faouzi Atig}, \bibinfo{person}{Bengt Jonsson},
  \bibinfo{person}{Magnus L\r{a}ng}, \bibinfo{person}{Tuan~Phong Ngo}, {and}
  \bibinfo{person}{Konstantinos Sagonas}.} \bibinfo{year}{2019}\natexlab{}.
\newblock \showarticletitle{Optimal Stateless Model Checking for Reads-from
  Equivalence under Sequential Consistency}.
\newblock \bibinfo{journal}{{\em Proc. ACM Program. Lang.\/}}
  \bibinfo{volume}{3}, \bibinfo{number}{OOPSLA}, Article
  \bibinfo{articleno}{150} (\bibinfo{date}{Oct.} \bibinfo{year}{2019}),
  \bibinfo{numpages}{29}~pages.
\newblock
\showDOI{%
\url{https://doi.org/10.1145/3360576}}


\bibitem[\protect\citeauthoryear{Abdulla, Atig, Jonsson, and Ngo}{Abdulla
  et~al\mbox{.}}{2018}]%
        {10.1145/3276505}
\bibfield{author}{\bibinfo{person}{Parosh~Aziz Abdulla},
  \bibinfo{person}{Mohamed~Faouzi Atig}, \bibinfo{person}{Bengt Jonsson}, {and}
  \bibinfo{person}{Tuan~Phong Ngo}.} \bibinfo{year}{2018}\natexlab{}.
\newblock \showarticletitle{Optimal Stateless Model Checking under the
  Release-Acquire Semantics}.
\newblock \bibinfo{journal}{{\em Proc. ACM Program. Lang.\/}}
  \bibinfo{volume}{2}, \bibinfo{number}{OOPSLA}, Article
  \bibinfo{articleno}{135} (\bibinfo{date}{Oct.} \bibinfo{year}{2018}),
  \bibinfo{numpages}{29}~pages.
\newblock
\showDOI{%
\url{https://doi.org/10.1145/3276505}}


\bibitem[\protect\citeauthoryear{Ahamad, Neiger, Burns, Kohli, and
  Hutto}{Ahamad et~al\mbox{.}}{1995}]%
        {10.1007/BF01784241}
\bibfield{author}{\bibinfo{person}{Mustaque Ahamad}, \bibinfo{person}{Gil
  Neiger}, \bibinfo{person}{James~E. Burns}, \bibinfo{person}{Prince Kohli},
  {and} \bibinfo{person}{Phillip~W. Hutto}.} \bibinfo{year}{1995}\natexlab{}.
\newblock \showarticletitle{Causal Memory: Definitions, Implementation, and
  Programming}.
\newblock \bibinfo{journal}{{\em Distrib. Comput.\/}} \bibinfo{volume}{9},
  \bibinfo{number}{1} (\bibinfo{date}{March} \bibinfo{year}{1995}),
  \bibinfo{pages}{37–49}.
\newblock
\showISSN{0178-2770}
\showDOI{%
\url{https://doi.org/10.1007/BF01784241}}


\bibitem[\protect\citeauthoryear{Alglave, Maranget, and Tautschnig}{Alglave
  et~al\mbox{.}}{2014}]%
        {10.1145/2627752}
\bibfield{author}{\bibinfo{person}{Jade Alglave}, \bibinfo{person}{Luc
  Maranget}, {and} \bibinfo{person}{Michael Tautschnig}.}
  \bibinfo{year}{2014}\natexlab{}.
\newblock \showarticletitle{Herding Cats: Modelling, Simulation, Testing, and
  Data Mining for Weak Memory}.
\newblock \bibinfo{journal}{{\em ACM Trans. Program. Lang. Syst.\/}}
  \bibinfo{volume}{36}, \bibinfo{number}{2}, Article \bibinfo{articleno}{7}
  (\bibinfo{date}{jul} \bibinfo{year}{2014}), \bibinfo{numpages}{74}~pages.
\newblock
\showISSN{0164-0925}
\showDOI{%
\url{https://doi.org/10.1145/2627752}}


\bibitem[\protect\citeauthoryear{Beillahi, Bouajjani, and Enea}{Beillahi
  et~al\mbox{.}}{2021a}]%
        {beillahi2021checking}
\bibfield{author}{\bibinfo{person}{Sidi~Mohamed Beillahi},
  \bibinfo{person}{Ahmed Bouajjani}, {and} \bibinfo{person}{Constantin Enea}.}
  \bibinfo{year}{2021}\natexlab{a}.
\newblock \showarticletitle{Checking Robustness Between Weak Transactional
  Consistency Models}.
\newblock \bibinfo{journal}{{\em Programming Languages and Systems\/}}
  \bibinfo{volume}{12648} (\bibinfo{year}{2021}), \bibinfo{pages}{87}.
\newblock


\bibitem[\protect\citeauthoryear{Beillahi, Bouajjani, and Enea}{Beillahi
  et~al\mbox{.}}{2021b}]%
        {lmcs:7149}
\bibfield{author}{\bibinfo{person}{Sidi~Mohamed Beillahi},
  \bibinfo{person}{Ahmed Bouajjani}, {and} \bibinfo{person}{Constantin Enea}.}
  \bibinfo{year}{2021}\natexlab{b}.
\newblock \showarticletitle{{Robustness Against Transactional Causal
  Consistency}}.
\newblock \bibinfo{journal}{{\em {Logical Methods in Computer Science}\/}}
  \bibinfo{volume}{{Volume 17, Issue 1}} (\bibinfo{date}{Feb.}
  \bibinfo{year}{2021}).
\newblock
\showDOI{%
\url{https://doi.org/10.23638/LMCS-17(1:12)2021}}


\bibitem[\protect\citeauthoryear{Berenson, Bernstein, Gray, Melton, O'Neil, and
  O'Neil}{Berenson et~al\mbox{.}}{1995a}]%
        {10.1145/223784.223785}
\bibfield{author}{\bibinfo{person}{Hal Berenson}, \bibinfo{person}{Phil
  Bernstein}, \bibinfo{person}{Jim Gray}, \bibinfo{person}{Jim Melton},
  \bibinfo{person}{Elizabeth O'Neil}, {and} \bibinfo{person}{Patrick O'Neil}.}
  \bibinfo{year}{1995}\natexlab{a}.
\newblock \showarticletitle{A Critique of ANSI SQL Isolation Levels}. In
  \bibinfo{booktitle}{{\em Proceedings of the 1995 ACM SIGMOD International
  Conference on Management of Data}} {\em (\bibinfo{series}{SIGMOD '95})}.
  \bibinfo{publisher}{Association for Computing Machinery},
  \bibinfo{address}{New York, NY, USA}, \bibinfo{pages}{1–10}.
\newblock
\showISBNx{0897917316}
\showDOI{%
\url{https://doi.org/10.1145/223784.223785}}


\bibitem[\protect\citeauthoryear{Berenson, Bernstein, Gray, Melton, O'Neil, and
  O'Neil}{Berenson et~al\mbox{.}}{1995b}]%
        {DBLP:conf/sigmod/BerensonBGMOO95}
\bibfield{author}{\bibinfo{person}{Hal Berenson}, \bibinfo{person}{Philip~A.
  Bernstein}, \bibinfo{person}{Jim Gray}, \bibinfo{person}{Jim Melton},
  \bibinfo{person}{Elizabeth~J. O'Neil}, {and} \bibinfo{person}{Patrick~E.
  O'Neil}.} \bibinfo{year}{1995}\natexlab{b}.
\newblock \showarticletitle{A Critique of {ANSI} {SQL} Isolation Levels}. In
  \bibinfo{booktitle}{{\em Proceedings of the 1995 {ACM} {SIGMOD} International
  Conference on Management of Data, San Jose, California, USA, May 22-25,
  1995}}. \bibinfo{pages}{1--10}.
\newblock
\showDOI{%
\url{https://doi.org/10.1145/223784.223785}}


\bibitem[\protect\citeauthoryear{Bernardi and Gotsman}{Bernardi and
  Gotsman}{2016}]%
        {bernardi2016robustness}
\bibfield{author}{\bibinfo{person}{Giovanni Bernardi} {and}
  \bibinfo{person}{Alexey Gotsman}.} \bibinfo{year}{2016}\natexlab{}.
\newblock \showarticletitle{Robustness against consistency models with atomic
  visibility}. In \bibinfo{booktitle}{{\em 27th International Conference on
  Concurrency Theory (CONCUR 2016)}}. Schloss Dagstuhl-Leibniz-Zentrum fuer
  Informatik.
\newblock


\bibitem[\protect\citeauthoryear{Biswas and Enea}{Biswas and Enea}{2019}]%
        {10.1145/3360591}
\bibfield{author}{\bibinfo{person}{Ranadeep Biswas} {and}
  \bibinfo{person}{Constantin Enea}.} \bibinfo{year}{2019}\natexlab{}.
\newblock \showarticletitle{On the Complexity of Checking Transactional
  Consistency}.
\newblock \bibinfo{journal}{{\em Proc. ACM Program. Lang.\/}}
  \bibinfo{volume}{3}, \bibinfo{number}{OOPSLA}, Article
  \bibinfo{articleno}{165} (\bibinfo{date}{oct} \bibinfo{year}{2019}),
  \bibinfo{numpages}{28}~pages.
\newblock
\showDOI{%
\url{https://doi.org/10.1145/3360591}}


\bibitem[\protect\citeauthoryear{Bouajjani, Enea, Guerraoui, and
  Hamza}{Bouajjani et~al\mbox{.}}{2017a}]%
        {10.1145/3093333.3009888}
\bibfield{author}{\bibinfo{person}{Ahmed Bouajjani},
  \bibinfo{person}{Constantin Enea}, \bibinfo{person}{Rachid Guerraoui}, {and}
  \bibinfo{person}{Jad Hamza}.} \bibinfo{year}{2017}\natexlab{a}.
\newblock \showarticletitle{On Verifying Causal Consistency}.
\newblock \bibinfo{journal}{{\em SIGPLAN Not.\/}} \bibinfo{volume}{52},
  \bibinfo{number}{1} (\bibinfo{date}{Jan.} \bibinfo{year}{2017}),
  \bibinfo{pages}{626–638}.
\newblock
\showISSN{0362-1340}
\showDOI{%
\url{https://doi.org/10.1145/3093333.3009888}}


\bibitem[\protect\citeauthoryear{Bouajjani, Enea, Guerraoui, and
  Hamza}{Bouajjani et~al\mbox{.}}{2017b}]%
        {10.1145/3009837.3009888}
\bibfield{author}{\bibinfo{person}{Ahmed Bouajjani},
  \bibinfo{person}{Constantin Enea}, \bibinfo{person}{Rachid Guerraoui}, {and}
  \bibinfo{person}{Jad Hamza}.} \bibinfo{year}{2017}\natexlab{b}.
\newblock \showarticletitle{On Verifying Causal Consistency}. In
  \bibinfo{booktitle}{{\em Proceedings of the 44th ACM SIGPLAN Symposium on
  Principles of Programming Languages}} {\em (\bibinfo{series}{POPL 2017})}.
  \bibinfo{publisher}{Association for Computing Machinery},
  \bibinfo{address}{New York, NY, USA}, \bibinfo{pages}{626–638}.
\newblock
\showISBNx{9781450346603}
\showDOI{%
\url{https://doi.org/10.1145/3009837.3009888}}


\bibitem[\protect\citeauthoryear{Brutschy, Dimitrov, M{\"{u}}ller, and
  Vechev}{Brutschy et~al\mbox{.}}{2017}]%
        {DBLP:conf/popl/BrutschyD0V17}
\bibfield{author}{\bibinfo{person}{Lucas Brutschy}, \bibinfo{person}{Dimitar~K.
  Dimitrov}, \bibinfo{person}{Peter M{\"{u}}ller}, {and}
  \bibinfo{person}{Martin~T. Vechev}.} \bibinfo{year}{2017}\natexlab{}.
\newblock \showarticletitle{Serializability for eventual consistency:
  criterion, analysis, and applications}. In \bibinfo{booktitle}{{\em
  Proceedings of the 44th {ACM} {SIGPLAN} Symposium on Principles of
  Programming Languages, {POPL} 2017, Paris, France, January 18-20, 2017}}.
  \bibinfo{pages}{458--472}.
\newblock
\showDOI{%
\url{https://doi.org/10.1145/3009837.3009895}}


\bibitem[\protect\citeauthoryear{Brutschy, Dimitrov, M{\"{u}}ller, and
  Vechev}{Brutschy et~al\mbox{.}}{2018}]%
        {DBLP:conf/pldi/BrutschyD0V18}
\bibfield{author}{\bibinfo{person}{Lucas Brutschy}, \bibinfo{person}{Dimitar~K.
  Dimitrov}, \bibinfo{person}{Peter M{\"{u}}ller}, {and}
  \bibinfo{person}{Martin~T. Vechev}.} \bibinfo{year}{2018}\natexlab{}.
\newblock \showarticletitle{Static serializability analysis for causal
  consistency}, See \citeN{DBLP:conf/pldi/2018}, \bibinfo{pages}{90--104}.
\newblock
\showISBNx{978-1-4503-5698-5}
\showDOI{%
\url{https://doi.org/10.1145/3192366.3192415}}


\bibitem[\protect\citeauthoryear{Burckhardt}{Burckhardt}{2014}]%
        {DBLP:journals/ftpl/Burckhardt14}
\bibfield{author}{\bibinfo{person}{Sebastian Burckhardt}.}
  \bibinfo{year}{2014}\natexlab{}.
\newblock \showarticletitle{Principles of Eventual Consistency}.
\newblock \bibinfo{journal}{{\em Found. Trends Program. Lang.\/}}
  \bibinfo{volume}{1}, \bibinfo{number}{1-2} (\bibinfo{year}{2014}),
  \bibinfo{pages}{1--150}.
\newblock
\showDOI{%
\url{https://doi.org/10.1561/2500000011}}


\bibitem[\protect\citeauthoryear{Cerone, Bernardi, and Gotsman}{Cerone
  et~al\mbox{.}}{2015}]%
        {DBLP:conf/concur/Cerone0G15}
\bibfield{author}{\bibinfo{person}{Andrea Cerone}, \bibinfo{person}{Giovanni
  Bernardi}, {and} \bibinfo{person}{Alexey Gotsman}.}
  \bibinfo{year}{2015}\natexlab{}.
\newblock \showarticletitle{A Framework for Transactional Consistency Models
  with Atomic Visibility}. In \bibinfo{booktitle}{{\em 26th International
  Conference on Concurrency Theory, {CONCUR} 2015, Madrid, Spain, September
  1.4, 2015}}. \bibinfo{pages}{58--71}.
\newblock
\showDOI{%
\url{https://doi.org/10.4230/LIPIcs.CONCUR.2015.58}}


\bibitem[\protect\citeauthoryear{Cerone and Gotsman}{Cerone and
  Gotsman}{2018}]%
        {DBLP:journals/jacm/CeroneG18}
\bibfield{author}{\bibinfo{person}{Andrea Cerone} {and} \bibinfo{person}{Alexey
  Gotsman}.} \bibinfo{year}{2018}\natexlab{}.
\newblock \showarticletitle{Analysing Snapshot Isolation}.
\newblock \bibinfo{journal}{{\em J. {ACM}\/}} \bibinfo{volume}{65},
  \bibinfo{number}{2} (\bibinfo{year}{2018}), \bibinfo{pages}{11:1--11:41}.
\newblock
\showDOI{%
\url{https://doi.org/10.1145/3152396}}


\bibitem[\protect\citeauthoryear{Cerone, Gotsman, and Yang}{Cerone
  et~al\mbox{.}}{2017}]%
        {cerone_et_al:LIPIcs:2017:7794}
\bibfield{author}{\bibinfo{person}{Andrea Cerone}, \bibinfo{person}{Alexey
  Gotsman}, {and} \bibinfo{person}{Hongseok Yang}.}
  \bibinfo{year}{2017}\natexlab{}.
\newblock \showarticletitle{{Algebraic Laws for Weak Consistency}}. In
  \bibinfo{booktitle}{{\em 28th International Conference on Concurrency Theory
  (CONCUR 2017)}} {\em (\bibinfo{series}{Leibniz International Proceedings in
  Informatics (LIPIcs)})}, \bibfield{editor}{\bibinfo{person}{Roland Meyer}
  {and} \bibinfo{person}{Uwe Nestmann}} (Eds.), Vol.~\bibinfo{volume}{85}.
  \bibinfo{publisher}{Schloss Dagstuhl--Leibniz-Zentrum fuer Informatik},
  \bibinfo{address}{Dagstuhl, Germany}, \bibinfo{pages}{26:1--26:18}.
\newblock
\showISBNx{978-3-95977-048-4}
\showISSN{1868-8969}
\showDOI{%
\url{https://doi.org/10.4230/LIPIcs.CONCUR.2017.26}}


\bibitem[\protect\citeauthoryear{Chong, Sorensen, and Wickerson}{Chong
  et~al\mbox{.}}{2018}]%
        {DBLP:conf/pldi/ChongSW18}
\bibfield{author}{\bibinfo{person}{Nathan Chong}, \bibinfo{person}{Tyler
  Sorensen}, {and} \bibinfo{person}{John Wickerson}.}
  \bibinfo{year}{2018}\natexlab{}.
\newblock \showarticletitle{The semantics of transactions and weak memory in
  x86, Power, ARM, and {C++}}, See \citeN{DBLP:conf/pldi/2018},
  \bibinfo{pages}{211--225}.
\newblock
\showISBNx{978-1-4503-5698-5}
\showDOI{%
\url{https://doi.org/10.1145/3192366.3192373}}


\bibitem[\protect\citeauthoryear{Clarke, Emerson, and Sistla}{Clarke
  et~al\mbox{.}}{1983}]%
        {DBLP:conf/popl/ClarkeES83}
\bibfield{author}{\bibinfo{person}{Edmund~M. Clarke}, \bibinfo{person}{E.~Allen
  Emerson}, {and} \bibinfo{person}{A.~Prasad Sistla}.}
  \bibinfo{year}{1983}\natexlab{}.
\newblock \showarticletitle{Automatic Verification of Finite State Concurrent
  Systems Using Temporal Logic Specifications: {A} Practical Approach}. In
  \bibinfo{booktitle}{{\em Conference Record of the Tenth Annual {ACM}
  Symposium on Principles of Programming Languages, Austin, Texas, USA, January
  1983}}. \bibinfo{pages}{117--126}.
\newblock
\showDOI{%
\url{https://doi.org/10.1145/567067.567080}}


\bibitem[\protect\citeauthoryear{Clarke, Grumberg, Minea, and Peled}{Clarke
  et~al\mbox{.}}{1999}]%
        {DBLP:journals/sttt/ClarkeGMP99}
\bibfield{author}{\bibinfo{person}{Edmund~M. Clarke}, \bibinfo{person}{Orna
  Grumberg}, \bibinfo{person}{Marius Minea}, {and} \bibinfo{person}{Doron~A.
  Peled}.} \bibinfo{year}{1999}\natexlab{}.
\newblock \showarticletitle{State Space Reduction Using Partial Order
  Techniques}.
\newblock \bibinfo{journal}{{\em Int. J. Softw. Tools Technol. Transf.\/}}
  \bibinfo{volume}{2}, \bibinfo{number}{3} (\bibinfo{year}{1999}),
  \bibinfo{pages}{279--287}.
\newblock
\showDOI{%
\url{https://doi.org/10.1007/s100090050035}}


\bibitem[\protect\citeauthoryear{Cormen, Leiserson, Rivest, and Stein}{Cormen
  et~al\mbox{.}}{2009}]%
        {10.5555/1614191}
\bibfield{author}{\bibinfo{person}{Thomas~H. Cormen},
  \bibinfo{person}{Charles~E. Leiserson}, \bibinfo{person}{Ronald~L. Rivest},
  {and} \bibinfo{person}{Clifford Stein}.} \bibinfo{year}{2009}\natexlab{}.
\newblock \bibinfo{booktitle}{{\em Introduction to Algorithms, Third Edition\/}
  (\bibinfo{edition}{3rd} ed.)}.
\newblock \bibinfo{publisher}{The MIT Press}.
\newblock
\showISBNx{0262033844}


\bibitem[\protect\citeauthoryear{Difallah, Pavlo, Curino, and
  Cudre-Mauroux}{Difallah et~al\mbox{.}}{2013}]%
        {10.14778/2732240.2732246}
\bibfield{author}{\bibinfo{person}{Djellel~Eddine Difallah},
  \bibinfo{person}{Andrew Pavlo}, \bibinfo{person}{Carlo Curino}, {and}
  \bibinfo{person}{Philippe Cudre-Mauroux}.} \bibinfo{year}{2013}\natexlab{}.
\newblock \showarticletitle{OLTP-Bench: An Extensible Testbed for Benchmarking
  Relational Databases}.
\newblock \bibinfo{journal}{{\em Proc. VLDB Endow.\/}} \bibinfo{volume}{7},
  \bibinfo{number}{4} (\bibinfo{date}{dec} \bibinfo{year}{2013}),
  \bibinfo{pages}{277–288}.
\newblock
\showISSN{2150-8097}
\showDOI{%
\url{https://doi.org/10.14778/2732240.2732246}}


\bibitem[\protect\citeauthoryear{Fischer, Lynch, and Paterson}{Fischer
  et~al\mbox{.}}{1985}]%
        {DBLP:journals/jacm/FischerLP85}
\bibfield{author}{\bibinfo{person}{Michael~J. Fischer},
  \bibinfo{person}{Nancy~A. Lynch}, {and} \bibinfo{person}{Mike Paterson}.}
  \bibinfo{year}{1985}\natexlab{}.
\newblock \showarticletitle{Impossibility of Distributed Consensus with One
  Faulty Process}.
\newblock \bibinfo{journal}{{\em J. {ACM}\/}} \bibinfo{volume}{32},
  \bibinfo{number}{2} (\bibinfo{year}{1985}), \bibinfo{pages}{374--382}.
\newblock
\showDOI{%
\url{https://doi.org/10.1145/3149.214121}}


\bibitem[\protect\citeauthoryear{Foster and Grossman}{Foster and
  Grossman}{2018}]%
        {DBLP:conf/pldi/2018}
\bibfield{editor}{\bibinfo{person}{Jeffrey~S. Foster} {and}
  \bibinfo{person}{Dan Grossman}} (Eds.). \bibinfo{year}{2018}\natexlab{}.
\newblock \bibinfo{booktitle}{{\em Proceedings of the 39th {ACM} {SIGPLAN}
  Conference on Programming Language Design and Implementation, {PLDI} 2018,
  Philadelphia, PA, USA, June 18-22, 2018}}. \bibinfo{publisher}{{ACM}}.
\newblock
\showISBNx{978-1-4503-5698-5}
\showDOI{%
\url{https://doi.org/10.1145/3192366}}


\bibitem[\protect\citeauthoryear{Gilbert and Lynch}{Gilbert and Lynch}{2002}]%
        {DBLP:journals/sigact/GilbertL02}
\bibfield{author}{\bibinfo{person}{Seth Gilbert} {and}
  \bibinfo{person}{Nancy~A. Lynch}.} \bibinfo{year}{2002}\natexlab{}.
\newblock \showarticletitle{Brewer's conjecture and the feasibility of
  consistent, available, partition-tolerant web services}.
\newblock \bibinfo{journal}{{\em {SIGACT} News\/}} \bibinfo{volume}{33},
  \bibinfo{number}{2} (\bibinfo{year}{2002}), \bibinfo{pages}{51--59}.
\newblock
\showDOI{%
\url{https://doi.org/10.1145/564585.564601}}


\bibitem[\protect\citeauthoryear{Godefroid}{Godefroid}{1996}]%
        {DBLP:books/sp/Godefroid96}
\bibfield{author}{\bibinfo{person}{Patrice Godefroid}.}
  \bibinfo{year}{1996}\natexlab{}.
\newblock \bibinfo{booktitle}{{\em Partial-Order Methods for the Verification
  of Concurrent Systems - An Approach to the State-Explosion Problem}}.
  \bibinfo{series}{Lecture Notes in Computer Science},
  Vol.~\bibinfo{volume}{1032}.
\newblock \bibinfo{publisher}{Springer}.
\newblock
\showISBNx{3-540-60761-7}
\showDOI{%
\url{https://doi.org/10.1007/3-540-60761-7}}


\bibitem[\protect\citeauthoryear{Godefroid}{Godefroid}{1997}]%
        {10.1145/263699.263717}
\bibfield{author}{\bibinfo{person}{Patrice Godefroid}.}
  \bibinfo{year}{1997}\natexlab{}.
\newblock \showarticletitle{Model Checking for Programming Languages Using
  VeriSoft}. In \bibinfo{booktitle}{{\em Proceedings of the 24th ACM
  SIGPLAN-SIGACT Symposium on Principles of Programming Languages}} {\em
  (\bibinfo{series}{POPL '97})}. \bibinfo{publisher}{Association for Computing
  Machinery}, \bibinfo{address}{New York, NY, USA}, \bibinfo{pages}{174–186}.
\newblock
\showISBNx{0897918533}
\showDOI{%
\url{https://doi.org/10.1145/263699.263717}}


\bibitem[\protect\citeauthoryear{Hamza}{Hamza}{2015}]%
        {jad-thesis}
\bibfield{author}{\bibinfo{person}{Jad Hamza}.}
  \bibinfo{year}{2015}\natexlab{}.
\newblock \bibinfo{booktitle}{{\em Algorithmic Verification of Concurrent and
  Distributed Data Structures}}.
\newblock


\bibitem[\protect\citeauthoryear{Harris, Larus, and Rajwar}{Harris
  et~al\mbox{.}}{2010}]%
        {DBLP:series/synthesis/2010Harris}
\bibfield{author}{\bibinfo{person}{Tim Harris}, \bibinfo{person}{James~R.
  Larus}, {and} \bibinfo{person}{Ravi Rajwar}.}
  \bibinfo{year}{2010}\natexlab{}.
\newblock \bibinfo{booktitle}{{\em Transactional Memory, 2nd edition}}.
\newblock \bibinfo{publisher}{Morgan {\&} Claypool Publishers}.
\newblock
\showDOI{%
\url{https://doi.org/10.2200/S00272ED1V01Y201006CAC011}}


\bibitem[\protect\citeauthoryear{Herlihy and Moss}{Herlihy and Moss}{1993}]%
        {DBLP:conf/isca/HerlihyM93}
\bibfield{author}{\bibinfo{person}{Maurice Herlihy} {and}
  \bibinfo{person}{J.~Eliot~B. Moss}.} \bibinfo{year}{1993}\natexlab{}.
\newblock \showarticletitle{Transactional Memory: Architectural Support for
  Lock-Free Data Structures}. In \bibinfo{booktitle}{{\em Proceedings of the
  20th Annual International Symposium on Computer Architecture, San Diego, CA,
  USA, May 1993}}, \bibfield{editor}{\bibinfo{person}{Alan~Jay Smith}} (Ed.).
  \bibinfo{publisher}{{ACM}}, \bibinfo{pages}{289--300}.
\newblock
\showDOI{%
\url{https://doi.org/10.1145/165123.165164}}


\bibitem[\protect\citeauthoryear{Holt, Bornholt, Zhang, Ports, Oskin, and
  Ceze}{Holt et~al\mbox{.}}{2016}]%
        {10.1145/2987550.2987559}
\bibfield{author}{\bibinfo{person}{Brandon Holt}, \bibinfo{person}{James
  Bornholt}, \bibinfo{person}{Irene Zhang}, \bibinfo{person}{Dan Ports},
  \bibinfo{person}{Mark Oskin}, {and} \bibinfo{person}{Luis Ceze}.}
  \bibinfo{year}{2016}\natexlab{}.
\newblock \showarticletitle{Disciplined Inconsistency with Consistency Types}.
  In \bibinfo{booktitle}{{\em Proceedings of the Seventh ACM Symposium on Cloud
  Computing}} {\em (\bibinfo{series}{SoCC '16})}.
  \bibinfo{publisher}{Association for Computing Machinery},
  \bibinfo{address}{New York, NY, USA}, \bibinfo{pages}{279–293}.
\newblock
\showISBNx{9781450345255}
\showDOI{%
\url{https://doi.org/10.1145/2987550.2987559}}


\bibitem[\protect\citeauthoryear{Kokologiannakis, Lahav, Sagonas, and
  Vafeiadis}{Kokologiannakis et~al\mbox{.}}{2018}]%
        {DBLP:journals/pacmpl/Kokologiannakis18}
\bibfield{author}{\bibinfo{person}{Michalis Kokologiannakis},
  \bibinfo{person}{Ori Lahav}, \bibinfo{person}{Konstantinos Sagonas}, {and}
  \bibinfo{person}{Viktor Vafeiadis}.} \bibinfo{year}{2018}\natexlab{}.
\newblock \showarticletitle{Effective stateless model checking for {C/C++}
  concurrency}.
\newblock \bibinfo{journal}{{\em Proc. {ACM} Program. Lang.\/}}
  \bibinfo{volume}{2}, \bibinfo{number}{{POPL}} (\bibinfo{year}{2018}),
  \bibinfo{pages}{17:1--17:32}.
\newblock
\showDOI{%
\url{https://doi.org/10.1145/3158105}}


\bibitem[\protect\citeauthoryear{Lahav}{Lahav}{2019}]%
        {DBLP:journals/siglog/Lahav19}
\bibfield{author}{\bibinfo{person}{Ori Lahav}.}
  \bibinfo{year}{2019}\natexlab{}.
\newblock \showarticletitle{Verification under causally consistent shared
  memory}.
\newblock \bibinfo{journal}{{\em {ACM} {SIGLOG} News\/}} \bibinfo{volume}{6},
  \bibinfo{number}{2} (\bibinfo{year}{2019}), \bibinfo{pages}{43--56}.
\newblock
\showDOI{%
\url{https://doi.org/10.1145/3326938.3326942}}


\bibitem[\protect\citeauthoryear{Lamport}{Lamport}{1978}]%
        {DBLP:journals/cacm/Lamport78}
\bibfield{author}{\bibinfo{person}{Leslie Lamport}.}
  \bibinfo{year}{1978}\natexlab{}.
\newblock \showarticletitle{Time, Clocks, and the Ordering of Events in a
  Distributed System}.
\newblock \bibinfo{journal}{{\em Commun. {ACM}\/}} \bibinfo{volume}{21},
  \bibinfo{number}{7} (\bibinfo{year}{1978}), \bibinfo{pages}{558--565}.
\newblock
\showDOI{%
\url{https://doi.org/10.1145/359545.359563}}


\bibitem[\protect\citeauthoryear{Lamport}{Lamport}{1979}]%
        {DBLP:journals/tc/Lamport79}
\bibfield{author}{\bibinfo{person}{Leslie Lamport}.}
  \bibinfo{year}{1979}\natexlab{}.
\newblock \showarticletitle{How to Make a Multiprocessor Computer That
  Correctly Executes Multiprocess Programs}.
\newblock \bibinfo{journal}{{\em {IEEE} Trans. Computers\/}}
  \bibinfo{volume}{28}, \bibinfo{number}{9} (\bibinfo{year}{1979}),
  \bibinfo{pages}{690--691}.
\newblock
\showDOI{%
\url{https://doi.org/10.1109/TC.1979.1675439}}


\bibitem[\protect\citeauthoryear{Mazurkiewicz}{Mazurkiewicz}{1987}]%
        {10.5555/25542.25553}
\bibfield{author}{\bibinfo{person}{A Mazurkiewicz}.}
  \bibinfo{year}{1987}\natexlab{}.
\newblock \showarticletitle{Trace Theory}. In \bibinfo{booktitle}{{\em Advances
  in Petri Nets 1986, Part II on Petri Nets: Applications and Relationships to
  Other Models of Concurrency}}. \bibinfo{publisher}{Springer-Verlag},
  \bibinfo{address}{Berlin, Heidelberg}, \bibinfo{pages}{279–324}.
\newblock
\showISBNx{0387179062}


\bibitem[\protect\citeauthoryear{Nagar and Jagannathan}{Nagar and
  Jagannathan}{2018}]%
        {DBLP:conf/concur/NagarJ18}
\bibfield{author}{\bibinfo{person}{Kartik Nagar} {and} \bibinfo{person}{Suresh
  Jagannathan}.} \bibinfo{year}{2018}\natexlab{}.
\newblock \showarticletitle{Automated Detection of Serializability Violations
  Under Weak Consistency}. In \bibinfo{booktitle}{{\em 29th International
  Conference on Concurrency Theory, {CONCUR} 2018, September 4-7, 2018,
  Beijing, China}} {\em (\bibinfo{series}{LIPIcs})},
  \bibfield{editor}{\bibinfo{person}{Sven Schewe} {and} \bibinfo{person}{Lijun
  Zhang}} (Eds.), Vol.~\bibinfo{volume}{118}. \bibinfo{publisher}{Schloss
  Dagstuhl - Leibniz-Zentrum f{\"{u}}r Informatik},
  \bibinfo{pages}{41:1--41:18}.
\newblock
\showDOI{%
\url{https://doi.org/10.4230/LIPIcs.CONCUR.2018.41}}


\bibitem[\protect\citeauthoryear{Nair, Petri, and Shapiro}{Nair
  et~al\mbox{.}}{2020}]%
        {10.1007/978-3-030-44914-8_20}
\bibfield{author}{\bibinfo{person}{Sreeja~S. Nair}, \bibinfo{person}{Gustavo
  Petri}, {and} \bibinfo{person}{Marc Shapiro}.}
  \bibinfo{year}{2020}\natexlab{}.
\newblock \showarticletitle{Proving the Safety of Highly-Available Distributed
  Objects}. In \bibinfo{booktitle}{{\em Programming Languages and Systems}},
  \bibfield{editor}{\bibinfo{person}{Peter M{\"u}ller}} (Ed.).
  \bibinfo{publisher}{Springer International Publishing},
  \bibinfo{address}{Cham}, \bibinfo{pages}{544--571}.
\newblock
\showISBNx{978-3-030-44914-8}


\bibitem[\protect\citeauthoryear{Norris and Demsky}{Norris and Demsky}{2016}]%
        {10.1145/2806886}
\bibfield{author}{\bibinfo{person}{Brian Norris} {and} \bibinfo{person}{Brian
  Demsky}.} \bibinfo{year}{2016}\natexlab{}.
\newblock \showarticletitle{A Practical Approach for Model Checking C/C++11
  Code}.
\newblock \bibinfo{journal}{{\em ACM Trans. Program. Lang. Syst.\/}}
  \bibinfo{volume}{38}, \bibinfo{number}{3}, Article \bibinfo{articleno}{10}
  (\bibinfo{date}{May} \bibinfo{year}{2016}), \bibinfo{numpages}{51}~pages.
\newblock
\showISSN{0164-0925}
\showDOI{%
\url{https://doi.org/10.1145/2806886}}


\bibitem[\protect\citeauthoryear{Papadimitriou}{Papadimitriou}{1979}]%
        {DBLP:journals/jacm/Papadimitriou79b}
\bibfield{author}{\bibinfo{person}{Christos~H. Papadimitriou}.}
  \bibinfo{year}{1979}\natexlab{}.
\newblock \showarticletitle{The serializability of concurrent database
  updates}.
\newblock \bibinfo{journal}{{\em J. {ACM}\/}} \bibinfo{volume}{26},
  \bibinfo{number}{4} (\bibinfo{year}{1979}), \bibinfo{pages}{631--653}.
\newblock
\showDOI{%
\url{https://doi.org/10.1145/322154.322158}}


\bibitem[\protect\citeauthoryear{Pavlo}{Pavlo}{2017}]%
        {DBLP:conf/sigmod/Pavlo17}
\bibfield{author}{\bibinfo{person}{Andrew Pavlo}.}
  \bibinfo{year}{2017}\natexlab{}.
\newblock \showarticletitle{What Are We Doing With Our Lives?: Nobody Cares
  About Our Concurrency Control Research}. In \bibinfo{booktitle}{{\em
  Proceedings of the 2017 {ACM} International Conference on Management of Data,
  {SIGMOD} Conference 2017, Chicago, IL, USA, May 14-19, 2017}}.
  \bibinfo{pages}{3}.
\newblock
\showDOI{%
\url{https://doi.org/10.1145/3035918.3056096}}


\bibitem[\protect\citeauthoryear{Peled}{Peled}{1993}]%
        {DBLP:conf/cav/Peled93}
\bibfield{author}{\bibinfo{person}{Doron~A. Peled}.}
  \bibinfo{year}{1993}\natexlab{}.
\newblock \showarticletitle{All from One, One for All: on Model Checking Using
  Representatives}. In \bibinfo{booktitle}{{\em Computer Aided Verification,
  5th International Conference, {CAV} '93, Elounda, Greece, June 28 - July 1,
  1993, Proceedings}} {\em (\bibinfo{series}{Lecture Notes in Computer
  Science})}, \bibfield{editor}{\bibinfo{person}{Costas Courcoubetis}} (Ed.),
  Vol.~\bibinfo{volume}{697}. \bibinfo{publisher}{Springer},
  \bibinfo{pages}{409--423}.
\newblock
\showISBNx{3-540-56922-7}
\showDOI{%
\url{https://doi.org/10.1007/3-540-56922-7\_34}}


\bibitem[\protect\citeauthoryear{Perrin, Mostefaoui, and Jard}{Perrin
  et~al\mbox{.}}{2016}]%
        {10.1145/3016078.2851170}
\bibfield{author}{\bibinfo{person}{Matthieu Perrin}, \bibinfo{person}{Achour
  Mostefaoui}, {and} \bibinfo{person}{Claude Jard}.}
  \bibinfo{year}{2016}\natexlab{}.
\newblock \showarticletitle{Causal Consistency: Beyond Memory}.
\newblock \bibinfo{journal}{{\em SIGPLAN Not.\/}} \bibinfo{volume}{51},
  \bibinfo{number}{8}, Article \bibinfo{articleno}{26} (\bibinfo{date}{feb}
  \bibinfo{year}{2016}), \bibinfo{numpages}{12}~pages.
\newblock
\showISSN{0362-1340}
\showDOI{%
\url{https://doi.org/10.1145/3016078.2851170}}


\bibitem[\protect\citeauthoryear{Raad, Lahav, and Vafeiadis}{Raad
  et~al\mbox{.}}{2018}]%
        {DBLP:conf/esop/RaadLV18}
\bibfield{author}{\bibinfo{person}{Azalea Raad}, \bibinfo{person}{Ori Lahav},
  {and} \bibinfo{person}{Viktor Vafeiadis}.} \bibinfo{year}{2018}\natexlab{}.
\newblock \showarticletitle{On Parallel Snapshot Isolation and Release/Acquire
  Consistency}. In \bibinfo{booktitle}{{\em Programming Languages and Systems -
  27th European Symposium on Programming, {ESOP} 2018, Held as Part of the
  European Joint Conferences on Theory and Practice of Software, {ETAPS} 2018,
  Thessaloniki, Greece, April 14-20, 2018, Proceedings}} {\em
  (\bibinfo{series}{Lecture Notes in Computer Science})},
  \bibfield{editor}{\bibinfo{person}{Amal Ahmed}} (Ed.),
  Vol.~\bibinfo{volume}{10801}. \bibinfo{publisher}{Springer},
  \bibinfo{pages}{940--967}.
\newblock
\showISBNx{978-3-319-89883-4}
\showDOI{%
\url{https://doi.org/10.1007/978-3-319-89884-1\_33}}


\bibitem[\protect\citeauthoryear{Raad, Lahav, and Vafeiadis}{Raad
  et~al\mbox{.}}{2019}]%
        {DBLP:conf/vmcai/RaadLV19}
\bibfield{author}{\bibinfo{person}{Azalea Raad}, \bibinfo{person}{Ori Lahav},
  {and} \bibinfo{person}{Viktor Vafeiadis}.} \bibinfo{year}{2019}\natexlab{}.
\newblock \showarticletitle{On the Semantics of Snapshot Isolation}. In
  \bibinfo{booktitle}{{\em Verification, Model Checking, and Abstract
  Interpretation - 20th International Conference, {VMCAI} 2019, Cascais,
  Portugal, January 13-15, 2019, Proceedings}} {\em (\bibinfo{series}{Lecture
  Notes in Computer Science})}, \bibfield{editor}{\bibinfo{person}{Constantin
  Enea} {and} \bibinfo{person}{Ruzica Piskac}} (Eds.),
  Vol.~\bibinfo{volume}{11388}. \bibinfo{publisher}{Springer},
  \bibinfo{pages}{1--23}.
\newblock
\showISBNx{978-3-030-11244-8}
\showDOI{%
\url{https://doi.org/10.1007/978-3-030-11245-5\_1}}


\bibitem[\protect\citeauthoryear{Sen and Agha}{Sen and Agha}{2006}]%
        {10.5555/1763218.1763234}
\bibfield{author}{\bibinfo{person}{Koushik Sen} {and} \bibinfo{person}{Gul
  Agha}.} \bibinfo{year}{2006}\natexlab{}.
\newblock \showarticletitle{A Race-Detection and Flipping Algorithm for
  Automated Testing of Multi-Threaded Programs}. In \bibinfo{booktitle}{{\em
  Proceedings of the 2nd International Haifa Verification Conference on
  Hardware and Software, Verification and Testing}} {\em
  (\bibinfo{series}{HVC'06})}. \bibinfo{publisher}{Springer-Verlag},
  \bibinfo{address}{Berlin, Heidelberg}, \bibinfo{pages}{166–182}.
\newblock
\showISBNx{9783540708889}


\bibitem[\protect\citeauthoryear{Shasha and Snir}{Shasha and Snir}{1988}]%
        {DBLP:journals/toplas/ShashaS88}
\bibfield{author}{\bibinfo{person}{Dennis~E. Shasha} {and}
  \bibinfo{person}{Marc Snir}.} \bibinfo{year}{1988}\natexlab{}.
\newblock \showarticletitle{Efficient and Correct Execution of Parallel
  Programs that Share Memory}.
\newblock \bibinfo{journal}{{\em {ACM} Trans. Program. Lang. Syst.\/}}
  \bibinfo{volume}{10}, \bibinfo{number}{2} (\bibinfo{year}{1988}),
  \bibinfo{pages}{282--312}.
\newblock
\showDOI{%
\url{https://doi.org/10.1145/42190.42277}}


\bibitem[\protect\citeauthoryear{Warszawski and Bailis}{Warszawski and
  Bailis}{2017}]%
        {DBLP:conf/sigmod/WarszawskiB17}
\bibfield{author}{\bibinfo{person}{Todd Warszawski} {and}
  \bibinfo{person}{Peter Bailis}.} \bibinfo{year}{2017}\natexlab{}.
\newblock \showarticletitle{ACIDRain: Concurrency-Related Attacks on
  Database-Backed Web Applications}. In \bibinfo{booktitle}{{\em Proceedings of
  the 2017 {ACM} International Conference on Management of Data, {SIGMOD}
  Conference 2017, Chicago, IL, USA, May 14-19, 2017}}. \bibinfo{pages}{5--20}.
\newblock
\showDOI{%
\url{https://doi.org/10.1145/3035918.3064037}}


\bibitem[\protect\citeauthoryear{Xiong, Cerone, Raad, and Gardner}{Xiong
  et~al\mbox{.}}{2020}]%
        {DBLP:conf/ecoop/XiongCRG19}
\bibfield{author}{\bibinfo{person}{Shale Xiong}, \bibinfo{person}{Andrea
  Cerone}, \bibinfo{person}{Azalea Raad}, {and} \bibinfo{person}{Philippa
  Gardner}.} \bibinfo{year}{2020}\natexlab{}.
\newblock \showarticletitle{Data Consistency in Transactional Storage Systems:
  {A} Centralised Semantics}. In \bibinfo{booktitle}{{\em 34th European
  Conference on Object-Oriented Programming, {ECOOP} 2020, November 15-17,
  2020, Berlin, Germany (Virtual Conference)}} {\em
  (\bibinfo{series}{LIPIcs})}, \bibfield{editor}{\bibinfo{person}{Robert
  Hirschfeld} {and} \bibinfo{person}{Tobias Pape}} (Eds.),
  Vol.~\bibinfo{volume}{166}. \bibinfo{publisher}{Schloss Dagstuhl -
  Leibniz-Zentrum f{\"{u}}r Informatik}, \bibinfo{pages}{21:1--21:31}.
\newblock
\showDOI{%
\url{https://doi.org/10.4230/LIPIcs.ECOOP.2020.21}}


\bibitem[\protect\citeauthoryear{Zhang, Kusano, and Wang}{Zhang
  et~al\mbox{.}}{2015}]%
        {10.1145/2737924.2737956}
\bibfield{author}{\bibinfo{person}{Naling Zhang}, \bibinfo{person}{Markus
  Kusano}, {and} \bibinfo{person}{Chao Wang}.} \bibinfo{year}{2015}\natexlab{}.
\newblock \showarticletitle{Dynamic Partial Order Reduction for Relaxed Memory
  Models}. In \bibinfo{booktitle}{{\em Proceedings of the 36th ACM SIGPLAN
  Conference on Programming Language Design and Implementation}} {\em
  (\bibinfo{series}{PLDI '15})}. \bibinfo{publisher}{Association for Computing
  Machinery}, \bibinfo{address}{New York, NY, USA}, \bibinfo{pages}{250–259}.
\newblock
\showISBNx{9781450334686}
\showDOI{%
\url{https://doi.org/10.1145/2737924.2737956}}


\end{thebibliography}

\newpage
\appendix
\newpage
\centerline{\Large{Appendix}}

The appendix contains the following sections.
\begin{enumerate}
\item Programs under $\ccvt$ 
\begin{enumerate}
\item Readability and Visibility conditions for $\ccvt$ (Appendix \ref{ccv:rv})
		\item Readablity and Visibility checks in polynomial time  (Appendix \ref{ccv:rv})

	\item Soundness- Properties of Trace semantics 
 (Appendix \ref{app:satcons})
	
	\item Proof of the completeness of the $\ccvt$ DPOR algorithm (Appendix \ref{app:complete-ccv}).
\end{enumerate}

\item Programs under $\cc$ 
\begin{enumerate}
\item Readability for $\cc$ (Appendix \ref{cc:r})
 \item Readablity checks in polynomial time  (Appendix \ref{lemma:cc-poly})

	\item Soundness-Properties of Trace semantics 
 (Appendix \ref{app:ccful})
	\item Proof of the completeness of the $\cc$ DPOR algorithm (Appendix \ref{app:complete-cc}).
\end{enumerate}

\item Programs under $\cm$ 
\begin{enumerate}
\item Readability and Visibility conditions for $\cm$ (Appendix \ref{cm:rv})
	\item Readablity and Visibility checks in polynomial time  (Appendix \ref{lemma:cm-poly})

	\item Soundness-Properties of Trace semantics (Appendix \ref{cm:prop})
 	\item Proof of the completeness of the $\cm$ DPOR algorithm (Appendix \ref{app:complete-cm}).
\end{enumerate}

\item Programs under $\readat$ 
\begin{enumerate}
\item Readability and Visibility conditions for $\readat$ (Appendix \ref{rato:rv})
	\item Readablity and Visibility checks in polynomial time  (Appendix \ref{rato:rv})
	\item Soundness-Properties of Trace semantics 
  (Appendix \ref{rato:prop})
	\item Proof of the completeness of the $\readat$ DPOR algorithm (Appendix \ref{rato:comp}).
\end{enumerate}

\item Programs under $\readc$ 
\begin{enumerate}
\item Readability and Visibility for $\readc$ (Appendix \ref{rcom:rv})
	\item Readablity and Visibility checks in polynomial time  (Appendix \ref{rcom:rv})
	\item Soundness-Properties of Trace semantics 
 (Appendix \ref{rcom:prop})
	\item Proof of the completeness of the $\readc$ DPOR algorithm (Appendix \ref{rcom:comp})
\end{enumerate}

\item Experimental Evaluation
\begin{itemize}
	\item[(a)] Time taken by $\ccvt, \cc , \cm , \readat , \readc$ for  versions-2 and version-3 of the classical benchmarks (Appendix \ref{app:ext-tables})
        \item [(b)] Uncited program used in classical benchmarks (Appendix \ref{app:tab3-p})

  \end{itemize}
\end{enumerate}

\newpage

\begin{center}
\Large{\bf{Causal Convergence $\ccvt$ }}
	
\end{center}
\section{$\ccvt$ Trace Semantics}
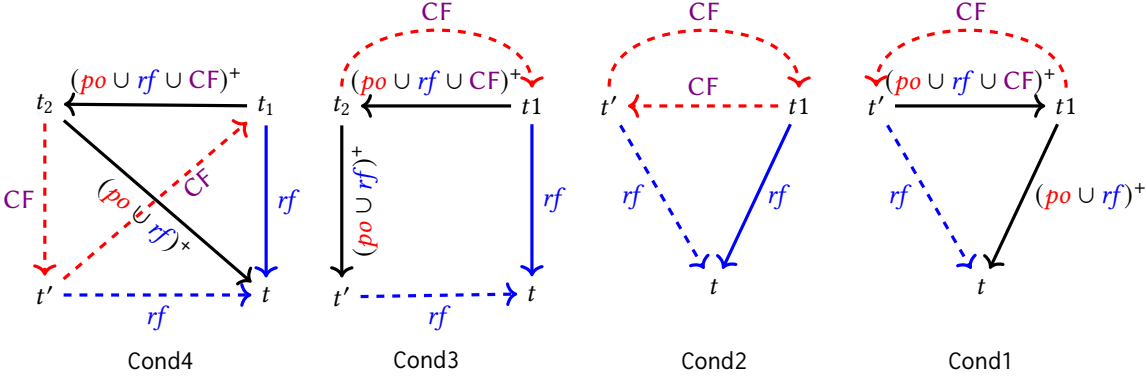
\begin{figure}[H]
        \begin{tikzpicture}[node distance=10mm , thick, main/.style= {draw,circle},];
        \node[] (t) {$t$};
        \node[] (t1) [above = 20mm of t] {$t_1$};
        \node[] (t') [left = 25mm of t] {$t'$};
        \node[] (t2) [above = 20mm of t']  {$t_2$};
        \node[] (1) [below=5mm of t] {};
        \node[] (cap) [left=7mm of 1] {$\tt{Cond4}$};
        \draw[->,blue,line width=1.2pt] (t1) -- node[right] {$\trf$} (t);
        \draw[->,black,line width=1.2pt] (t1) -- node[above] {$(\tpo \cup \trf \cup \cfr)^+$} (t2);
        \draw[->,black,line width=1.2pt] (t2) --node[rotate=-40, below] {$(\tpo \cup \trf)^+$} (t);
        \draw[dashed , ->,blue,line width=1.2pt] (t') -- node[below] {$\trf$} (t);
        \draw[dashed , ->,red,line width=1.2pt] (t2) -- node[left] {$\cfr$} (t');
        \draw[dashed , ->,red,line width=1.2pt] (t') -- node[rotate=40,right=3mm] {$\cfr$} (t1);

        \node[] (3at2) [right= 5mm of t1] {$t_2$};
        \node[] (3at1) [right= 20mm of 3at2] {$t1$};
        \node[] (3at) [below=20mm of 3at1] {$t$};
        \node[] (3at') [below=20mm of 3at2] {$t'$};
        \node[] (31) [below=5mm of 3at] {};
        \node[] (3acap) [left=7mm of 31] {$\tt{Cond3}$};

        \draw[->,black,line width=1.2pt] (3at1) -- node[above] {$(\tpo \cup \trf \cup \cfr)^+$} (3at2);
         \draw[->,blue,line width=1.2pt] (3at1) -- node[right] {$\trf$} (3at);
        \draw[dashed , ->,blue,line width=1.2pt] (3at') -- node[below] {$\trf$} (3at);
        \draw[dashed , ->,red,line width=1.2pt] (3at2) to [out=90, in=90] node[above] {$\cfr$} (3at1);
        \draw[->,black,line width=1.2pt] (3at2) -- node[rotate=90,below] {$(\tpo \cup \trf)^+$} (3at');
        
        \node[] (2at') [right= 5mm of 3at1] {$t'$};
        \node[] (2at1) [right= 20mm of 2at'] {$t1$};
        \node[] (21) [below=20mm of 2at1] {};
        \node[] (2at) [left=8mm of 21] {$t$};
        \node[] (2acap) [below=5mm of 2at] {$\tt{Cond2}$};
        
        \draw[dashed , ->,red,line width=1.2pt] (2at1) -- node[above] {$\cfr$} (2at');
         \draw[->,blue,line width=1.2pt] (2at1) -- node[right] {$\trf$} (2at);
        \draw[dashed , ->,blue,line width=1.2pt] (2at') -- node[left=1mm] {$\trf$} (2at);
        \draw[dashed , ->,red,line width=1.2pt] (2at') to [out=90, in=90] node[above] {$\cfr$} (2at1);

        \node[] (1t') [right= 5mm of 2at1] {$t'$};
        \node[] (1t1) [right= 20mm of 1t'] {$t1$};
         \node[] (111) [below=20mm of 1t1] {};
        \node[] (1t) [left=8mm of 111] {$t$};
        \node[] (1cap) [below=5mm of 1t] {$\tt{Cond1}$};

        \draw[->,black,line width=1.2pt] (1t') -- node[above] {$(\tpo \cup \trf \cup \cfr)^+$} (1t1);
        \draw[->,black,line width=1.2pt] (1t1) -- node[right] {$(\tpo \cup \trf)^+$} (1t);
        \draw[dashed , ->,blue,line width=1.2pt] (1t') -- node[left=1mm] {$\trf$} (1t);
        \draw[dashed , ->,red,line width=1.2pt] (1t1) to [out=90, in=90] node[above] {$\cfr$} (1t');
        \end{tikzpicture}
        \caption{Readability for $\ccvt$}
        \label{read-ccv}
    \end{figure}

\subsection{Readability and Visibility  for $\ccvt$}
\label{ccv:rv}
For $X=\ccvt$, $\rbl(\tau^t_X, t, x)$ is defined as the set of all transactions $t' \in 
	\tran^{\wt,x}$ s.t. 
\begin{itemize}
	\item[\texttt{cond1}] there is no transaction $t'' \in \tran^{\wt,x}$ such that 
	$t' ~(\co\cup \cfr)^+~t''~\co~t$. If we have this $t''$, allowing $t'~\trf~t$ gives $t''~\cfr~t'$ and $\ccvcyc$. 
	\item[\texttt{cond2}] If $t' \in \tran^{\wt,y}$ ($t'$ also writes on some $y \neq x$), there is no transaction $t'' \in \tran^{\wt,x} \cap \tran^{\wt,y}$ such that $t''~\trf~t$ (wrt $y$).  
		Having such a $t''$ with $t'~\trf~t$ (wrt $x$) results in $t'~\cfr~t''$ (wrt $x$) and $t''~\cfr~t'$ (wrt $y$), and $\ccvcyc$. 
	\item [\texttt{cond3a}] If $t' \in \tran^{\wt,y}$ ($t'$ also writes on some $y \neq x$), there is no transaction 
	$t'' \in \tran^{\wt,y}$ 	such that $t''~\trf~t$ (wrt $y$) and 
	$t''~(\co~\cup~\cfr)^+t'$. Having such a $t''$ along with 
	$t'~\trf~t$ (wrt $x$) results in $t'~\cfr~t''$ (wrt $y$) and $\ccvcyc$. 
	\item[\texttt{cond3b}] There are no transactions $t_1, t_2 \in \tran^{\wt,y}$ (for $y \neq x$) such that  
$t_1~\trf~t$ (wrt $y$) and  $t_1~(\co \cup \cfr)^+~t_2~\co~t'$. Having  such $t_1, t_2$, along with $t'~\trf~t$ gives $t_1~(\cfr \cup \co)^+~t_2~\co~t$. With  $t_1~\trf~t$, we get  $t_2~\cfr~t_1$ and $\ccvcyc$.
\item[\texttt{cond4}] If $t \in \tran^{\rd,y}, t' \in \tran^{\wt,y}$ (for some $y \neq x$),  there are no transactions $t_1 \in \tran^{\wt,y}, t_2 \in \tran^{\wt,x}$ such that  
$t_1~(\co \cup \cfr)^+~t_2~\co~t$ and $t_1~\trf~t$ (wrt $y$).  If we have this, allowing 
$t'~\trf~t$ gives $t_2~\cfr~t'$ and  $t'~\cfr~t_1$, resulting 
in $\ccvcyc$. 
\end{itemize}	

Figure \ref{read-ccv} explains cycles created after violating conditions $\tt{cond1}-\tt{cond4}$.
$\tt{cond3} $ covers $\tt{cond3a} $  and $\tt{cond3b} $.
For $\tt{cond3a}$,  $t_2$ coincides with $t'$.

After adding $t'~\trf~t$, we must check that 
there are no consistency violations.  
The check set $\vbl(\tau^t_X, t,x)$ is defined as the set of transactions which turn 	``sensitive''  
on adding the new  edge  $t'~\trf~t$. 
Unless  appropriate edges are added involving 
these sensitive transactions, we may get consistency violating cycles
in the resultant trace.
Let $\tau^{tt'}$ denote the trace obtained by adding the new transaction $t$ and the edge $t'~\trf~t$ to trace $\tau$. Now, we 
identify $\vbl(\tau^t_X, t,x)$ and the edges which must be added to $\tau^{tt'}$ to obtain a consistent extended trace.

For $X=\ccvt, \vbl(\tau^t_X,t,x)$ is classified into three categories. 
	\begin{itemize}
	\item[(a)] The first kind of transactions in $\vbl(\tau^t_X,t,x)$  are  
$\{t''\in \rbl(\tau^t_X, t,x) \mid t''~\co~t\}$.  
	Then we add  from each $t'' \in \vbl(\tau^t_X,t,x)$ a $\cfr$ to $t'$ in $\extend_X(\tau^{tt'})$. 
	\item[(b)] The second kind of  transactions in $\vbl(\tau^t_X,t,x)$  are   $t_1 \in \tran^{\wt,y}$ such that $t_1~\trf~t$ (wrt $y$) when $t' \in \tran^{\wt,y}$. Then we  add  $t'~\cfr~t_1$ to $\extend_X(\tau^{tt'})$. 
	\item[(c)] The third kind of  transactions in $\vbl(\tau^t_X,t,x)$  are 
	 $t_2 \in \tran^{\wt,y}$ such that $t_2~\co~t'$ 
	and we have $t_1~\trf~t$ (wrt $y$) for some $t_1 \in \tran^{\wt,y}$. In this case, we add $t_2~\cfr~t_1$  to $\extend_X(\tau^{tt'})$. 
		\end{itemize}

\begin{lemma}
    Given trace $\tau$, a transaction $t$ and a variable $x$, we can construct the sets  $\rbl(\tau, t, x)$, $\vbl(\tau, t, x)$ in polynomial time.
    \label{lemma:ccv-poly}
\end{lemma}
\begin{proof}
    We prove that set $\rbl(\tau, t, x)$ can be computed in polynomial time ($O(|\tran|^3)$ time), by designing an algorithm to generate set $\rbl(\tau, t, x)$. The algorithm consists of the following  steps:
    \begin{itemize}
        \item[(i)] First we compute the  transitive closure of the relations
        $[\tpo \cup \trf \cup \cfr]$ and $[\tpo \cup \trf]$, i.e, we compute $[\tpo \cup \trf \cup \cfr ]^+$ and $[\tpo \cup \trf]^+$.  We can use the Floyd-Warshall algorithm \cite{10.5555/1614191} to compute the transitive closure. This will take $O(|\tran|^3)$ time.
        \item[(ii)] We compute the set $\tran'=\{t'\mid t' [\tpo \cup \trf]^+ t\}$. This takes $O(|\tran|)$ time.
 
        \item[(iii)] For each transaction $t' \in \tran^{wt,x}$, we perform following checks:
        \begin{itemize}
            \item Check $\mathtt{cond 1 :} $ Check whether there exists a transaction $t'' \in \tran' \cap \tran^{wt,x}$            
            with $t'$ $[\tpo  \cup \trf \cup \cfr]^+$ $t''$. 
            If there is no such $t''$ then proceed to $\mathtt{cond 2}$. 
            If we find such $t''$, then $t' \notin \rbl(\tau , t, x)$. 
            This step will take $O(|\tran|^2)$ time.
            \item Check $\mathtt{cond 2 :}$ Check whether there exists a transaction $t''$ such that $t'' [\trf^y] t$ and $t',t'' \in \tran^{wt,x} \cap \tran^{wt,y}$. 
            If there is no such $t''$ then proceed to $\mathtt{cond 3}$. 
            If we find such $t''$, then $t' \notin \rbl(\tau , t, x)$. 
            This step will take $O(|\tran|)$ time.
            \item Check $\mathtt{cond 3 :}$ 
            
            We first check case a of $\mathtt{cond 3}$. Check whether there exists a transaction $t''$ such that $t'' [\trf^y] t$ , $t' \in \tran^{wt,y}$, and $t'' [\tpo \cup \trf \cup \cfr]^+ t'$. 
            If there is no such $t''$ then proceed to case two of $\mathtt{cond 3}$. 
            If we find such $t''$, then $t' \notin \rbl(\tau , t, x)$. 
            This step will take $O(|\tran|)$ time.
            
             Next, we  check case b of $\mathtt{cond 3}$. Check whether there are transactions $t''$ and $t'''$ such that $t'' [\trf^y] t$, $t''' \in \tran^{wt,y}$, $t''' [\tpo \cup \trf]^+ t'$ and $t'' [\tpo \cup \trf \cup \cfr]^+ t'''$. 
            If there are no such $t''$ and $t'''$ then proceed to $\mathtt{cond 4}$. 
            If we find such $t''$ and $t'''$, then $t' \notin \rbl(\tau , t, x)$. 
            This step will take $O(|\tran|^2)$ time.
            \item Check $\mathtt{cond 4 :}$ 
            Check whether there are transactions $t''$ and $t''' \in \tran'$ such that $t'' [\trf^y] t$, $t''' \in \tran^{\wt,x}$, $t'\in \tran^{\wt,y}$ and $t'' [\tpo \cup \trf \cup \cfr]^+ t'''$. 
            If we find such $t''$ and $t'''$, 
            then $t' \notin$ $\rbl(\tau , t, x)$. 
            If there are no such $t''$ and $t'''$ then add $t'$ to $\rbl(\tau , t, x)$.
            This step will take $O(|\tran|^2)$ time.
        \end{itemize}
    \end{itemize}
    Similarly, we can compute $\vbl(\tau, t, x)$ in polynomial time. This completes the proof.
\end{proof}

\subsection{Properties of Trace semantics under $\ccvt$}
\label{app:satcons}
\begin{lemma}
    If $\tau_1 \models \ccvt$  and $\tau_1$ $\xrightarrow[]{\beginact(t)}_{\ccvt{-}\sat}$ $\tau_2$ then $\tau_2 \models \ccvt$.
    \label{lem:f3}
\end{lemma}
\begin{proof}
The proof follows trivially since we do not change the reads from and conflict relations and the transaction $t$ added to the partial order has no successors. Thus, if $\tau_1 \models \ccvt$, 
so does $\tau_2$.  
	\end{proof}

\begin{lemma}
    If $\tau_1 \models \ccvt$  and $\tau_1$ $\xrightarrow[]{\readact(t,t')}_{\ccvt{-}\sat}$ $\tau_2$ then $\tau_2 \models \ccvt$.
    \label{lem:f4}
\end{lemma}

\begin{proof}
    Let $\tau_1 = \langle \tran_{1} , \tpo_1 , \trf_1, \cfr_1 \rangle$, $\tau_2 = \langle \tran_{2} , \tpo_2 , \trf_2, \cfr_2 \rangle$, and $ev$ = $r(p,t,x,v)$ be a read event in transaction $t \in \tran_1$. Suppose $\tau_2 \nvDash \ccvt$, and  that $\tpo_2 \cup \trf_2 \cup \cfr_2$ is cyclic. Since $\tau_1 \models \ccvt$, and $t$ has no outgoing edges, 
   on adding $(t',t) \in \trf_2$, 
    it follows that the cycle in $\tau_2$ is as a result 
    of the newly added $\cfr_2$ edges. 
    
    We prove that 
    on adding $(t',t) \in \trf_2$, 
        such cycles are possible iff at least one of ${\tt{cond}_1}, \dots, {\tt{cond}_4}$ are true. 
    
 \subsection*{${\tt{cond1}}\vee \dots \vee {\tt{cond4}}$ induces $\tpo_2 \cup \trf_2 \cup \cfr_2$ cyclicity }
   \label{sec:cyc1}
    This direction is easy to see : assume 
    one of ${\tt{cond1}}, \dots, {\tt{cond4}}$ are true; then, as already 
    argued in the main paper, we will get a $\tpo_2 \cup \trf_2 \cup \cfr_2$ cycle on adding the $\trf_2$ edge from $t'$ to $t$ and we are done.

 \subsection*{$\tpo_2 \cup \trf_2 \cup \cfr_2$ cyclicity implies ${\tt{cond1}}\vee \dots \vee {\tt{cond4}}$ }
 \label{sec:cyc2}
  For the converse direction, assume that we add the $\trf_2$ edge from $t'$ to $t$ and obtain a $\tpo_2 \cup \trf_2 \cup \cfr_2$ cycle in $\tau_2$. We now argue that this cycle has been formed because one of 
 ${\tt{cond}_1}, \dots, {\tt{cond}_4}$ are true.  
 
 First of all, note that $t$ has no outgoing edges in $\tau_2$, since 
 it is the current transaction being executed. Also, 
 we know that $\tau_1$ has no $\tpo_1 \cup \trf_1 \cup \cfr_1$ cycles. 
 Thus, the cyclicity of $\tpo_2 \cup \trf_2 \cup \cfr_2$ is induced by the 
 newly added $\trf_2$ edge as well as the newly added 
  $\cfr_2$ edges. Note that adding the $\trf_2$ edge  to $\tau_1$ does not induce any cycle since $t$ has no outgoing edges. Lets analyze the 
  $\cfr_2$ edges added which induce cycles, and argue that 
  one of ${\tt{cond1}}, \dots, {\tt{cond4}}$ will be true.   
  
  \begin{enumerate}
  	\item We add  $\cfr_2$ edges from $t'' \in \vbl(\tau_1, t,x)$ to $t'$. 
  	For  these edges to induce a cycle, we should have a path from $t'$ to $t''$ in $\tau_1$. That is, we have $t' [\tpo_1 \cup \trf_1 \cup \cfr_1]^+ t''$. This is captured by ${\tt{cond1}}$.  
  	
  	\item Consider $y \neq x$. If $t' \in \tran^{wt,x} \cap \tran^{wt,y}$, 
  	and we have $t'''\trf_1^y t$.  Then we add 
  	$(t',t''') \in \cfr^y_2$. To get a cycle, we need a path from 
  	$t'''$ to $t'$. 
  	\begin{itemize}
  	\item If $t''' \in \tran^{wt,x} \cap \tran^{wt,y}$, then we add $(t''', t') \in \cfr^x_2$, resulting in a cycle. This is handled by  ${\tt{cond2}}$.
  	\item If we have a path 
  	$t''' [\tpo_1 \cup \trf_1 \cup \cfr_1]^+ t'$, then we get a cycle again. This is handled by ${\tt{cond3}}$. 
  \item As a last case, to obtain a path from $t'''$ to $t'$, 
  assume there is a path from $t'''$ to $t$ in $\tau_1$, and let $t'' \in \tran^{wt,x}$ be the last transaction writing to $x$ in this path.  Note that this will induce a path from $t'''$ to $t''$ to $t'$ : the path  from $t''$ to $t'$ comes by the $\cfr^x_2$ edge added from $t''$ to $t'$ since 
  we have $(t',t) \in \trf_2^x$. Once we get this path from $t'''$ to $t'$, we again have the path we were looking for to get the cycle. This is handled by ${\tt{cond4}}$. 
    		
  	\end{itemize}
   Thus, we have shown that obtaining a path from $t'''$ to $t'$ is covered 
by the conditions  ${\tt{cond}_1}, \dots, {\tt{cond}_4}$, and hence, 
a $\tpo_2 \cup \trf_2 \cup \cfr_2$ cycle. 

   \item Consider $y \neq x$. If $t_4 \in \tran^{wt,y}$ such that $t_4 [\tpo\cup\trf]^+ t'$, 
  	and we have $t'''\trf_1^y t$.  Then we add 
  	$(t_4,t''') \in \cfr^y_2$. To get a cycle, we need a path from 
  	$t'''$ to $t_4$. 
   If we have a path 
  	$t''' [\tpo_1 \cup \trf_1 \cup \cfr_1]^+ t_4$, then we get a cycle. This is handled by ${\tt{cond3}}$. 
\end{enumerate}

 Thus, we can think of  $\tpo_2 \cup \trf_2 \cup \cfr_2$ cycles as a result of the forbidden patterns described in ${\tt{cond}_i}$ $1 \leq i \leq 4$.  By construction of $\tau_1$ $\xrightarrow[]{\readact(t,t')}_{\ccvt{-}\sat}$ $\tau_2$, we ensure $\neg {\tt{cond}_1}, \dots, \neg {\tt{cond}_4}$. Hence, 
    $\tau_2 \models \ccvt$.  
\end{proof}

Define $\big[[ \sigma]\big]^{\ccvt{-}\sat}$ as the set of  traces  generated  using $\xrightarrow[]{}_{\ccvt{-}\sat}$ transitions, starting from an empty trace $\tau_{\o}$.

Consider a terminal trace $\tau$ generated by $\ccvt$ DPOR algorithm starting from the empty trace $\tau_\emptyset$.
That is, there is a sequence  
$\tau_0 \xrightarrow[]{}_{\ccvt{-}\sat} 
\tau_1 \xrightarrow[]{}_{\ccvt{-}\sat}  
\tau_2 \xrightarrow[]{}_{\ccvt{-}\sat} \dots 
 \xrightarrow[]{}_{\ccvt{-}\sat}  \tau_n$ with 
 $\tau_0=\tau_{\emptyset}$, 
 and $\tau_n=\tau$. 
 Since $\tau_{\emptyset}$ is a empty we have $\tau_{\emptyset} \models \ccvt$, 
 it follows by  
 Lemmas \ref{lem:f3} and \ref{lem:f4} that $\tau \models \ccvt$. 
 
 Hence, for each trace $\tau \in \big[[ \sigma]\big]^{\ccvt{-}\sat} $ we have $\tau \models \ccvt$.

\subsection{$\ccvt$ DPOR Completeness}
\label{app:complete-ccv}
In this section, we show the completeness of the DPOR algorithm. More precisely, for the input program under $\ccvt$ for any terminating run $\rho \in Runs(\conf)$ and trace $\tau$ s.t. $ \rho \models \tau$, 
we show that $\explore(\ccvt, \tau_{\emptyset}, \epsilon)$ will produce a recursive visit $\explore(\ccvt, \tau', \pi)$ for some terminal $\tau'$, and $\pi$ where $\tau'=\tau$.   First, we give some definitions and auxiliary lemmas.

Let  $\pi = \alpha_{t_1} \alpha_{t_2} \dots \alpha_{t_n}$ be an observation sequence where  $\alpha_t = \beginact(p,t) \dots$ \plog{end}$(p,t)$.
 $\alpha_t$ is called an \emph{observable}, and is a sequence 
 of events from transaction $t$.  
Given $\pi$ and a trace $\tau$, we define 
$\tau \vdash_{\tt{G}} \pi$ to represent a sequence 
$ \tau_0 \xrightarrow[]{\alpha_{t_1}}_{\ccvt{-}\sat} 
 \tau_1  \xrightarrow[]{\alpha_{t_2}}_{\ccvt{-}\sat} \cdot \cdot \cdot \xrightarrow[]{\alpha_{t_n}}_{\ccvt{-}\sat} \tau_n $, where 
 $\tau_0 = \tau$ and 
$\pi = \alpha_{t_1} \alpha_{t_2} \dots \alpha_{t_n}$. Moreover, we define $\pi (\tau) :=  \tau_n$.

 $\tau$ is $terminal$ if there is no event left to execute $i.e$ $\succof\tau = \emptyset$. 
 We define $\tau \vdash_{\tt{G_T}} \pi$ to say that $(i)$ $\tau \vdash_{\tt{G}} \pi$ and $(ii)$ $\pi(\tau)$ is terminal.

\begin{definition}($p$-free and $t$-free observation sequences)
For a process  $p \in \mathbf{P}$, we say that an observation sequence $\pi$ is $p$-free if   all observables $\alpha_{t}$ in $\pi$ pertain to 
transactions $t$ not in $p$. 
For a transaction $t$ issued in $p$ we say that $\pi$ is $t$-free if $\pi$ is $p$-free. 
\end{definition}

\begin{definition}(Independent Observables)
For observables $\alpha_{t_1}$ and $\alpha_{t_2}$, we write $\alpha_{t_1} \sim  \alpha_{t_2}$ to represent they are independent. 
This means 
$(i)$ $t_1,t_2$ are transactions issued in different processes, that is, 
 $t_1$ is issed in $p_1$, $t_2$ is issued  in $p_2$, with $p_1 \neq p_2$,  
$(ii)$ no read event $r(p_2,t_2,x,v)$ of transaction $t_2$ reads from $t_1$, and 
$(iii)$ no read event $r(p_1,t_1,x,v)$ of transaction $t_1$ reads from $t_2$.	
Thus, $\neg(\alpha_{t_1} \sim  \alpha_{t_2})$ if either $t_1, t_2$ are issued in the same process, or there is a $\trf$ relation between $t_1, t_2$. That is, $\neg(\alpha_{t_1} \sim  \alpha_{t_2})$ iff $t_1 [\tpo \cup \trf]^+ t_2$ or  $t_2 [\tpo \cup \trf]^+ t_1$.

\end{definition}
 
 \begin{definition}(Independent Observation Sequences)
Observation sequences $\pi^1, \pi^2$ are called independent written $\pi^1 \sim \pi^2$ if there are observables $\alpha_{t_1}$, $\alpha_{t_2}$, and observation sequences $\pi'$ 
and $\pi''$ such that $\pi^1 = \pi' \cdot \alpha_{t_1} \cdot \alpha_{t_2} \cdot \pi''$, 
$\pi^2 = \pi' \cdot \alpha_{t_2} \cdot \alpha_{t_1} \cdot \pi''$, 
and $\alpha_{t_1} \sim \alpha_{t_2}$. 
In other words, we get $\pi^2$ from $\pi^1$ by swapping neighbouring independent observables corresponding to transactions $t_1$ and $t_2$. 
\end{definition}
We use $\approx$ to denote reflexive transitive closure of $\sim$.

\begin{definition}(Equivalent Traces)
For traces $\tau_1 = \langle \tran_1, \tpo_1, \trf_1, \cfr_1 \rangle$ and $\tau_2 = \langle \tran_2, \tpo_2, \trf_2, \cfr_2 \rangle$, we say $\tau_1, \tau_2$ are equivalent denoted $\tau_1 \equiv \tau_2$ if $\tran_1 = \tran_2$, $\tpo_1 = \tpo_2$, $\trf_1 = \trf_2$ and for all $t_1$,$t_2 \in \tran_1^{w,x}$ for all variables $x$, we have $t_1$ [$\tpo_1 \cup \trf_1 \cup \cfr_1^x$] $t_2$ iff $t_1$ [$\tpo_2 \cup \trf_2 \cup \cfr_2^x$] $t_2$.
	
\end{definition}

\begin{lemma}
$((\tau_1 \equiv \tau_2) \wedge \tau_1 \vdash_{\tt{G}} \alpha_t) \Rightarrow (\tau_2 \vdash_{\tt{G}} \alpha_t \wedge (\alpha_t(\tau_1) \equiv \alpha_t(\tau_2)))$.
\label{lemma g-1}
\end{lemma}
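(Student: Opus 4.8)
The plan is to process the observable $\alpha_t=\beginact(p,t)\cdots\plog{end}(p,t)$ one event at a time, maintaining as an invariant that the traces reached from $\tau_1$ and from $\tau_2$ stay equivalent, and then to read off both conclusions at the end. The single indispensable observation, which I would isolate first, is that equivalence collapses to equality of the \emph{merged} one-step relation: $\tau_1\equiv\tau_2$ implies $\tpo\cup\trf\cup\tco_1=\tpo\cup\trf\cup\tco_2$ as relations, and hence $[\tpo\cup\trf\cup\tco_1]^+=[\tpo\cup\trf\cup\tco_2]^+$. Indeed $\tpo$ and $\trf$ coincide by definition of $\equiv$; and for any edge $(a,b)\in\tco_1$, choosing the variable $x$ with $a,b\in\tran^{w,x}$ and $(a,b)\in\tco_1^x$, the defining clause of $\equiv$ at $x$ gives $(a,b)\in\tpo\cup\trf\cup\tco_2^x$, so $(a,b)$ lies in the merged relation of $\tau_2$; symmetry finishes the inclusion both ways. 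Since $\tpo$ and $\trf$ alone also coincide, $[\tpo\cup\trf]^+$ and $[\tpo\cup\trf]^*$ agree as well.

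From this I would derive the two facts that drive the induction: $\rbl(\tau_1,t,x)=\rbl(\tau_2,t,x)$ and $\vbl(\tau_1,t,x)=\vbl(\tau_2,t,x)$ for every $t$ and $x$. These are immediate from the definitions, since ${\tt{cond1}}$–${\tt{cond4}}$ and the visible-set condition are phrased purely in terms of $[\tpo\cup\trf\cup\tco]^+$, $[\tpo\cup\trf]^+$, the one-step $\trf$, and the static write-sets $\tran^{w,\cdot}$, all of which are preserved by $\equiv$ by the previous paragraph (the step-by-step computation used in Lemma \ref{lemma:poly} makes this dependence explicit).

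With these in hand I would run the induction over the events of $\alpha_t$. Writes and the $\plog{end}$ event leave the trace unchanged in the fulfilled semantics, so they trivially preserve $\equiv$ and enabledness. For $\beginact(p,t)$, both transitions add the same node $t$ and the same $\tpo$ edge $(t_{\mathsf{prev}},t)$, where $t_{\mathsf{prev}}$ is the last transaction executed in $p$ (the same on both sides, as $\tran$ and $\tpo$ coincide and the configuration may be taken identical); the new node carries no $\tco$ edges, so the merged relation grows by the same edge on both sides and $\equiv$ is preserved. For a read event the observable fixes the source $t'$; by the readable-set equality it is legal from the current $\tau_2$-side trace exactly when it is legal from the $\tau_1$-side trace, which yields enabledness, and the transition adds the identical $\trf$ edge $(t',t)$ together with the identical set of $\tco$ edges $\{(t'',t') : t''\in\vbl(\tau,t,x)\}\cup\{(t',t''') : t'''\,[\trf^y]\,t,\ t'\in\tran^{w,y}\}$ (identical because $\vbl$, $\trf$ and $\tran^{w,\cdot}$ agree). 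Adding the same edges to two traces whose merged relation is already equal keeps it equal, and restores the per-variable clause of $\equiv$ since each new $\tco$ edge lands in the same $\tco^z$-component on both sides; hence the successor traces are again equivalent. The configuration component evolves identically, because the operational part (program counter, local store, the value returned by each read from its fixed source) is independent of $\tco$, so the same $\sigma$ witnesses $\tau_2\vdash_{\tt{G}}\alpha_t$ and the final traces satisfy $\alpha_t(\tau_1)\equiv\alpha_t(\tau_2)$.

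I expect the main obstacle to be precisely the read step: one must check that equivalence is exactly strong enough both to guarantee that the same reads-from choice remains admissible (so the run can be replayed from $\tau_2$) and to guarantee that the extra $\tco$ edges forced by fulfilment are literally the same on both sides. The first paragraph's reduction of $\equiv$ to equality of the merged relation is what makes both of these go through, the only remaining care being the bookkeeping that each inserted $\tco$ edge is attributed to the same variable-indexed component, so that the defining clause of $\equiv$ is re-established after every step.
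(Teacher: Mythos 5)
Your proposal is correct and follows essentially the same route as the paper's proof: replay $\alpha_t$ from $\tau_2$, using the fact that $\equiv$ forces the same reads-from choices to be admissible and the same $\tco$ edges to be added, so the resulting traces remain equivalent. Your explicit reduction of $\equiv$ to equality of the merged relation $\tpo \cup \trf \cup \tco$ (and hence of the readable and visible sets) is left implicit in the paper, which instead characterizes $\alpha_t(\tau_2)$ as the minimal fulfilled extension of $\tco_2$ and appeals to minimality; your event-by-event version is, if anything, the more carefully justified of the two.
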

\begin{proof}
Assume 	$\tau_1 \vdash_{\tt{G}} \alpha_t$ for an observable $\alpha_t$, and 
$\tau_1=(\tran_1, \tpo_1, \trf_1, \cfr_1)$, 
$\tau_2=(\tran_2, \tpo_2, \trf_2, \cfr_2)$ with $\tau_1 \equiv \tau_2$. Then we know that $\tran_1=\tran_2, \tpo_1=\tpo_2, \trf_2=\trf_1$. 
 
 Since $\tau_1 \vdash_{\tt{G}} \alpha_t$, let  $\tau_1 \xrightarrow[]{\alpha_t}_{\ccvt{-}\sat} \tau'$. 
 $\tau'=(\tran', \tpo', \trf', \cfr')$ where $\tran'=\tran_1 \cup \{t\}$, 
 $\tpo'=\tpo_1 \cup \{(t_1,t) \mid t_1 \in \tran_1$ is in the same process as $t\}$,  $\cfr_1 \subseteq \cfr'$ 
 and $\trf_1 \subseteq \trf'$. $\cfr'$ can contain $(t',t)$ for some 
 $t' \in \tran_1$, when $t',t \in {\tran'}^{w,x}$ for some variable $x$, based 
 on the trace semantics. In particular if we have 
 $t_1 {\trf'}^x t_2$ , $t_3 \in \tran^{wt,x}$ and $t_3 (\trf'\cup  \tpo')^+ t_2$, then    
$t_3 {\cfr'}^x t_1$.\footnote{We write $t~\cfr^x~t'$ to denote that this $\cfr$ relation is added because of a read on $x$ by a transaction $t''$ in the trace semantics.}

 Since $\tau_1 \equiv \tau_2$, for any transactions $t', t'' $ in $\tran_1=\tran_2$, $t' [\tpo_1 \cup \cfr_1^x \cup \trf_1] t''$ iff $t' [\tpo_2 \cup \cfr_2^x \cup \trf_2] t''$ for all variables $x$. In particular, $t' \cfr_1^x t''$ iff $t' \cfr_2^x t''$ for all $x$. 
 Now, let us construct a trace $\tau''=(\tran', \tpo', \trf', \cfr'')$,
 where $\cfr''$ is the smallest set such that $\cfr_2 \subseteq \cfr''$ and 
  whenever $t_1 {\trf'}^x t_2$ , $\tau_3 \in \tran^{\wt,x}$ and $t_3 (\trf'\cup  \tpo')^+ t_2$, then $t_3 {\cfr''}^x t_1$.

   Since $t' [\tpo_1 \cup \trf_1 \cup \cfr_1^x] t''$ iff $t' [\tpo_2 \cup \trf_2 \cup \cfr_2^x] t''$ for all $x$, and all $t', t'' \in \tran_1=\tran_2$, and $\cfr''$ is the smallest extension of $\cfr_2$  based 
   on the trace semantics, along with the fact that 
   $\tpo'=\tpo'', \trf'=\trf''$, we obtain for any two 
   transactions $t_1, t_2 \in \tran'=\tran''$, 
   $t_1 [\tpo' \cup \trf' \cup \cfr'] t_2$ iff  $t_1 [\tpo'' \cup \trf''
   \cup \cfr''] t_2$. This gives $\tau' \equiv \tau''$. 
    
      This also gives $\tau_2 \xrightarrow[]{\alpha_t}_{\ccvt{-}\sat} \tau''$,
  that is, $\tau_2 \vdash_{\tt{G}} \alpha_t$  and indeed $\alpha_t(\tau_1)=\tau' \equiv \tau''=\alpha_t(\tau_2)$.

\end{proof}

\begin{lemma}
If $\tau \vdash_{\tt{G}} \alpha_{t_1} \cdot \alpha_{t_2}$ and $\alpha_{t_1} {\sim} \alpha_{t_2}$ then $\tau \vdash_{\tt{G}} \alpha_{t_2} {\cdot} \alpha_{t_1}$ and $(\alpha_{t_1} {\cdot} \alpha_{t_2})(\tau) \equiv (\alpha_{t_2}{ \cdot} \alpha_{t_1})(\tau)$.
\label{lemma g-2}
\end{lemma}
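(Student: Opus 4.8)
The plan is to complete the commutation diamond explicitly. Write $\tau \xrightarrow[]{\alpha_{t_1}}_{\sat} \tau^{1} \xrightarrow[]{\alpha_{t_2}}_{\sat} \tau^{12}$ for the given derivation, and aim to produce $\tau \xrightarrow[]{\alpha_{t_2}}_{\sat} \tau^{2} \xrightarrow[]{\alpha_{t_1}}_{\sat} \tau^{21}$ with $\tau^{12} \equiv \tau^{21}$. First I would record exactly what processing one observable does to a trace: processing $\alpha_{t}$ adds the node $t$, one $\tpo$ edge from the last transaction of $t$'s process, a $\trf$ edge into $t$ from each (fixed) source named in the read events of $\alpha_t$, and the $\tco$ edges forced by fulfilledness, namely $t'' [\tco^x] t'$ for $t'' \in \vbl(\cdot,t,x)$ and $t' [\tco^y] t'''$ for the other-variable sources $t'''$ of $t$. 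Since $t_1$ and $t_2$ are issued in distinct processes, each $\tpo$ predecessor $u_i$ of $t_i$ already lies in $\tau$; and since $\alpha_{t_1}\sim\alpha_{t_2}$ forbids any $[\tpo\cup\trf]^+$ path between $t_1$ and $t_2$, every read source of $t_i$ lies in $\tau$ (never the other $t_{3-i}$), and no forced $\tco$ edge of $t_i$ has $t_{3-i}$ as an endpoint. Consequently the $\tpo$ and $\trf$ edges added are literally the same in both orders.

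The second step is enabledness of the swapped order. I would show $\tau \vdash_{\tt{G}}\alpha_{t_2}$ by checking each read event of $\alpha_{t_2}$: its source $s'$ lies in $\tau$, and I must verify $s'\in\rbl(\cdot,t_2,y)$ in the relevant intermediate trace. This is where the decoupling argument is needed. The intermediate trace obtained while running $\alpha_{t_2}$ from $\tau^{1}$ differs from the one obtained running it from $\tau$ only by the presence of $t_1$ and the $\tco$ edges $t_1$ forced. Independence gives $\neg(t_1[\tpo\cup\trf]^+t_2)$, so $t_1$ contributes no new $[\tpo\cup\trf]^+$ suffix into $t_2$; together with a node-set-aware analogue of Lemma~\ref{lem:rdl1} (readability only grows as the trace is weakened) this lets me transport readability of $s'$ from the $\tau^{1}$-run to the $\tau$-run, and symmetrically yields $\alpha_{t_2}(\tau)\vdash_{\tt{G}}\alpha_{t_1}$. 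The same reasoning shows each visible set $\vbl(\cdot,t_i,x)$ is unchanged by the addition of $t_{3-i}$, since membership requires a $[\tpo\cup\trf]^+$ path to $t_i$ that $t_{3-i}$ cannot supply.

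Finally I would prove $\tau^{12}\equiv\tau^{21}$. Both traces carry the same nodes, $\tpo$, and $\trf$ by the first step, so only the $[\tpo\cup\trf\cup\tco^x]^+$ order on $x$-writers must be matched. Here I use the characterization of fulfilledness (Lemma~\ref{lem:ful}) and the fact that the fulfilled semantics adds exactly the $\tco$ edges necessitated by the read it is processing: the induced coherence order on writers is precisely the closure of the fulfillment requirements, which are phrased purely in terms of $\tpo$ and $\trf$. As those relations agree across the two orders, the induced orders agree and the traces are $\equiv$. The main obstacle is the decoupling claim underlying steps two and three: one must argue that neither the conditions $\tt{cond1}$--$\tt{cond4}$ defining $\rbl(\cdot,t_i,x)$ nor the set $\vbl(\cdot,t_i,x)$ is affected by inserting the independent transaction $t_{3-i}$ together with its forced $\tco$ edges. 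Concretely, any path witnessing one of these conditions ends in a $[\tpo\cup\trf]^+$ segment into $t_i$ (or into a source of $t_i$), and independence guarantees that $t_{3-i}$ and its $t_{3-i}$-justified $\tco$ edges cannot lie on such a segment; hence the relevant paths---and with them the forced $\tco$ edges up to $\equiv$---are identical in both orders.
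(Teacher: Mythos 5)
Your overall strategy---complete the commutation diamond, observe that the $\tpo$ and $\trf$ contributions of each observable are order-independent, then transport readability and match the coherence orders up to $\equiv$---is the same as the paper's. The problem is the ``decoupling claim'' on which your second and third steps rest: you assert that neither $\tt{cond1}$--$\tt{cond4}$, nor $\rbl(\cdot,t_i,x)$, nor $\vbl(\cdot,t_i,x)$ is affected by inserting the independent transaction $t_{3-i}$ together with its forced $\tco$ edges, on the grounds that any witnessing path ends in a $[\tpo\cup\trf]^+$ segment into $t_i$ which $t_{3-i}$ cannot lie on. This does not follow. The $\tco$ edges forced by processing a read of $t_{3-i}$ have \emph{both endpoints among transactions already present in} $\tau$ (e.g.\ $t''_l\,[\tco]\,t''_j$ for every $t''_l\in\vbl(\tau,t_2,y)$ and the chosen source $t''_j$, plus the reverse-direction edges for other-variable sources). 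Such edges can sit inside the $[\tpo\cup\trf\cup\tco]^+$ prefix of $\tt{cond1}$, $\tt{cond3}$ or $\tt{cond4}$ for a read of $t_i$ without the path ever touching $t_{3-i}$, so a priori they can knock a source out of $\rbl(\cdot,t_i,x)$; and since $\vbl$ is by definition a subset of $\rbl$, the visible set can change as well. Ruling this out is precisely the nontrivial content of the paper's proof: it assumes the source $t'_i$ of a read of $t_1$ fails to be readable after $\alpha_{t_2}$, extracts the offending path, and in the case where that path uses the newly forced edges it derives that $t_2$'s own source $t''_j$ would not have been readable in $\alpha_{t_1}(\tau)$ --- a contradiction that plays the two transactions' readability off against each other. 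Your proposal asserts the conclusion of this argument rather than supplying it.

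The same gap reappears in your equivalence step. The paper does \emph{not} show that the two orders force the same set of $\tco$ edges (they need not: $t'_k$ may genuinely drop out of $\vbl(\tau_3,t_1,x_i)$ even though it was in $\vbl(\tau,t_1,x_i)$); it shows only that the transitive closures $[\tpo\cup\trf\cup\tco^x]^+$ agree, by stitching the missing edge $t'_k\,[\dots]^+\,t'_i$ through the blocking transaction, as a chain of the form $t'_k\,[\dots]^*\,t''_l\,[\tco]\,t''_j\,[\dots]^*\,t''_m\,[\tco]\,t'_i$. Your justification---that the induced coherence order is ``the closure of the fulfillment requirements, which are phrased purely in terms of $\tpo$ and $\trf$''---overlooks that the forced edges go from the \emph{visible} set, which is filtered through the readable set and hence depends on the $\tco$ edges already present; the edge-level outcome therefore does depend on processing order, and the agreement of the closures has to be argued, not read off. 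To repair the proposal you would need to add both the two-sided readability contradiction and the path-stitching argument for $\equiv$.
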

\begin{proof}
Let $\tau = \langle \tran , \tpo, \trf,\cfr \rangle$ be a trace and let $t_1$ be a transaction issued in process $p_1$, and $t_2$ be a transaction issued in process $p_2$, with $p_1 \neq p_2$. Assume $\tau \vdash_{\tt{G}} \alpha_{t_1} \cdot \alpha_{t_2}$, with $\alpha_{t_1} {\sim} \alpha_{t_2}$.

We consider the following cases.
\begin{itemize}
    \item [$\bullet$] $t_1,t_2 \notin \tran^{rd}$. In this case $\tau \vdash_{\tt{G}} \alpha_{t_2} \cdot \alpha_{t_1}$ holds trivially.
    
     $(\alpha_{t_1} \cdot \alpha_{t_2})(\tau) = (\alpha_{t_2} \cdot \alpha_{t_1})(\tau) = \langle \tran' , \tpo' , \trf, \cfr \rangle$, where $\tran' = \tran \cup \{t_1,t_2\}$ and $\tpo' = \tpo \cup \{ (t,t_1) | t \in p_1\} \cup \{(t',t_2) | t' \in p_2 \}$.
    \item [$\bullet$] $t_1 \in \tran^{wt}$, $t_1 \notin \tran^{rd}$ and $t_2 \in \tran^{rd} \cap \tran^{wt}$. 
    Let $\tau_1 = \alpha_{t_1}(\tau)$. 
    We know that $\tau_1 = \langle \tran_1 , \tpo_1 , \trf_1,\cfr_1 \rangle$, where 
    $\tran_1 = \tran \cup \{ t_1 \}$, and $\tpo_1 = \tpo \cup \{ (t,t_1) | t \in p_1\}$, with $\trf_1 = \trf$ since there are no read events in $t_1$, and $\cfr_1 = \cfr$ since by the trace semantics, when $ \trf$ remain same, there are no $\cfr$ edges to be added.
    
    Consider $\tau_2 = \alpha_{t_2}(\tau_1) = \langle \tran_2, \tpo_2, \trf_2, \cfr_2 \rangle$. 
    Since $t_2 \in \tran^{rd}$, we know that $t_2$ has read events.
    \smallskip

    Let there be $n$ read events $ev_1 \dots ev_n$ in transaction $t_2$, 
    reading from transactions $t'_1, \dots, t'_n$ on variables $x_1, \dots, x_n$.  
    Hence we get a sequence $\tau_1 \xrightarrow[]{\beginact(t_2)}_{\ccvt{-}\sat} \dots \tau'_{i_1} \xrightarrow[]{\readact(t_2,t'_1)}_{\ccvt{-}\sat}  \tau_{i_1}  \xrightarrow[]{}_{\ccvt{-}\sat}\dots  
    \tau'_{i_n} \xrightarrow[]{\readact(t_2,t'_n)}_{\ccvt{-}\sat} \tau_{i_n} \dots \xrightarrow[]{\commitact(t_2)}_{\ccvt{-}\sat} \tau_2$.
    Hence $\tran_2 = \tran_1 \cup \{t_2\}$, $\tpo_2 = \tpo_1 \cup \{(t',t_2) | t' \in p_2\}$, $\trf_2 = \trf_1 \cup (\trf_{i_1} \cup \trf_{i_2} \dots \cup \trf_{i_n})$, and $\cfr_2 = \cfr_1 \cup (\cfr_{i_1} \cup \cfr_{i_2} \dots \cup \cfr_{i_n})$.
    Since $\tau_1 \vdash_{\tt{G}} \alpha_{t_2}$,  for all $ev_i : 1 \le i \le n$, we have $t'_i \in \rbl(\tau_1,t_2,x_i)$.
    
    Since $\alpha_{t_1} \sim \alpha_{t_2}$, we know that for all $ev_i : 1 \le i \le n$, we have $t'_i \in \rbl(\tau,t_2,x_i)$. It follows that $\tau \vdash_{\tt{G}} \alpha_{t_2}$.
    Define $\tau_3 = \langle \tran_3, \tpo_3, \trf_2, \cfr_2 \rangle$, where $
    \tran_3 = \tran \cup \{t_2\}$ and $\tpo_3 = \tpo \cup \{(t',t_2) | t' \in p_2\}$. It follows that $\alpha_{t_2}(\tau) = \tau_3$, $\tau_3 \vdash_{\tt{G}} \alpha_{t_1}$ and $\tau_2 = \alpha_{t_1}(\tau_3) = \alpha_{t_2}(\alpha_{t_1}(\tau))$.
    \item [$\bullet$] $t_2 \in \tran^{wt}$ and $t_1 \in \tran^{rd} \cap \tran^{wt}$. Similar to previous case.
    \item [$\bullet$] $t_1, t_2 \in \tran^{rd} \cap \tran^{wt}$. 
    Let $\tau_1 = \alpha_{t_1}(\tau) = \langle \tran_1, \tpo_1, \trf_1, \cfr_1 \rangle$ and $\tau_2 = \alpha_{t_2}(\tau_1) = \langle \tran_2, \tpo_2, \trf_2, \cfr_2 \rangle$. 
    Since $t_1 \in \tran^{rd}$, we know that $t_1$ has read events. 
    Let there be $n$ read events $ev_1 \dots ev_n$ in transaction $t_1$, reading from transactions $t'_1, \dots, t'_n$ on variables 
    $x_1, \dots, x_n$. 
    Hence we get the sequence $\tau \xrightarrow[]{\beginact(t_1)}_{\ccvt{-}\sat} \dots \tau'_{i_1} \xrightarrow[]{\readact(t_1,t'_1)}_{\ccvt{-}\sat}  \tau_{i_1} \xrightarrow[]{}_{\ccvt{-}\sat}\dots  
    \tau'_{i_n} \xrightarrow[]{\readact(t_1,t'_n)}_{\ccvt{-}\sat} \tau_{i_n} \dots \xrightarrow[]{\commitact(t_1)}_{\ccvt{-}\sat} \tau_1$.
    Since $t_2 \in \tran^{rd}$, we know that $t_2$ has read events. 
    Let there be  $m$ read events $ev'_1 \dots ev'_m$ in transaction $t_2$, reading from transactions $t''_1, \dots, t''_m$ on variables 
    $y_1, \dots, y_m$.  
    Then we get the sequence $\tau_1 \xrightarrow[]{\beginact(t_2)}_{\ccvt{-}\sat} \dots \tau'_{j_1} \xrightarrow[]{\readact(t_2,t''_1)}_{\ccvt{-}\sat}  \tau_{j_1}  \xrightarrow[]{}_{\ccvt{-}\sat} \dots 
    \tau'_{j_n} \xrightarrow[]{\readact(t_2,t''_m)}_{\ccvt{-}\sat} \tau_{j_n} \dots \xrightarrow[]{\commitact(t_2)}_{\ccvt{-}\sat} \tau_2$. 
    
    $\mathbf{Proving}$ $\mathtt{ \tau \vdash_{\tt{G}} \alpha_{t_2} {\cdot} \alpha_{t_1}}$
    
    Since $\tau_1 = \alpha_{t_1}(\tau)$, for all read events $ev_i$ such that $ev_i.trans = t_1$, we have $t'_i \in \rbl(\tau,t_1,x_i)$.
    Since $\tau_2 = \alpha_{t_2}(\tau_1)$, for all read events $ev'_j$ such that $ev'_j.trans = t_2$, we have $t''_j \in \rbl(\tau_1,t_2,y_j)$. 
    Since $\alpha_{t_1} \sim \alpha_{t_2}$, 
    it follows that, we have $t''_j \in \rbl(\tau,t_2,y_j)$ for all read events $ev'_j$ such that $ev'_j.trans = t_2$. Hence $\tau \vdash_{\tt{G}} \alpha_{t_2}$. 
    \smallskip

    Consider $\tau_3 = \alpha_{t_2}(\tau) = \langle \tran_3, \tpo_3, \trf_3, \cfr_3 \rangle $. To show that $\tau \vdash_{\tt{G}} \alpha_{t_2} \alpha_{t_1}$, 
    we show that for all read events $ev_i$ such that $ev_i.trans = t_1$,  $t'_i \in \rbl(\tau_3,t_1,x_i)$. We prove this using contradiction.
     Assume that $\exists t'_i$ s.t. $t'_i \notin \rbl(\tau_3,t_1,x_i)$.

    \medskip 
    \smallskip

    If $t'_i \notin \rbl(\tau_3,t_1,x_i)$, then 
    (1) there is a $t'_k$ such that 
    (i) $t'_k \in \vbl(\tau_3,t_1,x_i)$ and (ii) $t'_i [\tpo_3 \cup \trf_3 \cup \cfr_3]^+ t'_k$,
    or (2) there are transactions $t_3$ and $t_4$ such that $t_4 \in \tran^{\wt,y}$, $t_3 ~\trf^y_3 t_1$, 
    $t_4 [\tpo_3\cup\trf_3]^* t'_i $ and $t_3 [\tpo_3\cup\trf_3]^+ t_4 $,
    or (3) there are transactions $t_3$ and $t_4$ such that $t_3 \in \tran^{\wt,x}$, $t_3 [\tpo_3\cup\trf_3]^+ t_1$,
    $t_3 [\tpo_3\cup\trf_3]^+ t_4 $ and $t' \in \tran^{\wt,y}$.
    or (4) there is a  transaction $t_3$ such that $t_3 \in \tran^{\wt,x,y}$, $t_3 \trf^y_3 t_1$,
    and $t' \in \tran^{\wt,x,y}$.
     
     \medskip 
     \smallskip 
     
        Consider case (1).
       If   (i) is true, that is, $t'_k \in \vbl(\tau_3,t_1,x_i)$, since $\alpha_{t_1} \sim \alpha_{t_2}$, we also have $t'_k \in \vbl(\tau,t_1,x_i)$. 
        This, combined with $t'_i \in \rbl(\tau,t_1,x_i)$, gives according to the trace semantics, $t'_k [\cfr_1] t'_i$.  Going back to (ii), 
    we can have $t'_i [\tpo_3 \cup \trf_3 \cup \cfr_3]^+ t'_k$ only if we observed one of the following:
    \begin{itemize}
        \item [(a)] There is a path from $t'_i$ to $t'_k$ in $\tau$; that is, 
        $t'_i [\tpo \cup \trf \cup \cfr]^+ t'_k$. This implies that $t'_i \notin \rbl(\tau,t_1,x_i)$ which is a contradiction.
        \item [(b)] There is no direct path from $t'_i$ to $t'_k$, however, 
        there are  transactions  $t''_j,t''_l$ such that, in $\tau$ we had  
        $ t''_j [\tpo \cup \trf \cup \cfr]^* t'_k$ and $t'_i [\tpo \cup \trf \cup \cfr]^* t''_l$, along with $t''_j \in \rbl(\tau,t_2,y_j)$,  $t''_l \in \vbl(\tau,t_2,y_j)$. Let  $ev'_j = r(p_2,t_2,y_j,v)$ be the read event which reads from $t''_j$, justifying 
        $t''_j \in \rbl(\tau,t_2,y_j)$. 
        
        \smallskip
                Since $t''_j [\tpo \cup \trf \cup \cfr]^* t'_k$ and $t'_i [\tpo \cup \trf \cup \cfr]^* t''_l$ it follows that $ t''_j [\tpo_1 \cup \trf_1 \cup \cfr_1]^* t'_k$, $t'_i [\tpo_1 \cup \trf_1 \cup \cfr_1]^* t''_l$. By our assumption, we have $t'_i \in \rbl(\tau,t_1,x_i)$ and $t'_k \in \vbl(\tau,t_1,x_i)$, which  gives us  $t'_k [\cfr_1^{x_i}] t'_i$ (already observed in the para before (i)). 
                
                Thus we get  $t''_j [\tpo_1 \cup \trf_1 \cup \cfr_1]^*
                t'_k [\cfr_1^{x_i}] t'_i [\tpo_1 \cup \trf_1 \cup \cfr_1]^*
                 t''_l$, which gives  $t''_j [\tpo_1 \cup \trf_1 \cup \cfr_1]^+ t''_l$. Hence $t''_j \notin \rbl(\tau_1,t_2,y_j)$, a  contradiction.
    \end{itemize}

    Consider case (2). If (2) is true, then we have transactions $t_3$ and $t_4$ such that $t_4 \in \tran^{\wt,y}$, $t_3 ~\trf^y_3 t_1$, 
    $t_4 [\tpo_3\cup\trf_3]^* t'_i $ and $t_3 [\tpo_3\cup\trf_3]^+ t_4$.
    We can have  $t_4 [\tpo_3\cup\trf_3]^* t'_i $ and $t_3 [\tpo_3\cup\trf_3]^+ t_4$ only if we observe following:
    \begin{itemize}
        \item[(a)] We have $t_4 [\tpo\cup\trf]^* t'_i $ and $t_3 [\tpo\cup\trf]^+ t_4$ in $\tau$. 
        This implies that $t'_i \notin \rbl(\tau,t_1,x_i)$ which is a contradiction.
        \item[(b)] We have either $t_4 [\tpo_3\cup\trf_3]^* t'_i $ or $t_3 [\tpo_3\cup\trf_3]^+ t_4$ due to the new $\trf$ and $\cfr$ relations from reads in $t_2$ in $\tau$.
        
        Suppose we have $t_4 [\tpo\cup\trf]^* t'_i $ and get $t_3 [\tpo_3\cup\trf_3]^+ t_4$ due to transactions  $t''_j,t''_l$ such that, in $\tau$ we had  
        $ t''_j [\tpo \cup \trf \cup \cfr]^* t_4$ and $t_3 [\tpo \cup \trf \cup \cfr]^* t''_l$, along with $t''_j \in \rbl(\tau$ $,t_2,y_j)$,  $t''_l \in \vbl(\tau,t_2,y_j)$. 
        Let  $ev'_j = r(p_2,t_2,y_j,v)$ be the read event which reads from $t''_j$, justifying 
        $t''_j \in \rbl(\tau$ $,t_2,y_j)$.
        
        \smallskip
                Since $t''_j [\tpo \cup \trf \cup \cfr]^* t_4$ and $t_3 [\tpo \cup \trf \cup \cfr]^* t''_l$ 
                it follows that $ t''_j [\tpo_1 \cup \trf_1 \cup \cfr_1]^* t'_k$, $t'_i [\tpo_1 \cup \trf_1 \cup \cfr_1]^* t''_l$. 
                By our assumption, we have
                $t_i ~\trf^x_i~ t_1$, $t_3 ~\trf^y~ t_1$, $t_4 [\tpo\cup\trf]^+ t'_i$ and $t_4 \in \tran^{\wt,y}$
                which  gives us  $t_4 [\cfr_1^{y}] t_3$. 
                
                Thus we get  $t''_j [\tpo_1 \cup \trf_1 \cup \cfr_1]^*
                t_4 [\cfr_1^{y}] t_3 [\tpo_1 \cup \trf_1 \cup \cfr_1]^*
                 t''_l$, which gives  $t''_j [\tpo_1 \cup \trf_1 \cup \cfr_1]^+ t''_l$. Hence $t''_j \notin \rbl(\tau_1,t_2,y_j)$, a  contradiction.
    \end{itemize}

    A similar argument for other cases.
    
    Thus, we have $\neg(t'_i [\tpo_3 \cup \trf_3 \cup \cfr_3]^+ t'_k)$. This, for all $1 \leq i \leq n$, $t'_i \in \rbl(\tau_3,t_1,x_i)$. Thus, we now have $\tau \vdash_{\tt{G}} \alpha_{t_2} \alpha_{t_1}$. 
    It remains to show that $(\alpha_{t_1} {\cdot} \alpha_{t_2})(\tau) \equiv (\alpha_{t_2}{ \cdot} \alpha_{t_1})(\tau)$. 
    
    $\mathbf{Proving}$ $\mathtt{(\alpha_{t_1} {\cdot} \alpha_{t_2})(\tau) \equiv (\alpha_{t_2}{ \cdot} \alpha_{t_1})(\tau)}$
    
    Define $\tau_4 {=} \alpha_{t_1}(\tau_3) {=} \alpha_{t_1}.\alpha_{t_2}(\tau){=}
         \langle \tran_4, \tpo_4, \trf_4 , \cfr_4 \rangle$. Recall that 
    $\tau_2{=}$$\alpha_{t_2}.\alpha_{t_1}(\tau)$ =$
      \langle \tran_2,\tpo_2, \trf_2, \cfr_2 \rangle$.
    \smallskip

    To show that $(\alpha_{t_1} \cdot \alpha_{t_2})(\tau) \equiv (\alpha_{t_2} \cdot \alpha_{t_1})(\tau)$, we show that $\tran_2=\tran_4, \tpo_2=\tpo_4, \trf_2=\trf_4$, and, for any two transactions 
    $t'_k, t'_i \in \tran_2=\tran_4$,    $t'_k [\tpo_2 \cup \trf_2 \cup \cfr_2]^+ t'_i$ iff 
    $t'_k [\tpo_4 \cup \trf_4 \cup \cfr_4]^+ t'_i$. Of these, trivially, 
    $\tran_2=\tran_4, \tpo_2=\tpo_4$ follow. Since $\alpha_{t_2} \sim \alpha_{t_1}$, we have $\trf_2=\trf_4$. It remains to prove the last condition. 
    
    \medskip 
    
   Assume that $t'_i [\trf^{x_i}_1] t_1 \wedge t'_k \in \vbl(\tau_1,t_1,x_i)$, 
   where $\tau_1=\alpha_{t_1}(\tau)$. Then we have 
   $t'_i [\trf^{x_i}_1] t_1 \wedge t'_k \in \vbl(\tau,t_1,x_i)$, and by the trace semantics, we have $t'_k [\cfr_1] t'_i$, and hence 
   $t'_k [\cfr_2] t'_i$.  To obtain $\tau_4$, we execute $t_2$ first obtaining $\tau_3$ from $\tau$, and then $t_1$. We show that $t'_k [\tpo_4 \cup \trf_4 \cup \cfr_4]^+ t'_i$, proving $t'_k [\tpo_2 \cup \trf_2 \cup \cfr_2]^+ t'_i \Rightarrow
    t'_k [\tpo_4 \cup \trf_4 \cup \cfr_4]^+ t'_i$.

    \smallskip 
    
     If we have $t'_k \in \vbl(\tau_3,t_1, x_i)$, then we are done since we  have $t'_i \in \rbl(\tau,t_1,x_i)$, hence 
          $t'_i \in \rbl(\tau_3,t_1,x_i)$, thereby obtaining 
          $t'_k \cfr_3 t'_i$, and hence $t'_k \cfr_4 t'_i$.    
          
          \smallskip 
          
           Assume $t'_k \notin \vbl(\tau_3,t_1, x_i)$.   
   Assume that on executing $t_2$ from $\tau$, we have $t''_l \in \vbl(\tau, t_2, y_j)$, and let $t''_j [\trf^{y_j}_3] t_2$.      
     Then by the trace semantics,   $t''_l [\cfr_3] t''_j$ is a new edge which gets added.  
         When we execute $t_1$ next, assume  $t''_m \in \vbl(\tau_3,t_1,x_i)$.   
         Since we have $t'_i \in \rbl(\tau_3,t_1,x_i)$, we obtain $t''_m \cfr_4 t'_i$.

     \medskip 
     By assumption, $t'_k \neq t''_m$. However, $t'_k \in \vbl(\tau,t_1,x_i)$. \\
     $t'_k \notin  \vbl(\tau_3,t_1,x_i)$ points to some transaction that 
     blocked the visibility of $t'_k$ in $\tau_3$, by happening after $t'_k$, and which is in $\vbl(\tau_3,t_1,x_i)$.  
     
     \begin{enumerate}
     	\item This blocking transaction could be $t''_m$. If this is the case, we have $t'_k [\tpo_3 \cup \trf_3 \cup \cfr_3]^+ t''_m [\cfr_4] t'_i$, and hence  $t'_k [\tpo_4 \cup \trf_4 \cup \cfr_4]^+ t''_m [\cfr_4] t'_i$. 
     	\item The other possibility is that we have  $t''_j$
     	happens before $t''_m$, and $t''_l$ happens after $t'_k$ blocking 
     	$t'_k$ from being in $\vbl(\tau_3,t_1,x_i)$. That is,
     	     $t'_k [\tpo \cup \trf \cup \cfr]^* t''_l$ and 
     	     $t''_j [\tpo_4 \cup \trf_4 \cup \cfr_4]^* t''_m$.   
         Then we obtain $t'_k [\tpo_4 \cup \trf_4 \cup \cfr_4]^* t''_l [\cfr^{x_j}_4] t''_j [\tpo_4 \cup \trf_4 \cup \cfr_4]^* t''_m [\cfr_4] t'_i$. 
     
     \end{enumerate}
        
     Thus, we obtain $t'_k [\tpo_4 \cup \trf_4 \cup \cfr_4]^+ t'_i$ as desired.

\end{itemize}
 The converse direction, that is, whenever $t'_k [\tpo_4 \cup \trf_4 \cup \cfr_4]^+ t'_i$, we also have $t'_k [\tpo_2 \cup \trf_2 \cup \cfr_2]^+ t'_i$ is proved on similar lines. 
    
\end{proof}
\begin{lemma}
If $\tau_1 \equiv \tau_2$, $\alpha_{t_1} \sim \alpha_{t_2}$, and $\tau_1 \vdash_{\tt{G}} (\alpha_{t_1} \cdot \alpha_{t_2})$,  then (i) $\tau_2 \vdash_{\tt{G}} (\alpha_{t_2} \cdot \alpha_{t_1})$ and (ii) $(\alpha_{t_1} \cdot \alpha_{t_2})(\tau_1) \equiv (\alpha_{t_2} \cdot \alpha_{t_1})(\tau_2)$.
\label{lemma g-3}
\end{lemma}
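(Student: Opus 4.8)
The plan is to obtain this lemma as a straightforward combination of the two preceding lemmas: Lemma~\ref{lemma g-2}, which reorders two independent observables \emph{at a fixed trace}, and Lemma~\ref{lemma g-1}, which transports a single enabled observable \emph{across an equivalent trace}. No fresh combinatorial reasoning about $\tco$ edges is required here, since all of that work already lives inside Lemmas~\ref{lemma g-1} and \ref{lemma g-2}; the task is only to chain them in the right order. Throughout I would use that $\equiv$ is an equivalence relation — reflexivity, symmetry, and in particular transitivity are immediate from its definition, as $\tau \equiv \tau'$ is an ``iff'' statement equating the two traces' orderings of the $w$-transactions (with $\tran$, $\tpo$, $\trf$ already identical). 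I would also note that the independence $\alpha_{t_1} \sim \alpha_{t_2}$ is a purely syntactic property of the observables (it constrains which transactions the reads of $t_1, t_2$ read from, independently of the ambient trace), so it holds at $\tau_1$ and $\tau_2$ alike and needs no separate transfer argument.

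First I would apply Lemma~\ref{lemma g-2} to $\tau_1$. Since $\tau_1 \vdash_{\tt{G}} \alpha_{t_1} \cdot \alpha_{t_2}$ and $\alpha_{t_1} \sim \alpha_{t_2}$, the lemma yields $\tau_1 \vdash_{\tt{G}} \alpha_{t_2} \cdot \alpha_{t_1}$ together with $(\alpha_{t_1} \cdot \alpha_{t_2})(\tau_1) \equiv (\alpha_{t_2} \cdot \alpha_{t_1})(\tau_1)$. This already places the observables in the order demanded by the conclusion and discharges the reordering at the trace $\tau_1$.

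Next I would transport the reordered run from $\tau_1$ to $\tau_2$ along $\tau_1 \equiv \tau_2$, using a two-observable version of Lemma~\ref{lemma g-1} obtained by applying it twice. Writing $\tau_1^{(1)} = \alpha_{t_2}(\tau_1)$ and $\tau_2^{(1)} = \alpha_{t_2}(\tau_2)$: because $\tau_1 \vdash_{\tt{G}} \alpha_{t_2} \cdot \alpha_{t_1}$, the observable $\alpha_{t_2}$ is enabled at $\tau_1$, so Lemma~\ref{lemma g-1} (with $\tau_1 \equiv \tau_2$) gives $\tau_2 \vdash_{\tt{G}} \alpha_{t_2}$ and $\tau_1^{(1)} \equiv \tau_2^{(1)}$; then $\alpha_{t_1}$ is enabled at $\tau_1^{(1)}$, so a second application (with $\tau_1^{(1)} \equiv \tau_2^{(1)}$) gives $\tau_2^{(1)} \vdash_{\tt{G}} \alpha_{t_1}$ and $\alpha_{t_1}(\tau_1^{(1)}) \equiv \alpha_{t_1}(\tau_2^{(1)})$. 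Composing these two steps yields claim~(i), $\tau_2 \vdash_{\tt{G}} \alpha_{t_2} \cdot \alpha_{t_1}$, and the intermediate equivalence $(\alpha_{t_2} \cdot \alpha_{t_1})(\tau_1) \equiv (\alpha_{t_2} \cdot \alpha_{t_1})(\tau_2)$.

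Finally I would chain equivalences by transitivity: Lemma~\ref{lemma g-2} supplies $(\alpha_{t_1} \cdot \alpha_{t_2})(\tau_1) \equiv (\alpha_{t_2} \cdot \alpha_{t_1})(\tau_1)$, and the transport step supplies $(\alpha_{t_2} \cdot \alpha_{t_1})(\tau_1) \equiv (\alpha_{t_2} \cdot \alpha_{t_1})(\tau_2)$; composing gives claim~(ii). The only point needing care — which I regard as the main, if modest, obstacle — is the lifting of Lemma~\ref{lemma g-1} from a single observable to the length-two sequence $\alpha_{t_2} \cdot \alpha_{t_1}$: one must check that each observable is genuinely enabled at the appropriate intermediate configuration, so that $\vdash_{\tt{G}}$ decomposes step by step, and that $\equiv$ is preserved at $\tau_i^{(1)}$ before the second invocation. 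Both are handled by reading off the enabling conditions from the definition of $\vdash_{\tt{G}}$ and by the equivalence-preservation clause already proved in Lemma~\ref{lemma g-1}.
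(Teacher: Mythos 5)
Your proposal is correct and matches the paper's own proof, which simply states that the lemma ``follows from Lemma~\ref{lemma g-1} and Lemma~\ref{lemma g-2}''; your chaining (reorder at $\tau_1$ via Lemma~\ref{lemma g-2}, then transport $\alpha_{t_2}\cdot\alpha_{t_1}$ across $\tau_1\equiv\tau_2$ by two applications of Lemma~\ref{lemma g-1}, then compose equivalences by transitivity) is exactly the intended combination, spelled out in more detail than the paper bothers to give.
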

\begin{proof}
Follows from Lemma \ref{lemma g-1} and Lemma \ref{lemma g-2}.
\end{proof}

From Lemma \ref{lemma g-3}, we get following lemma.
\begin{lemma}
If $\tau_1 \equiv \tau_2$, $\pi_{\tt{T^1}} \sim \pi_{\tt{T^2}}$, and $\tau_1 \vdash_{\tt{G}} \pi_{\tt{T^1}}$ then $\tau_2 \vdash_{\tt{G}} \pi_{\tt{T^2}}$ and $\pi_{\tt{T^1}}(\tau_1) \equiv \pi_{\tt{T^2}}(\tau_2)$.
\label{lemma g-4}
\end{lemma}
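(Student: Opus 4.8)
The plan is to reduce Lemma~\ref{lemma g-4} to the single-swap statement of Lemma~\ref{lemma g-3}, using the congruence of $\equiv$ under $\xrightarrow[]{}_{\sat}$ transitions supplied by Lemma~\ref{lemma g-1}. First I would promote Lemma~\ref{lemma g-1} from a single observable to an entire observation sequence: by a straightforward induction on the number of observables in $\pi$, if $\tau_1 \equiv \tau_2$ and $\tau_1 \vdash_{\tt{G}} \pi$, then $\tau_2 \vdash_{\tt{G}} \pi$ and $\pi(\tau_1) \equiv \pi(\tau_2)$. The inductive step is immediate: for $\pi = \alpha_t \cdot \pi^\circ$, Lemma~\ref{lemma g-1} gives $\tau_2 \vdash_{\tt{G}} \alpha_t$ and $\alpha_t(\tau_1) \equiv \alpha_t(\tau_2)$, and the hypothesis then applies to $\pi^\circ$ from the equivalent pair $\alpha_t(\tau_1), \alpha_t(\tau_2)$. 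This iterated $\equiv$-congruence is the only new ingredient; it lets me transport a derivation across equivalent traces while preserving the resulting equivalence.

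Next I would use the definition of $\sim$ on observation sequences to write $\pi_{\tt{T^1}} = \pi' \cdot \alpha_{t_1} \cdot \alpha_{t_2} \cdot \pi''$ and $\pi_{\tt{T^2}} = \pi' \cdot \alpha_{t_2} \cdot \alpha_{t_1} \cdot \pi''$ with $\alpha_{t_1} \sim \alpha_{t_2}$, and process the three pieces in order. Applying the iterated congruence to the common prefix $\pi'$ from the pair $\tau_1 \equiv \tau_2$ yields $\rho_1 := \pi'(\tau_1) \equiv \pi'(\tau_2) =: \rho_2$, with both derivations valid. At the swap point, $\tau_1 \vdash_{\tt{G}} \pi_{\tt{T^1}}$ gives $\rho_1 \vdash_{\tt{G}} \alpha_{t_1}\cdot\alpha_{t_2}$, so Lemma~\ref{lemma g-3} applied to $\rho_1 \equiv \rho_2$ and $\alpha_{t_1} \sim \alpha_{t_2}$ produces $\rho_2 \vdash_{\tt{G}} \alpha_{t_2}\cdot\alpha_{t_1}$ together with $\mu_1 := (\alpha_{t_1}\cdot\alpha_{t_2})(\rho_1) \equiv (\alpha_{t_2}\cdot\alpha_{t_1})(\rho_2) =: \mu_2$. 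Finally, since $\mu_1 \vdash_{\tt{G}} \pi''$, the iterated congruence applied to the common suffix $\pi''$ from $\mu_1 \equiv \mu_2$ gives $\mu_2 \vdash_{\tt{G}} \pi''$ and $\pi''(\mu_1) \equiv \pi''(\mu_2)$. Chaining these three phases shows $\tau_2 \vdash_{\tt{G}} \pi' \cdot \alpha_{t_2}\cdot\alpha_{t_1}\cdot\pi'' = \pi_{\tt{T^2}}$ and $\pi_{\tt{T^1}}(\tau_1) = \pi''(\mu_1) \equiv \pi''(\mu_2) = \pi_{\tt{T^2}}(\tau_2)$, which is exactly the claim (using transitivity of $\equiv$).

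I do not expect a genuine obstacle once the iterated form of Lemma~\ref{lemma g-1} is in place: all the delicate reasoning---that swapping two independent observables preserves the readability of every read event, and that the two resulting traces induce the same $\tpo \cup \trf \cup \tco^x$ reachability for each $x$---has already been discharged inside Lemmas~\ref{lemma g-2} and~\ref{lemma g-3}. The only remaining points need bookkeeping care: that $\equiv$ is an equivalence relation, so the prefix-, swap-, and suffix-phase equivalences compose transitively, and that each phase's derivation stays valid from the transported starting trace, both of which are handled by the congruence lemma. If one instead wants the statement for the full closure $\approx$ rather than a single $\sim$ step, I would add an outer induction on the number of adjacent swaps relating $\pi_{\tt{T^1}}$ to $\pi_{\tt{T^2}}$, invoking the single-swap case at each step and chaining via transitivity of $\equiv$.
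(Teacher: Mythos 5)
Your proposal is correct and follows the same route the paper intends: the paper states Lemma~\ref{lemma g-4} as an immediate consequence of Lemma~\ref{lemma g-3}, and your argument simply makes explicit the routine bookkeeping that justifies this, namely decomposing $\pi_{\tt{T^1}}$ and $\pi_{\tt{T^2}}$ around the single swapped pair per the definition of $\sim$, transporting the derivation across the common prefix and suffix via an iterated form of Lemma~\ref{lemma g-1}, and applying Lemma~\ref{lemma g-3} at the swap point. The iterated $\equiv$-congruence you isolate is exactly the (unstated) ingredient the paper relies on, so no further work is needed.
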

 
\begin{lemma}
If $\tau \vdash_{\tt{G_T}} \pi_{\tt{T}}$, $\tau \vdash_{\tt{G}} \alpha_t$ then $\pi_{\tt{T}} = \pi_{\tt{T^1}} \cdot \alpha_t \cdot \pi_{\tt{T^2}}$ for some $\pi_{\tt{T^1}}$ and $\pi_{\tt{T^2}}$ where $\pi_{\tt{T^1}}$ is $t$-free.
\label{lemma g-5}
\end{lemma}
Consider observables $\pi_{\tt{T}} = \alpha_{t_1},\alpha_{t_2} \dots \alpha_{t_n}$. 
We write $\pi_{\tt{T}} \lessapprox \pi'_{\tt{T}}$ to represent that $\pi'_{\tt{T}} = \pi_{\tt{T^0}} \cdot \alpha_{t_1} \cdot \pi_{\tt{T^1}} 
\cdot \alpha_{t_2} \cdot \pi_{\tt{T^2}} \dots \alpha_{t_n} \cdot \pi_{\tt{T^n}}$. 
In other words, $\pi_{\tt{T}}$ occurs as a non-contiguous subsequence in $\pi'_{\tt{T}}$. For such a $\pi_{\tt{T}}, \pi'_{\tt{T}}$, 
we define $\pi'_{\tt{T}} \oslash \pi_{\tt{T}} := \pi_{\tt{T^0}} \cdot \pi_{\tt{T^1}} \dots \pi_{\tt{T^n}}$. 
Since elements of $\pi_{\tt{T}}$ and $\pi'_{\tt{T}}$ are distinct, operation $\oslash$ is well defined. 
Let $\pi_{\tt{T}}[i]$ denote the $i$th observable 
in the observation sequence $\pi_{\tt{T}}$, and 
let  $|\pi_{\tt{T}}|$ denote  the number of observables 
in $\pi_{\tt{T}}$.

Let $\alpha_t = \pi_{\tt{T}}[i]$ for some $i: 1 \le i \le |\pi_{\tt{T}}|$. 
We define $\tt{Pre(\pi_{\tt{T}},\alpha_t)}$ as a subsequence $\pi'_{\tt{T}}$ of $\pi_{\tt{T}}$ such that  
(i) $\alpha_t \in \pi'_{\tt{T}}$,
(ii) $\alpha_{t_j} = \pi_{\tt{T}}[j] \in \pi'_{\tt{T}}$ for some $j: 1 \le j < i$ iff there exists $k: j < k \le i$ such that $\alpha_{t_k}=\pi_{\tt{T}}[k]  \in \pi'_{\tt{T}}$ and $\neg(\alpha_{t_j} \sim \alpha_{t_k})$. 
Thus, $\tt{Pre(\pi_{\tt{T}},\alpha_t)}$ consists of $\alpha_t$ and all 
$\alpha_{t'}$ appearing before $\alpha_t$ in $\pi_{\tt{T}}$ such that 
$t' [\tpo \cup \trf]^+ t$.

%
%
\begin{lemma}
If $\pi_{\tt{T^1}} = \tt{Pre(\pi_{\tt{T}},\alpha_t)}$ and $\pi_{\tt{T^2}} = \pi_{\tt{T}} \oslash \pi_{\tt{T^1}}$ then $\pi_{\tt{T}} \approx \pi_{\tt{T^1}} \cdot \pi_{\tt{T^2}}$.
\label{lemma g-6}
\end{lemma}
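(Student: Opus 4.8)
The plan is to prove this as a trace-theoretic (commutation) rearrangement result: $\pi_{\tt{T^1}} \cdot \pi_{\tt{T^2}}$ is obtained from $\pi_{\tt{T}}$ by a \emph{stable partition} that moves every observable of $\pi_{\tt{T^1}}$ in front of every observable of $\pi_{\tt{T^2}}$ using only swaps of adjacent independent observables, each of which is one step of $\sim$. First I would fix notation: write $\pi_{\tt{T}} = \alpha_{t_1}\cdots\alpha_{t_n}$ with $\alpha_t = \pi_{\tt{T}}[i]$, and call a position \emph{marked} when its observable belongs to $\pi_{\tt{T^1}}$ and \emph{unmarked} otherwise. Since $\tt{Pre}(\pi_{\tt{T}},\alpha_t)$ contains only $\alpha_t$ and observables strictly before it, every marked position is $\le i$, and by construction $\pi_{\tt{T^1}}$ (resp.\ $\pi_{\tt{T^2}}$) is exactly the subsequence of marked (resp.\ unmarked) positions, each kept in its original relative order; this is also what makes $\pi_{\tt{T^1}} \lessapprox \pi_{\tt{T}}$ and hence $\oslash$ well defined.

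The crucial step is the following independence claim: if $\beta = \pi_{\tt{T}}[j]$ is unmarked, $\alpha = \pi_{\tt{T}}[k]$ is marked, and $j < k$, then $\alpha \sim \beta$. I would prove this directly from the backward-closure definition of $\tt{Pre}$. Since $\alpha$ is marked, $k \le i$, so $j < k \le i$ and in particular $j < i$. The definition of $\tt{Pre}(\pi_{\tt{T}},\alpha_t)$ says that $\pi_{\tt{T}}[j]$ with $j < i$ is included iff there is some marked $\pi_{\tt{T}}[k']$ with $j < k' \le i$ and $\neg(\pi_{\tt{T}}[j] \sim \pi_{\tt{T}}[k'])$. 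As $\beta$ is \emph{not} marked, the negation of this condition holds: for every marked $\pi_{\tt{T}}[k']$ with $j < k' \le i$ we must have $\beta \sim \pi_{\tt{T}}[k']$. Instantiating $k' = k$ (legitimate since $\alpha$ is marked and $j < k \le i$) yields $\beta \sim \alpha$, proving the claim. Intuitively this records that any observable preceding $\alpha_t$ that is $[\tpo \cup \trf]^+$-related to a marked observable would itself have been pulled into $\pi_{\tt{T^1}}$.

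It then remains to convert the claim into a concrete chain of $\sim$-swaps. I would argue by induction on the number of \emph{inversions}, i.e.\ pairs $(j,k)$ with $j < k$, $\pi_{\tt{T}}[j]$ unmarked and $\pi_{\tt{T}}[k]$ marked. If this count is $0$, all marked observables already precede all unmarked ones in the original relative orders, so the sequence is literally $\pi_{\tt{T^1}} \cdot \pi_{\tt{T^2}}$ and we are done by reflexivity of $\approx$. Otherwise some unmarked $\beta$ is immediately followed by a marked $\alpha$; by the independence claim these two are independent, so they are $\sim$-swappable, and performing the swap decreases the inversion count by exactly one while preserving both the marked and the unmarked relative orders. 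Chaining these single-step $\sim$-relations gives $\pi_{\tt{T}} \approx \pi_{\tt{T^1}} \cdot \pi_{\tt{T^2}}$.

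The main obstacle is the monotonicity point hidden in the last paragraph: the independence claim is stated for the \emph{original} $\pi_{\tt{T}}$, so to apply it to an adjacent unmarked/marked pair exposed during the procedure I must verify that such a pair was already in that order in $\pi_{\tt{T}}$. This follows because every swap moves a marked observable leftward past an unmarked one and never the reverse, so the relative order of a fixed (unmarked, marked) pair can only change from ``unmarked before marked'' to ``marked before unmarked''; hence if $\beta$ still precedes $\alpha$ in an intermediate sequence it also precedes it in $\pi_{\tt{T}}$, and the claim applies. I would make this invariant explicit and carry it through the induction to close the argument.
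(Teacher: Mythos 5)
Your proof is correct and follows essentially the same route as the paper, which simply asserts the rearrangement is possible by executing the $\tt{Pre}$ observables first and commuting the rest past them. You supply the details the paper leaves implicit — the key independence claim that any non-$\tt{Pre}$ observable preceding a $\tt{Pre}$ observable must be independent of it (read off from the backward-closure definition of $\tt{Pre}$), and the induction on inversions realizing the reordering as a chain of adjacent $\sim$-swaps.
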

\begin{proof}
The proof is trivial since we can always execute in order, 
the $[\tpo \cup \trf]^+$ predecessors of $\alpha_t$ from $\tt{Pre(\pi_{\tt{T}},\alpha_t)}$, then $\alpha_t$, then the  observables 
in $\tt{Pre(\pi_{\tt{T}},\alpha_t)}$ which are independent from $\alpha_t$, followed by the suffix of $\pi_{\tt{T}}$ after $\alpha_t$.

\end{proof}

\begin{lemma}
If $\pi_{\tt{T}}= \pi'_{\tt{T}} \cdot \pi''_T$ and $\alpha_t \in \pi'_{\tt{T}}$ then $\tt{Pre(\pi_{\tt{T}},\alpha_t)} = \tt{Pre(\pi'_{\tt{T}},\alpha_t)}$.
\label{lemma g-7}
\end{lemma}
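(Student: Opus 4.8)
The plan is to exploit the fact that the recursive clause (ii) in the definition of $\tt{Pre}$ only ever inspects observables occurring at positions at most $i$, where $\alpha_t = \pi_{\tt{T}}[i]$, and never looks past $\alpha_t$. Since $\pi'_{\tt{T}}$ is a contiguous prefix of $\pi_{\tt{T}} = \pi'_{\tt{T}} \cdot \pi''_T$ that contains $\alpha_t$, the two sequences agree on every position up to and including $i$, and the extra tail $\pi''_T$ is irrelevant to the computation of $\tt{Pre}$.

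First I would fix notation: let $\alpha_t = \pi_{\tt{T}}[i]$. Because $\pi'_{\tt{T}}$ is a prefix of $\pi_{\tt{T}}$ and $\alpha_t \in \pi'_{\tt{T}}$, the index of $\alpha_t$ in $\pi'_{\tt{T}}$ is again $i$, and for every $m \le i$ we have $\pi_{\tt{T}}[m] = \pi'_{\tt{T}}[m]$. I would also record that the independence relation $\alpha_{t_j} \sim \alpha_{t_k}$ is a property of the two observables alone, determined by their transactions and their $\trf$-relations (equivalently, by whether $t_j [\tpo \cup \trf]^+ t_k$ or $t_k [\tpo \cup \trf]^+ t_j$), and hence is invariant under the choice of ambient sequence in which $\alpha_{t_j}$ and $\alpha_{t_k}$ sit.

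Next, I would make precise that $\tt{Pre}$ is computed by a backward sweep from position $i$: put $\alpha_t \in \tt{Pre}$, and for $j = i-1, i-2, \dots, 1$ put $\alpha_{t_j} \in \tt{Pre}$ iff there is some $k$ with $j < k \le i$ such that $\alpha_{t_k} \in \tt{Pre}$ already and $\neg(\alpha_{t_j} \sim \alpha_{t_k})$; all positions strictly greater than $i$ are excluded outright by the quantifier ranges of clause (ii). The core argument is then a downward induction on $i - j$ establishing $\alpha_{t_j} \in \tt{Pre}(\pi_{\tt{T}},\alpha_t)$ iff $\alpha_{t_j} \in \tt{Pre}(\pi'_{\tt{T}},\alpha_t)$: the base case $j=i$ is $\alpha_t$ itself, which lies in both sets; the inductive step for $j < i$ refers only to positions $k$ with $j < k \le i$, on which the two sequences coincide both in their observables (by the agreement of the prefix) and in the $\sim$ relation (by the observable-local nature of $\sim$ and the inductive hypothesis that membership of each such $\alpha_{t_k}$ is the same in both sets). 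Hence the biconditional defining membership of $\alpha_{t_j}$ has the same truth value relative to $\pi_{\tt{T}}$ and to $\pi'_{\tt{T}}$, and the two sets are equal.

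There is essentially no hard step: the only point that needs care is to verify that clause (ii) is genuinely backward-looking, i.e. that no member of $\tt{Pre}$ can sit at a position strictly greater than $i$, so that the suffix $\pi''_T$ can never contribute to either computation. This is immediate from the index ranges $1 \le j < i$ and $j < k \le i$ appearing in the definition, and it is precisely what licenses replacing $\pi_{\tt{T}}$ by its prefix $\pi'_{\tt{T}}$.
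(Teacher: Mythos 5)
Your proof is correct and rests on the same observation as the paper's (one-line) argument: $\tt{Pre}(\pi_{\tt{T}},\alpha_t)$ is determined entirely by the observables at positions up to and including that of $\alpha_t$, all of which lie in the prefix $\pi'_{\tt{T}}$, so the suffix $\pi''_T$ cannot affect the computation. Your backward induction just spells out in detail what the paper leaves implicit.
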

\begin{proof}
The proof is trivial once again, since 	all 
the transactions which are $[\tpo \cup \trf]^+ t$  are in the prefix  $\pi'_{\tt{T}}$.
\end{proof}

\begin{lemma}
If $\tau \vdash_{\tt{G}} \pi_{\tt{T}}$ then $\tau \vdash_{\tt{G_T}} \pi_{\tt{T}} \cdot \pi_{\tt{T^2}}$.
\label{lemma g-8}
\end{lemma}
\begin{proof}
This simply follows from the fact that we can extend the observation sequence 
$\pi_{\tt{T}}$ to obtain a terminal configuration, since the $\ccvt$ DPOR algorithm generates traces corresponding to terminating runs.   	
\end{proof}

\begin{lemma}
Consider a trace trace $\tau$ such that $\tau \models \ccvt$, $\tau \vdash_{\tt{G}} \pi_{\tt{T}} \cdot \alpha_t$. Let each read event $ev_i = r_i(p,t,x_i,v)$ in $\alpha_t$ read from some transaction $t_i$, and let $\pi_{\tt{T}}$ be $t$-free.  Then $\tau \vdash_{\tt{G}} \alpha'_t \cdot \pi_{\tt{T}}$, where $\alpha'_t$ is the same as  $\alpha_t$, with the exception that the sources of its read events can be different. That is, 
 each read event $ev_i = r_i(p,t,x_i,v) \in \alpha'_t$ can read from some transaction $t'_i \neq t_i$.
\label{lemma g-9}
\end{lemma}
\begin{proof}
Let $\pi_{\tt{T}} = \alpha_{t_1} \alpha_{t_2} \dots \alpha_{t_n}$ and $\tau_0 \xrightarrow[]{\alpha_{t_1}}_{\ccvt{-}\sat} \tau_1 \xrightarrow[]{\alpha_{t_2}}_{\ccvt{-}\sat} \dots \xrightarrow[]{\alpha_{t_n}}_{\ccvt{-}\sat} \tau_n \xrightarrow[]{\alpha_t}_{\ccvt{-}\sat} \tau_{n+1}$, where $\tau_0 = \tau$. Let $\tau_i = \langle \tran_i, \tpo_i, \trf_i, \cfr_i \rangle$. 
Let there be  $m$ read events in transaction $t$. Keeping in mind what we want to prove, where we want to execute $t$ first followed by $\pi_{\tt{T}}$, and obtain an execution $\alpha'_t \pi_{\tt{T}}$, we do the following.

For each read event $ev_i = r_i(p,t,x_i,v)$ we define $t'_i \in \tran^{wt,x_i}$ such that 
$(i)$ $ t'_i \in \rbl(\tau_0,t,x_i)$, and 
$(ii)$ there is no $t''_i \in \rbl(\tau_0,t,x_i)$ where $t'_i [\tpo_n \cup \trf_n \cup \cfr_n] t''_i$.  Note that this is possible since 
 $\pi_{\tt{T}}$ is $t$-free, so all the observables $\alpha_{t_i}$ occurring in $\pi_{\tt{T}}$ are such that $t_i$ is issued in a process other than that of $t$. Thus, when $\alpha_t$ is enabled, we can choose any of the 
 writes done earlier,  this fact is consistent with the $\ccvt$ semantics, since from Lemmas \ref{lem:f3}-\ref{lem:f4} we know $\tau_n \models \ccvt$.
Since $\tau_n \models \ccvt$ such a $t'_i \in  \rbl(\tau_0,t,x_i)$ exists for each $ev_i$.

\medskip 
Next we define a sequence of traces which can give the execution 
$\alpha'_t \pi_{\tt{T}}$. 
Define a sequence of traces $\tau'_0, \tau'_1, \dots \tau'_n$ where $\tau'_j = \langle \tran'_j, \tpo'_j, \trf'_j, \cfr'_j \rangle$ is such that 
\begin{enumerate}
	\item $\tau \vdash_{\tt{G}} {\alpha'_t}$ and $\alpha'_t(\tau)=\tau'_0$,
	\item $\tau'_j \vdash_{\tt{G}} \alpha_{t_{j+1}}$ and  $\tau'_{j+1} = \alpha_{t_{j+1}}(\tau'_j)$ for all $j: 0 \le j \le n$.
\end{enumerate}

\begin{itemize}
\item  Define $\tran'_i = \tran_i \cup \{t\}$ for all $0 \leq i \leq n$, 
\item 
Define $\tpo' = \{t' [\tpo] t \mid t'$ is a transaction in $\tau$, in the same process as $t\}$, and $\tpo'_i = \tpo_i \cup \tpo'$ for all $0 \leq i \leq n$, 

\item  $\trf' = \bigcup_{i:1\le i \le m} t'_i [\trf^{x_i}] t$ (reads from relation corresponding to each read event $ev_i \in \alpha'_t$), and 
 $\trf'_i = \trf_i \cup \trf'$ for all $i : 1 \le i \le n$, 
 
\item  $\cfr' = \bigcup_{j:1\le j \le m} \cfr'_j$ (each  $\cfr'_j$ corresponds to the updated $\cfr$ relation because of the read transition $\xrightarrow[]{\readact(t,t'_j)}_{\ccvt{-}\sat}$). 	
\smallskip 

For $1 \leq i \leq n$, we define $\cfr'_i$ inductively as $\cfr'_i = \cfr_i \cup \cfr'$ and show that $\tau'_i \vdash_{\tt{G}} \alpha_{t_{i+1}}$ and  $\tau'_{i+1} = \alpha_{t_{i+1}}(\tau'_i)$ holds good. 
 First define $\cfr'_0 = \cfr_0 \cup \cfr'$. Then define  $\cfr'_i$ to be the coherence order corresponding to $\alpha_{t_{i}}(\tau'_{i-1})$  where $\tau'_{i-1}=\langle \tran'_{{i-1}}, \tpo_{i-1}, \trf_{i-1}, \cfr_{i-1} \rangle$ for all $i: 1 \le i \le n$. 

 \end{itemize}

\medskip 
\noindent{\bf{Base case}}. The base case $\tau \vdash_{\tt{G}} {\alpha'_t}$ and $\alpha'_t(\tau)=\tau'_0$,
   holds trivially, by construction. 

\smallskip 

\noindent{\bf{Inductive hypothesis}}. Assume that $\tau'_j \vdash_{\tt{G}} \alpha_{t_{j+1}}$ and  $\tau'_{j+1} = \alpha_{t_{j+1}}(\tau'_j)$ for $0 \leq j \leq i-1$. 

\smallskip 

 We have to prove that 
$\tau'_i \vdash_{\tt{G}} \alpha_{t_{i+1}}$ and  $\tau'_{i+1} = \alpha_{t_{i+1}}(\tau'_i)$. 
\begin{enumerate}
	\item If $t_{i+1} \in \tran^{wt}$ and $t_{i+1} \notin \tran^{rd}$,  then the proof holds trivially from the inductive hypothesis. 
	\item Consider now $t_{i+1} \in \tran^{rd}$. Consider a read event 
	$ev^{i+1} = r(p,t_{i+1},x,v) \in \alpha_{t_{i+1}}$ which was reading from 
	transaction $t_x$ in $\pi_{\tt{T}}$. That is, we had $t_x \in \rbl(\tau_i, t_{i+1},x)$. 	If $t_x \in \rbl(\tau'_i, t_{i+1},x)$, then we are done, since we can simply extend the run from the inductive hypothesis. 
	
	\smallskip 
	
	Assume otherwise. That is, $t_x \notin \rbl(\tau'_i, t_{i+1},x)$.

	Since  $t_x$  $\in$   $\rbl(\tau_i, t_{i+1},x)$, we know that there are some blocking transactions in the new path which prevents $t_x$ from being readable. Basically, we have the blocking transactions since $t$ is moved before $\pi_{\tt{T}}$. 	
	
	\begin{itemize}
		\item Consider $tr_1 \in \vbl(\tau,t,y)$ for some variable $y$ such that,  in $\pi_{\tt{T}}.\alpha_t$, we had $t_x [\tpo_i \cup \trf_i \cup \cfr_i]^* tr_1$. This path
		 is possible since $\alpha_t$ comes last in in $\pi_{\tt{T}}.\alpha_t$,  after  $\alpha_{t_{i+1}}$. 
		\item  Consider $tr_2 \in \vbl(\tau_i,t_{i+1},x)$.  Since $t_x$ $\in$ $\rbl(\tau_i, t_{i+1},x)$, we have  $tr_2 [\cfr^x_{i+1}] t_x$. 
Now, in $\alpha'_t.\pi_{\tt{T}}$, assume $tr_3 [\trf'^y_0] t$, 
such that we have a path from $tr_3$ to $t_x$, as  $tr_3 [\tpo_i \cup \trf_i \cup \cfr_i]^* tr_2$. 
	\end{itemize}
	  		  Hence it follows that $tr_3 [\tpo_i \cup \trf_i \cup \cfr_i]^* tr_2$ 
$[\cfr^x_{i+1}] t_x$ $[\tpo_i \cup \trf_i \cup \cfr_i]^* tr_1$. 
Hence we have $tr_3 [\tpo_{i+1} \cup \trf_{i+1} \cup \cfr_{i+1}]^+ tr_1$, i.e 
$tr_3 [\tpo_n \cup \trf_n \cup \cfr_n]^+ tr_1$. This leads to the contradiction since $tr_1 \in \vbl(\tau,t,y)$ and  $tr_3 [\trf'^y_0] t$.  

Hence, $t_x$  $\in$   $\rbl(\tau_i, t_{i+1},x)$ and we can extend the run 
from the inductive hypothesis obtaining $\tau'_j \vdash_{\tt{G}} \alpha_{t_{j+1}}$,  $\tau'_{j+1} = \alpha_{t_{j+1}}(\tau'_j)$,  
for $0 \leq j \leq i-1$,
and 
$\tau'_i \vdash_{\tt{G}} \alpha_{t_{i+1}}$.

\end{enumerate}

Now we prove that $\cfr'_{i+1} = \cfr_{i+1} \cup \cfr'$. 
Assume we have $(tr1 , tr2) \in \cfr_{i+1}$. We show that $(tr1 , tr2) \in \cfr'_{i+1}$. We have the following cases:

\begin{itemize}
    \item[(i)] $(tr1 , tr2) \in \cfr_i$. From the inductive hypothesis it follows that $(tr1 , tr2) \in \cfr'_{i}$ and hence in $\cfr'_{i+1}$.
    \item[(ii)] $(tr1 , tr2) \notin \cfr_i$ and some read event $ev = r(p,t_{i+1},x,v)$ from $\alpha_{t_{i+1}}$ reads from $tr2$.  That is, $tr1,tr2 \in \rbl(\tau_i, t_{i+1},x)$ and $tr1 \in \vbl(\tau_i,t_{i+1},x)$.
    Assume $(tr1 , tr2) \notin \cfr'_{i+1}$.
    Since $(tr1, tr2) \notin \cfr'_{i+1}$, it follows that there exists  $tr3 \in \vbl(\tau'_i,t_{i+1},x)$ such that $tr2 [\tpo'_i \cup \trf'_i \cup \cfr'_i] tr3$. This means $tr2 \in \rbl(\tau_i,t_{i+1},x)$, but 
    $tr2 \notin \rbl(\tau'_i,t_{i+1},x)$. However,  as proved earlier, this leads to a contradiction. 

\end{itemize}
Thus, we have shown that $\tau \vdash_{\tt{G}} \alpha'_t \alpha_{t_1} \dots \alpha_{t_i}$ 
is such that $\alpha'_t \alpha_{t_1} \dots \alpha_{t_i}(\tau)=\tau'_i$ for all 
$0 \leq i \leq n$. When $i=n$ we obtain 
$\alpha'_t \pi_{\tt{T}}(\tau)=\tau'_n$, or $\tau \vdash_{\tt{G}} \alpha'_t \pi_{\tt{T}}$. 
\end{proof}


\begin{definition}[Linearization of  a Trace]
A observation sequence $\pi_{\tt{T}}$ is a \emph{linearization} of a trace $\tau = \langle \tran, \tpo , \trf, \cfr \rangle$ if $(i)$ $\pi_{\tt{T}}$ has the same transactions as $\tau$ and $(ii)$ $\pi_{\tt{T}}$ follows the ($\tpo \cup \trf)$ relation.
	
\end{definition}

We say that our DPOR algorithm \emph{generates an observation sequence} $\pi_{\tt{T}}$ from state $\tau$ where $\tau$ is a trace 
if 
 it invokes $\explore$ with parameters $\ccvt,\tau, \pi'$, where $\pi'$ is a \emph{linearization} of $\tau$, and generates  a sequence of recursive calls to $\explore$ resulting in $\pi_{\tt{T}}$.

\begin{lemma}
If $\tau \vdash_{\tt{G_T}} \pi_{\tt{T}}$ then, 
the DPOR algorithm  generates $\pi'_{\tt{T}}$ from state $\tau$ 
for some $\pi'_{\tt{T}} \approx \pi_{\tt{T}}$.
\label{lemma g-10}
\end{lemma}
\begin{proof}
We use induction on $|\pi_{\tt{T}}|$. If $\tau$ is $terminal$, then the  proof is trivial.
Assume that we have $\tau \vdash_{\tt{G_T}} \pi_{\tt{T}}$.
Assume that $\tau \vdash_{\tt{G}} \alpha_t$. It follows that $\tau \vdash_{\tt{G}} \alpha_t$.
Using Lemma \ref{lemma g-5} we get $\pi_{\tt{T}} = \pi_{\tt{T^1}} \cdot \alpha_t \cdot \pi_{\tt{T^2}}$, where $\pi_{\tt{T^1}}$ is $t$-free. 
We consider the following two cases:
\begin{itemize}
    \item In the first case, we assume that all read events in $t$ read from 
    transactions in $\tau$. So in this case, $\pi_{\tt{T^1}}$ can be empty. 
        Assume there are $n$ read events in $t$ and each read event $ev_i = r(p,t,x_i,v)$ 
    reads from transactions $t_i \in \tau$ for all $i: 1 \le i \le n$. 
In this case the DPOR algorithm will let each read event $ev_i$ read from all possible transactions $t' \in \rbl(\tau,t,x_i)$, including $t_i$.
    \item In the second case, assume that there exists at least one read event $ev' = r(p,t,x,v) \in \alpha_t$ which reads from a transaction $t' \in \pi_{\tt{T^1}}$ (hence, $\pi_{\tt{T^1}}$ is non empty).
    \smallskip 
     
    From Lemma \ref{lemma g-9}, we know that $\tau
    \vdash_{\tt{G}} \alpha'_t \cdot \pi_{\tt{T^1}}$. 
    From Lemma \ref{lemma g-8}, it follows that $\tau
    \vdash_{\tt{G_T}} \alpha'_t \cdot \pi_{\tt{T^1}} \cdot \pi_{\tt{T^4}}$, for some
    $\pi_{\tt{T^4}}$.
    Hence there exists $\tau'$ such that 
    $\tau' = \alpha'_t (\tau)$ and $\tau' \vdash_{\tt{G_T}} \pi_{\tt{T^1}} \cdot
    \pi_{\tt{T^4}}$. 
    
    Since $|\pi_{\tt{T^1}} \cdot
    \pi_{\tt{T^4}}| < |\pi_{\tt{T}}$, we can use the inductive hypothesis. It follows that the DPOR algorithm generates from state $\tau'$, 
    the observation sequence  $\pi_{\tt{T^5}}$ such that $\pi_{\tt{T^5}} \approx \pi_{\tt{T^1}} 
    \cdot \pi_{\tt{T^4}}$. 
    
    \smallskip 
    
    Let  $\pi_{\tt{T^3}}.\alpha_t = \tt{Pre(\pi_{\tt{T^1}}, \alpha_t)}$. Then  $\pi_{\tt{T^5}}
    \approx \pi_{\tt{T^1}} \cdot \pi_{\tt{T^4}}$ implies that  $\pi_{\tt{T^3}}
    \lessapprox \pi_{\tt{T^5}}$ (by applying Lemma \ref{lemma g-7}).
    Let $\pi_{\tt{T^6}} \approx \pi_{\tt{T}} \oslash (\pi_{\tt{T^3}} \cdot \alpha_t)$.
    By applying Lemma \ref{lemma g-6}, we get $\pi_{\tt{T}} \approx \pi_{\tt{T^3}} \cdot \alpha_t \cdot \pi_{\tt{T^6}}$.
    Since $\tau \vdash_{\tt{G_T}} \pi_{\tt{T}}$ and 
    $\pi_{\tt{T}} \approx \pi_{\tt{T^3}} \cdot \alpha_t \cdot \pi_{\tt{T^6}}$,
    by applying Lemma \ref{lemma g-3}, we get $\tau
    \vdash_{\tt{G_T}} \pi_{\tt{T^3}} \cdot \alpha_t \cdot \pi_{\tt{T^6}}$.
    Let $\tau' = (\pi_{\tt{T^3}} \cdot \alpha_t) (\tau)$. 
    Since $\tau \vdash_{\tt{G_T}} \pi_{\tt{T^3}} 
    \cdot \alpha_t \cdot \pi_{\tt{T^6}}$, we have $\tau'
    \vdash_{\tt{G_T}} \pi_{\tt{T^6}}$. 
    From inductive hypothesis it follows that $\tau'$ generates $\pi_{\tt{T^7}}$ such that $\pi_{\tt{T^7}} \approx \pi_{\tt{T^6}}$.\\
    In other words, $\tau$ generates $\pi_{\tt{T^3}} 
    \cdot \alpha_t \cdot \pi_{\tt{T^7}}$ where $\pi_{\tt{T}} \approx \pi_{\tt{T^3}} 
    \cdot \alpha_t \cdot \pi_{\tt{T^7}}$.
\end{itemize}

\end{proof}

\newpage
\centerline{\bf{Weak Causal Consistency $\cc$}} 
 \label{app:cc}

\section{$\cc$ Trace Semantics}
\begin{figure}[H]
        \begin{tikzpicture}[node distance=10mm , thick, main/.style= {draw,circle},];
        
        \node[] (3at2) [right= 5mm of t1] {$t_2$};
        \node[] (3at1) [right= 20mm of 3at2] {$t1$};
        \node[] (3at) [below=20mm of 3at1] {$t$};
        \node[] (3at') [below=20mm of 3at2] {$t'$};
        \node[] (31) [below=5mm of 3at] {};
        \node[] (3acap) [left=7mm of 31] {$\tt{Cond3}$};

        \draw[->,black,line width=1.2pt] (3at1) -- node[above] {$(\tpo \cup \trf)^+$} (3at2);
         \draw[->,blue,line width=1.2pt] (3at1) -- node[right] {$\trf$} (3at);
        \draw[dashed , ->,blue,line width=1.2pt] (3at') -- node[below] {$\trf$} (3at);
        \draw[dashed , ->,red,line width=1.2pt] (3at2) to [out=90, in=90] node[above] {$\ow$} (3at1);
        \draw[->,black,line width=1.2pt] (3at2) -- node[rotate=90,below] {$(\tpo \cup \trf)^+$} (3at');

        \node[] (1t') [right= 5mm of 2at1] {$t'$};
        \node[] (1t1) [right= 20mm of 1t'] {$t1$};
         \node[] (111) [below=20mm of 1t1] {};
        \node[] (1t) [left=8mm of 111] {$t$};
        \node[] (1cap) [below=5mm of 1t] {$\tt{Cond1}$};

        \draw[->,black,line width=1.2pt] (1t') -- node[above] {$(\tpo \cup \trf)^+$} (1t1);
        \draw[->,black,line width=1.2pt] (1t1) -- node[right] {$(\tpo \cup \trf)^+$} (1t);
        \draw[dashed , ->,blue,line width=1.2pt] (1t') -- node[left=1mm] {$\trf$} (1t);
        \draw[dashed , ->,red,line width=1.2pt] (1t1) to [out=90, in=90] node[above] {$\ow$} (1t');
        \end{tikzpicture}
        \caption{Readability for $\cc$}
        \label{read-cc}
    \end{figure}

\subsection{Readability for $\cc$}
\label{cc:r}
For $X=\cc$, $\rbl(\tau^t_X, t, x)$ is defined as the set of all transactions  $t' \in \tran^{\wt,x}$ s.t. 
\begin{itemize}
	\item[$\tt{cond1}$] 
	 there is no transaction $t'' \in \tran^{\wt,x}$  s.t. $t'~\co~t''~\co~t$ in $\tau^t_X$. 
Allowing $t'~\trf~t$ (wrt $x$) in the presence of such a $t''$ gives $t''~\ow~t'$ and $\cccyc$.
	\item[$\tt{cond3a}$]  In case $t' \in \tran^{\wt,y}$ ($t'$ also writes on some $y \neq x$) and $t \in \tran^{\rd,y}$,  
there is no transaction $t'' \in \tran^{\wt,y}$ such that $t'' ~\trf~t$ (wrt $y$)  and $t''~ \co ~t'$.  Note that having such a $t''$, and 
allowing $t'~\trf~t$ (wrt $x$) 
results in $t'~\ow~t''$ (wrt $y$)  and $\cccyc$.
\item[$\tt{cond3b}$]   if $t \in \tran^{\rd,y}$ (for some $y \neq x$) there are no transactions $t_1, t_2 \in \tran^{\wt,y}$ such that  
$t_1 ~\trf~ t$ (wrt $y$),  $t_1~\co~t_2~\co~t'$. Assuming we have  
this, then allowing $t'~\trf~t$ (wrt $x$) gives $t_1~\co~t_2~\co~t$ 
and hence $t_2~\ow~t_1$ (wrt $y$) and $\cccyc$. 
\end{itemize}	

Figure \ref{read-cc} explains cycles created after violating conditions $\tt{cond1}$ and $\tt{cond4}$.
$\tt{cond3} $ covers $\tt{cond3a} $  and $\tt{cond3b} $.
For $\tt{cond3a}$,  $t_2$ coincides with $t'$.

\begin{lemma}
    Given trace $\tau$, a transaction $t$ and a variable $x$, we can construct the set  $\rbl(\tau, t, x)$ in polynomial time.
    \label{lemma:cc-poly}
\end{lemma}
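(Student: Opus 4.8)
The plan is to follow the same template as the proof of Lemma \ref{lemma:poly} for $\ccvt$, but specialized to the $\cc$ setting, which is in fact simpler: a $\cc$ trace is of the form $\tau = (\tran, \tpo, \trf)$ with \emph{no} coherence relation, and membership in $\rbl(\tau, t, x)$ is decided by checking only the two conditions $\neg {\tt{cond1}} \wedge \neg {\tt{cond2}}$ described in Appendix \ref{app:ccful}, rather than the four needed for $\ccvt$. I will exhibit an explicit algorithm that constructs $\rbl(\tau, t, x)$ and bound its running time by a polynomial in $|\tran|$.

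First I would precompute the transitive closure of the relation $[\tpo \cup \trf]$, for instance via the Floyd--Warshall algorithm, which takes $O(|\tran|^3)$ time. This gives constant-time queries of the form ``does $t_1 [\tpo \cup \trf]^+ t_2$ hold?'', on which both conditions depend. Next, for each candidate transaction $t' \in \tran^{w,x}$ I would check the two forbidden patterns. Checking ${\tt{cond1}}$ amounts to scanning all $t_3 \in \tran^{w,x}$ and testing, using the precomputed closure, whether $t' [\tpo \cup \trf]^+ t_3$ and $t_3 [\tpo \cup \trf]^+ t$; this costs $O(|\tran|)$ per candidate. If no such $t_3$ is found, I would then check ${\tt{cond2}}$: iterate over all transactions $t_4$ with $t_4 [\trf^y] t$ for some variable $y$, and over all $t_3 \in \tran^{w,y}$, testing whether $t_4 [\tpo \cup \trf]^+ t_3$ and $t_3 [\tpo \cup \trf]^* t'$. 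This costs at most $O(|\tran|^2)$ per candidate. A candidate $t'$ is added to $\rbl(\tau, t, x)$ exactly when neither pattern occurs.

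Summing over the $O(|\tran|)$ candidates $t'$ and adding the one-time $O(|\tran|^3)$ cost of the closure, the whole construction runs in polynomial time, which is the claim. Correctness of the algorithm is immediate from the definition of the readable set, since by construction it enumerates precisely those $t'$ satisfying $\neg {\tt{cond1}} \wedge \neg {\tt{cond2}}$, and by Lemma \ref{lem:cc-satcons} these are exactly the transactions whose addition keeps the trace fulfilled and partially good.

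I do not expect a genuine obstacle here: the argument is a direct adaptation of the $\ccvt$ analysis, and the absence of the $\tco$ relation (so that all reachability is over $[\tpo \cup \trf]$ alone) removes the most delicate bookkeeping. The only point requiring minor care is the quantification in ${\tt{cond2}}$ over the variable $y$ and the pair $(t_3, t_4)$, where one must be sure to range $t_4$ only over $\trf^y$-predecessors of $t$ and $t_3$ over $\tran^{w,y}$, so that the per-candidate cost stays $O(|\tran|^2)$ rather than blowing up; this is handled exactly as in the corresponding step of Lemma \ref{lemma:poly}.
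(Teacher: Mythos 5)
Your proposal is correct and follows essentially the same route the paper takes: the paper leaves this $\cc$ lemma without an explicit proof, implicitly deferring to the algorithm given for Lemma~\ref{lemma:poly}, and your argument is exactly that algorithm specialized to the two conditions $\tt{cond1}$, $\tt{cond2}$ over the $[\tpo \cup \trf]$ closure. The complexity bounds you give per step match those in the paper's $\ccvt$ analysis.
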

\begin{proof}
    We prove that set $\rbl(\tau, t, x)$ can be computed in polynomial time, by designing an algorithm to generate set $\rbl(\tau, t, x)$. The algorithm consists of the following  steps:
    \begin{itemize}
        \item[(i)] First we compute the  transitive closure of the relation 
        $[\tpo \cup \trf]$, i.e, we compute $[\tpo \cup \trf]^+$.  
        This will take $O(|\tran|^3)$ time.
        \item[(ii)] We compute the set $\tran'=\{t'\mid t' [\tpo \cup \trf]^+ t\}$. This takes $O(|\tran|)$ time.
        \item[(iii)] For each transaction $t' \in \tran^{wt,x}$, we perform following checks:
        \begin{itemize}
            \item Check $\mathtt{cond 1 :} $ Check whether there exists a transaction $t'' \in \tran' \cap \tran^{wt,x}$            
            with $t'$ $[\tpo  \cup \trf]^+$ $t''$. 
            If there is no such $t''$ then proceed to $\mathtt{cond 3}$. 
            If we find such $t''$, then $t' \notin \rbl(\tau , t, x)$. 
            This step will take $O(|\tran|^2)$ time.
            
            \item Check $\mathtt{cond 3 :}$ 
            
            We first check case a of $\mathtt{cond 3}$. Check whether there exists a transaction $t''$ such that $t'' [\trf^y] t$ , $t' \in \tran^{wt,y}$, and $t'' [\tpo \cup \trf]^+ t'$. 
            If there is no such $t''$ then proceed to case two of $\mathtt{cond 3}$. 
            If we find such $t''$, then $t' \notin \rbl(\tau , t, x)$. 
            This step will take $O(|\tran|)$ time.
            
             Next, we  check case b of $\mathtt{cond 3}$. Check whether there are transactions $t''$ and $t'''$ such that $t'' [\trf^y] t$, $t''' \in \tran^{wt,y}$, $t''' [\tpo \cup \trf]^+ t'$ and $t'' [\tpo \cup \trf]^+ t'''$. 
            If there are no such $t''$ and $t'''$ then $t' \in \rbl(\tau , t, x)$. 
            If we find such $t''$ and $t'''$, then $t' \notin \rbl(\tau , t, x)$. 
            This step will take $O(|\tran|^2)$ time.
        \end{itemize}
    \end{itemize}
\end{proof}

\subsection{Properties of Trace Semantics}
\label{app:ccful}

\begin{lemma}
    If $\tau_1 \models \cc$  and $\tau_1$ $\xrightarrow[]{\beginact(t)}_{\cc{-}\sat}$ $\tau_2$ then $\tau_2 \models \cc$.
    \label{lem:cc-3}
\end{lemma}
\begin{proof}
The proof follows trivially since we do not change the reads from relations, and the transaction $t$ added has no successors in $(\tpo_2 \cup \trf_2)$. 
Thus, if $\tau_1 \models \cc$, so does $\tau_2$.  
\end{proof}

\begin{lemma}
    If $\tau_1 \models \cc$  and $\tau_1$ $\xrightarrow[]{\readact(t,t')}_{\cc{-}\sat}$ $\tau_2$ then $\tau_2 \models \cc$.
    \label{lem:cc-4}
\end{lemma}

\begin{proof}
Let $\tau_1 = \langle \tran_{1} , \tpo_1 , \trf_1 , \ow_1 \rangle$, 
$\tau_2 = \langle \tran_{2} , \tpo_2 , \trf_2 , \ow_2 \rangle$\footnote{following trace semantics, $\ow$ relation will always be empty. },and 
$ev$ = $r(p,t,x,v)$ be a read event in transaction $t \in \tran_1$ such that $ev$ reads value of $x$ from the transaction $t'$. 
Suppose $\tau_2 \nvDash \cc$, and  that $\tpo_2 \cup \trf_2 \cup \ow_2$ is cyclic. 
Since $\tau_1 \models \cc$, and $t$ has no outgoing edges, just by adding $(t',t) \in \trf_2$ we can not get cycle in ($\tpo \cup \trf \cup \ow$).
Now, suppose that are transactions $t_1\in\tran^{W,y}, t_2$ and $t_3$ such that  $t_1$ [$\tpo_2 \cup \trf_2$]$^+$ $t_3$ and $t_2$ [$\trf_2^y$] $t_3$. 
We need to show that in trace $\tau_2$, we have $\neg$$(t_2 [\tpo \cup \trf]^+ t_1)$. 
We consider the following cases: 
\begin{itemize}
    \item [$\bullet$] $t_3 = t$, $t_2 = t'$.
    From the definition of $\xrightarrow[]{\readact(t,t')}_{\cc{-}\sat}$, 
    we  have $t_2 \in \rbl(\tau_1,t_3,x)$. 
    So we have $\neg$$(t_2 [\tpo_1 \cup \trf_1]^+ t_1)$.
    Since $t$ has no successor in $(\tpo_2 \cup \trf_2)$, it follows that in $\tau_2$ we will have $\neg$$(t_2 [\tpo_2 \cup \trf_2]^+ t_1)$.
    Thus, $\tau_2 \models \cc$. 
    
    \item [$\bullet$] $t_3 \neq t$, $t_2 \neq t'$.
    Since $\trf_2 = \trf_1 \cup t' [\trf] t$ and $t_1$ [$\tpo_2 \cup \trf_2$]$^+$ $t_3$ and $t_2$ [$\trf_2^y$] $t_3$, it follows that $t_1$ [$\tpo_1 \cup \trf_1$]$^+$ $t_3$ and $t_2$ [$\trf_1^y$] $t_3$.
    Since $\tau_1 \models \cc$ , we have $\neg$$(t_2 [\tpo_1 \cup \trf_1]^+ t_1)$.
    Since only new edge added in $\tau_2$ is $t' [\trf^x] t$, we will have $\neg$$(t_2 [\tpo_1 \cup \trf_1]^+ t_1)$ in $\tau_2$.
    Thus, $\tau_2 \models \cc$ is.
    \end{itemize}
\end{proof}

Define $\big[[ \sigma]\big]^{\cc{-}\sat}$ as the set of  traces  generated  using $\xrightarrow[]{}_{\cc{-}\sat}$ transitions, starting from an empty trace $\tau_{\o}$. 
Consider a terminal trace $\tau$ generated by $\cc$ DPOR algorithm starting from the empty trace $\tau_\emptyset$.
That is, there is a sequence  
$\tau_0 \xrightarrow[]{}_{\cc{-}\sat} 
\tau_1 \xrightarrow[]{}_{\cc{-}\sat}  
\tau_2 \xrightarrow[]{}_{\cc{-}\sat} \dots 
 \xrightarrow[]{}_{\cc{-}\sat}  \tau_n$ with 
 $\tau_0=\tau_{\emptyset}$, 
 and $\tau_n=\tau$. 
 Since $\tau_{\emptyset}$ is a empty we have $\tau_{\emptyset} \models \cc$, 
 it follows by  
 Lemmas \ref{lem:f3} and \ref{lem:f4} that $\tau \models \cc$. 
 
 Hence, for each trace $\tau \in \big[[ \sigma]\big]^{\cc{-}\sat}$  we have $\tau \models \cc$.

\subsection{Completeness of $\cc$ DPOR}
\label{app:complete-cc}
Proof for completeness of DPOR for $\cc$ follows the same schema as the completeness proof for $\cc$-DPOR. The only difference 
is in Lemmas \ref{lemma g-2} and \ref{lemma g-9}. 
We prove Lemmas \ref{lemma cc-comp-2} and \ref{lemma cc-comp-9}, which are the $\cc$ counterpart of Lemmas \ref{lemma g-2} and \ref{lemma g-9}, respectively.

\begin{lemma}
If $\tau \vdash_{\tt{G}} \alpha_{t_1} \cdot \alpha_{t_2}$ and $\alpha_{t_1} {\sim} \alpha_{t_2}$ then $\tau \vdash_{\tt{G}} \alpha_{t_2} {\cdot} \alpha_{t_1}$ and $(\alpha_{t_1} {\cdot} \alpha_{t_2})(\tau) \equiv (\alpha_{t_2}{ \cdot} \alpha_{t_1})(\tau)$.
\label{lemma cc-comp-2}
\end{lemma}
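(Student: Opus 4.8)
The plan is to mirror the case structure of the proof of Lemma \ref{lemma g-2}, but to exploit the fact that $\cc$ traces carry no $\tco$ relation, which collapses most of the difficulty. First I would record that for $\cc$ the equivalence $\equiv$ degenerates: since $\tco$ is empty, $\tau_1 \equiv \tau_2$ reduces to $\tran_1 = \tran_2$, $\tpo_1 = \tpo_2$ and $\trf_1 = \trf_2$, i.e.\ literal equality of $\cc$ traces. Consequently the only data a transition $\xrightarrow[]{\alpha_t}_{\sat}$ changes are (i) the node set $\tran$, (ii) the $\tpo$ edges into $t$, and (iii) the $\trf$ edges into $t$ for its read events; no coherence edges are ever created. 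Writing $\tau_1 = \alpha_{t_1}(\tau)$ and $\tau_2 = \alpha_{t_2}(\tau_1) = (\alpha_{t_1} \cdot \alpha_{t_2})(\tau)$, I would split on whether $t_1, t_2$ contain read events, exactly as in Lemma \ref{lemma g-2}: when neither reads, commutation is immediate, and the substantive work is confined to the cases where one or both transactions read.

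The heart of the argument is a readable-set preservation claim: independent observables do not disturb one another's readable sets. Concretely, I would show $\rbl(\tau_1, t_2, y) \cap \tran = \rbl(\tau, t_2, y)$, and symmetrically $\rbl(\alpha_{t_2}(\tau), t_1, x) \cap \tran = \rbl(\tau, t_1, x)$. The key observation is that the freshly inserted transaction (say $t_1$) is the last transaction in $\tau_1$, hence has no outgoing $[\tpo \cup \trf]^+$ edges, and, since $\alpha_{t_1} \sim \alpha_{t_2}$, satisfies $\neg(t_1 [\tpo \cup \trf]^+ t_2)$ and is not a source of any read of $t_2$. Inspecting $\tt{cond1}$ and $\tt{cond2}$, the conditions defining $\rbl(\cdot, t_2, y)$, every witness transaction $t_3, t_4$ appearing there is either a $[\tpo \cup \trf]^+$-predecessor of $t_2$, is read by $t_2$, or lies as an intermediate vertex on such a path; because $t_1$ has no outgoing edges and is neither a predecessor of $t_2$ nor a transaction $t_2$ reads from, $t_1$ can play none of these roles. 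Thus no $[\tpo \cup \trf]$ path relevant to $t_2$'s readability is created or destroyed by adding $t_1$, the two readable sets agree on $\tran$, and in particular the actual sources of $t_2$'s reads remain readable in $\tau$.

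From the preservation claim I would assemble both conclusions. For enabledness: the sources $t''_j$ that $t_2$ reads from in $\tau_1$ lie in $\rbl(\tau, t_2, y_j)$, so $\tau \vdash_{\tt{G}} \alpha_{t_2}$; applying the symmetric claim after executing $t_2$ shows the sources of $t_1$'s reads are still readable, giving $\alpha_{t_2}(\tau) \vdash_{\tt{G}} \alpha_{t_1}$ and hence $\tau \vdash_{\tt{G}} \alpha_{t_2} \cdot \alpha_{t_1}$. For the trace identity: both $(\alpha_{t_1} \cdot \alpha_{t_2})(\tau)$ and $(\alpha_{t_2} \cdot \alpha_{t_1})(\tau)$ have node set $\tran \cup \{t_1, t_2\}$, the same $\tpo$ edges (those into $t_1$ depend only on $t_1$'s process, and likewise for $t_2$, so the insertion order is irrelevant), and the same $\trf$ edges (each read uses its fixed source in either order), whence the two traces are equal and therefore $\equiv$. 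The only place demanding care, and the main obstacle, is the readable-set preservation for $\tt{cond2}$, whose witness chain $t_4 [\trf]^y t_2,\ t_4 [\tpo \cup \trf]^+ t_3 [\tpo \cup \trf]^* t'$ must be verified not to route through the newly inserted transaction; this is precisely where the ``no outgoing edges'' property of a last-added transaction, together with independence, is invoked. Unlike the $\ccvt$ setting, there is no $\tco$-induced blocking analysis to perform, so the proof is markedly shorter than that of Lemma \ref{lemma g-2}.
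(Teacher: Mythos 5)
Your proposal is correct and follows essentially the same route as the paper's proof: both rest on the observation that a freshly appended transaction has no outgoing $[\tpo \cup \trf]$ edges and, by independence, is neither a predecessor of nor a read-source for the other observable, so readable sets restricted to the old transactions are unchanged (the paper phrases this as a contradiction argument rather than your explicit preservation claim, but the content is identical). The concluding trace identity via $\tran$, $\tpo$, $\trf$ coinciding in both orders, with no $\tco$ bookkeeping needed, also matches the paper.
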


\begin{proof}
Let $\tau = \langle \tran , \tpo, \trf , \ow \rangle$ be a trace and let $t_1$ be a transaction issued in process $p_1$, and $t_2$ be a transaction issued in process $p_2$, with $p_1 \neq p_2$. 
Assume $\tau \vdash_{\tt{G}} \alpha_{t_1} \cdot \alpha_{t_2}$, with $\alpha_{t_1} {\sim} \alpha_{t_2}$. 

We consider the following cases.
\begin{itemize}
    \item [$\bullet$] $t_1,t_2 \notin \tran^r$. In this case $\tau \vdash_{\tt{G}} \alpha_{t_2} \cdot \alpha_{t_1}$ holds trivially.
    
    $(\alpha_{t_1} \cdot \alpha_{t_2})(\tau) = (\alpha_{t_2} \cdot \alpha_{t_1})(\tau) = \langle \tran' , \tpo' , \trf , \ow \rangle$, where $\tran' = \tran \cup \{t_1,t_2\}$ and $\tpo' = \tpo \cup \{ (t,t_1) | t \in p_1\} \cup \{(t',t_2) | t' \in p_2 \}$.
    
    \item [$\bullet$] $t_1 \in \tran^w$, $t_1 \notin \tran^r$ and $t_2 \in \tran^r \cap \tran^w$. 
    Let $\tau_1 = \alpha_{t_1}(\tau)$. 
    We know that $\tau_1 = \langle \tran_1 , \tpo_1 , \trf_1 , \ow_1 \rangle$, where 
    $\tran_1 = \tran \cup \{ t_1 \}$, and $\tpo_1 = \tpo \cup \{ (t,t_1) | t \in p_1\}$, with $\trf_1 = \trf$ since there are no read events in $t_1$.
    
    Consider $\tau_2 = \alpha_{t_2}(\tau_1) = \langle \tran_2, \tpo_2, \trf_2, \ow_2 \rangle$. 
    Since $t_2 \in \tran^r$, we know that $t_2$ has read events.
    \smallskip 
    
    Let there be $n$ read events $ev_1 \dots ev_n$ in transaction $t_2$, 
    reading from transactions $t'_1, \dots, t'_n$ on variables $x_1, \dots, x_n$.  
    Hence we get a sequence $\tau_1 \xrightarrow[]{\beginact(t_2)}_{\cc{-}\sat} \dots \tau'_{i_1} \xrightarrow[]{\readact(t_2,t'_1)}_{\cc{-}\sat}  \tau_{i_1}  \xrightarrow[]{}_{\cc{-}\sat}\dots  
    \tau'_{i_n} \xrightarrow[]{\readact(t_2,t'_n)}_{\cc{-}\sat} \tau_{i_n} \dots \xrightarrow[]{\commitact(t_2)}_{\cc{-}\sat} \tau_2$.
    Hence $\tran_2 = \tran_1 \cup \{t_2\}$, $\tpo_2 = \tpo_1 \cup \{(t',t_2) | t' \in p_2\}$, $\trf_2 = \trf_1 \cup (\trf_{i_1} \cup \trf_{i_2} \dots \cup \trf_{i_n})$.
    Since $\tau_1 \vdash_{\tt{G}} \alpha_{t_2}$,  for all $ev_i : 1 \le i \le n$, we have $t'_i \in \rbl(\tau_1,t_2,x_i)$.
    
    Since $\alpha_{t_1} \sim \alpha_{t_2}$, we know that for all $ev_i : 1 \le i \le n$, we have $t'_i \in \rbl(\tau,t_2,x_i)$. It follows that $\tau \vdash_{\tt{G}} \alpha_{t_2}$.
    Define $\tau_3 = \langle \tran_3, \tpo_3, \trf_2 , \ow_2 \rangle$, where $
    \tran_3 = \tran \cup \{t_2\}$ and $\tpo_3 = \tpo \cup \{(t',t_2) | t' \in p_2\}$. It follows that $\alpha_{t_2}(\tau) = \tau_3$, $\tau_3 \vdash_{\tt{G}} \alpha_{t_1}$ and $\tau_2 = \alpha_{t_1}(\tau_3) = \alpha_{t_2}(\alpha_{t_1}(\tau))$.
    \item [$\bullet$] $t_2 \in \tran^w$ and $t_1 \in \tran^r \cap \tran^w$. Similar to previous case.
    \item [$\bullet$] $t_1, t_2 \in \tran^r \cap \tran^w$. 
    Let $\tau_1 = \alpha_{t_1}(\tau) = \langle \tran_1, \tpo_1, \trf_1 \rangle$ and $\tau_2 = \alpha_{t_2}(\tau_1)$. Since $t_1 \in \tran^r$, we know that $t_1$ has read events. 
    Let there be $n$ read events $ev_1 \dots ev_n$ in transaction $t_1$, reading from transactions $t'_1, \dots, t'_n$ on variables 
    $x_1, \dots, x_n$. 
    Hence we get the sequence $\tau \xrightarrow[]{\beginact(t_1)}_{\cc{-}\sat} \dots \tau'_{i_1} \xrightarrow[]{\readact(t_1,t'_1)}_{\cc{-}\sat}  \tau_{i_1} \xrightarrow[]{}_{\cc{-}\sat}\dots  
    \tau'_{i_n} \xrightarrow[]{\readact(t_1,t'_n)}_{\cc{-}\sat} \tau_{i_n} \dots \xrightarrow[]{\commitact(t_1)}_{\cc{-}\sat} \tau_1$.
    Since $t_2 \in \tran^r$, we know that $t_2$ has read events. 
    Let there be  $m$ read events $ev'_1 \dots ev'_m$ in transaction $t_2$, reading from transactions $t''_1, \dots, t''_m$ on variables 
    $y_1, \dots, y_m$.  
    Then we get the sequence $\tau_1 \xrightarrow[]{\beginact(t_2)}_{\cc{-}\sat} \dots \tau'_{j_1} \xrightarrow[]{\readact(t_2,t''_1)}_{\cc{-}\sat}  \tau_{j_1}  \xrightarrow[]{}_{\cc{-}\sat} \dots 
    \tau'_{j_n} \xrightarrow[]{\readact(t_2,t''_m)}_{\cc{-}\sat} \tau_{j_n} \dots \xrightarrow[]{\commitact(t_2)}_{\cc{-}\sat} \tau_2$. 
    
    \smallskip
    
    $\mathbf{Proving}$ $\mathtt{ \tau \vdash_{\tt{G}} \alpha_{t_2} {\cdot} \alpha_{t_1}}$
    
    \smallskip
    
    Since $\tau_1 = \alpha_{t_1}(\tau)$, for all read events $ev_i$ such that $ev_i.trans = t_1$, we have $t'_i \in \rbl(\tau,t_1,x_i)$.
    Since $\tau_2 = \alpha_{t_2}(\tau_1)$, for all read events $ev'_j$ such that $ev'_j.trans = t_2$, we have $t''_j \in \rbl(\tau_1,t_2,y_j)$. 
    Since $\alpha_{t_1} \sim \alpha_{t_2}$, 
    it follows that, we have $t''_j \in \rbl(\tau,t_2,y_j)$ for all read events $ev'_j$ such that $ev'_j.trans = t_2$. Hence $\tau \vdash_{\tt{G}} \alpha_{t_2}$. 
    \smallskip

    Consider $\tau_3 = \alpha_{t_2}(\tau) = \langle \tran_3, \tpo_3, \trf_3 , \ow_3 \rangle $. To show that $\tau \vdash_{\tt{G}} \alpha_{t_2} \alpha_{t_1}$, 
    we show that for all read events $ev_i$ such that $ev_i.trans = t_1$,  $t'_i \in \rbl(\tau_3,t_1,x_i)$. We prove this using contradiction.
     Assume that $\exists t'_i$ s.t. $t'_i \notin \rbl(\tau_3,t_1,x_i)$.

    \medskip 
     
    If $t'_i \notin \rbl(\tau_3,t_1,x_i)$, then 
    there is a $t'_k$ such that $t'_k \in \rbl(\tau_3,t_1,x_i)$ and $t'_i [\tpo_3 \cup \trf_3]^+ t'_k$.
    
    \smallskip
    
    If $t'_k \in \rbl(\tau_3,t_1,x_i)$, since $\alpha_{t_1} \sim \alpha_{t_2}$, we also have $t'_k \in \rbl(\tau,t_1,x_i)$. 
     Since new relations added in $\tau_3$ are $\tpo_3 = \tpo \cup \{(t',t_2) \mid t' \in p_2 \}$ and $\trf_3 = \trf \bigcup_{j: 1 \le j \le m} t''_j [\trf^{y_j}] t_2$, 
       we can have $t'_i [\tpo_3 \cup \trf_3]^+ t'_k$ only if we there is a path from $t'_i$ to $t'_k$ in $\tau$; that is, 
        $t'_i [\tpo \cup \trf]^+ t'_k$. 
        This implies that $t'_i \notin \rbl(\tau,t_1,x_i)$ which is a contradiction.

    Thus, we have $\neg(t'_i [\tpo_3 \cup \trf_3]^+ t'_k)$. This, for all $1 \leq i \leq n$, $t'_i \in \rbl(\tau_3,t_1,x_i)$. Thus, we now have $\tau \vdash_{\tt{G}} \alpha_{t_2} \alpha_{t_1}$. 
    It remains to show that $(\alpha_{t_1} {\cdot} \alpha_{t_2})(\tau) \equiv (\alpha_{t_2}{ \cdot} \alpha_{t_1})(\tau)$. 
    
    $\mathbf{Proving}$ $\mathtt{(\alpha_{t_1} {\cdot} \alpha_{t_2})(\tau) \equiv (\alpha_{t_2}{ \cdot} \alpha_{t_1})(\tau)}$
    
    Define $\tau_4 {=} \alpha_{t_1}(\tau_3) {=} \alpha_{t_1}.\alpha_{t_2}(\tau){=}
         \langle \tran_4, \tpo_4, \trf_4 , \ow_4 \rangle$.
         
    Recall that 
    $\tau_2{=}$ $\alpha_{t_2}.\alpha_{t_1}(\tau)$ =$
      \langle \tran_2,\tpo_2, \trf_2 , \ow_2 \rangle$.
    \smallskip 
    Since, $\alpha_{t_2} \sim \alpha_{t_1}$, we have $\trf_2=\trf_4$. It follows trivially that $\tran_4 = \tran_2$ and $\tpo_4 = \tpo_4$.
    Since $\tran_2=\tran_4, \tpo_2=\tpo_4, \trf_2=\trf_4$ , and $\ow_4 = \ow_2$ it follows that $(\alpha_{t_1} \cdot \alpha_{t_2})(\tau) \equiv (\alpha_{t_2} \cdot \alpha_{t_1})(\tau)$.
    
\end{itemize}
\end{proof}

\begin{lemma}
Consider a trace trace $\tau$ such that $\tau \models \cc$, $\tau \vdash_{\tt{G}} \pi_{\tt{T}} \cdot \alpha_t$. Let each read event $ev_i = r_i(p,t,x_i,v)$ in $\alpha_t$ read from some transaction $t_i$, and let $\pi_{\tt{T}}$ be $t$-free.  Then $\tau \vdash_{\tt{G}} \alpha'_t \cdot \pi_{\tt{T}}$, where $\alpha'_t$ is the same as  $\alpha_t$, with the exception that the sources of its read events can be different. That is, 
 each read event $ev_i = r_i(p,t,x_i,v) \in \alpha'_t$ can read from some transaction $t'_i \neq t_i$.
\label{lemma cc-comp-9}
\end{lemma}

\begin{proof}
Let $\pi_{\tt{T}} = \alpha_{t_1} \alpha_{t_2} \dots \alpha_{t_n}$ and $\tau_0 \xrightarrow[]{\alpha_{t_1}}_{\cc{-}\sat} \tau_1 \xrightarrow[]{\alpha_{t_2}}_{\cc{-}\sat} \dots \xrightarrow[]{\alpha_{t_n}}_{\cc{-}\sat} \tau_n \xrightarrow[]{\alpha_t}_{\cc{-}\sat} \tau_{n+1}$, where $\tau_0 = \tau$. 
Let $\tau_i = \langle \tran_i, \tpo_i, \trf_i ,\ow_i \rangle$. 
Let there be  $m$ read events in transaction $t$. Keeping in mind what we want to prove, where we want to execute $t$ first followed by $\pi_{\tt{T}}$, and obtain an execution $\alpha'_t \pi_{\tt{T}}$, we do the following.

For each read event $ev_i = r_i(p,t,x_i,v)$ we define $t'_i \in \tran^{w,x_i}$ such that 
$(i)$ $ t'_i \in \rbl(\tau_0,t,x_i)$, and 
$(ii)$ there is no $t''_i \in \rbl(\tau_0,t,x_i)$ where $t'_i [\tpo_n \cup \trf_n] t''_i$.  
Note that this is possible since 
 $\pi_{\tt{T}}$ is $t$-free, so all the observables $\alpha_{t_i}$ occurring in $\pi_{\tt{T}}$ are such that $t_i$ is issued in a process other than that of $t$. 
 Thus, when $\alpha_t$ is enabled, we can choose any of the 
 writes done earlier,  this fact is consistent with the $\cc$ semantics, since from Lemmas \ref{lem:cc-3}-\ref{lem:cc-4} we know $\tau_n \models \cc$.
Since $\tau_n \models \cc$ such a $t'_i \in  \rbl(\tau_0,t,x_i)$ exists for each $ev_i$.

\medskip 
Next we define a sequence of traces which can give the execution 
$\alpha'_t \pi_{\tt{T}}$. 
Define a sequence of traces $\tau'_0, \tau'_1, \dots \tau'_n$ where $\tau'_j = \langle \tran'_j, \tpo'_j, \trf'_j , \ow \rangle$\footnote{$\ow$ relation will be empty following trace semantics.} is such that

\begin{enumerate}
	\item $\tau \vdash_{\tt{G}} {\alpha'_t}$ and $\alpha'_t(\tau)=\tau'_0$,
	\item $\tau'_j \vdash_{\tt{G}} \alpha_{t_{j+1}}$ and  $\tau'_{j+1} = \alpha_{t_{j+1}}(\tau'_j)$ for all $j: 0 \le j \le n$.
\end{enumerate}

\begin{itemize}
\item  Define $\tran'_i = \tran_i \cup \{t\}$ for all $0 \leq i \leq n$, 
\item 
Define $\tpo' = \{t' [\tpo] t \mid t'$ is a transaction in $\tau$, in the same process as $t\}$, and $\tpo'_i = \tpo_i \cup \tpo'$ for all $0 \leq i \leq n$, 

\item  $\trf' = \bigcup_{i:1\le i \le m} t'_i [\trf^{x_i}] t$ (reads from relation corresponding to each read event $ev_i \in \alpha'_t$), and 
 $\trf'_i = \trf_i \cup \trf'$ for all $i : 1 \le i \le n$, 
	
\smallskip 

We show that $\tau'_i \vdash_{\tt{G}} \alpha_{t_{i+1}}$ and  $\tau'_{i+1} = \alpha_{t_{i+1}}(\tau'_i)$ holds good. 
\end{itemize}

\medskip 
\noindent{\bf{Base case}}. The base case $\tau \vdash_{\tt{G}} {\alpha'_t}$ and $\alpha'_t(\tau)=\tau'_0$,
   holds trivially, by construction. 

\smallskip 

\noindent{\bf{Inductive hypothesis}}. Assume that $\tau'_j \vdash_{\tt{G}} \alpha_{t_{j+1}}$ and  $\tau'_{j+1} = \alpha_{t_{j+1}}(\tau'_j)$ for $0 \leq j \leq i-1$. 

\smallskip 

 We have to prove that 
$\tau'_i \vdash_{\tt{G}} \alpha_{t_{i+1}}$ and  $\tau'_{i+1} = \alpha_{t_{i+1}}(\tau'_i)$. 
\begin{enumerate}
	\item If $t_{i+1} \in \tran^{w}$ and $t_{i+1} \notin \tran^r$,  then the proof holds trivially from the inductive hypothesis. 
	\item Consider now $t_{i+1} \in \tran^r$. Consider a read event 
	$ev^{i+1} = r(p,t_{i+1},x,v) \in \alpha_{t_{i+1}}$ which was reading from 
	transaction $t_x$ in $\pi_{\tt{T}}$. That is, we had $t_x \in \rbl(\tau_i, t_{i+1},x)$. 	If $t_x \in \rbl(\tau'_i, t_{i+1},x)$, then we are done, since we can simply extend the run from the inductive hypothesis. 
	
	\smallskip 
	
	Assume otherwise. That is, $t_x \notin \rbl(\tau'_i, t_{i+1},x)$. 
	Since  $t_x$  $\in$   $\rbl(\tau_i, t_{i+1},x)$, we know that there are some blocking transactions in the new path which prevents $t_x$ from being readable. Basically, we have the blocking transactions since $t$ is moved before $\pi_{\tt{T}}$. 	
	Consider $tr_1 \in \rbl(\tau_i,t_{i+1},x)$.  
	Since $t_x$ $\in$ $\rbl(\tau_i, t_{i+1},x)$ and $t_x$ $\notin$ $\rbl(\tau'_i, t_{i+1},x)$, we have $t_x [\tpo'_i \cup \trf'_i] tr_1$.
	Since $\tpo'_i = \tpo_i \cup \tpo'$ , $\trf'_i = \trf_i \cup \trf'$, and $t$ has no successor in $\tau'_i$, it follows that $t_x [\tpo_i \cup \trf_i] tr_1$. This contradicts the fact that $t_x$ $\in$ $\rbl(\tau_i, t_{i+1},x)$.
\end{enumerate}

Thus, we have shown that $\tau \vdash_{\tt{G}} \alpha'_t \alpha_{t_1} \dots \alpha_{t_i}$ 
is such that $\alpha'_t \alpha_{t_1} \dots \alpha_{t_i}(\tau)=\tau'_i$ for all 
$0 \leq i \leq n$. When $i=n$ we obtain 
$\alpha'_t \pi_{\tt{T}}(\tau)=\tau'_n$, or $\tau \vdash_{\tt{G}} \alpha'_t \pi_{\tt{T}}$. 
\qed 
\end{proof}

\newpage
\centerline{\bf{Causal Memory $\cm$}} 

\section{Trace Semantics}
\label{app:cmsatcons}
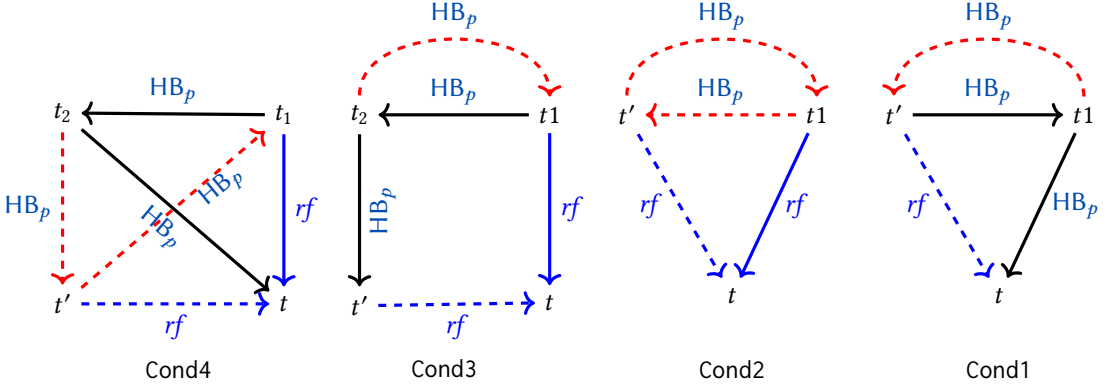
\begin{figure}[H]
        \begin{tikzpicture}[node distance=10mm , thick, main/.style= {draw,circle},];
        \node[] (t) {$t$};
        \node[] (t1) [above = 20mm of t] {$t_1$};
        \node[] (t') [left = 25mm of t] {$t'$};
        \node[] (t2) [above = 20mm of t']  {$t_2$};
        \node[] (1) [below=5mm of t] {};
        \node[] (cap) [left=7mm of 1] {$\tt{Cond4}$};
        \draw[->,blue,line width=1.2pt] (t1) -- node[right] {$\trf$} (t);
        \draw[->,black,line width=1.2pt] (t1) -- node[above] {$\hb{p}$} (t2);
        \draw[->,black,line width=1.2pt] (t2) --node[rotate=-40, below] {$\hb{p}$} (t);
        \draw[dashed , ->,blue,line width=1.2pt] (t') -- node[below] {$\trf$} (t);
        \draw[dashed , ->,red,line width=1.2pt] (t2) -- node[left] {$\hb{p}$} (t');
        \draw[dashed , ->,red,line width=1.2pt] (t') -- node[rotate=40,right=3mm] {$\hb{p}$} (t1);

        \node[] (3at2) [right= 5mm of t1] {$t_2$};
        \node[] (3at1) [right= 20mm of 3at2] {$t1$};
        \node[] (3at) [below=20mm of 3at1] {$t$};
        \node[] (3at') [below=20mm of 3at2] {$t'$};
        \node[] (31) [below=5mm of 3at] {};
        \node[] (3acap) [left=7mm of 31] {$\tt{Cond3}$};

        \draw[->,black,line width=1.2pt] (3at1) -- node[above] {$\hb{p}$} (3at2);
         \draw[->,blue,line width=1.2pt] (3at1) -- node[right] {$\trf$} (3at);
        \draw[dashed , ->,blue,line width=1.2pt] (3at') -- node[below] {$\trf$} (3at);
        \draw[dashed , ->,red,line width=1.2pt] (3at2) to [out=90, in=90] node[above] {$\hb{p}$} (3at1);
        \draw[->,black,line width=1.2pt] (3at2) -- node[rotate=90,below] {$\hb{p}$} (3at');
        
        \node[] (2at') [right= 5mm of 3at1] {$t'$};
        \node[] (2at1) [right= 20mm of 2at'] {$t1$};
        \node[] (21) [below=20mm of 2at1] {};
        \node[] (2at) [left=8mm of 21] {$t$};
        \node[] (2acap) [below=5mm of 2at] {$\tt{Cond2}$};
        
        \draw[dashed , ->,red,line width=1.2pt] (2at1) -- node[above] {$\hb{p}$} (2at');
         \draw[->,blue,line width=1.2pt] (2at1) -- node[right] {$\trf$} (2at);
        \draw[dashed , ->,blue,line width=1.2pt] (2at') -- node[left=1mm] {$\trf$} (2at);
        \draw[dashed , ->,red,line width=1.2pt] (2at') to [out=90, in=90] node[above] {$\hb{p}$} (2at1);

        \node[] (1t') [right= 5mm of 2at1] {$t'$};
        \node[] (1t1) [right= 20mm of 1t'] {$t1$};
         \node[] (111) [below=20mm of 1t1] {};
        \node[] (1t) [left=8mm of 111] {$t$};
        \node[] (1cap) [below=5mm of 1t] {$\tt{Cond1}$};

        \draw[->,black,line width=1.2pt] (1t') -- node[above] {$\hb{p}$} (1t1);
        \draw[->,black,line width=1.2pt] (1t1) -- node[right] {$\hb{p}$} (1t);
        \draw[dashed , ->,blue,line width=1.2pt] (1t') -- node[left=1mm] {$\trf$} (1t);
        \draw[dashed , ->,red,line width=1.2pt] (1t1) to [out=90, in=90] node[above] {$\hb{p}$} (1t');
        \end{tikzpicture}
        \caption{Readability for $\cm$}
         \label{read-cm}
    \end{figure}

\subsection{Readability and Visibility for  $\cm$}
\label{cm:rv}
Let $t$ be a transaction in process $p$. For $X=\cm$, $\rbl(\tau^t_X, t, x)$ is defined as the set of all transactions  $t' \in \tran^{\wt,x}$ s.t. 
\begin{itemize}
\item[$\tt{cond1}$] 	there is no transaction $t'' \in \tran^{\wt,x}$ such that 
	$t' ~\hb{p}~t''~\hb{p}~t$. Having such a $t''$ and allowing $t'~\trf~t$ gives $t''~\hb{p}~t'$ and $\cmcyc$.
	\item[$\tt{cond2}$]  If $t' \in \tran^{\wt,y}$ (for $y \neq x$), there is no transaction $t'' \in \tran^{\wt,x} \cap \tran^{\wt,y}$ such that $t''~\trf~t$ (wrt $y$).  
		Having such a $t''$ with $t'~\trf~t$ (wrt $x$) results in $t''~\hb{p}~t'$, $t~\hb{p}~t''$  and $\cmcyc$.
  
		\item[$\tt{cond3a}$] If $t' \in \tran^{\wt,y}$ (for $y \neq x$), there is no  transaction $t'' \in \tran^{\wt,y}$ such that $t''~\hb{p}~t'$ and $t''~\trf~t$ (wrt $y$).  Having such a $t''$ with  $t'~\trf~t$ gives $t'~\hb{p}~t''$ giving $\cmcyc$.
  
		\item[$\tt{cond3b}$]  There are no transactions $t_1, t_2 \in \tran^{\wt,y}$ such that  
$t_1~\trf~t$ (wrt $y$), and $t_1~\hb{p}~t_2~\hb{p}~t'$.
 Having such $t_1, t_2$ with $t'~\trf~t$ gives  $t_2~\hb{p}~t$. With 
 $t_1~\trf~t$ we get $t_2~\hb{p}~t_1$ and $\cmcyc$.	
 
\item[$\tt{cond4}$] If $t' \in \tran^{\wt,y}$ (for $y \neq x$), there are no transactions $t_1 \in \tran^{\wt,y}, t_2 \in \tran^{\wt,x}$ such that $t_1~\trf~t$ (wrt $y$), $t_1~\hb{p}~t_2~\hb{p}~t$. If so, allowing $t'~\trf~t$ gives 
$t_2~\hb{p}~t'$. Also, $t_1 ~\trf~ t, t_1~\hb{p}~t_2$ gives $t~\hb{p}~t_2$. 
Now, we have $t_2~\hb{p}~t'~\trf~t~\hb{p}~t_2$ obtaining $\cmcyc$.
\end{itemize}

Figure \ref{read-cm} explains cycles created after violating conditions $\tt{cond1}-\tt{cond4}$.
$\tt{cond3} $ covers $\tt{cond3a} $  and $\tt{cond3b} $.
For $\tt{cond3a}$,  $t_2$ coincides with $t'$.

After adding $t'~\trf~t$, we must check that 
there are no consistency violations.  
The check set $\vbl(\tau^t_X, t,x)$ is defined as the set of transactions which turn 	``sensitive''  
on adding the new  edge  $t'~\trf~t$. 
Unless  appropriate edges are added involving 
these sensitive transactions, we may get consistency violating cycles
in the resultant trace.
Let $\tau^{tt'}$ denote the trace obtained by adding the new transaction $t$ and the edge $t'~\trf~t$ to trace $\tau$. Now, we 
identify $\vbl(\tau^t_X, t,x)$ and the edges which must be added to $\tau^{tt'}$ to obtain a consistent extended trace.

For $X=\ccvt$, let $X$-edge=$\cfr$, $X$-rel=$\co$
and for $X=\cm$, let $X$-edge=$X$-rel=$\hb{p}$.

For $X =\cm$, $\vbl(\tau^t_X,t,x)$ is classified into three categories. 
	\begin{itemize}
	\item[(a)] The first kind of transactions in $\vbl(\tau^t_X,t,x)$  are  
$\{t''\in \rbl(\tau^t_X, t,x) \mid t''~\hb{p}~t\}$.  
	Then we add  from each $t'' \in \vbl(\tau^t_X,t,x)$ a $\hb{p}$ to $t'$ in $\extend_X(\tau^{tt'})$. 
	\item[(b)] The second kind of  transactions in $\vbl(\tau^t_X,t,x)$  are   $t_1 \in \tran^{\wt,y}$ such that $t_1~\trf~t$ (wrt $y$) when $t' \in \tran^{\wt,y}$. Then we  add  $t'~\hb{p}~t_1$ to $\extend_X(\tau^{tt'})$. 
	\item[(c)] The third kind of  transactions in $\vbl(\tau^t_X,t,x)$  are 
	 $t_2 \in \tran^{\wt,y}$ such that $t_2~\hb{p}~t'$ 
	and we have $t_1~\trf~t$ (wrt $y$) for some $t_1 \in \tran^{\wt,y}$. In this case, we add $t_2~\hb{p}~t_1$  to $\extend_X(\tau^{tt'})$. Adding the $\hb{p}$ edges 		$t_2~\hb{p}~t_1$ to $\extend_X(\tau^{tt'})$
				 can result in $t_3~\hb{p}~t_4$ for some $t_3,t_4\in \tran^{\wt,y}$.
If $t_3 ~\trf~ t_1$ and $t_1~\tpo~t$ then we add $t_1~\hb{p}~t_4$
to $\extend_X(\tau^{tt'})$. 
\end{itemize}

\begin{lemma}
    Given trace $\tau$, a transaction $t$ and a variable $x$, we can construct the sets  $\rbl(\tau, t, x)$ and $\vbl(\tau, t, x)$ in polynomial time.
    \label{lemma:cm-poly}
\end{lemma}
\begin{proof}
    Let $p$ be process of $t$.
    We prove that set $\rbl(\tau, t, x)$ can be computed in polynomial time ($O(|\tran|^3)$ time), by designing an algorithm to generate set $\rbl(\tau, t, x)$. The algorithm consists of the following  steps:
    \begin{itemize}
        \item[(i)] First we compute the  transitive closure of the relations
        $[\hb{p}]$ and $[\tpo \cup \trf]$, i.e, we compute $\hb{p}^+$ and $[\tpo \cup \trf]^+$. 
        This will take $O(|\tran|^3)$ time.
        \item[(ii)] We compute the set $\tran'=\{t'\mid t' ~\hb{p}~ t\}$. This takes $O(|\tran|)$ time.
        
        \item[(iii)] For each transaction $t' \in \tran^{wt,x}$, we perform following checks:
        \begin{itemize}
            \item Check $\mathtt{cond 1 :} $ Check whether there exists a transaction $t'' \in \tran' \cap \tran^{wt,x}$            
            with $t'$ $\hb{p}^+$ $t''$. 
            If there is no such $t''$ then proceed to $\mathtt{cond 2}$. 
            If we find such $t''$, then $t' \notin \rbl(\tau , t, x)$. 
            This step will take $O(|\tran|^2)$ time.
            \item Check $\mathtt{cond 2 :}$ Check whether there exists a transaction $t''$ such that $t'' [\trf^y] t$ and $t',t'' \in \tran^{wt,x} \cap \tran^{wt,y}$. 
            If there is no such $t''$ then proceed to $\mathtt{cond 3}$. 
            If we find such $t''$, then $t' \notin \rbl(\tau , t, x)$. 
            This step will take $O(|\tran|)$ time.
            \item Check $\mathtt{cond 3 :}$ 
            
            We first check case a of $\mathtt{cond 3}$. Check whether there exists a transaction $t''$ such that $t'' [\trf^y] t$ , $t' \in \tran^{wt,y}$, and $t'' [\hb{p}]^+ t'$. 
            If there is no such $t''$ then proceed to case two of $\mathtt{cond 3}$. 
            If we find such $t''$, then $t' \notin \rbl(\tau , t, x)$. 
            This step will take $O(|\tran|)$ time.
            
             Next, we  check case b of $\mathtt{cond 3}$. Check whether there are transactions $t''$ and $t'''$ such that $t'' [\trf^y] t$, $t''' \in \tran^{wt,y}$, $t''' ~\hb{p}^+~ t'$ and $t'' [\hb{p}]^+ t'''$. 
            If there are no such $t''$ and $t'''$ then proceed to $\mathtt{cond 4}$. 
            If we find such $t''$ and $t'''$, then $t' \notin \rbl(\tau , t, x)$. 
            This step will take $O(|\tran|^2)$ time.
            \item Check $\mathtt{cond 4 :}$ 
            Check whether there are transactions $t''$ and $t''' \in \tran'$ such that $t'' [\trf^y] t$, $t''' \in \tran^{\wt,x}$, $t'\in \tran^{\wt,y}$ and $t'' [\hb{p}]^+ t'''$. 
            If we find such $t''$ and $t'''$, 
            then $t' \notin$ $\rbl(\tau , t, x)$. 
            If there are no such $t''$ and $t'''$ then add $t'$ to $\rbl(\tau , t, x)$.
            This step will take $O(|\tran|^2)$ time.
        \end{itemize}
    \end{itemize}
    Similarly, we can compute $\vbl(\tau, t, x)$ in polynomial time. This completes the proof.
\end{proof}

\subsection{Properties of the Trace Semantics}
\label{cm:prop}
\begin{lemma}
    If $\tau_1 \models \cm$  and $\tau_1$ $\xrightarrow[]{\beginact(t)}_{\cm{-}\sat}$ $\tau_2$ then $\tau_2 \models \cm$.
    \label{lem:cm-f3}
\end{lemma}
\begin{proof}
The proof follows trivially since we do not change the reads from and coherence order relations and the transaction $t$ added to the partial order has no successors. Thus, if $\tau_1 \models \cm$, 
so does $\tau_2$.  
	\end{proof}

\begin{lemma}
    If $\tau_1 \models \cm$  and $\tau_1$ $\xrightarrow[]{\readact(t,t')}_{\cm{-}\sat}$ $\tau_2$ then $\tau_2 \models \cm$.
    \label{lem:cm-f4}
\end{lemma}

\begin{proof}
    Let $\tau_1 = \langle \tran_{1} , \tpo_1 , \trf_1, \hb{}_1 \rangle$, $\tau_2 = \langle \tran_{2} , \tpo_2 , \trf_2, \hb{}_2 \rangle$, and $ev$ = $r(p,t,x,v)$ be a read event in transaction $t \in \tran_1$. Suppose $\tau_2 \nvDash \cm$, and  that $\hb{2}$ is cyclic. Since $\tau_1 \models \cm$, and $t$ has no outgoing edges in $(\tpo \cup \trf)^+$, 
   on adding $(t',t) \in \trf_2$, 
    it follows that the cycle in $\tau_2$ is as a result 
    of the newly added $\hb{}_2$ edges. 
    
    We prove that 
    on adding $(t',t) \in \trf_2$, 
        such cycles are possible iff at least one of ${\tt{cond}_1}, \dots, {\tt{cond}_4}$ are true. 
    
 \subsection*{${\tt{cond1}}\vee \dots \vee {\tt{cond4}}$ induces $\hb{2}$ cyclicity }
   \label{sec:cm-cyc1}
    This direction is easy to see : assume 
    one of ${\tt{cond1}}, \dots, {\tt{cond4}}$ are true; then, as already 
    argued in the main paper, we will get a $\hb{2}$ cycle on adding the $\trf_2$ edge from $t'$ to $t$ and we are done.

 \subsection*{$\hb{2}$ cyclicity implies ${\tt{cond1}}\vee \dots \vee {\tt{cond4}}$ }
 \label{sec:cm-cyc2}
  For the converse direction, assume that we add the $\trf_2$ edge from $t'$ to $t$ and obtain a $\hb{2}$ cycle in $\tau_2$. We now argue that this cycle has been formed because one of 
 ${\tt{cond}_1}, \dots, {\tt{cond}_4}$ are true.  
 
 First of all, note that $t$ has no outgoing edges in $(\tpo_2 \cup \trf_2)^+$ in $\tau_2$, since 
 it is the current transaction being executed. Also, 
 we know that $\tau_1$ has no $\hb{}_1$ cycles. 
 Thus, the cyclicity of $\hb{2}$ is induced by the 
 newly added $\trf_2$ edge as well as the newly added 
  $\hb{}_2$ edges. Note that adding the $\trf_2$ edge  to $\tau_1$ does not induce any cycle since $t$ has no outgoing edges. Lets analyze the 
  $\hb{}_2$ edges added which induce cycles, and argue that 
  one of ${\tt{cond1}}, \dots, {\tt{cond4}}$ will be true.   
  
  \begin{enumerate}
  	\item We add  $\hb{}_2$ edges from $t'' \in \vbl(\tau_1, t,x)$ to $t'$. 
  	For  these edges to induce a cycle, we should have a path from $t'$ to $t''$ in $\tau_1$. That is, we have $t' [\hb{}_1]^+ t''$. This is captured by ${\tt{cond1}}$.  
  	
  	\item Consider $y \neq x$. If $t' \in \tran^{w,x} \cap \tran^{w,y}$, 
  	and we have $t'''\trf_1^y t$.  Then we add 
  	$(t',t''') \in \hb{}^y_2$. To get a cycle, we need a path from 
  	$t'''$ to $t'$. 
  	\begin{itemize}
  	\item If $t''' \in \tran^{w,x} \cap \tran^{w,y}$, then we add $(t''', t') \in \hb{}^x_2$, resulting in a cycle. This is handled by  ${\tt{cond2}}$.
  	\item If we have a path 
  	$t''' [\hb{}_1]^+ t'$, then we get a cycle again. This is handled by ${\tt{cond3}}$. 
  \item As a last case, to obtain a path from $t'''$ to $t'$, 
  assume there is a path from $t'''$ to $t$ in $\tau_1$, and let $t'' \in \tran^{w,x}$ be the last transaction writing to $x$ in this path.  Note that this will induce a path from $t'''$ to $t''$ to $t'$ : the path  from $t''$ to $t'$ comes by the $\hb{}^x_2$ edge added from $t''$ to $t'$ since 
  we have $(t',t) \in \trf_2^x$. Once we get this path from $t'''$ to $t'$, we again have the path we were looking for to get the cycle. This is handled by ${\tt{cond4}}$. 
    		
  	\end{itemize}
Thus, we have shown that obtaining a path from $t'''$ to $t'$ is covered 
by the conditions  ${\tt{cond}_1}, \dots, {\tt{cond}_4}$, and hence, 
a $\hb{2}$ cycle.  
\item Consider $y \neq x$. If $t_4 \in \tran^{w,y}$ such that $t_4 ~\hb{p}~ t'$, 
  	and we have $t'''\trf_1^y t$.  Then we add 
  	$(t_4,t''') \in ~\hb{2p}~$. To get a cycle, we need a path from 
  	$t'''$ to $t_4$. 
   If we have a path 
  	$t''' ~\hb{1p}~ t_4$, then we get a cycle. This is handled by ${\tt{cond3}}$. 
  	  \end{enumerate}
 
 Thus, we can think of  $\hb{2}$ cycles as a result of the forbidden patterns described in ${\tt{cond}_i}$ $1 \leq i \leq 4$.  By construction of $\tau_1$ $\xrightarrow[]{\readact(t,t')}_{\cm{-}\sat}$ $\tau_2$, we ensure $\neg {\tt{cond}_1}, \dots, \neg {\tt{cond}_4}$. Hence, 
    $\tau_2 \models \cm$. 
\end{proof}

Define $\big[[ \sigma]\big]^{\cm{-}\sat}$ as the set of  traces  generated  using $\xrightarrow[]{}_{\cm{-}\sat}$ transitions, starting from an empty trace $\tau_{\o}$.

Consider a terminal trace $\tau$ generated by $\cm$ DPOR algorithm starting from the empty trace $\tau_\emptyset$.
That is, there is a sequence  
$\tau_0 \xrightarrow[]{}_{\cm{-}\sat} 
\tau_1 \xrightarrow[]{}_{\cm{-}\sat}  
\tau_2 \xrightarrow[]{}_{\cm{-}\sat} \dots 
 \xrightarrow[]{}_{\cm{-}\sat}  \tau_n$ with 
 $\tau_0=\tau_{\emptyset}$, 
 and $\tau_n=\tau$. 
 Since $\tau_{\emptyset}$ is a empty we have $\tau_{\emptyset} \models \cm$, 
 it follows by  
 Lemmas \ref{lem:cm-f3} and \ref{lem:cm-f4} that $\tau \models \cm$. 
 
 Hence, for each trace $\tau \in \big[[ \sigma]\big]^{\cm{-}\sat} $ we have $\tau \models \cm$.

\subsection{$\cm$ DPOR Completeness}
\label{app:complete-cm}
In this section, we show the completeness of the DPOR algorithm. More precisely, for the input program under $\cm$ for any terminating run $\rho \in Runs(\conf)$ and trace $\tau$ s.t. $ \rho \models \tau$, 
we show that $\explore(\cm, \tau_{\emptyset}, \epsilon)$ will produce a recursive visit $\explore(\cm, \tau', \pi)$ for some terminal $\tau'$, and $\pi$ where $\tau'=\tau$.   First, we give some definitions and auxiliary lemmas.

Let  $\pi = \alpha_{t_1} \alpha_{t_2} \dots \alpha_{t_n}$ be an observation sequence where  $\alpha_t = \beginact(p,t) \dots$ \plog{end}$(p,t)$.
 $\alpha_t$ is called an \emph{observable}, and is a sequence 
 of events from transaction $t$.  
Given $\pi$ and a trace $\tau$, we define 
$\tau \vdash_{\tt{G}} \pi$ to represent a sequence 
$ \tau_0 \xrightarrow[]{\alpha_{t_1}}_{\cm{-}\sat} 
 \tau_1  \xrightarrow[]{\alpha_{t_2}}_{\cm{-}\sat} \cdot \cdot \cdot \xrightarrow[]{\alpha_{t_n}}_{\cm{-}\sat} \tau_n $, where 
 $\tau_0 = \tau$ and 
$\pi = \alpha_{t_1} \alpha_{t_2} \dots \alpha_{t_n}$. Moreover, we define $\pi (\tau) :=  \tau_n$.

 $\tau$ is $terminal$ if there is no event left to execute $i.e$ $\succof\tau = \emptyset$. 
 We define $\tau \vdash_{\tt{G_T}} \pi$ to say that $(i)$ $\tau \vdash_{\tt{G}} \pi$ and $(ii)$ $\pi(\tau)$ is terminal.

\begin{definition}($p$-free and $t$-free observation sequences)
For a process  $p \in \mathbf{P}$, we say that an observation sequence $\pi$ is $p$-free if   all observables $\alpha_{t}$ in $\pi$ pertain to 
transactions $t$ not in $p$. 
For a transaction $t$ issued in $p$ we say that $\pi$ is $t$-free if $\pi$ is $p$-free. 
\end{definition}

\begin{definition}(Independent Observables)
For observables $\alpha_{t_1}$ and $\alpha_{t_2}$, we write $\alpha_{t_1} \sim  \alpha_{t_2}$ to represent they are independent. 
This means 
$(i)$ $t_1,t_2$ are transactions issued in different processes, that is, 
 $t_1$ is issed in $p_1$, $t_2$ is issued  in $p_2$, with $p_1 \neq p_2$,  
$(ii)$ no read event $r(p_2,t_2,x,v)$ of transaction $t_2$ reads from $t_1$, and 
$(iii)$ no read event $r(p_1,t_1,x,v)$ of transaction $t_1$ reads from $t_2$.	
Thus, $\neg(\alpha_{t_1} \sim  \alpha_{t_2})$ if either $t_1, t_2$ are issued in the same process, or there is a $\trf$ relation between $t_1, t_2$. That is, $\neg(\alpha_{t_1} \sim  \alpha_{t_2})$ iff $t_1 ~\hb{p}~ t_2$ or  $t_2 ~\hb{p}~ t_1$.

\end{definition}
 
 \begin{definition}(Independent Observation Sequences)
Observation sequences $\pi^1, \pi^2$ are called independent written $\pi^1 \sim \pi^2$ if there are observables $\alpha_{t_1}$, $\alpha_{t_2}$, and observation sequences $\pi'$ 
and $\pi''$ such that $\pi^1 = \pi' \cdot \alpha_{t_1} \cdot \alpha_{t_2} \cdot \pi''$, 
$\pi^2 = \pi' \cdot \alpha_{t_2} \cdot \alpha_{t_1} \cdot \pi''$, 
and $\alpha_{t_1} \sim \alpha_{t_2}$. 
In other words, we get $\pi^2$ from $\pi^1$ by swapping neighbouring independent observables corresponding to transactions $t_1$ and $t_2$. 
\end{definition}
We use $\approx$ to denote reflexive transitive closure of $\sim$.

\begin{definition}(Equivalent Traces)
For traces $\tau_1 = \langle \tran_1, \tpo_1, \trf_1, \hb{}_1 \rangle$ and $\tau_2 = \langle \tran_2, \tpo_2, \trf_2, \hb{}_2 \rangle$, 
we say $\tau_1, \tau_2$ are equivalent denoted $\tau_1 \equiv \tau_2$ if $\tran_1 = \tran_2$, $\tpo_1 = \tpo_2$, $\trf_1 = \trf_2$ 
and for all $t_1$,$t_2 \in \tran_1^{w,x}$ for all variables $x$ for each process $p$, 
we have $t_1$ [$\hb{1p}^x$] $t_2$ iff $t_1$ [$\hb{2p}^x$] $t_2$.
	
\end{definition}

\begin{lemma}
$((\tau_1 \equiv \tau_2) \wedge \tau_1 \vdash_{\tt{G}} \alpha_t) \Rightarrow (\tau_2 \vdash_{\tt{G}} \alpha_t \wedge (\alpha_t(\tau_1) \equiv \alpha_t(\tau_2)))$.
\label{lemma cm-g-1}
\end{lemma}
\begin{proof}
Assume 	$\tau_1 \vdash_{\tt{G}} \alpha_t$ for an observable $\alpha_t$, and 
$\tau_1=(\tran_1, \tpo_1, \trf_1, \hb{}_1)$, 
$\tau_2=(\tran_2, \tpo_2, \trf_2, \hb{}_2)$ with $\tau_1 \equiv \tau_2$. Then we know that $\tran_1=\tran_2, \tpo_1=\tpo_2, \trf_2=\trf_1$. 
 
 Since $\tau_1 \vdash_{\tt{G}} \alpha_t$, let  $\tau_1 \xrightarrow[]{\alpha_t}_{\cm{-}\sat} \tau'$. 
 $\tau'=(\tran', \tpo', \trf', \hb{}')$ where $\tran'=\tran_1 \cup \{t\}$, 
 $\tpo'=\tpo_1 \cup \{(t_1,t) \mid t_1 \in \tran_1$ is in the same process as $t\}$,  $\hb{}_1 \subseteq \hb{}'$ 
 and $\trf_1 \subseteq \trf'$. $\hb{}'$ can contain $(t',t)$ for some 
 $t' \in \tran_1$, when $t',t \in {\tran'}^{w,x}$ for some variable $x$, based 
 on the trace semantics. In particular if we have 
 $t_1 {\trf'}^x t_2$ , $t_3 \in \tran^{wt,x}$ and $t_3 (\trf'\cup  \tpo')^+ t_2$, then    
$t_3 ~{\hb{}'}^x~ t_1$.\footnote{We write $t~\hb{}^x~t'$ to denote that this $\\hb{}$ relation is added because of a read on $x$ by a transaction $t''$ in the trace semantics.}

 Since $\tau_1 \equiv \tau_2$, for any transactions $t', t'' $ in $\tran_1=\tran_2$, $t' [\hb{}_1^x] t''$ iff $t' [\hb{}_2^x] t''$ for all variables $x$. In particular, $t' \hb{}_1^x t''$ iff $t' \hb{}_2^x t''$ for all $x$. 
 Now, let us construct a trace $\tau''=(\tran', \tpo', \trf', \hb{}'')$,
 where $\hb{}''$ is the smallest set such that $\hb{}_2 \subseteq \hb{}''$ and 
  whenever 
  (i) $t_2 \in p$,  $t_1 {\trf'}^x t_2$ , $\tau_3 \in \tran^{\wt,x}$ and $t_3 ~(\hb{p}')~t_2$, then $t_3 {\hb{p}''}^x t_1$ 
  and (ii) $t_2 \in p$,  $t_1 {\trf'}^x t_2$ , $\tau_3 \in \tran^{\wt,x}$ 
  and $t_1 ~(\hb{p}')~t_3$, 
  then $t_2 {\hb{p}''}^x t_3$ .

   Since $t' [\hb{1p}^x] t''$ iff $t' [\hb{2p}^x] t''$ for all $x$, and all $t', t'' \in \tran_1=\tran_2$, and $\hb{}''$ is the smallest extension of $\hb{}_2$  based 
   on the trace semantics, along with the fact that 
   $\tpo'=\tpo'', \trf'=\trf''$, 
   we obtain for each process $p$, for any two 
   transactions $t_1, t_2 \in \tran'=\tran''$, 
   $t_1 [\hb{p}'] t_2$ iff  $t_1 [\hb{p}''] t_2$.
   This gives $\tau' \equiv \tau''$. 
    
      This also gives $\tau_2 \xrightarrow[]{\alpha_t}_{\cm{-}\sat} \tau''$,
  that is, $\tau_2 \vdash_{\tt{G}} \alpha_t$  and indeed $\alpha_t(\tau_1)=\tau' \equiv \tau''=\alpha_t(\tau_2)$.

\end{proof}

\begin{lemma}
If $\tau \vdash_{\tt{G}} \alpha_{t_1} \cdot \alpha_{t_2}$ and $\alpha_{t_1} {\sim} \alpha_{t_2}$ then $\tau \vdash_{\tt{G}} \alpha_{t_2} {\cdot} \alpha_{t_1}$ and $(\alpha_{t_1} {\cdot} \alpha_{t_2})(\tau) \equiv (\alpha_{t_2}{ \cdot} \alpha_{t_1})(\tau)$.
\label{lemma cm-g-2}
\end{lemma}
\begin{proof}
Let $\tau = \langle \tran , \tpo, \trf,\hb{} \rangle$ be a trace and let $t_1$ be a transaction issued in process $p_1$, and $t_2$ be a transaction issued in process $p_2$, with $p_1 \neq p_2$. Assume $\tau \vdash_{\tt{G}} \alpha_{t_1} \cdot \alpha_{t_2}$, with $\alpha_{t_1} {\sim} \alpha_{t_2}$.

We consider the following cases.
\begin{itemize}
    \item [$\bullet$] $t_1,t_2 \notin \tran^{rd}$. In this case $\tau \vdash_{\tt{G}} \alpha_{t_2} \cdot \alpha_{t_1}$ holds trivially.
    
     $(\alpha_{t_1} \cdot \alpha_{t_2})(\tau) = (\alpha_{t_2} \cdot \alpha_{t_1})(\tau) = \langle \tran' , \tpo' , \trf, \hb{} \rangle$, 
     where $\tran' = \tran \cup \{t_1,t_2\}$ and $\tpo' = \tpo \cup \{ (t,t_1) | t \in p_1\} \cup \{(t',t_2) | t' \in p_2 \}$.
    \item [$\bullet$] $t_1 \in \tran^{wt}$, $t_1 \notin \tran^{rd}$ and $t_2 \in \tran^{rd} \cap \tran^{wt}$. 
    Let $\tau_1 = \alpha_{t_1}(\tau)$. 
    We know that $\tau_1 = \langle \tran_1 , \tpo_1 , \trf_1,\hb{}_1 \rangle$, where 
    $\tran_1 = \tran \cup \{ t_1 \}$, and $\tpo_1 = \tpo \cup \{ (t,t_1) | t \in p_1\}$, with $\trf_1 = \trf$ since there are no read events in $t_1$, and $\hb{}_1 = \hb{p}$ since by the trace semantics, when $ \trf$ remain same, there are no $\hb{p}$ edges to be added.
    
    Consider $\tau_2 = \alpha_{t_2}(\tau_1) = \langle \tran_2, \tpo_2, \trf_2, \hb{}_2 \rangle$. 
    Since $t_2 \in \tran^{rd}$, we know that $t_2$ has read events.
    \smallskip

    Let there be $n$ read events $ev_1 \dots ev_n$ in transaction $t_2$, 
    reading from transactions $t'_1, \dots, t'_n$ on variables $x_1, \dots, x_n$.  
    Hence we get a sequence $\tau_1 \xrightarrow[]{\beginact(t_2)}_{\cm{-}\sat} \dots \tau'_{i_1} \xrightarrow[]{\readact(t_2,t'_1)}_{\cm{-}\sat}  \tau_{i_1}  \xrightarrow[]{}_{\cm{-}\sat}\dots  
    \tau'_{i_n} \xrightarrow[]{\readact(t_2,t'_n)}_{\cm{-}\sat} \tau_{i_n} \dots \xrightarrow[]{\commitact(t_2)}_{\cm{-}\sat} \tau_2$.
    Hence $\tran_2 = \tran_1 \cup \{t_2\}$, $\tpo_2 = \tpo_1 \cup \{(t',t_2) | t' \in p_2\}$, $\trf_2 = \trf_1 \cup (\trf_{i_1} \cup \trf_{i_2} \dots \cup \trf_{i_n})$, and $\hb{}_2 = \hb{}_1 \cup (\hb{}_{i_1} \cup \hb{}_{i_2} \dots \cup \hb{}_{i_n})$.
    Since $\tau_1 \vdash_{\tt{G}} \alpha_{t_2}$,  for all $ev_i : 1 \le i \le n$, we have $t'_i \in \rbl(\tau_1,t_2,x_i)$.
    
    Since $\alpha_{t_1} \sim \alpha_{t_2}$, we know that for all $ev_i : 1 \le i \le n$, we have $t'_i \in \rbl(\tau,t_2,x_i)$. It follows that $\tau \vdash_{\tt{G}} \alpha_{t_2}$.
    Define $\tau_3 = \langle \tran_3, \tpo_3, \trf_2, \hb{}_2 \rangle$, where $
    \tran_3 = \tran \cup \{t_2\}$ and $\tpo_3 = \tpo \cup \{(t',t_2) | t' \in p_2\}$. It follows that $\alpha_{t_2}(\tau) = \tau_3$, $\tau_3 \vdash_{\tt{G}} \alpha_{t_1}$ and $\tau_2 = \alpha_{t_1}(\tau_3) = \alpha_{t_2}(\alpha_{t_1}(\tau))$.
    \item [$\bullet$] $t_2 \in \tran^{wt}$ and $t_1 \in \tran^{rd} \cap \tran^{wt}$. Similar to the previous case.
    \item [$\bullet$] $t_1, t_2 \in \tran^{rd} \cap \tran^{wt}$. 
    Let $\tau_1 = \alpha_{t_1}(\tau) = \langle \tran_1, \tpo_1, \trf_1, \hb{}_1 \rangle$ and $\tau_2 = \alpha_{t_2}(\tau_1) = \langle \tran_2, \tpo_2, \trf_2, \hb{}_2 \rangle$. 
    Since $t_1 \in \tran^{rd}$, we know that $t_1$ has read events. 
    Let there be $n$ read events $ev_1 \dots ev_n$ in transaction $t_1$, reading from transactions $t'_1, \dots, t'_n$ on variables 
    $x_1, \dots, x_n$. 
    Hence we get the sequence $\tau \xrightarrow[]{\beginact(t_1)}_{\cm{-}\sat} \dots \tau'_{i_1} \xrightarrow[]{\readact(t_1,t'_1)}_{\cm{-}\sat}  \tau_{i_1} \xrightarrow[]{}_{\cm{-}\sat}\dots  
    \tau'_{i_n} \xrightarrow[]{\readact(t_1,t'_n)}_{\cm{-}\sat} \tau_{i_n} \dots \xrightarrow[]{\commitact(t_1)}_{\cm{-}\sat} \tau_1$.
    Since $t_2 \in \tran^{rd}$, we know that $t_2$ has read events. 
    Let there be  $m$ read events $ev'_1 \dots ev'_m$ in transaction $t_2$, reading from transactions $t''_1, \dots, t''_m$ on variables 
    $y_1, \dots, y_m$.  
    Then we get the sequence $\tau_1 \xrightarrow[]{\beginact(t_2)}_{\cm{-}\sat} \dots \tau'_{j_1} \xrightarrow[]{\readact(t_2,t''_1)}_{\cm{-}\sat}  \tau_{j_1}  \xrightarrow[]{}_{\cm{-}\sat} \dots 
    \tau'_{j_n} \xrightarrow[]{\readact(t_2,t''_m)}_{\cm{-}\sat} \tau_{j_n} \dots \xrightarrow[]{\commitact(t_2)}_{\cm{-}\sat} \tau_2$. 
    
    $\mathbf{Proving}$ $\mathtt{ \tau \vdash_{\tt{G}} \alpha_{t_2} {\cdot} \alpha_{t_1}}$
    
    Since $\tau_1 = \alpha_{t_1}(\tau)$, for all read events $ev_i$ such that $ev_i.trans = t_1$, we have $t'_i \in \rbl(\tau,t_1,x_i)$.
    Since $\tau_2 = \alpha_{t_2}(\tau_1)$, for all read events $ev'_j$ such that $ev'_j.trans = t_2$, we have $t''_j \in \rbl(\tau_1,t_2,y_j)$. 
    Since $\alpha_{t_1} \sim \alpha_{t_2}$, 
    it follows that, we have $t''_j \in \rbl(\tau,t_2,y_j)$ for all read events $ev'_j$ such that $ev'_j.trans = t_2$. Hence $\tau \vdash_{\tt{G}} \alpha_{t_2}$. 
    \smallskip

    Consider $\tau_3 = \alpha_{t_2}(\tau) = \langle \tran_3, \tpo_3, \trf_3, \hb{}_3 \rangle $. To show that $\tau \vdash_{\tt{G}} \alpha_{t_2} \alpha_{t_1}$, 
    we show that for all read events $ev_i$ such that $ev_i.trans = t_1$,  $t'_i \in \rbl(\tau_3,t_1,x_i)$. We prove this using contradiction.
     Assume that $\exists t'_i$ s.t. $t'_i \notin \rbl(\tau_3,t_1,x_i)$.

    \medskip 
    \smallskip 
     
    Let $p$ be process of $t_1$ and $pi$ of $t_2$.
    If $t'_i \notin \rbl(\tau_3,t_1,x_i)$, then 
    (1) there is a $t'_k$ such that 
    (i) $t'_k \in \vbl(\tau_3,t_1,x_i)$ and (ii) $t'_i [\hb{p}_3] t'_k$,
    or (2) there are transactions $t_3$ and $t_4$ such that $t_4 \in \tran^{\wt,y}$, $t_3 ~\trf^y_3 t_1$, 
    $t_4 [\hb{p}_3] t'_i $ and $t_3 [\hb{p}_3] t_4 $,
    or (3) there are transactions $t_3$ and $t_4$ such that $t_3 \in \tran^{\wt,x}$, $t_3 [\hb{p}_3] t_1$,
    $t_3 [\hb{p}_3] t_4 $ and $t' \in \tran^{\wt,y}$.
    or (4) there is a  transaction $t_3$ such that $t_3 \in \tran^{\wt,x,y}$, $t_3 \trf^y_3 t_1$,
    and $t' \in \tran^{\wt,x,y}$.
     
     \medskip 
     \smallskip 
     
        Consider case (1).
       If   (i) is true, that is, $t'_k \in \vbl(\tau_3,t_1,x_i)$, since $\alpha_{t_1} \sim \alpha_{t_2}$, we also have $t'_k \in \vbl(\tau,t_1,x_i)$. 
        This, combined with $t'_i \in \rbl(\tau,t_1,x_i)$, gives according to the trace semantics, $t'_k [\hb{}_1] t'_i$.  Going back to (ii), 
    we can have $t'_i [\hb{p}_3] t'_k$ only if we observed one of the following:
    \begin{itemize}
        \item [(a)] There is a path from $t'_i$ to $t'_k$ in $\tau$; that is, 
        $t'_i [\hb{p}]^+ t'_k$. This implies that $t'_i \notin \rbl(\tau,t_1,x_i)$ which is a contradiction.
        \item [(b)] There is direct path from $t'_i$ to $t'_k$ in $\tau_3$ due to the new $\tpo',\trf$ and $\hb{p'}$ edges added due to $t_2$.
        But $t_2$ has no successor in $(\tpo_3\cup\trf_3)$ and $p' \neq p$.
        So, we can have $t'_i [\hb{p}_3] t'_k$ only if we have $t'_i [\hb{p}_3] t'_k$ which is covered in above case.

    \end{itemize}
    
    A similar argument for other cases.
    
    Thus, we have $\neg(t'_i [\hb{p}_3] t'_k)$. This, for all $1 \leq i \leq n$, $t'_i \in \rbl(\tau_3,t_1,x_i)$. Thus, we now have $\tau \vdash_{\tt{G}} \alpha_{t_2} \alpha_{t_1}$. 
    It remains to show that $(\alpha_{t_1} {\cdot} \alpha_{t_2})(\tau) \equiv (\alpha_{t_2}{ \cdot} \alpha_{t_1})(\tau)$. 
    
    $\mathbf{Proving}$ $\mathtt{(\alpha_{t_1} {\cdot} \alpha_{t_2})(\tau) \equiv (\alpha_{t_2}{ \cdot} \alpha_{t_1})(\tau)}$
    
    Define $\tau_4 {=} \alpha_{t_1}(\tau_3) {=} \alpha_{t_1}.\alpha_{t_2}(\tau){=}
         \langle \tran_4, \tpo_4, \trf_4 , \hb{}_4 \rangle$. Recall that 
    $\tau_2{=}$$\alpha_{t_2}.\alpha_{t_1}(\tau)$ =$
      \langle \tran_2,\tpo_2, \trf_2, \hb{}_2 \rangle$.
    
    To show that $(\alpha_{t_1} \cdot \alpha_{t_2})(\tau) \equiv (\alpha_{t_2} \cdot \alpha_{t_1})(\tau)$, we show that $\tran_2=\tran_4, \tpo_2=\tpo_4, \trf_2=\trf_4$, and, for any two transactions 
    $t'_k, t'_i \in \tran_2=\tran_4$,    $t'_k [\hb{2p}] t'_i$ iff 
    $t'_k [\hb{4p}] t'_i$. 
    Of these, trivially, 
    $\tran_2=\tran_4, \tpo_2=\tpo_4$ follow. Since $\alpha_{t_2} \sim \alpha_{t_1}$, we have $\trf_2=\trf_4$. 
    
    Since $p \neq p'$ and $t_2$ has no successor in $\tpo_3 \cup \trf_3$, it follows that $t'_k, t'_i \in \tran_2=\tran_4$,    $t'_k [\hb{2p}] t'_i$ iff 
    $t'_k [\hb{4p}] t'_i$. 
    
\end{itemize}

 The converse direction, that is, whenever $t'_k [\hb{4p}] t'_i$, we also have $t'_k [\hb{2p}] t'_i$ is proved on similar lines. 
    
\end{proof}
\begin{lemma}
If $\tau_1 \equiv \tau_2$, $\alpha_{t_1} \sim \alpha_{t_2}$, and $\tau_1 \vdash_{\tt{G}} (\alpha_{t_1} \cdot \alpha_{t_2})$,  then (i) $\tau_2 \vdash_{\tt{G}} (\alpha_{t_2} \cdot \alpha_{t_1})$ and (ii) $(\alpha_{t_1} \cdot \alpha_{t_2})(\tau_1) \equiv (\alpha_{t_2} \cdot \alpha_{t_1})(\tau_2)$.
\label{lemma cm-g-3}
\end{lemma}
\begin{proof}
Follows from Lemma \ref{lemma cm-g-1} and Lemma \ref{lemma cm-g-2}.
\end{proof}

From Lemma \ref{lemma cm-g-3}, we get following lemma.
\begin{lemma}
If $\tau_1 \equiv \tau_2$, $\pi_{\tt{T^1}} \sim \pi_{\tt{T^2}}$, and $\tau_1 \vdash_{\tt{G}} \pi_{\tt{T^1}}$ then $\tau_2 \vdash_{\tt{G}} \pi_{\tt{T^2}}$ and $\pi_{\tt{T^1}}(\tau_1) \equiv \pi_{\tt{T^2}}(\tau_2)$.
\label{lemma cm-g-4}
\end{lemma}
 
\begin{lemma}
If $\tau \vdash_{\tt{G_T}} \pi_{\tt{T}}$, $\tau \vdash_{\tt{G}} \alpha_t$ then $\pi_{\tt{T}} = \pi_{\tt{T^1}} \cdot \alpha_t \cdot \pi_{\tt{T^2}}$ for some $\pi_{\tt{T^1}}$ and $\pi_{\tt{T^2}}$ where $\pi_{\tt{T^1}}$ is $t$-free.
\label{lemma cm-g-5}
\end{lemma}
Consider observables $\pi_{\tt{T}} = \alpha_{t_1},\alpha_{t_2} \dots \alpha_{t_n}$. 
We write $\pi_{\tt{T}} \lessapprox \pi'_{\tt{T}}$ to represent that $\pi'_{\tt{T}} = \pi_{\tt{T^0}} \cdot \alpha_{t_1} \cdot \pi_{\tt{T^1}} 
\cdot \alpha_{t_2} \cdot \pi_{\tt{T^2}} \dots \alpha_{t_n} \cdot \pi_{\tt{T^n}}$. 
In other words, $\pi_{\tt{T}}$ occurs as a non-contiguous subsequence in $\pi'_{\tt{T}}$. For such a $\pi_{\tt{T}}, \pi'_{\tt{T}}$, 
we define $\pi'_{\tt{T}} \oslash \pi_{\tt{T}} := \pi_{\tt{T^0}} \cdot \pi_{\tt{T^1}} \dots \pi_{\tt{T^n}}$. 
Since elements of $\pi_{\tt{T}}$ and $\pi'_{\tt{T}}$ are distinct, operation $\oslash$ is well defined. 
Let $\pi_{\tt{T}}[i]$ denote the $i$th observable 
in the observation sequence $\pi_{\tt{T}}$, and 
let  $|\pi_{\tt{T}}|$ denote  the number of observables 
in $\pi_{\tt{T}}$.

Let $\alpha_t = \pi_{\tt{T}}[i]$ for some $i: 1 \le i \le |\pi_{\tt{T}}|$. 
We define $\tt{Pre(\pi_{\tt{T}},\alpha_t)}$ as a subsequence $\pi'_{\tt{T}}$ of $\pi_{\tt{T}}$ such that  
(i) $\alpha_t \in \pi'_{\tt{T}}$,
(ii) $\alpha_{t_j} = \pi_{\tt{T}}[j] \in \pi'_{\tt{T}}$ for some $j: 1 \le j < i$ iff there exists $k: j < k \le i$ such that $\alpha_{t_k}=\pi_{\tt{T}}[k]  \in \pi'_{\tt{T}}$ and $\neg(\alpha_{t_j} \sim \alpha_{t_k})$. 
Thus, $\tt{Pre(\pi_{\tt{T}},\alpha_t)}$ consists of $\alpha_t$ and all 
$\alpha_{t'}$ appearing before $\alpha_t$ in $\pi_{\tt{T}}$ such that 
$t' ~\hb{p}~ t$.

%
%
\begin{lemma}
If $\pi_{\tt{T^1}} = \tt{Pre(\pi_{\tt{T}},\alpha_t)}$ and $\pi_{\tt{T^2}} = \pi_{\tt{T}} \oslash \pi_{\tt{T^1}}$ then $\pi_{\tt{T}} \approx \pi_{\tt{T^1}} \cdot \pi_{\tt{T^2}}$.
\label{lemma cm-g-6}
\end{lemma}
\begin{proof}
The proof is trivial since we can always execute in order, 
the $~\hb{p}~$ predecessors of $\alpha_t$ from $\tt{Pre(\pi_{\tt{T}},\alpha_t)}$, then $\alpha_t$, then the  observables 
in $\tt{Pre(\pi_{\tt{T}},\alpha_t)}$ which are independent from $\alpha_t$, followed by the suffix of $\pi_{\tt{T}}$ after $\alpha_t$.

\end{proof}

\begin{lemma}
If $\pi_{\tt{T}}= \pi'_{\tt{T}} \cdot \pi''_T$ and $\alpha_t \in \pi'_{\tt{T}}$ then $\tt{Pre(\pi_{\tt{T}},\alpha_t)} = \tt{Pre(\pi'_{\tt{T}},\alpha_t)}$.
\label{lemma cm-g-7}
\end{lemma}
\begin{proof}
The proof is trivial once again, since 	all 
the transactions which are $~\hb{p}~ t$  are in the prefix  $\pi'_{\tt{T}}$.
\end{proof}

\begin{lemma}
If $\tau \vdash_{\tt{G}} \pi_{\tt{T}}$ then $\tau \vdash_{\tt{G_T}} \pi_{\tt{T}} \cdot \pi_{\tt{T^2}}$.
\label{lemma cm-g-8}
\end{lemma}
\begin{proof}
This simply follows from the fact that we can extend the observation sequence 
$\pi_{\tt{T}}$ to obtain a terminal configuration, since the $\cm$ DPOR algorithm generates traces corresponding to terminating runs.   	
\end{proof}

\begin{lemma}
Consider a trace trace $\tau$ such that $\tau \models \cm$, $\tau \vdash_{\tt{G}} \pi_{\tt{T}} \cdot \alpha_t$. 
Let each read event $ev_i = r_i(p,t,x_i,v)$ in $\alpha_t$ read from some transaction $t_i$, 
and let $\pi_{\tt{T}}$ be $t$-free.  
Then $\tau \vdash_{\tt{G}} \alpha'_t \cdot \pi_{\tt{T}}$, where $\alpha'_t$ is the same as  $\alpha_t$, 
with the exception that the sources of its read events can be different. That is, 
 each read event $ev_i = r_i(p,t,x_i,v) \in \alpha'_t$ can read from some transaction $t'_i \neq t_i$.
\label{lemma cm-g-9}
\end{lemma}
\begin{proof}
Let $\pi_{\tt{T}} = \alpha_{t_1} \alpha_{t_2} \dots \alpha_{t_n}$ and $\tau_0 \xrightarrow[]{\alpha_{t_1}}_{\cm{-}\sat} \tau_1 \xrightarrow[]{\alpha_{t_2}}_{\cm{-}\sat} \dots \xrightarrow[]{\alpha_{t_n}}_{\cm{-}\sat} \tau_n \xrightarrow[]{\alpha_t}_{\cm{-}\sat} \tau_{n+1}$, where $\tau_0 = \tau$. Let $\tau_i = \langle \tran_i, \tpo_i, \trf_i, \hb{}_i \rangle$. 
Let there be  $m$ read events in transaction $t$. 
Let $p$ be process $t$.
Keeping in mind what we want to prove, where we want to execute $t$ first followed by $\pi_{\tt{T}}$, and obtain an execution $\alpha'_t \pi_{\tt{T}}$, we do the following.

For each read event $ev_i = r_i(p,t,x_i,v)$ we define $t'_i \in \tran^{wt,x_i}$ such that 
$(i)$ $ t'_i \in \rbl(\tau_0,t,x_i)$, and 
$(ii)$ there is no $t''_i \in \rbl(\tau_0,t,x_i)$ 
where $t'_i [\hb{np}] t''_i$.  Note that this is possible since 
 $\pi_{\tt{T}}$ is $t$-free, so all the observables $\alpha_{t_i}$ occurring in $\pi_{\tt{T}}$ are such that $t_i$ is issued in a process other than that of $t$. Thus, when $\alpha_t$ is enabled, we can choose any of the 
 writes done earlier,  this fact is consistent with the $\cm$ semantics, since from Lemmas \ref{lem:cm-f3}-\ref{lem:cm-f4} we know $\tau_n \models \cm$.
Since $\tau_n \models \cm$ such a $t'_i \in  \rbl(\tau_0,t,x_i)$ exists for each $ev_i$.

\medskip 
Next we define a sequence of traces which can give the execution 
$\alpha'_t \pi_{\tt{T}}$. 
Define a sequence of traces $\tau'_0, \tau'_1, \dots \tau'_n$ where $\tau'_j = \langle \tran'_j, \tpo'_j, \trf'_j, \hb{}'_j \rangle$ is such that 
\begin{enumerate}
	\item $\tau \vdash_{\tt{G}} {\alpha'_t}$ and $\alpha'_t(\tau)=\tau'_0$,
	\item $\tau'_j \vdash_{\tt{G}} \alpha_{t_{j+1}}$ and  $\tau'_{j+1} = \alpha_{t_{j+1}}(\tau'_j)$ for all $j: 0 \le j \le n$.
\end{enumerate}

\begin{itemize}
\item  Define $\tran'_i = \tran_i \cup \{t\}$ for all $0 \leq i \leq n$, 
\item 
Define $\tpo' = \{t' [\tpo] t \mid t'$ is a transaction in $\tau$, in the same process as $t\}$, and $\tpo'_i = \tpo_i \cup \tpo'$ for all $0 \leq i \leq n$, 

\item  $\trf' = \bigcup_{i:1\le i \le m} t'_i [\trf^{x_i}] t$ (reads from relation corresponding to each read event $ev_i \in \alpha'_t$), and 
 $\trf'_i = \trf_i \cup \trf'$ for all $i : 1 \le i \le n$, 
 
\item  $\hb{p}' = \bigcup_{j:1\le j \le m} \hb{p}'_j$ (each  $\hb{p}'_j$ corresponds to the updated $\hb{p}$ relation because of the read transition $\xrightarrow[]{\readact(t,t'_j)}_{\cm{-}\sat}$). 	
\smallskip 

For $1 \leq i \leq n$, we define $\hb{}'_i$ inductively as $\hb{}'_i = \hb{}_i \cup \hb{}'$
and show that $\tau'_i \vdash_{\tt{G}} \alpha_{t_{i+1}}$ and  $\tau'_{i+1} = \alpha_{t_{i+1}}(\tau'_i)$ holds good. 
 First define $\hb{}'_0 = \hb{}_0 \cup \hb{p}'$. 
 Then define  $\hb{}'_i$ to be the coherence order corresponding to $\alpha_{t_{i}}(\tau'_{i-1})$  
 where $\tau'_{i-1}=\langle \tran'_{{i-1}}, \tpo'_{i-1}, \trf'_{i-1}, \hb{}'_{i-1} \rangle$
 for all $i: 1 \le i \le n$. 

 \end{itemize}

\medskip 
\noindent{\bf{Base case}}. The base case $\tau \vdash_{\tt{G}} {\alpha'_t}$ and $\alpha'_t(\tau)=\tau'_0$,
   holds trivially, by construction. 

\smallskip 

\noindent{\bf{Inductive hypothesis}}. Assume that $\tau'_j \vdash_{\tt{G}} \alpha_{t_{j+1}}$ and  $\tau'_{j+1} = \alpha_{t_{j+1}}(\tau'_j)$ for $0 \leq j \leq i-1$. 

\smallskip 

 We have to prove that 
$\tau'_i \vdash_{\tt{G}} \alpha_{t_{i+1}}$ and  $\tau'_{i+1} = \alpha_{t_{i+1}}(\tau'_i)$. 
\begin{enumerate}
	\item If $t_{i+1} \in \tran^{wt}$ and $t_{i+1} \notin \tran^{rd}$,  then the proof holds trivially from the inductive hypothesis. 
	\item Consider now $t_{i+1} \in \tran^{rd}$. Consider a read event 
	$ev^{i+1} = r(p,t_{i+1},x,v) \in \alpha_{t_{i+1}}$ which was reading from 
	transaction $t_x$ in $\pi_{\tt{T}}$. That is, we had $t_x \in \rbl(\tau_i, t_{i+1},x)$. 	If $t_x \in \rbl(\tau'_i, t_{i+1},x)$, then we are done, since we can simply extend the run from the inductive hypothesis. 
	
	\smallskip 
	
	Assume otherwise. That is, $t_x \notin \rbl(\tau'_i, t_{i+1},x)$.

	Since  $t_x$  $\in$   $\rbl(\tau_i, t_{i+1},x)$, there are some blocking transactions in the new path, which prevents $t_x$ from being readable. 
 We have the blocking transactions since $t$ is moved before $\pi_{\tt{T}}$. 	
 But $t$ updates only $\hb{p}$ and for each process of $t_{i+1}$ is not equal $p$. 
 And $t$ has no successor in  each $[\tpo'_j \cup \trf'_j]^+$.
	
Hence, $t_x$  $\in$   $\rbl(\tau_i, t_{i+1},x)$ and we can extend the run 
from the inductive hypothesis obtaining $\tau'_j \vdash_{\tt{G}} \alpha_{t_{j+1}}$,  $\tau'_{j+1} = \alpha_{t_{j+1}}(\tau'_j)$,  
for $0 \leq j \leq i-1$,
and 
$\tau'_i \vdash_{\tt{G}} \alpha_{t_{i+1}}$.

\end{enumerate}

Now we prove that $\hb{}'_{i+1} = \hb{}_{i+1} \cup \hb{}'$. 
Assume we have $(tr1 , tr2) \in \hb{}_{i+1}$. 
We show that $(tr1 , tr2) \in \hb{}'_{i+1}$.
Let $p1 \neq p$ be the process such that $(tr1 , tr2) \in \hb{}'_{i+1}$ is because of $(tr1 , tr2) \in \hb{p1}'_{i+1}$.
We have the following cases:

\begin{itemize}
    \item[(i)] $(tr1 , tr2) \in \hb{}_i$. From the inductive hypothesis it follows that $(tr1 , tr2) \in \hb{}'_{i}$ and hence in $\hb{}'_{i+1}$.
    \item[(ii)] $(tr1 , tr2) \notin \hb{}_i$. 
    Since $(tr1 , tr2) \notin \hb{}_i$ and $(tr1 , tr2) \in \hb{}'_{i}$, it follows that moving $t$ before $\pi_{\tt{T}}$ caused this.
    But $t$ only updates $\hb{p}$ edges and $p1 = p$. 
    Since $\pi_{\tt{T}}$ be $t$-free, we get a contradiction. 
\end{itemize}
Thus, we have shown that $\tau \vdash_{\tt{G}} \alpha'_t \alpha_{t_1} \dots \alpha_{t_i}$ 
is such that $\alpha'_t \alpha_{t_1} \dots \alpha_{t_i}(\tau)=\tau'_i$ for all 
$0 \leq i \leq n$. When $i=n$ we obtain 
$\alpha'_t \pi_{\tt{T}}(\tau)=\tau'_n$, or $\tau \vdash_{\tt{G}} \alpha'_t \pi_{\tt{T}}$. 
\end{proof}


\begin{definition}[Linearization of  a Trace]
A observation sequence $\pi_{\tt{T}}$ is a \emph{linearization} of a trace $\tau = \langle \tran, \tpo , \trf, \hb{p} \rangle$ if $(i)$ $\pi_{\tt{T}}$ has the same transactions as $\tau$ and $(ii)$ $\pi_{\tt{T}}$ follows the ($\tpo \cup \trf)$ relation.
	
\end{definition}

We say that our DPOR algorithm \emph{generates an observation sequence} $\pi_{\tt{T}}$ from state $\tau$ where $\tau$ is a trace 
if 
 it invokes $\explore$ with parameters $\cm,\tau, \pi'$, where $\pi'$ is a \emph{linearization} of $\tau$, and generates  a sequence of recursive calls to $\explore$ resulting in $\pi_{\tt{T}}$.

\begin{lemma}
If $\tau \vdash_{\tt{G_T}} \pi_{\tt{T}}$ then, 
the DPOR algorithm  generates $\pi'_{\tt{T}}$ from state $\tau$ 
for some $\pi'_{\tt{T}} \approx \pi_{\tt{T}}$.
\label{lemma cm-g-10}
\end{lemma}
\begin{proof}
We use induction on $|\pi_{\tt{T}}|$. If $\tau$ is $terminal$, then the  proof is trivial.
Assume that we have $\tau \vdash_{\tt{G_T}} \pi_{\tt{T}}$.
Assume that $\tau \vdash_{\tt{G}} \alpha_t$. It follows that $\tau \vdash_{\tt{G}} \alpha_t$.
Using Lemma \ref{lemma cm-g-5} we get $\pi_{\tt{T}} = \pi_{\tt{T^1}} \cdot \alpha_t \cdot \pi_{\tt{T^2}}$, where $\pi_{\tt{T^1}}$ is $t$-free. 
We consider the following two cases:
\begin{itemize}
    \item In the first case, we assume that all read events in $t$ read from 
    transactions in $\tau$. So in this case, $\pi_{\tt{T^1}}$ can be empty. 
        Assume there are $n$ read events in $t$ and each read event $ev_i = r(p,t,x_i,v)$ 
    reads from transactions $t_i \in \tau$ for all $i: 1 \le i \le n$. 
In this case the DPOR algorithm will let each read event $ev_i$ read from all possible transactions $t' \in \rbl(\tau,t,x_i)$, including $t_i$.
    \item In the second case, assume that there exists at least one read event $ev' = r(p,t,x,v) \in \alpha_t$ which reads from a transaction $t' \in \pi_{\tt{T^1}}$ (hence, $\pi_{\tt{T^1}}$ is non empty).
    \smallskip 
     
    From Lemma \ref{lemma cm-g-9}, we know that $\tau
    \vdash_{\tt{G}} \alpha'_t \cdot \pi_{\tt{T^1}}$. 
    From Lemma \ref{lemma cm-g-8}, it follows that $\tau
    \vdash_{\tt{G_T}} \alpha'_t \cdot \pi_{\tt{T^1}} \cdot \pi_{\tt{T^4}}$, for some
    $\pi_{\tt{T^4}}$.
    Hence there exists $\tau'$ such that 
    $\tau' = \alpha'_t (\tau)$ and $\tau' \vdash_{\tt{G_T}} \pi_{\tt{T^1}} \cdot
    \pi_{\tt{T^4}}$. 
    
    Since $|\pi_{\tt{T^1}} \cdot
    \pi_{\tt{T^4}}| < |\pi_{\tt{T}}$, we can use the inductive hypothesis. It follows that the DPOR algorithm generates from state $\tau'$, 
    the observation sequence  $\pi_{\tt{T^5}}$ such that $\pi_{\tt{T^5}} \approx \pi_{\tt{T^1}} 
    \cdot \pi_{\tt{T^4}}$. 
    
    \smallskip 
    
    Let  $\pi_{\tt{T^3}}.\alpha_t = \tt{Pre(\pi_{\tt{T^1}}, \alpha_t)}$. Then  $\pi_{\tt{T^5}}
    \approx \pi_{\tt{T^1}} \cdot \pi_{\tt{T^4}}$ implies that  $\pi_{\tt{T^3}}
    \lessapprox \pi_{\tt{T^5}}$ (by applying Lemma \ref{lemma cm-g-7}).
    Let $\pi_{\tt{T^6}} \approx \pi_{\tt{T}} \oslash (\pi_{\tt{T^3}} \cdot \alpha_t)$.
    By applying Lemma \ref{lemma cm-g-6}, we get $\pi_{\tt{T}} \approx \pi_{\tt{T^3}} \cdot \alpha_t \cdot \pi_{\tt{T^6}}$.
    Since $\tau \vdash_{\tt{G_T}} \pi_{\tt{T}}$ and 
    $\pi_{\tt{T}} \approx \pi_{\tt{T^3}} \cdot \alpha_t \cdot \pi_{\tt{T^6}}$,
    by applying Lemma \ref{lemma cm-g-3}, we get $\tau
    \vdash_{\tt{G_T}} \pi_{\tt{T^3}} \cdot \alpha_t \cdot \pi_{\tt{T^6}}$.
    Let $\tau' = (\pi_{\tt{T^3}} \cdot \alpha_t) (\tau)$. 
    Since $\tau \vdash_{\tt{G_T}} \pi_{\tt{T^3}} 
    \cdot \alpha_t \cdot \pi_{\tt{T^6}}$, we have $\tau'
    \vdash_{\tt{G_T}} \pi_{\tt{T^6}}$. 
    From inductive hypothesis it follows that $\tau'$ generates $\pi_{\tt{T^7}}$ such that $\pi_{\tt{T^7}} \approx \pi_{\tt{T^6}}$.\\
    In other words, $\tau$ generates $\pi_{\tt{T^3}} 
    \cdot \alpha_t \cdot \pi_{\tt{T^7}}$ where $\pi_{\tt{T}} \approx \pi_{\tt{T^3}} 
    \cdot \alpha_t \cdot \pi_{\tt{T^7}}$.
\end{itemize}

\end{proof}

\newpage
\centerline{\bf{Read Atomic }} 
\section{Trace Semantics}
\label{app:rasatcons}

  \begin{figure}[H]
        \begin{tikzpicture}[node distance=10mm , thick, main/.style= {draw,circle},];
        \node[] (t) {$t$};
        \node[] (t1) [above = 20mm of t] {$t_1$};
        \node[] (t') [left = 25mm of t] {$t'$};
        \node[] (t2) [above = 20mm of t']  {$t_2$};
        \node[] (1) [below=5mm of t] {};
        \node[] (cap) [left=7mm of 1] {$\tt{Cond4}$};
        \draw[->,blue,line width=1.2pt] (t1) -- node[right] {$\trf$} (t);
        \draw[->,black,line width=1.2pt] (t1) -- node[above] {$(\tpo \cup \trf \cup \cfr)^+$} (t2);
        \draw[->,black,line width=1.2pt] (t2) --node[rotate=-40, below] {$(\tpo \cup \trf)$} (t);
        \draw[dashed , ->,blue,line width=1.2pt] (t') -- node[below] {$\trf$} (t);
        \draw[dashed , ->,red,line width=1.2pt] (t2) -- node[left] {$\cfr$} (t');
        \draw[dashed , ->,red,line width=1.2pt] (t') -- node[rotate=40,right=3mm] {$\cfr$} (t1);

        \node[] (3at') [right= 5mm of t1] {$t'$};
        \node[] (3at1) [right= 20mm of 3at'] {$t1$};
        \node[] (31) [below=20mm of 3at1] {};
        \node[] (3at) [left=8mm of 31] {$t$};
        \node[] (3acap) [below=5mm of 3at] {$\tt{Cond3 (a)}$};

        \draw[->,black,line width=1.2pt] (3at1) -- node[above] {$(\tpo \cup \trf \cup \cfr)^+$} (3at');
         \draw[->,blue,line width=1.2pt] (3at1) -- node[right] {$\trf$} (3at);
        \draw[dashed , ->,blue,line width=1.2pt] (3at') -- node[left=1mm] {$\trf$} (3at);
        \draw[dashed , ->,red,line width=1.2pt] (3at') to [out=90, in=90] node[above] {$\cfr$} (3at1);
        
        \node[] (2at') [right= 5mm of 3at1] {$t'$};
        \node[] (2at1) [right= 20mm of 2at'] {$t1$};
        \node[] (21) [below=20mm of 2at1] {};
        \node[] (2at) [left=8mm of 21] {$t$};
        \node[] (2acap) [below=5mm of 2at] {$\tt{Cond2}$};
        
        \draw[dashed , ->,red,line width=1.2pt] (2at1) -- node[above] {$\cfr$} (2at');
         \draw[->,blue,line width=1.2pt] (2at1) -- node[right] {$\trf$} (2at);
        \draw[dashed , ->,blue,line width=1.2pt] (2at') -- node[left=1mm] {$\trf$} (2at);
        \draw[dashed , ->,red,line width=1.2pt] (2at') to [out=90, in=90] node[above] {$\cfr$} (2at1);

        \node[] (1t') [right= 5mm of 2at1] {$t'$};
        \node[] (1t1) [right= 20mm of 1t'] {$t1$};
         \node[] (111) [below=20mm of 1t1] {};
        \node[] (1t) [left=8mm of 111] {$t$};
        \node[] (1cap) [below=5mm of 1t] {$\tt{Cond1}$};

        \draw[->,black,line width=1.2pt] (1t') -- node[above] {$(\tpo \cup \trf \cup \cfr)^+$} (1t1);
        \draw[->,black,line width=1.2pt] (1t1) -- node[right] {$(\tpo \cup \trf)$} (1t);
        \draw[dashed , ->,blue,line width=1.2pt] (1t') -- node[left=1mm] {$\trf$} (1t);
        \draw[dashed , ->,red,line width=1.2pt] (1t1) to [out=90, in=90] node[above] {$\cfr$} (1t');
        \end{tikzpicture}
        \caption{Readability for $\readat$}
        \label{read-ra}
    \end{figure}

\subsection{Readability and Visibility  for $\readat$}
 \label{rato:rv}
 For $X=\readat$, $\rbl(\tau^t_X, t, x)$ is defined as the set of all transactions  $t' \in \tran^{\wt,x}$ s.t.
\begin{itemize}
	\item[$\tt{cond1}$] there is no transaction $t'' \in \tran^{\wt,x}$ such that 
	$t' ~(\co~\cup \cfr)^+~t''~(\tpo \cup \trf)~t$. Having such a $t''$ with $t'~\trf~t$ gives $t''~\cfr~t'$ and $\readatcyc$. 
	\item[$\tt{cond2}$]  If $t' \in \tran^{\wt,y}$ (for $y \neq x$) there is no transaction $t'' \in \tran^{\wt,x} \cap \tran^{\wt,y}$ such that $t''~\trf~t$ (wrt $y$). Allowing $t'~\trf~t$ in this case creates $t'~\cfr~t''$ and 
	$t''~\cfr~t'$ and $\readatcyc$.
	\item[$\tt{cond3a}$] If $t' \in \tran^{\wt,y}$ (for $y \neq x$), there is no transaction $t'' \in \tran^{\wt,y}$ such that $t''~\trf~t$ (wrt $y$) and $t''~(\co \cup \cfr)^+t'$. Having such a $t''$ with $t'~\trf~t$ gives 
	$t'~\cfr~t''$ and $\readatcyc$.
	\item[$\tt{cond4}$] If $t' \in \tran^{\wt,y}$ (for $y \neq x$), there are no transactions $t_1 \in \tran^{\wt,y}, t_2 \in \tran^{\wt,x}$ such that $t_1~\trf~t$ (wrt $y$), $t_1~(\co \cup \cfr)^+~t_2~(\tpo\cup \trf)~t$. 
	 If so, allowing $t'~\trf~t$ gives $t_2~\cfr~t'$ ($t_2, t' \in \tran^{\wt,x}$) as well as 
	 $t'~\cfr~t_1$  ($t_1, t' \in \tran^{\wt,y}$). This gives 
	 $t_1~(\co \cup \cfr)^+~t_2~\cfr~t'\cfr~t_1$ and $\readatcyc$. 
\end{itemize}

Figure \ref{read-ra} explains cycles created after violating conditions $\tt{cond1}-\tt{cond4}$.

After adding $t'~\trf~t$, we must check that 
there are no consistency violations.  
The check set $\vbl(\tau^t_X, t,x)$ is defined as the set of transactions which turn 	``sensitive''  
on adding the new  edge  $t'~\trf~t$. 
Unless  appropriate edges are added involving 
these sensitive transactions, we may get consistency violating cycles
in the resultant trace.
Let $\tau^{tt'}$ denote the trace obtained by adding the new transaction $t$ and the edge $t'~\trf~t$ to trace $\tau$. Now, we 
identify $\vbl(\tau^t_X, t,x)$ and the edges which must be added to $\tau^{tt'}$ to obtain a consistent extended trace.

 For $X=\readat$, $\vbl(\tau^t_X,t,x)$ is classified into two categories. 
	\begin{itemize}
	\item The first kind of transactions in $\vbl(\tau^t_X,t,x)$  are  
$\{t''\in \rbl(\tau^t_X, t,x) \mid t''~\tpo \cup \trf~t\}$.  Then we add  from each $t'' \in \vbl(\tau^t_X,t,x)$ a $\cfr$ edge to $t'$ in $\extend_X(\tau^{tt'})$. 
	\item The second kind of transactions in $\vbl(\tau^t_X,t,x)$  are
	transactions $t'' \in \tran^{\wt,y}$ such that $t''~\trf~t$ when 
	$t' \in  \tran^{\wt,y}$. Then we add $t'~\cfr~t''$ to $\extend_X(\tau^{tt'})$.
		\end{itemize}

\begin{lemma}
    Given trace $\tau$, a transaction $t$ and a variable $x$, we can construct the sets  $\rbl(\tau, t, x)$ and $\vbl(\tau, t, x)$ in polynomial time.
    \label{lemma:ra-poly}
\end{lemma}
\begin{proof}
    We prove that set $\rbl(\tau, t, x)$ can be computed in polynomial time ($O(|\tran|^3)$ time), by designing an algorithm to generate set $\rbl(\tau, t, x)$. The algorithm consists of the following  steps:
    \begin{itemize}
        \item[(i)] First we compute the  transitive closure of the relations
        $[\tpo \cup \trf \cup \cfr]$, i.e, we compute $[\tpo \cup \trf \cup \cfr ]^+$.  We can use the Floyd-Warshall algorithm \cite{10.5555/1614191} to compute the transitive closure. This will take $O(|\tran|^3)$ time.
        \item[(ii)] We compute the set $\tran'=\{t'\mid t' [\tpo \cup \trf] t\}$. This takes $O(|\tran|)$ time.
    
        \item[(iii)] For each transaction $t' \in \tran^{wt,x}$, we perform following checks:
        \begin{itemize}
            \item Check $\mathtt{cond 1 :} $ Check whether there exists a transaction $t'' \in \tran' \cap \tran^{wt,x}$            
            with $t'$ $[\tpo  \cup \trf \cup \cfr]^+$ $t''$. 
            If there is no such $t''$ then proceed to $\mathtt{cond 2}$. 
            If we find such $t''$, then $t' \notin \rbl(\tau , t, x)$. 
            This step will take $O(|\tran|^2)$ time.
            \item Check $\mathtt{cond 2 :}$ Check whether there exists a transaction $t''$ such that $t'' [\trf^y] t$ and $t',t'' \in \tran^{wt,x} \cap \tran^{wt,y}$. 
            If there is no such $t''$ then proceed to $\mathtt{cond 3}$. 
            If we find such $t''$, then $t' \notin \rbl(\tau , t, x)$. 
            This step will take $O(|\tran|)$ time.
            \item Check $\mathtt{cond 3a :}$ 
            
            Check whether there exists a transaction $t''$ such that $t'' [\trf^y] t$ , $t' \in \tran^{wt,y}$, and $t'' [\tpo \cup \trf \cup \cfr]^+ t'$. 
            If there is no such $t''$ then proceed to $\mathtt{cond 4}$. 
            If we find such $t''$, then $t' \notin \rbl(\tau , t, x)$. 
            This step will take $O(|\tran|)$ time.
            
            \item Check $\mathtt{cond 4 :}$ 
            Check whether there are transactions $t''$ and $t''' \in \tran'$ such that $t'' [\trf^y] t$, $t''' \in \tran^{\wt,x}$, $t' \in \tran^{\wt,y}$ and $t'' [\tpo \cup \trf \cup \cfr]^+ t'''$. 
            If we find such $t''$ and $t'''$, 
            then $t' \notin$ $\rbl(\tau , t, x)$. 
            If there are no such $t''$ and $t'''$ then add $t'$ to $\rbl(\tau , t, x)$.
            This step will take $O(|\tran|^2)$ time.
        \end{itemize}
    \end{itemize}
    Similarly, we can compute $\vbl(\tau, t, x)$ in polynomial time. This completes the proof.
\end{proof}

\subsection{Properties of the Trace Semantics}
\label{rato:prop}
\begin{lemma}
    If $\tau_1 \models \readat$  and $\tau_1$ $\xrightarrow[]{\beginact(t)}_{\readat{-}\sat}$ $\tau_2$ then $\tau_2 \models \readat$.
    \label{lem:ra-f3}
\end{lemma}
\begin{proof}
The proof follows trivially since we do not change the  	reads from and coherence order relations, and the transaction $t$ has no successors. Thus, if $\tau_1 \models \readat$, 
so does $\tau_2$.  
	\end{proof}

\begin{lemma}
    If $\tau_1 \models \readat$  and $\tau_1$ $\xrightarrow[]{\readact(t,t')}_{\readat{-}\sat}$ $\tau_2$ then $\tau_2 \models \readat$.
    \label{lem:ra-f4}
\end{lemma}

\begin{proof}
    Let $\tau_1 = \langle \tran_{1} , \tpo_1 , \trf_1, \cfr_1 \rangle$, $\tau_2 = \langle \tran_{2} , \tpo_2 , \trf_2, \cfr_2 \rangle$, and $ev$ = $r(p,t,x,v)$ be a read event in transaction $t \in \tran_1$. Suppose $\tau_2 \nvDash \readat$, and  that $\tpo_2 \cup \trf_2 \cup \cfr_2$ is cyclic. Since $\tau_1 \models \readat$, and $t$ has no outgoing edges, 
   on adding $(t',t) \in \trf_2$, 
    it follows that the cycle in $\tau_2$ is as a result 
    of the newly added $\cfr_2$ edges. 
    
    We prove that 
    on adding $(t',t) \in \trf_2$, 
        such cycles are possible iff at least one of ${\tt{cond}_1}, \dots, {\tt{cond}_4}$ are true. 
    
 \subsection*{${\tt{cond1}}\vee \dots \vee {\tt{cond4}}$ induces $\tpo_2 \cup \trf_2 \cup \cfr_2$ cyclicity }
   \label{sec:ra-cyc1}
    This direction is easy to see : assume 
    one of ${\tt{cond1}}, \dots, {\tt{cond4}}$ are true; then, as already 
    argued in the main paper, we will get a $\tpo_2 \cup \trf_2 \cup \cfr_2$ cycle on adding the $\trf_2$ edge from $t'$ to $t$ and we are done.

 \subsection*{$\tpo_2 \cup \trf_2 \cup \cfr_2$ cyclicity implies ${\tt{cond1}}\vee \dots \vee {\tt{cond4}}$ }
 \label{sec:ra-cyc2}
  For the converse direction, assume that we add the $\trf_2$ edge from $t'$ to $t$ and obtain a $\tpo_2 \cup \trf_2 \cup \cfr_2$ cycle in $\tau_2$. We now argue that this cycle has been formed because one of 
 ${\tt{cond}_1}, \dots, {\tt{cond}_4}$ are true.  
 
 First of all, note that $t$ has no outgoing edges in $\tau_2$, since 
 it is the current transaction being executed. Also, 
 we know that $\tau_1$ has no $\tpo_1 \cup \trf_1 \cup \cfr_1$ cycles. 
 Thus, the cyclicity of $\tpo_2 \cup \trf_2 \cup \cfr_2$ is induced by the 
 newly added $\trf_2$ edge as well as the newly added 
  $\cfr_2$ edges. Note that adding the $\trf_2$ edge  to $\tau_1$ does not induce any cycle since $t$ has no outgoing edges. Let's analyze the 
  $\cfr_2$ edges added, which induce cycles, and argue that 
  one of ${\tt{cond1}}, \dots, {\tt{cond4}}$ will be true.   
  
  \begin{enumerate}
  	\item We add  $\cfr_2$ edges from $t'' \in \vbl(\tau_1, t,x)$ to $t'$. 
  	For  these edges to induce a cycle, we should have a path from $t'$ to $t''$ in $\tau_1$. That is, we have $t' [\tpo_1 \cup \trf_1 \cup \cfr_1]^+ t''$. This is captured by ${\tt{cond1}}$.  
  	
  	\item Consider $y \neq x$. If $t' \in \tran^{w,x} \cap \tran^{w,y}$, 
  	and we have $t'''\trf_1^y t$.  Then we add 
  	$(t',t''') \in \cfr^y_2$. To get a cycle, we need a path from 
  	$t'''$ to $t'$. 
  	\begin{itemize}
  	\item If $t''' \in \tran^{w,x} \cap \tran^{w,y}$, then we add $(t''', t') \in \cfr^x_2$, resulting in a cycle. This is handled by  ${\tt{cond2}}$.
  	\item If we have a path 
  	$t''' [\tpo_1 \cup \trf_1 \cup \cfr_1]^+ t'$, then we get a cycle again. This is handled by ${\tt{cond3}}$. 
  \item As a last case, to obtain a path from $t'''$ to $t'$, 
  assume there is a path from $t'''$ to $t$ in $\tau_1$, and let $t'' \in \tran^{w,x}$ be the last transaction writing to $x$ in this path.  Note that this will induce a path from $t'''$ to $t''$ to $t'$ : the path  from $t''$ to $t'$ comes by the $\cfr^x_2$ edge added from $t''$ to $t'$ since 
  we have $(t',t) \in \trf_2^x$. Once we get this path from $t'''$ to $t'$, we again have the path we were looking for to get the cycle. This is handled by ${\tt{cond4}}$. 
    		
  	\end{itemize}
Thus, we have shown that obtaining a path from $t'''$ to $t'$ is covered 
by the conditions  ${\tt{cond}_1}, \dots, {\tt{cond}_4}$, and hence, 
a $\tpo_2 \cup \trf_2 \cup \cfr_2$ cycle.  
  	  \end{enumerate}
 
 Thus, we can think of  $\tpo_2 \cup \trf_2 \cup \cfr_2$ cycles as a result of the forbidden patterns described in ${\tt{cond}_i}$ $1 \leq i \leq 4$.  By construction of $\tau_1$ $\xrightarrow[]{\readact(t,t')}_{\readat{-}\sat}$ $\tau_2$, we ensure $\neg {\tt{cond}_1}, \dots, \neg {\tt{cond}_4}$. Hence, 
    $\tau_2 \models \readat$. 
\end{proof}
Define $\big[[ \sigma]\big]^{\readat{-}\sat}$ as the set of  traces  generated  using $\xrightarrow[]{}_{\readat{-}\sat}$ transitions, starting from an empty trace $\tau_{\o}$.

Consider a terminal trace $\tau$ generated by $\readat$ DPOR algorithm starting from the empty trace $\tau_\emptyset$.
That is, there is a sequence  
$\tau_0 \xrightarrow[]{}_{\readat{-}\sat} 
\tau_1 \xrightarrow[]{}_{\readat{-}\sat}  
\tau_2 \xrightarrow[]{}_{\readat{-}\sat} \dots 
 \xrightarrow[]{}_{\readat{-}\sat}  \tau_n$ with 
 $\tau_0=\tau_{\emptyset}$, 
 and $\tau_n=\tau$. 
 Since $\tau_{\emptyset}$ is a empty we have $\tau_{\emptyset} \models \readat$, 
 it follows by  
 Lemmas \ref{lem:ra-f3} and \ref{lem:ra-f4} that $\tau \models \readat$. 
 
 Hence, for each trace $\tau \in \big[[ \sigma]\big]^{\readat{-}\sat} $ we have $\tau \models \readat$.

\subsection{$\readat$ DPOR Completeness}
\label{rato:comp}
In this section, we show the completeness of the $\readat$ DPOR algorithm. More precisely, for the input program under $\readat$ for any terminating run $\rho \in Runs(\conf)$ and trace $\tau$ s.t. $ \rho \models \tau$, 
we show that $\explore(\readat, \tau_{\emptyset}, \epsilon)$ will produce a recursive visit $\explore(\readat, \tau', \pi)$ for some terminal $\tau'$, and $\pi$ where $\tau'=\tau$.   First, we give some definitions and auxiliary lemmas.

Let  $\pi = \alpha_{t_1} \alpha_{t_2} \dots \alpha_{t_n}$ be an observation sequence where  $\alpha_t = \beginact(p,t) \dots$ \plog{end}$(p,t)$.
 $\alpha_t$ is called an \emph{observable}, and is a sequence 
 of events from transaction $t$.  
Given $\pi$ and a trace $\tau$, we define 
$\tau \vdash_{\tt{G}} \pi$ to represent a sequence 
$ \tau_0 \xrightarrow[]{\alpha_{t_1}}_{\readat{-}\sat} 
 \tau_1  \xrightarrow[]{\alpha_{t_2}}_{\readat{-}\sat} \cdot \cdot \cdot \xrightarrow[]{\alpha_{t_n}}_{\readat{-}\sat} \tau_n $, where 
 $\tau_0 = \tau$ and 
$\pi = \alpha_{t_1} \alpha_{t_2} \dots \alpha_{t_n}$. Moreover, we define $\pi (\tau) :=  \tau_n$.

 $\tau$ is $terminal$ if there is no event left to execute $i.e$ $\succof\tau = \emptyset$. 
 We define $\tau \vdash_{\tt{G_T}} \pi$ to say that $(i)$ $\tau \vdash_{\tt{G}} \pi$ and $(ii)$ $\pi(\tau)$ is terminal.

\begin{definition}($p$-free and $t$-free observation sequences)
For a process  $p \in \mathbf{P}$, we say that an observation sequence $\pi$ is $p$-free if   all observables $\alpha_{t}$ in $\pi$ pertain to 
transactions $t$ not in $p$. 
For a transaction $t$ issued in $p$ we say that $\pi$ is $t$-free if $\pi$ is $p$-free. 
\end{definition}

\begin{definition}(Independent Observables)
For observables $\alpha_{t_1}$ and $\alpha_{t_2}$, we write $\alpha_{t_1} \sim  \alpha_{t_2}$ to represent they are independent. 
This means 
$(i)$ $t_1,t_2$ are transactions issued in different processes, that is, 
 $t_1$ is issed in $p_1$, $t_2$ is issued  in $p_2$, with $p_1 \neq p_2$,  
$(ii)$ no read event $r(p_2,t_2,x,v)$ of transaction $t_2$ reads from $t_1$, and 
$(iii)$ no read event $r(p_1,t_1,x,v)$ of transaction $t_1$ reads from $t_2$.	
Thus, $\neg(\alpha_{t_1} \sim  \alpha_{t_2})$ if either $t_1, t_2$ are issued in the same process, or there is a $\trf$ relation between $t_1, t_2$. That is, $\neg(\alpha_{t_1} \sim  \alpha_{t_2})$ iff $t_1 [\tpo \cup \trf]^+ t_2$ or  $t_2 [\tpo \cup \trf]^+ t_1$.

\end{definition}
 
 \begin{definition}(Independent Observation Sequences)
Observation sequences $\pi^1, \pi^2$ are called independent written $\pi^1 \sim \pi^2$ if there are observables $\alpha_{t_1}$, $\alpha_{t_2}$, and observation sequences $\pi'$ 
and $\pi''$ such that $\pi^1 = \pi' \cdot \alpha_{t_1} \cdot \alpha_{t_2} \cdot \pi''$, 
$\pi^2 = \pi' \cdot \alpha_{t_2} \cdot \alpha_{t_1} \cdot \pi''$, 
and $\alpha_{t_1} \sim \alpha_{t_2}$. 
In other words, we get $\pi^2$ from $\pi^1$ by swapping neighbouring independent observables corresponding to transactions $t_1$ and $t_2$. 
\end{definition}
We use $\approx$ to denote reflexive transitive closure of $\sim$.

\begin{definition}(Equivalent Traces)
For traces $\tau_1 = \langle \tran_1, \tpo_1, \trf_1, \cfr_1 \rangle$ and $\tau_2 = \langle \tran_2, \tpo_2, \trf_2, \cfr_2 \rangle$, we say $\tau_1, \tau_2$ are equivalent denoted $\tau_1 \equiv \tau_2$ if $\tran_1 = \tran_2$, $\tpo_1 = \tpo_2$, $\trf_1 = \trf_2$ and for all $t_1$,$t_2 \in \tran_1^{w,x}$ for all variables $x$, we have $t_1$ [$\tpo_1 \cup \trf_1 \cup \cfr_1^x$] $t_2$ iff $t_1$ [$\tpo_2 \cup \trf_2 \cup \cfr_2^x$] $t_2$.
	
\end{definition}

\begin{lemma}
$((\tau_1 \equiv \tau_2) \wedge \tau_1 \vdash_{\tt{G}} \alpha_t) \Rightarrow (\tau_2 \vdash_{\tt{G}} \alpha_t \wedge (\alpha_t(\tau_1) \equiv \alpha_t(\tau_2)))$.
\label{lemma ra-g-1}
\end{lemma}
\begin{proof}
Assume 	$\tau_1 \vdash_{\tt{G}} \alpha_t$ for an observable $\alpha_t$, and 
$\tau_1=(\tran_1, \tpo_1, \trf_1, \cfr_1)$, 
$\tau_2=(\tran_2, \tpo_2, \trf_2, \cfr_2)$ with $\tau_1 \equiv \tau_2$. Then we know that $\tran_1=\tran_2, \tpo_1=\tpo_2, \trf_2=\trf_1$. 
 
 Since $\tau_1 \vdash_{\tt{G}} \alpha_t$, let  $\tau_1 \xrightarrow[]{\alpha_t}_{\readat{-}\sat} \tau'$. 
 $\tau'=(\tran', \tpo', \trf', \cfr')$ where $\tran'=\tran_1 \cup \{t\}$, 
 $\tpo'=\tpo_1 \cup \{(t_1,t) \mid t_1 \in \tran_1$ is in the same process as $t\}$,  $\cfr_1 \subseteq \cfr'$ 
 and $\trf_1 \subseteq \trf'$. $\cfr'$ can contain $(t',t)$ for some 
 $t' \in \tran_1$, when $t',t \in {\tran'}^{w,x}$ for some variable $x$, based 
 on the trace semantics. In particular if we have 
 $t_1 {\trf'}^x t_2$ , $t_3 \in \tran^{wt,x}$ and $t_3 (\trf'\cup  \tpo') t_2$, then    
 $t_3 {\cfr'}^x t_1$.

 Since $\tau_1 \equiv \tau_2$, for any transactions $t', t'' $ in $\tran_1=\tran_2$, $t' [\tpo_1 \cup \cfr_1^x \cup \trf_1] t''$ iff $t' [\tpo_2 \cup \cfr_2^x \cup \trf_2] t''$ for all variables $x$. In particular, $t' \cfr_1^x t''$ iff $t' \cfr_2^x t''$ for all $x$. 
 Now, let us construct a trace $\tau''=(\tran', \tpo', \trf', \cfr'')$,
 where $\cfr''$ is the smallest set such that $\cfr_2 \subseteq \cfr''$ and 
  whenever $t_1 {\trf'}^x t_2$ , $\tau_3 \in \tran^{\wt,x}$ and $t_3 (\trf'\cup  \tpo') t_2$, then $t_3 {\cfr''}^x t_1$.

   Since $t' [\tpo_1 \cup \trf_1 \cup \cfr_1^x] t''$ iff $t' [\tpo_2 \cup \trf_2 \cup \cfr_2^x] t''$ for all $x$, and all $t', t'' \in \tran_1=\tran_2$, and $\cfr''$ is the smallest extension of $\cfr_2$  based 
   on the trace semantics, along with the fact that 
   $\tpo'=\tpo'', \trf'=\trf''$, we obtain for any two 
   transactions $t_1, t_2 \in \tran'=\tran''$, 
   $t_1 [\tpo' \cup \trf' \cup \cfr'] t_2$ iff  $t_1 [\tpo'' \cup \trf''
   \cup \cfr''] t_2$. This gives $\tau' \equiv \tau''$. 
    
      This also gives $\tau_2 \xrightarrow[]{\alpha_t}_{\readat{-}\sat} \tau''$,
  that is, $\tau_2 \vdash_{\tt{G}} \alpha_t$  and indeed $\alpha_t(\tau_1)=\tau' \equiv \tau''=\alpha_t(\tau_2)$.

\end{proof}

\begin{lemma}
If $\tau \vdash_{\tt{G}} \alpha_{t_1} \cdot \alpha_{t_2}$ and $\alpha_{t_1} {\sim} \alpha_{t_2}$ then $\tau \vdash_{\tt{G}} \alpha_{t_2} {\cdot} \alpha_{t_1}$ and $(\alpha_{t_1} {\cdot} \alpha_{t_2})(\tau) \equiv (\alpha_{t_2}{ \cdot} \alpha_{t_1})(\tau)$.
\label{lemma ra-g-2}
\end{lemma}
\begin{proof}
Let $\tau = \langle \tran , \tpo, \trf,\cfr \rangle$ be a trace and let $t_1$ be a transaction issued in process $p_1$, and $t_2$ be a transaction issued in process $p_2$, with $p_1 \neq p_2$. Assume $\tau \vdash_{\tt{G}} \alpha_{t_1} \cdot \alpha_{t_2}$, with $\alpha_{t_1} {\sim} \alpha_{t_2}$.

We consider the following cases.
\begin{itemize}
    \item [$\bullet$] $t_1,t_2 \notin \tran^{rd}$. In this case $\tau \vdash_{\tt{G}} \alpha_{t_2} \cdot \alpha_{t_1}$ holds trivially.
    
     $(\alpha_{t_1} \cdot \alpha_{t_2})(\tau) = (\alpha_{t_2} \cdot \alpha_{t_1})(\tau) = \langle \tran' , \tpo' , \trf, \cfr \rangle$, where $\tran' = \tran \cup \{t_1,t_2\}$ and $\tpo' = \tpo \cup \{ (t,t_1) | t \in p_1\} \cup \{(t',t_2) | t' \in p_2 \}$.
    \item [$\bullet$] $t_1 \in \tran^{wt}$, $t_1 \notin \tran^{rd}$ and $t_2 \in \tran^{rd} \cap \tran^{wt}$. 
    Let $\tau_1 = \alpha_{t_1}(\tau)$. 
    We know that $\tau_1 = \langle \tran_1 , \tpo_1 , \trf_1,\cfr_1 \rangle$, where 
    $\tran_1 = \tran \cup \{ t_1 \}$, and $\tpo_1 = \tpo \cup \{ (t,t_1) | t \in p_1\}$, with $\trf_1 = \trf$ since there are no read events in $t_1$, and $\cfr_1 = \cfr$ since by the trace semantics, when $ \trf$ remain same, there are no $\cfr$ edges to be added.
    
    Consider $\tau_2 = \alpha_{t_2}(\tau_1) = \langle \tran_2, \tpo_2, \trf_2, \cfr_2 \rangle$. 
    Since $t_2 \in \tran^{rd}$, we know that $t_2$ has read events.
    \smallskip

    Let there be $n$ read events $ev_1 \dots ev_n$ in transaction $t_2$, 
    reading from transactions $t'_1, \dots, t'_n$ on variables $x_1, \dots, x_n$.  
    Hence we get a sequence $\tau_1 \xrightarrow[]{\beginact(t_2)}_{\readat{-}\sat} \dots \tau'_{i_1} \xrightarrow[]{\readact(t_2,t'_1)}_{\readat{-}\sat}  \tau_{i_1}  \xrightarrow[]{}_{\readat{-}\sat}\dots  
    \tau'_{i_n} \xrightarrow[]{\readact(t_2,t'_n)}_{\readat{-}\sat} \tau_{i_n} \dots \xrightarrow[]{\commitact(t_2)}_{\readat{-}\sat} \tau_2$.
    Hence $\tran_2 = \tran_1 \cup \{t_2\}$, $\tpo_2 = \tpo_1 \cup \{(t',t_2) | t' \in p_2\}$, $\trf_2 = \trf_1 \cup (\trf_{i_1} \cup \trf_{i_2} \dots \cup \trf_{i_n})$, and $\cfr_2 = \cfr_1 \cup (\cfr_{i_1} \cup \cfr_{i_2} \dots \cup \cfr_{i_n})$.
    Since $\tau_1 \vdash_{\tt{G}} \alpha_{t_2}$,  for all $ev_i : 1 \le i \le n$, we have $t'_i \in \rbl(\tau_1,t_2,x_i)$.
    
    Since $\alpha_{t_1} \sim \alpha_{t_2}$, we know that for all $ev_i : 1 \le i \le n$, we have $t'_i \in \rbl(\tau,t_2,x_i)$. It follows that $\tau \vdash_{\tt{G}} \alpha_{t_2}$.
    Define $\tau_3 = \langle \tran_3, \tpo_3, \trf_2, \cfr_2 \rangle$, where $
    \tran_3 = \tran \cup \{t_2\}$ and $\tpo_3 = \tpo \cup \{(t',t_2) | t' \in p_2\}$. It follows that $\alpha_{t_2}(\tau) = \tau_3$, $\tau_3 \vdash_{\tt{G}} \alpha_{t_1}$ and $\tau_2 = \alpha_{t_1}(\tau_3) = \alpha_{t_2}(\alpha_{t_1}(\tau))$.
    \item [$\bullet$] $t_2 \in \tran^{wt}$ and $t_1 \in \tran^{rd} \cap \tran^{wt}$. Similar to previous case.
    \item [$\bullet$] $t_1, t_2 \in \tran^{rd} \cap \tran^{wt}$. 
    Let $\tau_1 = \alpha_{t_1}(\tau) = \langle \tran_1, \tpo_1, \trf_1, \cfr_1 \rangle$ and $\tau_2 = \alpha_{t_2}(\tau_1)$. Since $t_1 \in \tran^{rd}$, we know that $t_1$ has read events. 
    Let there be $n$ read events $ev_1 \dots ev_n$ in transaction $t_1$, reading from transactions $t'_1, \dots, t'_n$ on variables 
    $x_1, \dots, x_n$. 
    Hence we get the sequence $\tau \xrightarrow[]{\beginact(t_1)}_{\readat{-}\sat} \dots \tau'_{i_1} \xrightarrow[]{\readact(t_1,t'_1)}_{\readat{-}\sat}  \tau_{i_1} \xrightarrow[]{}_{\readat{-}\sat}\dots  
    \tau'_{i_n} \xrightarrow[]{\readact(t_1,t'_n)}_{\readat{-}\sat} \tau_{i_n} \dots \xrightarrow[]{\commitact(t_1)}_{\readat{-}\sat} \tau_1$.
    Since $t_2 \in \tran^{rd}$, we know that $t_2$ has read events. 
    Let there be  $m$ read events $ev'_1 \dots ev'_m$ in transaction $t_2$, reading from transactions $t''_1, \dots, t''_m$ on variables 
    $y_1, \dots, y_m$.  
    Then we get the sequence $\tau_1 \xrightarrow[]{\beginact(t_2)}_{\readat{-}\sat} \dots \tau'_{j_1} \xrightarrow[]{\readact(t_2,t''_1)}_{\readat{-}\sat}  \tau_{j_1}  \xrightarrow[]{}_{\readat{-}\sat} \dots 
    \tau'_{j_n} \xrightarrow[]{\readact(t_2,t''_m)}_{\readat{-}\sat} \tau_{j_n} \dots \xrightarrow[]{\commitact(t_2)}_{\readat{-}\sat} \tau_2$. 
    
    $\mathbf{Proving}$ $\mathtt{ \tau \vdash_{\tt{G}} \alpha_{t_2} {\cdot} \alpha_{t_1}}$
    
    Since $\tau_1 = \alpha_{t_1}(\tau)$, for all read events $ev_i$ such that $ev_i.trans = t_1$, we have $t'_i \in \rbl(\tau,t_1,x_i)$.
    Since $\tau_2 = \alpha_{t_2}(\tau_1)$, for all read events $ev'_j$ such that $ev'_j.trans = t_2$, we have $t''_j \in \rbl(\tau_1,t_2,y_j)$. 
    Since $\alpha_{t_1} \sim \alpha_{t_2}$, 
    it follows that, we have $t''_j \in \rbl(\tau,t_2,y_j)$ for all read events $ev'_j$ such that $ev'_j.trans = t_2$. Hence $\tau \vdash_{\tt{G}} \alpha_{t_2}$. 
    \smallskip

    Consider $\tau_3 = \alpha_{t_2}(\tau) = \langle \tran_3, \tpo_3, \trf_3, \cfr_3 \rangle $. To show that $\tau \vdash_{\tt{G}} \alpha_{t_2} \cdot \alpha_{t_1}$, 
    we show that for all read events $ev_i$ such that $ev_i.trans = t_1$,  $t'_i \in \rbl(\tau_3,t_1,x_i)$. We prove this using contradiction.
     Assume that $\exists t'_i$ s.t. $t'_i \notin \rbl(\tau_3,t_1,x_i)$.

    \medskip 
    \smallskip

    If $t'_i \notin \rbl(\tau_3,t_1,x_i)$, then 
    there is a $t'_k$ such that (i) $t'_k \in \vbl(\tau_3,t_1,x_i)$ and (ii) $t'_i [\tpo_3 \cup \trf_3 \cup \cfr_3]^+ t'_k$.
     
     \medskip 
     \smallskip

       If   (i) is true, that is, $t'_k \in \vbl(\tau_3,t_1,x_i)$, since $\alpha_{t_1} \sim \alpha_{t_2}$, we also have $t'_k \in \vbl(\tau,t_1,x_i)$. 
        This, combined with $t'_i \in \rbl(\tau,t_1,x_i)$, gives according to the trace semantics, $t'_k [\cfr_1] t'_i$.  Going back to (ii), 
    we can have $t'_i [\tpo_3 \cup \trf_3 \cup \cfr_3]^+ t'_k$ only if we observed one of the following:
    \begin{itemize}
        \item [(a)] There is a path from $t'_i$ to $t'_k$ in $\tau$; that is, 
        $t'_i [\tpo \cup \trf \cup \cfr]^+ t'_k$. This implies that $t'_i \notin \rbl(\tau,t_1,x_i)$ which is a contradiction.
        \item [(b)] There is no direct path from $t'_i$ to $t'_k$, however, 
        there are  transactions  $t''_j,t''_l$ such that, in $\tau$ we had  
        $ t''_j [\tpo \cup \trf \cup \cfr]^* t'_k$ and $t'_i [\tpo \cup \trf \cup \cfr]^* t''_l$, along with $t''_j \in \rbl(\tau,t_2,y_j)$,  $t''_l \in \vbl(\tau,t_2,y_j)$. Let  $ev'_j = r(p_2,t_2,y_j,v)$ be the read event which reads from $t''_j$, justifying 
        $t''_j \in \rbl(\tau,t_2,y_j)$. 
        
        \smallskip
                Since $t''_j [\tpo \cup \trf \cup \cfr]^* t'_k$ and $t'_i [\tpo \cup \trf \cup \cfr]^* t''_l$ it follows that $ t''_j [\tpo_1 \cup \trf_1 \cup \cfr_1]^* t'_k$, $t'_i [\tpo_1 \cup \trf_1 \cup \cfr_1]^* t''_l$. By our assumption, we have $t'_i \in \rbl(\tau,t_1,x_i)$ and $t'_k \in \vbl(\tau,t_1,x_i)$, which  gives us  $t'_k [\cfr_1^{x_i}] t'_i$ (already observed in the para before (i)). 
                
                Thus we get  $t''_j [\tpo_1 \cup \trf_1 \cup \cfr_1]^*
                t'_k [\cfr_1^{x_i}] t'_i [\tpo_1 \cup \trf_1 \cup \cfr_1]^*
                 t''_l$, which gives  $t''_j [\tpo_1 \cup \trf_1 \cup \cfr_1]^+ t''_l$. Hence $t''_j \notin \rbl(\tau_1,t_2,y_j)$, a  contradiction.

                 A similar argument for other cases.
    \end{itemize}
    Thus, we have $\neg(t'_i [\tpo_3 \cup \trf_3 \cup \cfr_3]^+ t'_k)$. This, for all $1 \leq i \leq n$, $t'_i \in \rbl(\tau_3,t_1,x_i)$. Thus, we now have $\tau \vdash_{\tt{G}} \alpha_{t_2} \cdot \alpha_{t_1}$. 
    It remains to show that $(\alpha_{t_1} {\cdot} \alpha_{t_2})(\tau) \equiv (\alpha_{t_2}{ \cdot} \alpha_{t_1})(\tau)$. 
    
    $\mathbf{Proving}$ $\mathtt{(\alpha_{t_1} {\cdot} \alpha_{t_2})(\tau) \equiv (\alpha_{t_2}{ \cdot} \alpha_{t_1})(\tau)}$
    
    Define $\tau_4 {=} \alpha_{t_1}(\tau_3) {=} \alpha_{t_1}.\alpha_{t_2}(\tau){=}
         \langle \tran_4, \tpo_4, \trf_4 , \cfr_4 \rangle$. Recall that 
    $\tau_2{=}$$\alpha_{t_2}.\alpha_{t_1}(\tau)$ =$
      \langle \tran_2,\tpo_2, \trf_2, \cfr_2 \rangle$.
    \smallskip

    To show that $(\alpha_{t_1} \cdot \alpha_{t_2})(\tau) \equiv (\alpha_{t_2} \cdot \alpha_{t_1})(\tau)$, we show that $\tran_2=\tran_4, \tpo_2=\tpo_4, \trf_2=\trf_4$, and, for any two transactions 
    $t'_k, t'_i \in \tran_2=\tran_4$,    $t'_k [\tpo_2 \cup \trf_2 \cup \cfr_2]^+ t'_i$ iff 
    $t'_k [\tpo_4 \cup \trf_4 \cup \cfr_4]^+ t'_i$. Of these, trivially, 
    $\tran_2=\tran_4, \tpo_2=\tpo_4$ follow. Since $\alpha_{t_2} \sim \alpha_{t_1}$, we have $\trf_2=\trf_4$. It remains to prove the last condition. 
    
    \medskip 
    
   Assume that $t'_i [\trf^{x_i}_1] t_1 \wedge t'_k \in \vbl(\tau_1,t_1,x_i)$, 
   where $\tau_1=\alpha_{t_1}(\tau)$. Then we have 
   $t'_i [\trf^{x_i}_1] t_1 \wedge t'_k \in \vbl(\tau,t_1,x_i)$, and by the trace semantics, we have $t'_k [\cfr_1] t'_i$, and hence 
   $t'_k [\cfr_2] t'_i$.  To obtain $\tau_4$, we execute $t_2$ first obtaining $\tau_3$ from $\tau$, and then $t_1$. We show that $t'_k [\tpo_4 \cup \trf_4 \cup \cfr_4]^+ t'_i$, proving $t'_k [\tpo_2 \cup \trf_2 \cup \cfr_2]^+ t'_i \Rightarrow
    t'_k [\tpo_4 \cup \trf_4 \cup \cfr_4]^+ t'_i$.

    \smallskip 
    
     If we have $t'_k \in \vbl(\tau_3,t_1, x_i)$, then we are done since we  have $t'_i \in \rbl(\tau,t_1,x_i)$, hence 
          $t'_i \in \rbl(\tau_3,t_1,x_i)$, thereby obtaining 
          $t'_k \cfr_3 t'_i$, and hence $t'_k \cfr_4 t'_i$.    
          
          \smallskip 
          
           Assume $t'_k \notin \vbl(\tau_3,t_1, x_i)$.   
   Assume that on executing $t_2$ from $\tau$, we have $t''_l \in \vbl(\tau, t_2, y_j)$, and let $t''_j [\trf^{y_j}_3] t_2$.      
     Then by the trace semantics,   $t''_l [\cfr_3] t''_j$ is a new edge which gets added.  
         When we execute $t_1$ next, assume  $t''_m \in \vbl(\tau_3,t_1,x_i)$.   
         Since we have $t'_i \in \rbl(\tau_3,t_1,x_i)$, we obtain $t''_m \cfr_4 t'_i$.

     \medskip 
     By assumption, $t'_k \neq t''_m$. However, $t'_k \in \vbl(\tau,t_1,x_i)$. \\
     $t'_k \notin  \vbl(\tau_3,t_1,x_i)$ points to some transaction that 
     blocked the visibility of $t'_k$ in $\tau_3$, by happening after $t'_k$, and which is in $\vbl(\tau_3,t_1,x_i)$.  
     
     \begin{enumerate}
     	\item This blocking transaction could be $t''_m$. If this is the case, we have $t'_k [\tpo_3 \cup \trf_3 \cup \cfr_3]^+ t''_m [\cfr_4] t'_i$, and hence  $t'_k [\tpo_4 \cup \trf_4 \cup \cfr_4]^+ t''_m [\cfr_4] t'_i$. 
     	\item The other possibility is that we have  $t''_j$
     	happens before $t''_m$, and $t''_l$ happens after $t'_k$ blocking 
     	$t'_k$ from being in $\vbl(\tau_3,t_1,x_i)$. That is,
     	     $t'_k [\tpo \cup \trf \cup \cfr]^* t''_l$ and 
     	     $t''_j [\tpo_4 \cup \trf_4 \cup \cfr_4]^* t''_m$.   
         Then we obtain $t'_k [\tpo_4 \cup \trf_4 \cup \cfr_4]^* t''_l [\cfr^{x_j}_4] t''_j [\tpo_4 \cup \trf_4 \cup \cfr_4]^* t''_m [\cfr_4] t'_i$. 
     
     \end{enumerate}
        
     Thus, we obtain $t'_k [\tpo_4 \cup \trf_4 \cup \cfr_4]^+ t'_i$ as desired.

\end{itemize}
 The converse direction, that is, whenever $t'_k [\tpo_4 \cup \trf_4 \cup \cfr_4]^+ t'_i$, we also have $t'_k [\tpo_2 \cup \trf_2 \cup \cfr_2]^+ t'_i$ is proved on similar lines. 
    
\end{proof}
\begin{lemma}
If $\tau_1 \equiv \tau_2$, $\alpha_{t_1} \sim \alpha_{t_2}$, and $\tau_1 \vdash_{\tt{G}} (\alpha_{t_1} \cdot \alpha_{t_2})$,  then (i) $\tau_2 \vdash_{\tt{G}} (\alpha_{t_2} \cdot \alpha_{t_1})$ and (ii) $(\alpha_{t_1} \cdot \alpha_{t_2})(\tau_1) \equiv (\alpha_{t_2} \cdot \alpha_{t_1})(\tau_2)$.
\label{lemma ra-g-3}
\end{lemma}
\begin{proof}
Follows from Lemma \ref{lemma ra-g-1} and Lemma \ref{lemma ra-g-2}.
\end{proof}

From Lemma \ref{lemma ra-g-3}, we get following lemma.
\begin{lemma}
If $\tau_1 \equiv \tau_2$, $\pi_{\tt{T^1}} \sim \pi_{\tt{T^2}}$, and $\tau_1 \vdash_{\tt{G}} \pi_{\tt{T^1}}$ then $\tau_2 \vdash_{\tt{G}} \pi_{\tt{T^2}}$ and $\pi_{\tt{T^1}}(\tau_1) \equiv \pi_{\tt{T^2}}(\tau_2)$.
\label{lemma ra-g-4}
\end{lemma}
 
\begin{lemma}
If $\tau \vdash_{\tt{G_T}} \pi_{\tt{T}}$, $\tau \vdash_{\tt{G}} \alpha_t$ then $\pi_{\tt{T}} = \pi_{\tt{T^1}} \cdot \alpha_t \cdot \pi_{\tt{T^2}}$ for some $\pi_{\tt{T^1}}$ and $\pi_{\tt{T^2}}$ where $\pi_{\tt{T^1}}$ is $t$-free.
\label{lemma ra-g-5}
\end{lemma}
Consider observables $\pi_{\tt{T}} = \alpha_{t_1},\alpha_{t_2} \dots \alpha_{t_n}$. 
We write $\pi_{\tt{T}} \lessapprox \pi'_{\tt{T}}$ to represent that $\pi'_{\tt{T}} = \pi_{\tt{T^0}} \cdot \alpha_{t_1} \cdot \pi_{\tt{T^1}} 
\cdot \alpha_{t_2} \cdot \pi_{\tt{T^2}} \dots \alpha_{t_n} \cdot \pi_{\tt{T^n}}$. 
In other words, $\pi_{\tt{T}}$ occurs as a non-contiguous subsequence in $\pi'_{\tt{T}}$. For such a $\pi_{\tt{T}}, \pi'_{\tt{T}}$, 
we define $\pi'_{\tt{T}} \oslash \pi_{\tt{T}} := \pi_{\tt{T^0}} \cdot \pi_{\tt{T^1}} \dots \pi_{\tt{T^n}}$. 
Since elements of $\pi_{\tt{T}}$ and $\pi'_{\tt{T}}$ are distinct, operation $\oslash$ is well defined. 
Let $\pi_{\tt{T}}[i]$ denote the $i$th observable 
in the observation sequence $\pi_{\tt{T}}$, and 
let  $|\pi_{\tt{T}}|$ denote  the number of observables 
in $\pi_{\tt{T}}$.

Let $\alpha_t = \pi_{\tt{T}}[i]$ for some $i: 1 \le i \le |\pi_{\tt{T}}|$. 
We define $\tt{Pre(\pi_{\tt{T}},\alpha_t)}$ as a subsequence $\pi'_{\tt{T}}$ of $\pi_{\tt{T}}$ such that  
(i) $\alpha_t \in \pi'_{\tt{T}}$,
(ii) $\alpha_{t_j} = \pi_{\tt{T}}[j] \in \pi'_{\tt{T}}$ for some $j: 1 \le j < i$ iff there exists $k: j < k \le i$ such that $\alpha_{t_k}=\pi_{\tt{T}}[k]  \in \pi'_{\tt{T}}$ and $\neg(\alpha_{t_j} \sim \alpha_{t_k})$. 
Thus, $\tt{Pre(\pi_{\tt{T}},\alpha_t)}$ consists of $\alpha_t$ and all 
$\alpha_{t'}$ appearing before $\alpha_t$ in $\pi_{\tt{T}}$ such that 
$t' [\tpo \cup \trf]^+ t$.

%
%
\begin{lemma}
If $\pi_{\tt{T^1}} = \tt{Pre(\pi_{\tt{T}},\alpha_t)}$ and $\pi_{\tt{T^2}} = \pi_{\tt{T}} \oslash \pi_{\tt{T^1}}$ then $\pi_{\tt{T}} \approx \pi_{\tt{T^1}} \cdot \pi_{\tt{T^2}}$.
\label{lemma ra-g-6}
\end{lemma}
\begin{proof}
The proof is trivial since we can always execute in order, 
the $[\tpo \cup \trf]^+$ predecessors of $\alpha_t$ from $\tt{Pre(\pi_{\tt{T}},\alpha_t)}$, then $\alpha_t$, then the  observables 
in $\tt{Pre(\pi_{\tt{T}},\alpha_t)}$ which are independent from $\alpha_t$, followed by the suffix of $\pi_{\tt{T}}$ after $\alpha_t$.

\end{proof}

\begin{lemma}
If $\pi_{\tt{T}}= \pi'_{\tt{T}} \cdot \pi''_T$ and $\alpha_t \in \pi'_{\tt{T}}$ then $\tt{Pre(\pi_{\tt{T}},\alpha_t)} = \tt{Pre(\pi'_{\tt{T}},\alpha_t)}$.
\label{lemma ra-g-7}
\end{lemma}
\begin{proof}
The proof is trivial once again, since 	all 
the transactions which are $[\tpo \cup \trf]^+ t$  are in the prefix  $\pi'_{\tt{T}}$.
\end{proof}

\begin{lemma}
If $\tau \vdash_{\tt{G}} \pi_{\tt{T}}$ then $\tau \vdash_{\tt{G_T}} \pi_{\tt{T}} \cdot \pi_{\tt{T^2}}$.
\label{lemma ra-g-8}
\end{lemma}
\begin{proof}
This simply follows from the fact that we can extend the observation sequence 
$\pi_{\tt{T}}$ to obtain a terminal configuration since the $\readat$ DPOR algorithm generates traces corresponding to terminating runs.   	
\end{proof}

\begin{lemma}
Consider a trace trace $\tau$ such that $\tau \models \readat$, $\tau \vdash_{\tt{G}} \pi_{\tt{T}} \cdot \alpha_t$. Let each read event $ev_i = r_i(p,t,x_i,v)$ in $\alpha_t$ read from some transaction $t_i$, and let $\pi_{\tt{T}}$ be $t$-free.  Then $\tau \vdash_{\tt{G}} \alpha'_t \cdot \pi_{\tt{T}}$, where $\alpha'_t$ is the same as  $\alpha_t$, with the exception that the sources of its read events can be different. That is, 
 each read event $ev_i = r_i(p,t,x_i,v) \in \alpha'_t$ can read from some transaction $t'_i \neq t_i$.
\label{lemma ra-g-9}
\end{lemma}
\begin{proof}
Let $\pi_{\tt{T}} = \alpha_{t_1} \alpha_{t_2} \dots \alpha_{t_n}$ and $\tau_0 \xrightarrow[]{\alpha_{t_1}}_{\readat{-}\sat} \tau_1 \xrightarrow[]{\alpha_{t_2}}_{\readat{-}\sat} \dots \xrightarrow[]{\alpha_{t_n}}_{\readat{-}\sat} \tau_n \xrightarrow[]{\alpha_t}_{\readat{-}\sat} \tau_{n+1}$, where $\tau_0 = \tau$. Let $\tau_i = \langle \tran_i, \tpo_i, \trf_i, \cfr_i \rangle$. 
Let there be  $m$ read events in transaction $t$. Keeping in mind what we want to prove, where we want to execute $t$ first followed by $\pi_{\tt{T}}$, and obtain an execution $\alpha'_t \pi_{\tt{T}}$, we do the following.

For each read event $ev_i = r_i(p,t,x_i,v)$ we define $t'_i \in \tran^{wt,x_i}$ such that 
$(i)$ $ t'_i \in \rbl(\tau_0,t,x_i)$, and 
$(ii)$ there is no $t''_i \in \rbl(\tau_0,t,x_i)$ where $t'_i [\tpo_n \cup \trf_n \cup \cfr_n] t''_i$.  Note that this is possible since 
 $\pi_{\tt{T}}$ is $t$-free, so all the observables $\alpha_{t_i}$ occurring in $\pi_{\tt{T}}$ are such that $t_i$ is issued in a process other than that of $t$. Thus, when $\alpha_t$ is enabled, we can choose any of the 
 writes done earlier,  this fact is consistent with the $\readat$ semantics, since from Lemmas \ref{lem:f3}-\ref{lem:f4} we know $\tau_n \models \readat$.
Since $\tau_n \models \readat$ such a $t'_i \in  \rbl(\tau_0,t,x_i)$ exists for each $ev_i$.

\medskip 
Next we define a sequence of traces which can give the execution 
$\alpha'_t \pi_{\tt{T}}$. 
Define a sequence of traces $\tau'_0, \tau'_1, \dots \tau'_n$ where $\tau'_j = \langle \tran'_j, \tpo'_j, \trf'_j, \cfr'_j \rangle$ is such that 
\begin{enumerate}
	\item $\tau \vdash_{\tt{G}} {\alpha'_t}$ and $\alpha'_t(\tau)=\tau'_0$,
	\item $\tau'_j \vdash_{\tt{G}} \alpha_{t_{j+1}}$ and  $\tau'_{j+1} = \alpha_{t_{j+1}}(\tau'_j)$ for all $j: 0 \le j \le n$.
\end{enumerate}

\begin{itemize}
\item  Define $\tran'_i = \tran_i \cup \{t\}$ for all $0 \leq i \leq n$, 
\item 
Define $\tpo' = \{t' [\tpo] t \mid t'$ is a transaction in $\tau$, in the same process as $t\}$, and $\tpo'_i = \tpo_i \cup \tpo'$ for all $0 \leq i \leq n$, 

\item  $\trf' = \bigcup_{i:1\le i \le m} t'_i [\trf^{x_i}] t$ (reads from relation corresponding to each read event $ev_i \in \alpha'_t$), and 
 $\trf'_i = \trf_i \cup \trf'$ for all $i : 1 \le i \le n$, 
 
\item  $\cfr' = \bigcup_{j:1\le j \le m} \cfr'_j$ (each  $\cfr'_j$ corresponds to the updated $\cfr$ relation because of the read transition $\xrightarrow[]{\readact(t,t'_j)}_{\readat{-}\sat}$). 	
\smallskip 

For $1 \leq i \leq n$, we define $\cfr'_i$ inductively as $\cfr'_i = \cfr_i \cup \cfr'$ and show that $\tau'_i \vdash_{\tt{G}} \alpha_{t_{i+1}}$ and  $\tau'_{i+1} = \alpha_{t_{i+1}}(\tau'_i)$ holds good. 
 First define $\cfr'_0 = \cfr_0 \cup \cfr'$. Then define  $\cfr'_i$ to be the coherence order corresponding to $\alpha_{t_{i}}(\tau'_{i-1})$  where $\tau'_{i-1}=\langle \tran'_{{i-1}}, \tpo_{i-1}, \trf_{i-1}, \cfr_{i-1} \rangle$ for all $i: 1 \le i \le n$. 

 \end{itemize}

\medskip 
\noindent{\bf{Base case}}. The base case $\tau \vdash_{\tt{G}} {\alpha'_t}$ and $\alpha'_t(\tau)=\tau'_0$,
   holds trivially, by construction. 

\smallskip 

\noindent{\bf{Inductive hypothesis}}. Assume that $\tau'_j \vdash_{\tt{G}} \alpha_{t_{j+1}}$ and  $\tau'_{j+1} = \alpha_{t_{j+1}}(\tau'_j)$ for $0 \leq j \leq i-1$. 

\smallskip 

 We have to prove that 
$\tau'_i \vdash_{\tt{G}} \alpha_{t_{i+1}}$ and  $\tau'_{i+1} = \alpha_{t_{i+1}}(\tau'_i)$. 
\begin{enumerate}
	\item If $t_{i+1} \in \tran^{wt}$ and $t_{i+1} \notin \tran^{rd}$,  then the proof holds trivially from the inductive hypothesis. 
	\item Consider now $t_{i+1} \in \tran^{rd}$. Consider a read event 
	$ev^{i+1} = r(p,t_{i+1},x,v) \in \alpha_{t_{i+1}}$ which was reading from 
	transaction $t_x$ in $\pi_{\tt{T}}$. That is, we had $t_x \in \rbl(\tau_i, t_{i+1},x)$. 	If $t_x \in \rbl(\tau'_i, t_{i+1},x)$, then we are done, since we can simply extend the run from the inductive hypothesis. 
	
	\smallskip 
	
	Assume otherwise. That is, $t_x \notin \rbl(\tau'_i, t_{i+1},x)$.

	Since  $t_x$  $\in$   $\rbl(\tau_i, t_{i+1},x)$, we know that there are some blocking transactions in the new path which prevents $t_x$ from being readable. Basically, we have the blocking transactions since $t$ is moved before $\pi_{\tt{T}}$. 	
	
	\begin{itemize}
		\item Consider $tr_1 \in \vbl(\tau,t,y)$ for some variable $y$ such that,  in $\pi_{\tt{T}}.\alpha_t$, we had $t_x [\tpo_i \cup \trf_i \cup \cfr_i]^* tr_1$. This path
		 is possible since $\alpha_t$ comes last in in $\pi_{\tt{T}}.\alpha_t$,  after  $\alpha_{t_{i+1}}$. 
		\item  Consider $tr_2 \in \vbl(\tau_i,t_{i+1},x)$.  Since $t_x$ $\in$ $\rbl(\tau_i, t_{i+1},x)$, we have  $tr_2 [\cfr^x_{i+1}] t_x$. 
Now, in $\alpha'_t.\pi_{\tt{T}}$, assume $tr_3 [\trf'^y_0] t$, 
such that we have a path from $tr_3$ to $t_x$, as  $tr_3 [\tpo_i \cup \trf_i \cup \cfr_i]^* tr_2$. 
	\end{itemize}
	  		  Hence it follows that $tr_3 [\tpo_i \cup \trf_i \cup \cfr_i]^* tr_2$ 
$[\cfr^x_{i+1}] t_x$ $[\tpo_i \cup \trf_i \cup \cfr_i]^* tr_1$. 
Hence we have $tr_3 [\tpo_{i+1} \cup \trf_{i+1} \cup \cfr_{i+1}]^+ tr_1$, i.e 
$tr_3 [\tpo_n \cup \trf_n \cup \cfr_n]^+ tr_1$. This leads to the contradiction since $tr_1 \in \vbl(\tau,t,y)$ and  $tr_3 [\trf'^y_0] t$.  

Hence, $t_x$  $\in$   $\rbl(\tau_i, t_{i+1},x)$ and we can extend the run 
from the inductive hypothesis obtaining $\tau'_j \vdash_{\tt{G}} \alpha_{t_{j+1}}$,  $\tau'_{j+1} = \alpha_{t_{j+1}}(\tau'_j)$,  
for $0 \leq j \leq i-1$,
and 
$\tau'_i \vdash_{\tt{G}} \alpha_{t_{i+1}}$.

\end{enumerate}

Now we prove that $\cfr'_{i+1} = \cfr_{i+1} \cup \cfr'$. 
Assume we have $(tr1 , tr2) \in \cfr_{i+1}$. We show that $(tr1 , tr2) \in \cfr'_{i+1}$. We have the following cases:

\begin{itemize}
    \item[(i)] $(tr1 , tr2) \in \cfr_i$. From the inductive hypothesis it follows that $(tr1 , tr2) \in \cfr'_{i}$ and hence in $\cfr'_{i+1}$.
    \item[(ii)] $(tr1 , tr2) \notin \cfr_i$ and some read event $ev = r(p,t_{i+1},x,v)$ from $\alpha_{t_{i+1}}$ reads from $tr2$.  That is, $tr1,tr2 \in \rbl(\tau_i, t_{i+1},x)$ and $tr1 \in \vbl(\tau_i,t_{i+1},x)$.
    Assume $(tr1 , tr2) \notin \cfr'_{i+1}$.
    Since $(tr1, tr2) \notin \cfr'_{i+1}$, it follows that there exists  $tr3 \in \vbl(\tau'_i,t_{i+1},x)$ such that $tr2 [\tpo'_i \cup \trf'_i \cup \cfr'_i] tr3$. This means $tr2 \in \rbl(\tau_i,t_{i+1},x)$, but 
    $tr2 \notin \rbl(\tau'_i,t_{i+1},x)$. However,  as proved earlier, this leads to a contradiction. 

\end{itemize}
Thus, we have shown that $\tau \vdash_{\tt{G}} \alpha'_t \alpha_{t_1} \dots \alpha_{t_i}$ 
is such that $\alpha'_t \alpha_{t_1} \dots \alpha_{t_i}(\tau)=\tau'_i$ for all 
$0 \leq i \leq n$. When $i=n$ we obtain 
$\alpha'_t \pi_{\tt{T}}(\tau)=\tau'_n$, or $\tau \vdash_{\tt{G}} \alpha'_t \pi_{\tt{T}}$. 
\end{proof}


\begin{definition}[Linearization of  a Trace]
A observation sequence $\pi_{\tt{T}}$ is a \emph{linearization} of a trace $\tau = \langle \tran, \tpo , \trf, \cfr \rangle$ if $(i)$ $\pi_{\tt{T}}$ has the same transactions as $\tau$ and $(ii)$ $\pi_{\tt{T}}$ follows the ($\tpo \cup \trf)$ relation.
	
\end{definition}

We say that our $\readat$ DPOR algorithm \emph{generates an observation sequence} $\pi_{\tt{T}}$ from state $\tau$ where $\tau$ is a trace 
if 
 it invokes $\explore$ with parameters $\readat,\tau, \pi'$, where $\pi'$ is a \emph{linearization} of $\tau$, and generates  a sequence of recursive calls to $\explore$ resulting in $\pi_{\tt{T}}$.

\begin{lemma}
If $\tau \vdash_{\tt{G_T}} \pi_{\tt{T}}$ then, 
the DPOR algorithm  generates $\pi'_{\tt{T}}$ from state $\tau$ 
for some $\pi'_{\tt{T}} \approx \pi_{\tt{T}}$.
\label{lemma ra-g-10}
\end{lemma}
\begin{proof}
We use induction on $|\pi_{\tt{T}}|$. If $\tau$ is $terminal$, then the  proof is trivial.
Assume that we have $\tau \vdash_{\tt{G_T}} \pi_{\tt{T}}$.
Assume that $\tau \vdash_{\tt{G}} \alpha_t$. It follows that $\tau \vdash_{\tt{G}} \alpha_t$.
Using Lemma \ref{lemma ra-g-5} we get $\pi_{\tt{T}} = \pi_{\tt{T^1}} \cdot \alpha_t \cdot \pi_{\tt{T^2}}$, where $\pi_{\tt{T^1}}$ is $t$-free. 
We consider the following two cases:
\begin{itemize}
    \item In the first case, we assume that all read events in $t$ read from 
    transactions in $\tau$. So in this case, $\pi_{\tt{T^1}}$ can be empty. 
        Assume there are $n$ read events in $t$ and each read event $ev_i = r(p,t,x_i,v)$ 
    reads from transactions $t_i \in \tau$ for all $i: 1 \le i \le n$. 
In this case the DPOR algorithm will let each read event $ev_i$ read from all possible transactions $t' \in \rbl(\tau,t,x_i)$, including $t_i$.
    \item In the second case, assume that there exists at least one read event $ev' = r(p,t,x,v) \in \alpha_t$ which reads from a transaction $t' \in \pi_{\tt{T^1}}$ (hence, $\pi_{\tt{T^1}}$ is non empty).
    \smallskip 
     
    From Lemma \ref{lemma ra-g-9}, we know that $\tau
    \vdash_{\tt{G}} \alpha'_t \cdot \pi_{\tt{T^1}}$. 
    From Lemma \ref{lemma ra-g-8}, it follows that $\tau
    \vdash_{\tt{G_T}} \alpha'_t \cdot \pi_{\tt{T^1}} \cdot \pi_{\tt{T^4}}$, for some
    $\pi_{\tt{T^4}}$.
    Hence there exists $\tau'$ such that 
    $\tau' = \alpha'_t (\tau)$ and $\tau' \vdash_{\tt{G_T}} \pi_{\tt{T^1}} \cdot
    \pi_{\tt{T^4}}$. 
    
    Since $|\pi_{\tt{T^1}} \cdot
    \pi_{\tt{T^4}}| < |\pi_{\tt{T}}$, we can use the inductive hypothesis. It follows that the DPOR algorithm generates from state $\tau'$, 
    the observation sequence  $\pi_{\tt{T^5}}$ such that $\pi_{\tt{T^5}} \approx \pi_{\tt{T^1}} 
    \cdot \pi_{\tt{T^4}}$. 
    
    \smallskip 
    
    Let  $\pi_{\tt{T^3}}.\alpha_t = \tt{Pre(\pi_{\tt{T^1}}, \alpha_t)}$. Then  $\pi_{\tt{T^5}}
    \approx \pi_{\tt{T^1}} \cdot \pi_{\tt{T^4}}$ implies that  $\pi_{\tt{T^3}}
    \lessapprox \pi_{\tt{T^5}}$ (by applying Lemma \ref{lemma ra-g-7}).
    Let $\pi_{\tt{T^6}} \approx \pi_{\tt{T}} \oslash (\pi_{\tt{T^3}} \cdot \alpha_t)$.
    By applying Lemma \ref{lemma ra-g-6}, we get $\pi_{\tt{T}} \approx \pi_{\tt{T^3}} \cdot \alpha_t \cdot \pi_{\tt{T^6}}$.
    Since $\tau \vdash_{\tt{G_T}} \pi_{\tt{T}}$ and 
    $\pi_{\tt{T}} \approx \pi_{\tt{T^3}} \cdot \alpha_t \cdot \pi_{\tt{T^6}}$,
    by applying Lemma \ref{lemma ra-g-3}, we get $\tau
    \vdash_{\tt{G_T}} \pi_{\tt{T^3}} \cdot \alpha_t \cdot \pi_{\tt{T^6}}$.
    Let $\tau' = (\pi_{\tt{T^3}} \cdot \alpha_t) (\tau)$. 
    Since $\tau \vdash_{\tt{G_T}} \pi_{\tt{T^3}} 
    \cdot \alpha_t \cdot \pi_{\tt{T^6}}$, we have $\tau'
    \vdash_{\tt{G_T}} \pi_{\tt{T^6}}$. 
    From inductive hypothesis it follows that $\tau'$ generates $\pi_{\tt{T^7}}$ such that $\pi_{\tt{T^7}} \approx \pi_{\tt{T^6}}$.\\
    In other words, $\tau$ generates $\pi_{\tt{T^3}} 
    \cdot \alpha_t \cdot \pi_{\tt{T^7}}$ where $\pi_{\tt{T}} \approx \pi_{\tt{T^3}} 
    \cdot \alpha_t \cdot \pi_{\tt{T^7}}$.
\end{itemize}

\end{proof}

\newpage
\centerline{\bf{Read Committed }} 
\section{Properties of the Trace Semantics}
\label{app:rcsatcons}
\begin{figure}[H]
        \begin{tikzpicture}[node distance=10mm , thick, main/.style= {draw,circle},];
        
        \node[] (1t') [] {$t'$};
        \node[] (1t1) [right= 20mm of 1t'] {$t1$};
         \node[] (111) [below=20mm of 1t1] {};
        \node[] (1t) [left=8mm of 111] {$t$};
        \node[] (1cap) [below=5mm of 1t] {$\tt{Cond1}$};

        \draw[->,black,line width=1.2pt] (1t') -- node[above] {$(\tpo \cup \trf \cup \cfr)^+$} (1t1);
        \draw[->,black,line width=1.2pt] (1t1) -- node[right] {$\trf_{\tto\;\tt{before\; event}}$ } (1t);
        \draw[dashed , ->,blue,line width=1.2pt] (1t') -- node[left=1mm] {$\trf$} (1t);
        \draw[dashed , ->,red,line width=1.2pt] (1t1) to [out=90, in=90] node[above] {$\cfr$} (1t');
        \end{tikzpicture}
        \caption{Readability for $\readc$}
    \end{figure}

\subsection{Readability and Visibility for $\readc$}
\label{rcom:rv}
 For $X=\readc$, we define $\rbl(\tau^t_X, t, x)$ 
as the set of all transactions  $t' \in \tran^{\wt,x}$ provided we do not have a transaction $t1 \in \tran^{\wt,x}$ such that the following is true. 
Assume  $\beta~\tto~\alpha$ are two read events in $t$ such that $\beta$ reads from $t1$, $\alpha$ (current read event) reads from $t'$ and 
$t' ~(\co~\cup \cfr)^+~t1$. Note that having such a $t1$ induces  
$t1~\cfr~t'$ and a $\readccyc$.
Call ${\tt{Cond1}}$ the existence of such a $t1$.

After adding $t'~\trf~t$, we must check that 
there are no consistency violations.  
The check set $\vbl(\tau^t_X, t,x)$ is defined as the set of transactions which turn 	``sensitive''  
on adding the new  edge  $t'~\trf~t$. 
Unless  appropriate edges are added involving 
these sensitive transactions, we may get consistency violating cycles
in the resultant trace.
Let $\tau^{tt'}$ denote the trace obtained by adding the new transaction $t$ and the edge $t'~\trf~t$ to trace $\tau$. Now, we 
identify $\vbl(\tau^t_X, t,x)$ and the edges which must be added to $\tau^{tt'}$ to obtain a consistent extended trace.

For $X=\readc$, $\vbl(\tau^t_X,t,x)=\{t1  \mid$ there is a read event $\beta$ in $t$ reading from $t1$, 
$\beta~\tto~\alpha$, and $\alpha$ reads from $t'\}$. The newly added $t'~\rf~t$ is due to this $\alpha$. 
 This necessitates 
adding $t1~\cfr~t'$ to preserve consistency. 	Then $\extend_X(\tau^{tt'})$ contains $\{(t1,t') \mid t1 \in \vbl(\tau^t_X, t,x)\}$. 

\begin{lemma}
    Given trace $\tau$, a transaction $t$ and a variable $x$, we can construct the set  $\rbl(\tau, t, x)$ and $\vbl(\tau, t, x)$ in polynomial time.
    \label{lemma:rc-poly}
\end{lemma}
\begin{proof}
    We prove that set $\rbl(\tau, t, x)$ can be computed in polynomial time ($O(|\tran|^3)$ time), by designing an algorithm to generate set $\rbl(\tau, t, x)$. The algorithm consists of the following  steps:
    \begin{itemize}
        \item[(i)] First we compute the  transitive closure of the relations
        $[\tpo \cup \trf \cup \cfr]$, i.e, we compute $[\tpo \cup \trf \cup \cfr ]^+$.  We can use the Floyd-Warshall algorithm \cite{10.5555/1614191} to compute the transitive closure. This will take $O(|\tran|^3)$ time.
        \item[(ii)] We compute the set $\tran'=\{t'\mid t' [\trf] t\}$. This takes $O(|\tran|)$ time.
        
        \item[(iii)] For each transaction $t' \in \tran^{wt,x}$, we perform following checks:
        \begin{itemize}
            \item Check $\mathtt{cond 1 :} $ Check whether there exists a transaction $t'' \in \tran' \cap \tran^{wt,x}$            
            with $t'$ $[\tpo  \cup \trf \cup \cfr]^+$ $t''$. 
            If there is no such $t''$ then $t' \in \rbl(\tau , t, x)$. 
            If we find such $t''$, then $t' \notin \rbl(\tau , t, x)$. 
            This step will take $O(|\tran|^2)$ time.
        \end{itemize}
    \end{itemize}
    Similarly, we can compute $\vbl(\tau, t, x)$ in polynomial time. This completes the proof.
\end{proof}

\subsection{Properties of the Trace Semantics}
\label{rcom:prop}
\begin{lemma}
    If $\tau_1 \models \readc$  and $\tau_1$ $\xrightarrow[]{\beginact(t)}_{\readc{-}\sat}$ $\tau_2$ then $\tau_2 \models \readc$.
    \label{lem:rcom-f3}
\end{lemma}
\begin{proof}
The proof follows trivially since we do not change the  	reads from and coherence order relations, and the transaction $t$ added to the partial order has no successors. Thus, if $\tau_1 \models \readc$, 
so does $\tau_2$.  
	\end{proof}

\begin{lemma}
    If $\tau_1 \models \readc$  and $\tau_1$ $\xrightarrow[]{\readact(t,t')}_{\readc{-}\sat}$ $\tau_2$ then $\tau_2 \models \readc$.
    \label{lem:rcom-f4}
\end{lemma}

\begin{proof}
    Let $\tau_1 = \langle \tran_{1} , \tpo_1 , \trf_1, \cfr_1 \rangle$, $\tau_2 = \langle \tran_{2} , \tpo_2 , \trf_2, \cfr_2 \rangle$, and $ev$ = $r(p,t,x,v)$ be a read event in transaction $t \in \tran_1$. Suppose $\tau_2 \nvDash \readc$, and  that $\trf_{2\tto{-}before}$ is cyclic. Since $\tau_1 \models \readc$, and $t$ has no outgoing edges, 
   on adding $(t',t) \in \trf_2$, 
    it follows that the cycle in $\tau_2$ is as a result 
    of the newly added $\cfr_2$ edges. 
    
    We prove that 
    on adding $(t',t) \in \trf_2$, 
        such cycles are possible iff at least one of ${\tt{cond}_1}, \dots, {\tt{cond}_4}$ are true. 
    
 \subsection*{${\tt{cond1}}$ induces $(\tpo\cup\trf\cup\cfr)$ cyclicity }
   \label{sec:rc-cyc1}
    This direction is easy to see : assume 
    one of ${\tt{cond1}}$ is true; then, as already 
    argued in the main paper, we will get a $(\tpo\cup\trf\cup\cfr)$ cycle on adding the $\trf_2$ edge from $t'$ to $t$ and we are done.

 \subsection*{$(\tpo\cup\trf\cup\cfr)$ cyclicity implies ${\tt{cond1}}$ }
 \label{sec:rc-cyc2}
  For the converse direction, assume that we add the $\trf_2$ edge from $t'$ to $t$ and obtain a $(\tpo\cup\trf\cup\cfr)$ cycle in $\tau_2$. We now argue that this cycle has been formed because
 ${\tt{cond}_1}$ is true.  
 
 First, note that $t$ has no outgoing edges in $\tau_2$ since 
 it is the current transaction being executed. Also, 
 we know that $\tau_1$ has no $(\tpo_1\cup\trf_1\cup\cfr_2)$ cycles. 
 Thus, the cyclicity of $(\tpo_2\cup\trf_2\cup\cfr_2)$ is induced by the 
 newly added $\trf_2$ edge as well as the newly added 
  $\cfr_2$ edges. 
  Adding the $\trf_2$ edge  to $\tau_1$ does not induce any cycle since $t$ has no outgoing edges. 
  Let's analyze the 
  $\cfr_2$ edges added, which induce cycles, and argue that ${\tt{cond1}}$ will be true.   
  
  \begin{enumerate}
  	\item We add  $\cfr_2$ edges from $t'' \in \vbl(\tau_1, t,x)$ to $t'$. 
  	For  these edges to induce a cycle, we should have a path from $t'$ to $t''$ in $\tau_1$. That is, we have $t' [\tpo\cup\trf\cup\cfr]^+ t''$. This is captured by ${\tt{cond1}}$.  
  	
\end{enumerate}
 
 Thus, we can think of  $\tpo\cup\trf\cup\cfr$ cycles due to the forbidden pattern described in ${\tt{cond}_i}$.  By construction of $\tau_1$ $\xrightarrow[]{\readact(t,t')}_{\readc{-}\sat}$ $\tau_2$, we ensure $\neg {\tt{cond}_1}$. 
 Hence, 
    $\tau_2 \models \readc$. 
    
\end{proof}

Define $\big[[ \sigma]\big]^{\readc{-}\sat}$ as the set of  traces  generated  using $\xrightarrow[]{}_{\readc{-}\sat}$ transitions, starting from an empty trace $\tau_{\o}$.

Consider a terminal trace $\tau$ generated by $\readc$ DPOR algorithm starting from the empty trace $\tau_\emptyset$.
That is, there is a sequence  
$\tau_0 \xrightarrow[]{}_{\readc{-}\sat} 
\tau_1 \xrightarrow[]{}_{\readc{-}\sat}  
\tau_2 \xrightarrow[]{}_{\readc{-}\sat} \dots 
 \xrightarrow[]{}_{\readc{-}\sat}  \tau_n$ with 
 $\tau_0=\tau_{\emptyset}$, 
 and $\tau_n=\tau$. 
 Since $\tau_{\emptyset}$ is a empty we have $\tau_{\emptyset} \models \readc$, 
 it follows by  
 Lemmas \ref{lem:rcom-f3} and \ref{lem:rcom-f4} that $\tau \models \readc$. 
 
 Hence, for each trace $\tau \in \big[[ \sigma]\big]^{\readc{-}\sat} $ we have $\tau \models \readc$.

\subsection{$\readc$ DPOR Completeness}
\label{rcom:comp}
In this section, we show the completeness of the $\readc$ DPOR algorithm. More precisely, for the input program under $\readc$ for any terminating run $\rho \in Runs(\conf)$ and trace $\tau$ s.t. $ \rho \models \tau$, 
we show that $\explore(\readc, \tau_{\emptyset}, \epsilon)$ will produce a recursive visit $\explore(\readc, \tau', \pi)$ for some terminal $\tau'$, and $\pi$ where $\tau'=\tau$.   First, we give some definitions and auxiliary lemmas.

Let  $\pi = \alpha_{t_1} \alpha_{t_2} \dots \alpha_{t_n}$ be an observation sequence where  $\alpha_t = \beginact(p,t) \dots$ \plog{end}$(p,t)$.
 $\alpha_t$ is called an \emph{observable}, and is a sequence 
 of events from transaction $t$.  
Given $\pi$ and a trace $\tau$, we define 
$\tau \vdash_{\tt{G}} \pi$ to represent a sequence 
$ \tau_0 \xrightarrow[]{\alpha_{t_1}}_{\readc{-}\sat} 
 \tau_1  \xrightarrow[]{\alpha_{t_2}}_{\readc{-}\sat} \cdot \cdot \cdot \xrightarrow[]{\alpha_{t_n}}_{\readc{-}\sat} \tau_n $, where 
 $\tau_0 = \tau$ and 
$\pi = \alpha_{t_1} \alpha_{t_2} \dots \alpha_{t_n}$. Moreover, we define $\pi (\tau) :=  \tau_n$.

 $\tau$ is $terminal$ if there is no event left to execute $i.e$ $\succof\tau = \emptyset$. 
 We define $\tau \vdash_{\tt{G_T}} \pi$ to say that $(i)$ $\tau \vdash_{\tt{G}} \pi$ and $(ii)$ $\pi(\tau)$ is terminal.

\begin{definition}($p$-free and $t$-free observation sequences)
For a process  $p \in \mathbf{P}$, we say that an observation sequence $\pi$ is $p$-free if   all observables $\alpha_{t}$ in $\pi$ pertain to 
transactions $t$ not in $p$. 
For a transaction $t$ issued in $p$ we say that $\pi$ is $t$-free if $\pi$ is $p$-free. 
\end{definition}

\begin{definition}(Independent Observables)
For observables $\alpha_{t_1}$ and $\alpha_{t_2}$, we write $\alpha_{t_1} \sim  \alpha_{t_2}$ to represent they are independent. 
This means 
$(i)$ $t_1,t_2$ are transactions issued in different processes, that is, 
 $t_1$ is issed in $p_1$, $t_2$ is issued  in $p_2$, with $p_1 \neq p_2$,  
$(ii)$ no read event $r(p_2,t_2,x,v)$ of transaction $t_2$ reads from $t_1$, and 
$(iii)$ no read event $r(p_1,t_1,x,v)$ of transaction $t_1$ reads from $t_2$.	
Thus, $\neg(\alpha_{t_1} \sim  \alpha_{t_2})$ if either $t_1, t_2$ are issued in the same process, or there is a $\trf$ relation between $t_1, t_2$. That is, $\neg(\alpha_{t_1} \sim  \alpha_{t_2})$ iff $t_1 [\tpo \cup \trf]^+ t_2$ or  $t_2 [\tpo \cup \trf]^+ t_1$.

\end{definition}
 
 \begin{definition}(Independent Observation Sequences)
Observation sequences $\pi^1, \pi^2$ are called independent written $\pi^1 \sim \pi^2$ if there are observables $\alpha_{t_1}$, $\alpha_{t_2}$, and observation sequences $\pi'$ 
and $\pi''$ such that $\pi^1 = \pi' \cdot \alpha_{t_1} \cdot \alpha_{t_2} \cdot \pi''$, 
$\pi^2 = \pi' \cdot \alpha_{t_2} \cdot \alpha_{t_1} \cdot \pi''$, 
and $\alpha_{t_1} \sim \alpha_{t_2}$. 
In other words, we get $\pi^2$ from $\pi^1$ by swapping neighbouring independent observables corresponding to transactions $t_1$ and $t_2$. 
\end{definition}
We use $\approx$ to denote reflexive transitive closure of $\sim$.

\begin{definition}(Equivalent Traces)
For traces $\tau_1 = \langle \tran_1, \tpo_1, \trf_1, \cfr_1 \rangle$ and $\tau_2 = \langle \tran_2, \tpo_2, \trf_2, \cfr_2 \rangle$, we say $\tau_1, \tau_2$ are equivalent denoted $\tau_1 \equiv \tau_2$ if $\tran_1 = \tran_2$, $\tpo_1 = \tpo_2$, $\trf_1 = \trf_2$ and for all $t_1$,$t_2 \in \tran_1^{w,x}$ for all variables $x$, we have $t_1$ [$\tpo_1 \cup \trf_1 \cup \cfr_1^x$] $t_2$ iff $t_1$ [$\tpo_2 \cup \trf_2 \cup \cfr_2^x$] $t_2$.
	
\end{definition}

\begin{lemma}
$((\tau_1 \equiv \tau_2) \wedge \tau_1 \vdash_{\tt{G}} \alpha_t) \Rightarrow (\tau_2 \vdash_{\tt{G}} \alpha_t \wedge (\alpha_t(\tau_1) \equiv \alpha_t(\tau_2)))$.
\label{lemma rc-g-1}
\end{lemma}
\begin{proof}
Assume 	$\tau_1 \vdash_{\tt{G}} \alpha_t$ for an observable $\alpha_t$, and 
$\tau_1=(\tran_1, \tpo_1, \trf_1, \cfr_1)$, 
$\tau_2=(\tran_2, \tpo_2, \trf_2, \cfr_2)$ with $\tau_1 \equiv \tau_2$. Then we know that $\tran_1=\tran_2, \tpo_1=\tpo_2, \trf_2=\trf_1$. 
 
 Since $\tau_1 \vdash_{\tt{G}} \alpha_t$, let  $\tau_1 \xrightarrow[]{\alpha_t}_{\readc{-}\sat} \tau'$. 
 $\tau'=(\tran', \tpo', \trf', \cfr')$ where $\tran'=\tran_1 \cup \{t\}$, 
 $\tpo'=\tpo_1 \cup \{(t_1,t) \mid t_1 \in \tran_1$ is in the same process as $t\}$,  $\cfr_1 \subseteq \cfr'$ 
 and $\trf_1 \subseteq \trf'$. $\cfr'$ can contain $(t',t)$ for some 
 $t' \in \tran_1$, when $t',t \in {\tran'}^{w,x}$ for some variable $x$, based 
 on the trace semantics. 
 In particular, if we have 
 $t_1 {\trf'}^x t_2$ , $t_3 \in \vbl(\tau',t,x)$, then    
 $t_3 ~{\cfr'}^x~ t_1$.

 Since $\tau_1 \equiv \tau_2$, for any transactions $t', t'' $ in $\tran_1=\tran_2$, $t' [\tpo_1 \cup \cfr_1^x \cup \trf_1] t''$ iff $t' [\tpo_2 \cup \cfr_2^x \cup \trf_2] t''$ for all variables $x$. In particular, $t' \cfr_1^x t''$ iff $t' \cfr_2^x t''$ for all $x$. 
 Now, let us construct a trace $\tau''=(\tran', \tpo', \trf', \cfr'')$,
 where $\cfr''$ is the smallest set such that $\cfr_2 \subseteq \cfr''$ and 
  whenever $t_1 {\trf'}^x t_2$ , $t_3 \in \vbl(\tau',t,x)$, then $t_3 {\cfr''}^x t_1$.

   Since $t' [\tpo_1 \cup \trf_1 \cup \cfr_1^x] t''$ iff $t' [\tpo_2 \cup \trf_2 \cup \cfr_2^x] t''$ for all $x$, and all $t', t'' \in \tran_1=\tran_2$, and $\cfr''$ is the smallest extension of $\cfr_2$  based 
   on the trace semantics, along with the fact that 
   $\tpo'=\tpo'', \trf'=\trf''$, we obtain for any two 
   transactions $t_1, t_2 \in \tran'=\tran''$, 
   $t_1 [\tpo' \cup \trf' \cup \cfr'] t_2$ iff  $t_1 [\tpo'' \cup \trf''
   \cup \cfr''] t_2$. This gives $\tau' \equiv \tau''$. 
    
      This also gives $\tau_2 \xrightarrow[]{\alpha_t}_{\readc{-}\sat} \tau''$,
  that is, $\tau_2 \vdash_{\tt{G}} \alpha_t$  and indeed $\alpha_t(\tau_1)=\tau' \equiv \tau''=\alpha_t(\tau_2)$.

\end{proof}

\begin{lemma}
If $\tau \vdash_{\tt{G}} \alpha_{t_1} \cdot \alpha_{t_2}$ and $\alpha_{t_1} {\sim} \alpha_{t_2}$ then $\tau \vdash_{\tt{G}} \alpha_{t_2} {\cdot} \alpha_{t_1}$ and $(\alpha_{t_1} {\cdot} \alpha_{t_2})(\tau) \equiv (\alpha_{t_2}{ \cdot} \alpha_{t_1})(\tau)$.
\label{lemma rc-g-2}
\end{lemma}
\begin{proof}
Let $\tau = \langle \tran , \tpo, \trf,\cfr \rangle$ be a trace and let $t_1$ be a transaction issued in process $p_1$, and $t_2$ be a transaction issued in process $p_2$, with $p_1 \neq p_2$. Assume $\tau \vdash_{\tt{G}} \alpha_{t_1} \cdot \alpha_{t_2}$, with $\alpha_{t_1} {\sim} \alpha_{t_2}$.

We consider the following cases.
\begin{itemize}
    \item [$\bullet$] $t_1,t_2 \notin \tran^{rd}$. In this case $\tau \vdash_{\tt{G}} \alpha_{t_2} \cdot \alpha_{t_1}$ holds trivially.
    
     $(\alpha_{t_1} \cdot \alpha_{t_2})(\tau) = (\alpha_{t_2} \cdot \alpha_{t_1})(\tau) = \langle \tran' , \tpo' , \trf, \cfr \rangle$, where $\tran' = \tran \cup \{t_1,t_2\}$ and $\tpo' = \tpo \cup \{ (t,t_1) | t \in p_1\} \cup \{(t',t_2) | t' \in p_2 \}$.
    \item [$\bullet$] $t_1 \in \tran^{wt}$, $t_1 \notin \tran^{rd}$ and $t_2 \in \tran^{rd} \cap \tran^{wt}$. 
    Let $\tau_1 = \alpha_{t_1}(\tau)$. 
    We know that $\tau_1 = \langle \tran_1 , \tpo_1 , \trf_1,\cfr_1 \rangle$, where 
    $\tran_1 = \tran \cup \{ t_1 \}$, and $\tpo_1 = \tpo \cup \{ (t,t_1) | t \in p_1\}$, with $\trf_1 = \trf$ since there are no read events in $t_1$, and $\cfr_1 = \cfr$ since by the trace semantics, when $ \trf$ remain same, there are no $\cfr$ edges to be added.
    
    Consider $\tau_2 = \alpha_{t_2}(\tau_1) = \langle \tran_2, \tpo_2, \trf_2, \cfr_2 \rangle$. 
    Since $t_2 \in \tran^{rd}$, we know that $t_2$ has read events.
    \smallskip

    Let there be $n$ read events $ev_1 \dots ev_n$ in transaction $t_2$, 
    reading from transactions $t'_1, \dots, t'_n$ on variables $x_1, \dots, x_n$.  
    Hence we get a sequence $\tau_1 \xrightarrow[]{\beginact(t_2)}_{\readc{-}\sat} \dots \tau'_{i_1} \xrightarrow[]{\readact(t_2,t'_1)}_{\readc{-}\sat}  \tau_{i_1}  \xrightarrow[]{}_{\readc{-}\sat}\dots  
    \tau'_{i_n} \xrightarrow[]{\readact(t_2,t'_n)}_{\readc{-}\sat} \tau_{i_n} \dots \xrightarrow[]{\commitact(t_2)}_{\readc{-}\sat} \tau_2$.
    Hence $\tran_2 = \tran_1 \cup \{t_2\}$, $\tpo_2 = \tpo_1 \cup \{(t',t_2) | t' \in p_2\}$, $\trf_2 = \trf_1 \cup (\trf_{i_1} \cup \trf_{i_2} \dots \cup \trf_{i_n})$, and $\cfr_2 = \cfr_1 \cup (\cfr_{i_1} \cup \cfr_{i_2} \dots \cup \cfr_{i_n})$.
    Since $\tau_1 \vdash_{\tt{G}} \alpha_{t_2}$,  for all $ev_i : 1 \le i \le n$, we have $t'_i \in \rbl(\tau_1,t_2,x_i)$.
    
    Since $\alpha_{t_1} \sim \alpha_{t_2}$, we know that for all $ev_i : 1 \le i \le n$, we have $t'_i \in \rbl(\tau,t_2,x_i)$. It follows that $\tau \vdash_{\tt{G}} \alpha_{t_2}$.
    Define $\tau_3 = \langle \tran_3, \tpo_3, \trf_2, \cfr_2 \rangle$, where $
    \tran_3 = \tran \cup \{t_2\}$ and $\tpo_3 = \tpo \cup \{(t',t_2) | t' \in p_2\}$. It follows that $\alpha_{t_2}(\tau) = \tau_3$, $\tau_3 \vdash_{\tt{G}} \alpha_{t_1}$ and $\tau_2 = \alpha_{t_1}(\tau_3) = \alpha_{t_2}(\alpha_{t_1}(\tau))$.
    \item [$\bullet$] $t_2 \in \tran^{wt}$ and $t_1 \in \tran^{rd} \cap \tran^{wt}$. Similar to previous case.
    \item [$\bullet$] $t_1, t_2 \in \tran^{rd} \cap \tran^{wt}$. 
    Let $\tau_1 = \alpha_{t_1}(\tau) = \langle \tran_1, \tpo_1, \trf_1, \cfr_1 \rangle$ and $\tau_2 = \alpha_{t_2}(\tau_1)$. Since $t_1 \in \tran^{rd}$, we know that $t_1$ has read events. 
    Let there be $n$ read events $ev_1 \dots ev_n$ in transaction $t_1$, reading from transactions $t'_1, \dots, t'_n$ on variables 
    $x_1, \dots, x_n$. 
    Hence we get the sequence $\tau \xrightarrow[]{\beginact(t_1)}_{\readc{-}\sat} \dots \tau'_{i_1} \xrightarrow[]{\readact(t_1,t'_1)}_{\readc{-}\sat}  \tau_{i_1} \xrightarrow[]{}_{\readc{-}\sat}\dots  
    \tau'_{i_n} \xrightarrow[]{\readact(t_1,t'_n)}_{\readc{-}\sat} \tau_{i_n} \dots \xrightarrow[]{\commitact(t_1)}_{\readc{-}\sat} \tau_1$.
    Since $t_2 \in \tran^{rd}$, we know that $t_2$ has read events. 
    Let there be  $m$ read events $ev'_1 \dots ev'_m$ in transaction $t_2$, reading from transactions $t''_1, \dots, t''_m$ on variables 
    $y_1, \dots, y_m$.  
    Then we get the sequence $\tau_1 \xrightarrow[]{\beginact(t_2)}_{\readc{-}\sat} \dots \tau'_{j_1} \xrightarrow[]{\readact(t_2,t''_1)}_{\readc{-}\sat}  \tau_{j_1}  \xrightarrow[]{}_{\readc{-}\sat} \dots 
    \tau'_{j_n} \xrightarrow[]{\readact(t_2,t''_m)}_{\readc{-}\sat} \tau_{j_n} \dots \xrightarrow[]{\commitact(t_2)}_{\readc{-}\sat} \tau_2$. 
    
    $\mathbf{Proving}$ $\mathtt{ \tau \vdash_{\tt{G}} \alpha_{t_2} {\cdot} \alpha_{t_1}}$
    
    Since $\tau_1 = \alpha_{t_1}(\tau)$, for all read events $ev_i$ such that $ev_i.trans = t_1$, we have $t'_i \in \rbl(\tau,t_1,x_i)$.
    Since $\tau_2 = \alpha_{t_2}(\tau_1)$, for all read events $ev'_j$ such that $ev'_j.trans = t_2$, we have $t''_j \in \rbl(\tau_1,t_2,y_j)$. 
    Since $\alpha_{t_1} \sim \alpha_{t_2}$, 
    it follows that, we have $t''_j \in \rbl(\tau,t_2,y_j)$ for all read events $ev'_j$ such that $ev'_j.trans = t_2$. Hence $\tau \vdash_{\tt{G}} \alpha_{t_2}$. 
    \smallskip

    Consider $\tau_3 = \alpha_{t_2}(\tau) = \langle \tran_3, \tpo_3, \trf_3, \cfr_3 \rangle $. To show that $\tau \vdash_{\tt{G}} \alpha_{t_2} \cdot \alpha_{t_1}$, 
    we show that for all read events $ev_i$ such that $ev_i.trans = t_1$,  $t'_i \in \rbl(\tau_3,t_1,x_i)$. We prove this using contradiction.
     Assume that $\exists t'_i$ s.t. $t'_i \notin \rbl(\tau_3,t_1,x_i)$.

    \medskip 
    \smallskip

    If $t'_i \notin \rbl(\tau_3,t_1,x_i)$, then 
    there is a $t'_k$ such that (i) $t'_k \in \vbl(\tau_3,t_1,x_i)$ and (ii) $t'_i [\tpo_3 \cup \trf_3 \cup \cfr_3]^+ t'_k$.
     
     \medskip 
     \smallskip

       If   (i) is true, that is, $t'_k \in \vbl(\tau_3,t_1,x_i)$, since $\alpha_{t_1} \sim \alpha_{t_2}$, we also have $t'_k \in \vbl(\tau,t_1,x_i)$. 
        This, combined with $t'_i \in \rbl(\tau,t_1,x_i)$, gives according to the trace semantics, $t'_k [\cfr_1] t'_i$.  Going back to (ii), 
    we can have $t'_i [\tpo_3 \cup \trf_3 \cup \cfr_3]^+ t'_k$ only if we observed one of the following:
    \begin{itemize}
        \item [(a)] There is a path from $t'_i$ to $t'_k$ in $\tau$; that is, 
        $t'_i [\tpo \cup \trf \cup \cfr]^+ t'_k$. This implies that $t'_i \notin \rbl(\tau,t_1,x_i)$ which is a contradiction.
        \item [(b)] There is no direct path from $t'_i$ to $t'_k$, however, 
        there are  transactions  $t''_j,t''_l$ such that, in $\tau$ we had  
        $ t''_j [\tpo \cup \trf \cup \cfr]^* t'_k$ and $t'_i [\tpo \cup \trf \cup \cfr]^* t''_l$, along with $t''_j \in \rbl(\tau,t_2,y_j)$,  $t''_l \in \vbl(\tau,t_2,y_j)$. Let  $ev'_j = r(p_2,t_2,y_j,v)$ be the read event which reads from $t''_j$, justifying 
        $t''_j \in \rbl(\tau,t_2,y_j)$. 
        
        \smallskip
                Since $t''_j [\tpo \cup \trf \cup \cfr]^* t'_k$ and $t'_i [\tpo \cup \trf \cup \cfr]^* t''_l$ it follows that $ t''_j [\tpo_1 \cup \trf_1 \cup \cfr_1]^* t'_k$, $t'_i [\tpo_1 \cup \trf_1 \cup \cfr_1]^* t''_l$. By our assumption, we have $t'_i \in \rbl(\tau,t_1,x_i)$ and $t'_k \in \vbl(\tau,t_1,x_i)$, which  gives us  $t'_k [\cfr_1^{x_i}] t'_i$ (already observed in the para before (i)). 
                
                Thus we get  $t''_j [\tpo_1 \cup \trf_1 \cup \cfr_1]^*
                t'_k [\cfr_1^{x_i}] t'_i [\tpo_1 \cup \trf_1 \cup \cfr_1]^*
                 t''_l$, which gives  $t''_j [\tpo_1 \cup \trf_1 \cup \cfr_1]^+ t''_l$. Hence $t''_j \notin \rbl(\tau_1,t_2,y_j)$, a  contradiction.
    \end{itemize}
    Thus, we have $\neg(t'_i [\tpo_3 \cup \trf_3 \cup \cfr_3]^+ t'_k)$. This, for all $1 \leq i \leq n$, $t'_i \in \rbl(\tau_3,t_1,x_i)$. Thus, we now have $\tau \vdash_{\tt{G}} \alpha_{t_2} \cdot \alpha_{t_1}$. 
    It remains to show that $(\alpha_{t_1} {\cdot} \alpha_{t_2})(\tau) \equiv (\alpha_{t_2}{ \cdot} \alpha_{t_1})(\tau)$. 
    
    $\mathbf{Proving}$ $\mathtt{(\alpha_{t_1} {\cdot} \alpha_{t_2})(\tau) \equiv (\alpha_{t_2}{ \cdot} \alpha_{t_1})(\tau)}$
    
    Define $\tau_4 {=} \alpha_{t_1}(\tau_3) {=} \alpha_{t_1}.\alpha_{t_2}(\tau){=}
         \langle \tran_4, \tpo_4, \trf_4 , \cfr_4 \rangle$. Recall that 
    $\tau_2{=}$$\alpha_{t_2}.\alpha_{t_1}(\tau)$ =$
      \langle \tran_2,\tpo_2, \trf_2, \cfr_2 \rangle$.
    \smallskip

    To show that $(\alpha_{t_1} \cdot \alpha_{t_2})(\tau) \equiv (\alpha_{t_2} \cdot \alpha_{t_1})(\tau)$, we show that $\tran_2=\tran_4, \tpo_2=\tpo_4, \trf_2=\trf_4$, and, for any two transactions 
    $t'_k, t'_i \in \tran_2=\tran_4$,    $t'_k [\tpo_2 \cup \trf_2 \cup \cfr_2]^+ t'_i$ iff 
    $t'_k [\tpo_4 \cup \trf_4 \cup \cfr_4]^+ t'_i$. Of these, trivially, 
    $\tran_2=\tran_4, \tpo_2=\tpo_4$ follow. Since $\alpha_{t_2} \sim \alpha_{t_1}$, we have $\trf_2=\trf_4$. It remains to prove the last condition. 
    
    \medskip 
    
   Assume that $t'_i [\trf^{x_i}_1] t_1 \wedge t'_k \in \vbl(\tau_1,t_1,x_i)$, 
   where $\tau_1=\alpha_{t_1}(\tau)$. Then we have 
   $t'_i [\trf^{x_i}_1] t_1 \wedge t'_k \in \vbl(\tau,t_1,x_i)$, and by the trace semantics, we have $t'_k [\cfr_1] t'_i$, and hence 
   $t'_k [\cfr_2] t'_i$.  To obtain $\tau_4$, we execute $t_2$ first obtaining $\tau_3$ from $\tau$, and then $t_1$. We show that $t'_k [\tpo_4 \cup \trf_4 \cup \cfr_4]^+ t'_i$, proving $t'_k [\tpo_2 \cup \trf_2 \cup \cfr_2]^+ t'_i \Rightarrow
    t'_k [\tpo_4 \cup \trf_4 \cup \cfr_4]^+ t'_i$.

    \smallskip 
    
     If we have $t'_k \in \vbl(\tau_3,t_1, x_i)$, then we are done since we  have $t'_i \in \rbl(\tau,t_1,x_i)$, hence 
          $t'_i \in \rbl(\tau_3,t_1,x_i)$, thereby obtaining 
          $t'_k \cfr_3 t'_i$, and hence $t'_k \cfr_4 t'_i$.    
          
          \smallskip 
          
           Assume $t'_k \notin \vbl(\tau_3,t_1, x_i)$.   
   Assume that on executing $t_2$ from $\tau$, we have $t''_l \in \vbl(\tau, t_2, y_j)$, and let $t''_j [\trf^{y_j}_3] t_2$.      
     Then by the trace semantics,   $t''_l [\cfr_3] t''_j$ is a new edge which gets added.  
         When we execute $t_1$ next, assume  $t''_m \in \vbl(\tau_3,t_1,x_i)$.   
         Since we have $t'_i \in \rbl(\tau_3,t_1,x_i)$, we obtain $t''_m \cfr_4 t'_i$.

     \medskip 
     By assumption, $t'_k \neq t''_m$. However, $t'_k \in \vbl(\tau,t_1,x_i)$. \\
     $t'_k \notin  \vbl(\tau_3,t_1,x_i)$ points to some transaction that 
     blocked the visibility of $t'_k$ in $\tau_3$, by happening after $t'_k$, and which is in $\vbl(\tau_3,t_1,x_i)$.  
     
     \begin{enumerate}
     	\item This blocking transaction could be $t''_m$. If this is the case, we have $t'_k [\tpo_3 \cup \trf_3 \cup \cfr_3]^+ t''_m [\cfr_4] t'_i$, and hence  $t'_k [\tpo_4 \cup \trf_4 \cup \cfr_4]^+ t''_m [\cfr_4] t'_i$. 
     	\item The other possibility is that we have  $t''_j$
     	happens before $t''_m$, and $t''_l$ happens after $t'_k$ blocking 
     	$t'_k$ from being in $\vbl(\tau_3,t_1,x_i)$. That is,
     	     $t'_k [\tpo \cup \trf \cup \cfr]^* t''_l$ and 
     	     $t''_j [\tpo_4 \cup \trf_4 \cup \cfr_4]^* t''_m$.   
         Then we obtain $t'_k [\tpo_4 \cup \trf_4 \cup \cfr_4]^* t''_l [\cfr^{x_j}_4] t''_j [\tpo_4 \cup \trf_4 \cup \cfr_4]^* t''_m [\cfr_4] t'_i$. 
     
     \end{enumerate}
        
     Thus, we obtain $t'_k [\tpo_4 \cup \trf_4 \cup \cfr_4]^+ t'_i$ as desired.

\end{itemize}
 The converse direction, that is, whenever $t'_k [\tpo_4 \cup \trf_4 \cup \cfr_4]^+ t'_i$, we also have $t'_k [\tpo_2 \cup \trf_2 \cup \cfr_2]^+ t'_i$ is proved on similar lines. 
    
\end{proof}
\begin{lemma}
If $\tau_1 \equiv \tau_2$, $\alpha_{t_1} \sim \alpha_{t_2}$, and $\tau_1 \vdash_{\tt{G}} (\alpha_{t_1} \cdot \alpha_{t_2})$,  then (i) $\tau_2 \vdash_{\tt{G}} (\alpha_{t_2} \cdot \alpha_{t_1})$ and (ii) $(\alpha_{t_1} \cdot \alpha_{t_2})(\tau_1) \equiv (\alpha_{t_2} \cdot \alpha_{t_1})(\tau_2)$.
\label{lemma rc-g-3}
\end{lemma}
\begin{proof}
Follows from Lemma \ref{lemma rc-g-1} and Lemma \ref{lemma rc-g-2}.
\end{proof}

From Lemma \ref{lemma rc-g-3}, we get following lemma.
\begin{lemma}
If $\tau_1 \equiv \tau_2$, $\pi_{\tt{T^1}} \sim \pi_{\tt{T^2}}$, and $\tau_1 \vdash_{\tt{G}} \pi_{\tt{T^1}}$ then $\tau_2 \vdash_{\tt{G}} \pi_{\tt{T^2}}$ and $\pi_{\tt{T^1}}(\tau_1) \equiv \pi_{\tt{T^2}}(\tau_2)$.
\label{lemma rc-g-4}
\end{lemma}
 
\begin{lemma}
If $\tau \vdash_{\tt{G_T}} \pi_{\tt{T}}$, $\tau \vdash_{\tt{G}} \alpha_t$ then $\pi_{\tt{T}} = \pi_{\tt{T^1}} \cdot \alpha_t \cdot \pi_{\tt{T^2}}$ for some $\pi_{\tt{T^1}}$ and $\pi_{\tt{T^2}}$ where $\pi_{\tt{T^1}}$ is $t$-free.
\label{lemma rc-g-5}
\end{lemma}
Consider observables $\pi_{\tt{T}} = \alpha_{t_1},\alpha_{t_2} \dots \alpha_{t_n}$. 
We write $\pi_{\tt{T}} \lessapprox \pi'_{\tt{T}}$ to represent that $\pi'_{\tt{T}} = \pi_{\tt{T^0}} \cdot \alpha_{t_1} \cdot \pi_{\tt{T^1}} 
\cdot \alpha_{t_2} \cdot \pi_{\tt{T^2}} \dots \alpha_{t_n} \cdot \pi_{\tt{T^n}}$. 
In other words, $\pi_{\tt{T}}$ occurs as a non-contiguous subsequence in $\pi'_{\tt{T}}$. For such a $\pi_{\tt{T}}, \pi'_{\tt{T}}$, 
we define $\pi'_{\tt{T}} \oslash \pi_{\tt{T}} := \pi_{\tt{T^0}} \cdot \pi_{\tt{T^1}} \dots \pi_{\tt{T^n}}$. 
Since elements of $\pi_{\tt{T}}$ and $\pi'_{\tt{T}}$ are distinct, operation $\oslash$ is well defined. 
Let $\pi_{\tt{T}}[i]$ denote the $i$th observable 
in the observation sequence $\pi_{\tt{T}}$, and 
let  $|\pi_{\tt{T}}|$ denote  the number of observables 
in $\pi_{\tt{T}}$.

Let $\alpha_t = \pi_{\tt{T}}[i]$ for some $i: 1 \le i \le |\pi_{\tt{T}}|$. 
We define $\tt{Pre(\pi_{\tt{T}},\alpha_t)}$ as a subsequence $\pi'_{\tt{T}}$ of $\pi_{\tt{T}}$ such that  
(i) $\alpha_t \in \pi'_{\tt{T}}$,
(ii) $\alpha_{t_j} = \pi_{\tt{T}}[j] \in \pi'_{\tt{T}}$ for some $j: 1 \le j < i$ iff there exists $k: j < k \le i$ such that $\alpha_{t_k}=\pi_{\tt{T}}[k]  \in \pi'_{\tt{T}}$ and $\neg(\alpha_{t_j} \sim \alpha_{t_k})$. 
Thus, $\tt{Pre(\pi_{\tt{T}},\alpha_t)}$ consists of $\alpha_t$ and all 
$\alpha_{t'}$ appearing before $\alpha_t$ in $\pi_{\tt{T}}$ such that 
$t' [\tpo \cup \trf]^+ t$.

%
%
\begin{lemma}
If $\pi_{\tt{T^1}} = \tt{Pre(\pi_{\tt{T}},\alpha_t)}$ and $\pi_{\tt{T^2}} = \pi_{\tt{T}} \oslash \pi_{\tt{T^1}}$ then $\pi_{\tt{T}} \approx \pi_{\tt{T^1}} \cdot \pi_{\tt{T^2}}$.
\label{lemma rc-g-6}
\end{lemma}
\begin{proof}
The proof is trivial since we can always execute in order, 
the $[\tpo \cup \trf]^+$ predecessors of $\alpha_t$ from $\tt{Pre(\pi_{\tt{T}},\alpha_t)}$, then $\alpha_t$, then the  observables 
in $\tt{Pre(\pi_{\tt{T}},\alpha_t)}$ which are independent from $\alpha_t$, followed by the suffix of $\pi_{\tt{T}}$ after $\alpha_t$.

\end{proof}

\begin{lemma}
If $\pi_{\tt{T}}= \pi'_{\tt{T}} \cdot \pi''_T$ and $\alpha_t \in \pi'_{\tt{T}}$ then $\tt{Pre(\pi_{\tt{T}},\alpha_t)} = \tt{Pre(\pi'_{\tt{T}},\alpha_t)}$.
\label{lemma rc-g-7}
\end{lemma}
\begin{proof}
The proof is trivial once again, since 	all 
the transactions which are $[\tpo \cup \trf]^+ t$  are in the prefix  $\pi'_{\tt{T}}$.
\end{proof}

\begin{lemma}
If $\tau \vdash_{\tt{G}} \pi_{\tt{T}}$ then $\tau \vdash_{\tt{G_T}} \pi_{\tt{T}} \cdot \pi_{\tt{T^2}}$.
\label{lemma rc-g-8}
\end{lemma}
\begin{proof}
This simply follows from the fact that we can extend the observation sequence 
$\pi_{\tt{T}}$ to obtain a terminal configuration since the $\readc$ DPOR algorithm generates traces corresponding to terminating runs.   	
\end{proof}

\begin{lemma}
Consider a trace trace $\tau$ such that $\tau \models \readc$, $\tau \vdash_{\tt{G}} \pi_{\tt{T}} \cdot \alpha_t$. Let each read event $ev_i = r_i(p,t,x_i,v)$ in $\alpha_t$ read from some transaction $t_i$, and let $\pi_{\tt{T}}$ be $t$-free.  Then $\tau \vdash_{\tt{G}} \alpha'_t \cdot \pi_{\tt{T}}$, where $\alpha'_t$ is the same as  $\alpha_t$, with the exception that the sources of its read events can be different. That is, 
 each read event $ev_i = r_i(p,t,x_i,v) \in \alpha'_t$ can read from some transaction $t'_i \neq t_i$.
\label{lemma rc-g-9}
\end{lemma}
\begin{proof}
Let $\pi_{\tt{T}} = \alpha_{t_1} \alpha_{t_2} \dots \alpha_{t_n}$ and $\tau_0 \xrightarrow[]{\alpha_{t_1}}_{\readc{-}\sat} \tau_1 \xrightarrow[]{\alpha_{t_2}}_{\readc{-}\sat} \dots \xrightarrow[]{\alpha_{t_n}}_{\readc{-}\sat} \tau_n \xrightarrow[]{\alpha_t}_{\readc{-}\sat} \tau_{n+1}$, where $\tau_0 = \tau$. Let $\tau_i = \langle \tran_i, \tpo_i, \trf_i, \cfr_i \rangle$. 
Let there be  $m$ read events in transaction $t$. Keeping in mind what we want to prove, where we want to execute $t$ first followed by $\pi_{\tt{T}}$, and obtain an execution $\alpha'_t \pi_{\tt{T}}$, we do the following.

For each read event $ev_i = r_i(p,t,x_i,v)$ we define $t'_i \in \tran^{wt,x_i}$ such that 
$(i)$ $ t'_i \in \rbl(\tau_0,t,x_i)$, and 
$(ii)$ there is no $t''_i \in \rbl(\tau_0,t,x_i)$ where $t'_i [\tpo_n \cup \trf_n \cup \cfr_n] t''_i$.  Note that this is possible since 
 $\pi_{\tt{T}}$ is $t$-free, so all the observables $\alpha_{t_i}$ occurring in $\pi_{\tt{T}}$ are such that $t_i$ is issued in a process other than that of $t$. Thus, when $\alpha_t$ is enabled, we can choose any of the 
 writes done earlier,  this fact is consistent with the $\readc$ semantics, since from Lemmas \ref{lem:f3}-\ref{lem:f4} we know $\tau_n \models \readc$.
Since $\tau_n \models \readc$ such a $t'_i \in  \rbl(\tau_0,t,x_i)$ exists for each $ev_i$.

\medskip 
Next we define a sequence of traces which can give the execution 
$\alpha'_t \pi_{\tt{T}}$. 
Define a sequence of traces $\tau'_0, \tau'_1, \dots \tau'_n$ where $\tau'_j = \langle \tran'_j, \tpo'_j, \trf'_j, \cfr'_j \rangle$ is such that 
\begin{enumerate}
	\item $\tau \vdash_{\tt{G}} {\alpha'_t}$ and $\alpha'_t(\tau)=\tau'_0$,
	\item $\tau'_j \vdash_{\tt{G}} \alpha_{t_{j+1}}$ and  $\tau'_{j+1} = \alpha_{t_{j+1}}(\tau'_j)$ for all $j: 0 \le j \le n$.
\end{enumerate}

\begin{itemize}
\item  Define $\tran'_i = \tran_i \cup \{t\}$ for all $0 \leq i \leq n$, 
\item 
Define $\tpo' = \{t' [\tpo] t \mid t'$ is a transaction in $\tau$, in the same process as $t\}$, and $\tpo'_i = \tpo_i \cup \tpo'$ for all $0 \leq i \leq n$, 

\item  $\trf' = \bigcup_{i:1\le i \le m} t'_i [\trf^{x_i}] t$ (reads from relation corresponding to each read event $ev_i \in \alpha'_t$), and 
 $\trf'_i = \trf_i \cup \trf'$ for all $i : 1 \le i \le n$, 
 
\item  $\cfr' = \bigcup_{j:1\le j \le m} \cfr'_j$ (each  $\cfr'_j$ corresponds to the updated $\cfr$ relation because of the read transition $\xrightarrow[]{\readact(t,t'_j)}_{\readc{-}\sat}$). 	
\smallskip 

For $1 \leq i \leq n$, we define $\cfr'_i$ inductively as $\cfr'_i = \cfr_i \cup \cfr'$ and show that $\tau'_i \vdash_{\tt{G}} \alpha_{t_{i+1}}$ and  $\tau'_{i+1} = \alpha_{t_{i+1}}(\tau'_i)$ holds good. 
 First define $\cfr'_0 = \cfr_0 \cup \cfr'$. Then define  $\cfr'_i$ to be the coherence order corresponding to $\alpha_{t_{i}}(\tau'_{i-1})$  where $\tau'_{i-1}=\langle \tran'_{{i-1}}, \tpo_{i-1}, \trf_{i-1}, \cfr_{i-1} \rangle$ for all $i: 1 \le i \le n$. 

 \end{itemize}

\medskip 
\noindent{\bf{Base case}}. The base case $\tau \vdash_{\tt{G}} {\alpha'_t}$ and $\alpha'_t(\tau)=\tau'_0$,
   holds trivially, by construction. 

\smallskip 

\noindent{\bf{Inductive hypothesis}}. Assume that $\tau'_j \vdash_{\tt{G}} \alpha_{t_{j+1}}$ and  $\tau'_{j+1} = \alpha_{t_{j+1}}(\tau'_j)$ for $0 \leq j \leq i-1$. 

\smallskip 

 We have to prove that 
$\tau'_i \vdash_{\tt{G}} \alpha_{t_{i+1}}$ and  $\tau'_{i+1} = \alpha_{t_{i+1}}(\tau'_i)$. 
\begin{enumerate}
	\item If $t_{i+1} \in \tran^{wt}$ and $t_{i+1} \notin \tran^{rd}$,  then the proof holds trivially from the inductive hypothesis. 
	\item Consider now $t_{i+1} \in \tran^{rd}$. Consider a read event 
	$ev^{i+1} = r(p,t_{i+1},x,v) \in \alpha_{t_{i+1}}$ which was reading from 
	transaction $t_x$ in $\pi_{\tt{T}}$. That is, we had $t_x \in \rbl(\tau_i, t_{i+1},x)$. 	If $t_x \in \rbl(\tau'_i, t_{i+1},x)$, then we are done, since we can simply extend the run from the inductive hypothesis. 
	
	\smallskip 
	
	Assume otherwise. That is, $t_x \notin \rbl(\tau'_i, t_{i+1},x)$.

	Since  $t_x$  $\in$   $\rbl(\tau_i, t_{i+1},x)$, we know that there are some blocking transactions in the new path which prevents $t_x$ from being readable. Basically, we have the blocking transactions since $t$ is moved before $\pi_{\tt{T}}$. 	
	
	\begin{itemize}
		\item Consider $tr_1 \in \vbl(\tau,t,y)$ for some variable $y$ such that,  in $\pi_{\tt{T}}.\alpha_t$, we had $t_x [\tpo_i \cup \trf_i \cup \cfr_i]^* tr_1$. This path
		 is possible since $\alpha_t$ comes last in in $\pi_{\tt{T}}.\alpha_t$,  after  $\alpha_{t_{i+1}}$. 
		\item  Consider $tr_2 \in \vbl(\tau_i,t_{i+1},x)$.  Since $t_x$ $\in$ $\rbl(\tau_i, t_{i+1},x)$, we have  $tr_2 [\cfr^x_{i+1}] t_x$. 
Now, in $\alpha'_t.\pi_{\tt{T}}$, assume $tr_3 [\trf'^y_0] t$, 
such that we have a path from $tr_3$ to $t_x$, as  $tr_3 [\tpo_i \cup \trf_i \cup \cfr_i]^* tr_2$. 
	\end{itemize}
	  		  Hence it follows that $tr_3 [\tpo_i \cup \trf_i \cup \cfr_i]^* tr_2$ 
$[\cfr^x_{i+1}] t_x$ $[\tpo_i \cup \trf_i \cup \cfr_i]^* tr_1$. 
Hence we have $tr_3 [\tpo_{i+1} \cup \trf_{i+1} \cup \cfr_{i+1}]^+ tr_1$, i.e 
$tr_3 [\tpo_n \cup \trf_n \cup \cfr_n]^+ tr_1$. This leads to the contradiction since $tr_1 \in \vbl(\tau,t,y)$ and  $tr_3 [\trf'^y_0] t$.  

Hence, $t_x$  $\in$   $\rbl(\tau_i, t_{i+1},x)$ and we can extend the run 
from the inductive hypothesis obtaining $\tau'_j \vdash_{\tt{G}} \alpha_{t_{j+1}}$,  $\tau'_{j+1} = \alpha_{t_{j+1}}(\tau'_j)$,  
for $0 \leq j \leq i-1$,
and 
$\tau'_i \vdash_{\tt{G}} \alpha_{t_{i+1}}$.

\end{enumerate}

Now we prove that $\cfr'_{i+1} = \cfr_{i+1} \cup \cfr'$. 
Assume we have $(tr1 , tr2) \in \cfr_{i+1}$. We show that $(tr1 , tr2) \in \cfr'_{i+1}$. We have the following cases:

\begin{itemize}
    \item[(i)] $(tr1 , tr2) \in \cfr_i$. From the inductive hypothesis it follows that $(tr1 , tr2) \in \cfr'_{i}$ and hence in $\cfr'_{i+1}$.
    \item[(ii)] $(tr1 , tr2) \notin \cfr_i$ and some read event $ev = r(p,t_{i+1},x,v)$ from $\alpha_{t_{i+1}}$ reads from $tr2$.  That is, $tr1,tr2 \in \rbl(\tau_i, t_{i+1},x)$ and $tr1 \in \vbl(\tau_i,t_{i+1},x)$.
    Assume $(tr1 , tr2) \notin \cfr'_{i+1}$.
    Since $(tr1, tr2) \notin \cfr'_{i+1}$, it follows that there exists  $tr3 \in \vbl(\tau'_i,t_{i+1},x)$ such that $tr2 [\tpo'_i \cup \trf'_i \cup \cfr'_i] tr3$. This means $tr2 \in \rbl(\tau_i,t_{i+1},x)$, but 
    $tr2 \notin \rbl(\tau'_i,t_{i+1},x)$. However,  as proved earlier, this leads to a contradiction. 

\end{itemize}
Thus, we have shown that $\tau \vdash_{\tt{G}} \alpha'_t \alpha_{t_1} \dots \alpha_{t_i}$ 
is such that $\alpha'_t \alpha_{t_1} \dots \alpha_{t_i}(\tau)=\tau'_i$ for all 
$0 \leq i \leq n$. When $i=n$ we obtain 
$\alpha'_t \pi_{\tt{T}}(\tau)=\tau'_n$, or $\tau \vdash_{\tt{G}} \alpha'_t \pi_{\tt{T}}$. 
\end{proof}


\begin{definition}[Linearization of  a Trace]
A observation sequence $\pi_{\tt{T}}$ is a \emph{linearization} of a trace $\tau = \langle \tran, \tpo , \trf, \cfr \rangle$ if $(i)$ $\pi_{\tt{T}}$ has the same transactions as $\tau$ and $(ii)$ $\pi_{\tt{T}}$ follows the ($\tpo \cup \trf)$ relation.
	
\end{definition}

We say that our $\readc$ DPOR algorithm \emph{generates an observation sequence} $\pi_{\tt{T}}$ from state $\tau$ where $\tau$ is a trace 
if 
 it invokes $\explore$ with parameters $\readc,\tau, \pi'$, where $\pi'$ is a \emph{linearization} of $\tau$, and generates  a sequence of recursive calls to $\explore$ resulting in $\pi_{\tt{T}}$.

\begin{lemma}
If $\tau \vdash_{\tt{G_T}} \pi_{\tt{T}}$ then, 
the DPOR algorithm  generates $\pi'_{\tt{T}}$ from state $\tau$ 
for some $\pi'_{\tt{T}} \approx \pi_{\tt{T}}$.
\label{lemma rc-g-10}
\end{lemma}
\begin{proof}
We use induction on $|\pi_{\tt{T}}|$. If $\tau$ is $terminal$, then the  proof is trivial.
Assume that we have $\tau \vdash_{\tt{G_T}} \pi_{\tt{T}}$.
Assume that $\tau \vdash_{\tt{G}} \alpha_t$. It follows that $\tau \vdash_{\tt{G}} \alpha_t$.
Using Lemma \ref{lemma rc-g-5} we get $\pi_{\tt{T}} = \pi_{\tt{T^1}} \cdot \alpha_t \cdot \pi_{\tt{T^2}}$, where $\pi_{\tt{T^1}}$ is $t$-free. 
We consider the following two cases:
\begin{itemize}
    \item In the first case, we assume that all read events in $t$ read from 
    transactions in $\tau$. So in this case, $\pi_{\tt{T^1}}$ can be empty. 
        Assume there are $n$ read events in $t$ and each read event $ev_i = r(p,t,x_i,v)$ 
    reads from transactions $t_i \in \tau$ for all $i: 1 \le i \le n$. 
In this case the DPOR algorithm will let each read event $ev_i$ read from all possible transactions $t' \in \rbl(\tau,t,x_i)$, including $t_i$.
    \item In the second case, assume that there exists at least one read event $ev' = r(p,t,x,v) \in \alpha_t$ which reads from a transaction $t' \in \pi_{\tt{T^1}}$ (hence, $\pi_{\tt{T^1}}$ is non empty).
    \smallskip 
     
    From Lemma \ref{lemma rc-g-9}, we know that $\tau
    \vdash_{\tt{G}} \alpha'_t \cdot \pi_{\tt{T^1}}$. 
    From Lemma \ref{lemma rc-g-8}, it follows that $\tau
    \vdash_{\tt{G_T}} \alpha'_t \cdot \pi_{\tt{T^1}} \cdot \pi_{\tt{T^4}}$, for some
    $\pi_{\tt{T^4}}$.
    Hence there exists $\tau'$ such that 
    $\tau' = \alpha'_t (\tau)$ and $\tau' \vdash_{\tt{G_T}} \pi_{\tt{T^1}} \cdot
    \pi_{\tt{T^4}}$. 
    
    Since $|\pi_{\tt{T^1}} \cdot
    \pi_{\tt{T^4}}| < |\pi_{\tt{T}}$, we can use the inductive hypothesis. It follows that the DPOR algorithm generates from state $\tau'$, 
    the observation sequence  $\pi_{\tt{T^5}}$ such that $\pi_{\tt{T^5}} \approx \pi_{\tt{T^1}} 
    \cdot \pi_{\tt{T^4}}$. 
    
    \smallskip 
    
    Let  $\pi_{\tt{T^3}}.\alpha_t = \tt{Pre(\pi_{\tt{T^1}}, \alpha_t)}$. Then  $\pi_{\tt{T^5}}
    \approx \pi_{\tt{T^1}} \cdot \pi_{\tt{T^4}}$ implies that  $\pi_{\tt{T^3}}
    \lessapprox \pi_{\tt{T^5}}$ (by applying Lemma \ref{lemma rc-g-7}).
    Let $\pi_{\tt{T^6}} \approx \pi_{\tt{T}} \oslash (\pi_{\tt{T^3}} \cdot \alpha_t)$.
    By applying Lemma \ref{lemma rc-g-6}, we get $\pi_{\tt{T}} \approx \pi_{\tt{T^3}} \cdot \alpha_t \cdot \pi_{\tt{T^6}}$.
    Since $\tau \vdash_{\tt{G_T}} \pi_{\tt{T}}$ and 
    $\pi_{\tt{T}} \approx \pi_{\tt{T^3}} \cdot \alpha_t \cdot \pi_{\tt{T^6}}$,
    by applying Lemma \ref{lemma rc-g-3}, we get $\tau
    \vdash_{\tt{G_T}} \pi_{\tt{T^3}} \cdot \alpha_t \cdot \pi_{\tt{T^6}}$.
    Let $\tau' = (\pi_{\tt{T^3}} \cdot \alpha_t) (\tau)$. 
    Since $\tau \vdash_{\tt{G_T}} \pi_{\tt{T^3}} 
    \cdot \alpha_t \cdot \pi_{\tt{T^6}}$, we have $\tau'
    \vdash_{\tt{G_T}} \pi_{\tt{T^6}}$. 
    From inductive hypothesis it follows that $\tau'$ generates $\pi_{\tt{T^7}}$ such that $\pi_{\tt{T^7}} \approx \pi_{\tt{T^6}}$.\\
    In other words, $\tau$ generates $\pi_{\tt{T^3}} 
    \cdot \alpha_t \cdot \pi_{\tt{T^7}}$ where $\pi_{\tt{T}} \approx \pi_{\tt{T^3}} 
    \cdot \alpha_t \cdot \pi_{\tt{T^7}}$.
\end{itemize}

\end{proof}

\section{More Details for Experimental Evaluation}
Here we describe details about the performance on extra versions of classical benchmarks.  

\subsection{Execution times}
\label{app:ext-tables}
Table \ref{param-ver2} and \ref{param-ver3} gives execution time for Version-2 and Version-3 of the classical benchmarks. 
\footnotesize{
\begin{table}[t]
    \centering
    \caption{Classical Benchmarks version 2 execution time(in seconds) }
    \label{param-ver2}
\centering
\begin{tabular}{@{}l c c c c c c c c c c@{}}
\hline\hline
\multicolumn{1}{c}{} & \multicolumn{2}{c}{$\cc$} & \multicolumn{2}{c}{$\ccvt$} & \multicolumn{2}{c}{$\cm$} & \multicolumn{2}{c}{$\readat$} & \multicolumn{2}{c}{$\readc$} \\
\multicolumn{1}{c}{Program} &  \texttt{Traces} &  \texttt{Time} &  \texttt{Traces} &  \texttt{Time} &  \texttt{Traces} &  \texttt{Time} &  \texttt{Traces} & \texttt{Time} &  \texttt{Traces} &  \texttt{Time} \\
\hline
 Causality Violation 	 &560	&0.08	&469	&0.08	&540	&0.14	&469	&0.09	&4341	&0.65\\
Causal Violation 	&10020	&0.84	&9330	&1.01	&10020	&2.78	&12540	&1.57	&47730	&7.01\\
Delivery Order	&99	&0.05	&99	&0.06	&92	&0.07	&147	&0.05	&496	&0.08\\
Long Fork	&52608	&4.72	&49968	&6.16	&52608	&25.07	&49968	&10.11	&108336	&21.12\\
Lost Update	&77	&0.05	&72	&0.06	&71	&0.06	&94	&0.05	&326	&0.07\\
Message Passing	&72	&0.05	&68	&0.06	&72	&0.06	&88	&0.05	&580	&0.12\\
Modification Order	&8072	&0.65	&6823	&0.75	&6823	&3.14	&7823	&1.6	&13120	&2.51\\
Conflict violation	&28764	&2.44	&18162	&2.08	&22347	&6.47	&25451	&3.66	&551781	&84.08\\
Read Atomicity	&1692	&0.15	&1624	&0.2	&1692	&0.62	&1824	&0.32	&3994	&0.9\\
Read Committed	&1576	&0.2	&1377	&0.23	&1452	&1.09	&4399	&1.29	&20428	&7.34\\
Repeated Read	&2042	&0.23	&1439	&0.22	&1509	&0.6	&2674	&0.45	&57724	&11.55\\
Load Buffer	&1551	&0.16	&1230	&0.17	&1410	&0.42	&1752	&0.24	&37701	&7.16\\
Store Buffer	&90	&0.05	&79	&0.06	&61	&0.06	&273	&0.06	&4905	&0.41\\
Writeskew	&285	&0.07	&285	&0.08	&285	&0.18	&2064	&0.43	&75744	&14.81\\
\hline
\end{tabular}
\end{table}
}

\footnotesize{
\begin{table}[t]
    \centering
    \caption{Classical Benchmarks version 3 execution time(in seconds) }
    \label{param-ver3}
\centering
\begin{tabular}{@{}l c c c c c c c c c c@{}}
\hline\hline
\multicolumn{1}{c}{} & \multicolumn{2}{c}{$\cc$} & \multicolumn{2}{c}{$\ccvt$} & \multicolumn{2}{c}{$\cm$} & \multicolumn{2}{c}{$\readat$} & \multicolumn{2}{c}{$\readc$} \\
\multicolumn{1}{c}{Program} &  \texttt{Traces} &  \texttt{Time} &  \texttt{Traces} &  \texttt{Time} &  \texttt{Traces} &  \texttt{Time} &  \texttt{Traces} & \texttt{Time} &  \texttt{Traces} &  \texttt{Time} \\
\hline
 Causality Violation 	&	24874	&	3.02	&	21653	&	3.11	&	23040	&	9.38	&	25317	&	9.19	&	117078	&	31.07	\\
Causal Violation 	&	215580	&	33.25	&	215580	&	33.51	&	215580	&	124.75	&	341516	&	153.67	&	540758	&	164.5	\\
Delivery Order	&	14954	&	2.1	&	14954	&	2.18	&	13245	&	7.73	&	18194	&	8.17	&	128454	&	40.67	\\
Long Fork	&	614656	&	204.84	&	614656	&	206.52	&	614656	&	562.37	&	614656	&	416.87	&	614656	&	295.99	\\
Lost Update	&	66782	&	8.9	&	63650	&	8.73	&	65087	&	26.51	&	119666	&	30.54	&	238664	&	47.08	\\
Message Passing	&	78708	&	12.95	&	74940	&	12.81	&	78708	&	58.48	&	143136	&	76.03	&	688440	&	290.23	\\
Modification Order	&	82908	&	10.9	&	69090	&	9.45	&	69090	&	28.56	&	69090	&	18.67	&	164787	&	34.23	\\
Conflict violation	&	457152	&	77.5	&	452104	&	76.69	&	457152	&	307.16	&	616239	&	294.52	&	859822	&	326.73	\\
Read Atomicity	&	78708	&	10.94	&	74940	&	10.87	&	78708	&	47.67	&	97740	&	37.74	&	362835	&	110.21	\\
Read Committed	&	10981	&	1.59	&	10927	&	1.57	&	10942	&	4.63	&	19207	&	5.67	&	20617	&	4.79	\\
Repeated Read	&	27238	&	4	&	24107	&	3	&	25960	&	7	&	54482	&	9	&	170864	&	26	\\
Load Buffer	&	7947	&	0.92	&	7947	&	0.95	&	7947	&	3.41	&	9693	&	2.8	&	13572	&	2.94	\\
Store Buffer	&	32096	&	4.63	&	22288	&	3.33	&	20720	&	10.57	&	46180	&	15.21	&	742416	&	188.29	\\
Writeskew	&	325260	&	47.03	&	121451	&	18.43	&	154562	&	71.23	&	935710	&	312.53	&	7726230	&	2357.01	\\
\hline
\end{tabular}
\end{table}
}

\newpage 

\subsection{Programs for Classical Benchmarks}
\label{app:tab3-p}
Fig.\ref{fig:programs} gives the programs Load Buffer, Store Buffer, Modification Order.
We have assertion conditions on the values given in comments in all of these programs.

\begin{figure}[ht]
    \centering
        \subfigure[Load Buffer]{
          \epsfig{figure=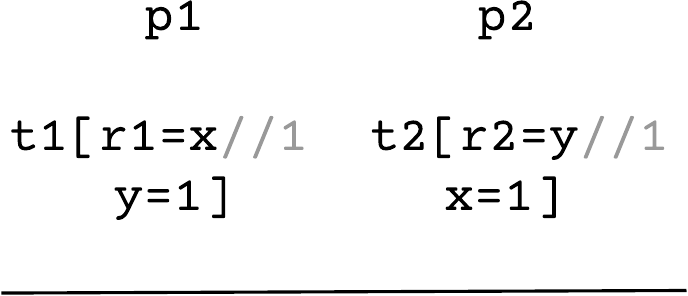,width=2in}
          \label{fig:load buffer}
          }
    \subfigure[Store Buffer]{
         \epsfig{figure=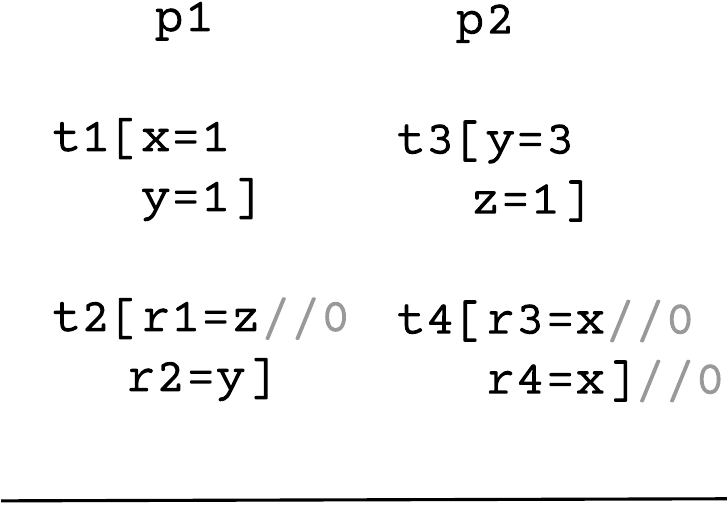,width=2in}
          \label{fig:store buffer}
          }
    \vspace{5pt}
    \subfigure[Modification Order]{
          \epsfig{figure=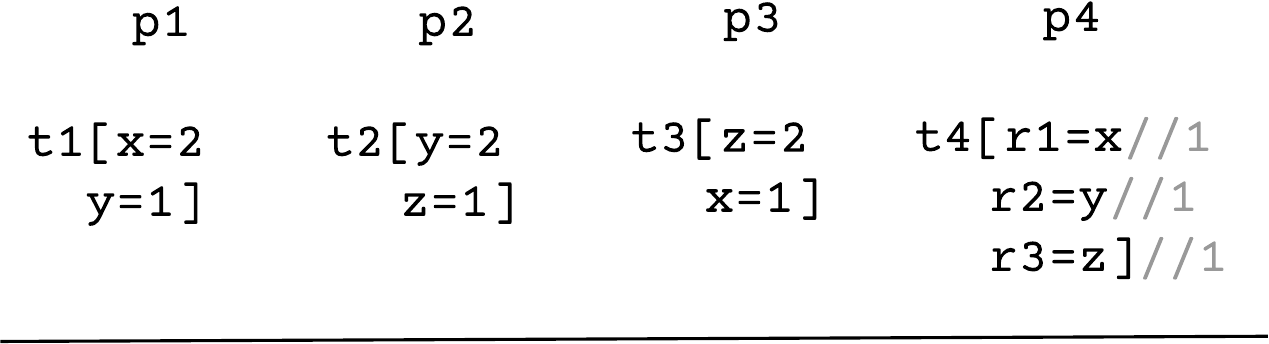,height=1in}
          \label{fig:modification order}
          }
    \caption{Our own programs used in Classical Benchmarks table in main paper}
    \label{fig:programs}
\end{figure}

\end{document}